\documentclass[11pt,letterpaper]{article}
\usepackage[utf8]{inputenc}
\usepackage{amsmath}
\usepackage{amsfonts}
\usepackage{amssymb}
\usepackage{amsthm}
\usepackage{todonotes}
\usepackage{array}
\newcolumntype{?}{!{\color{lightgray}\vrule width .01pt}}
\usepackage[left=2.6cm,right=2.6cm,top=2.6cm,bottom=2.6cm]{geometry}
\usepackage{enumitem}
\usepackage{gensymb}
\usepackage{graphicx}
\usepackage{verbatim}
\usepackage{dsfont}
\usepackage{bbm}
\usepackage{tikz}
\usepackage{algpseudocode}
\newcommand{\mb}{\mathbb}

\newcommand{\poly}{\operatorname{poly}}
\newcommand{\polylog}{\operatorname{polylog}}
\newcommand{\E}{\mathbb{E}}

\allowdisplaybreaks
\usetikzlibrary{positioning}
\usepackage[colorlinks=false,urlbordercolor=white]{hyperref}
\newcommand{\say}[1]{\text{\hspace{1.5cm}#1}}

\tikzset{sgplattice/.style={inner sep=1pt,norm/.style={red!50!blue},char/.style={blue!50!black},
  lin/.style={black!50}},cnj/.style={black!50,yshift=-2.5pt,left=-1pt of #1,scale=0.5,fill=white}}

\newtheorem{lemma}{Lemma}[section]
\newtheorem*{lemma*}{Lemma}
\newtheorem{corollary}[lemma]{Corollary}

\newtheorem{definition}[lemma]{Definition}

\makeatother

\usepackage{thmtools}
\usepackage{thm-restate}

\usepackage{color}

\author{Tolson Bell\thanks{thbell@alumni.cmu.edu. Research supported in part by NSF GRFP Grant DGE 2140739.}~~and William Kuszmaul\thanks{william.kuszmaul@gmail.com. Research supported in part by NSF grant CCF-2504471 and by a Jane Street grant.}\\Carnegie Mellon University\\Pittsburgh, PA 15213\\U.S.A.}
\title{Fast Insertion for Bucketized Cuckoo Hashing}
\date{}
\begin{document}
\maketitle
\thispagestyle{empty}
\begin{abstract}
Bucketized cuckoo hashing is a practically efficient hash table scheme in which each object $u$ is stored in one of two buckets $h_1(u), h_2(u)$ of capacity $\ell$. For any bucket size $\ell\in\mb{N}$, there is a threshold $\epsilon^*(\ell)=(2/e)^\ell\mathrm{poly}(\ell)$ for which there exists a way to fill the hash table to any load factor less than $1-\epsilon^*$ with low probability of an error. Queries and deletions only need to check two buckets to find whether an object exists.

Our contribution is to give a new insertion procedure for bucketized cuckoo hashing. For any $\delta\in[.99^\ell,1]$, our algorithm can fill the hash table to load factor $1-\epsilon=1-(1+\delta)(\epsilon^*)$ with an expected run time of $O(\delta^{-1}(\epsilon^*)^{-1})$ per insertion. This gives the first $\mathrm{poly}(\epsilon^{-1})$ insertion time bound, and the first $f(\epsilon^{-1})$ time bound for load factors that are very close to the optimal threshold. Additionally, our algorithm (which can be viewed as a variation of the classic random-walk algorithm) comes with a very strong amortized guarantee: it performs $O(1)$ amortized expected evictions per insertion. Furthermore, we show that the traditional random-walk algorithm cannot match this guarantee.

Finally, our insertion protocol also comes with the feature that, for any key $u$ in the hash table, the query algorithm can \emph{guess} which of the two bins $h_1(u), h_2(u)$ the key $u$ is in with probability $1 - o(1)$ of being correct. Thus positive queries can complete in $1 + o(1)$ expected bin accesses. 
\end{abstract}
\newpage\clearpage\pagenumbering{arabic}
\section{Introduction}\label{introsection}

\emph{Bucketized cuckoo hashing} \cite{DW07} is an open-addressed hash table that uses two hash functions $h_1, h_2: U \rightarrow [m]$ to assign objects $x \in U$ to bins $b \in [m]$ of size $\ell$. By rearranging elements over time, the hash table maintains the invariant that, if an element $x$ is present, it is in one of bins $h_1(x)$ or $h_2(x)$. 

In practice, even relatively small buckets allow the hash table to operate at very high load factors \cite{DW07,li2014algorithmic,fan2014cuckoo,kuszmaul2016fast}. For query-heavy workloads, the result is a nearly ideal hash table: each query completes with at most two cache misses and in worst-case time $O(\ell)$. The main drawback is the insertion time, which can become quite slow as the hash table reaches higher load factors \cite{li2014algorithmic,fan2014cuckoo,kuszmaul2016fast}.

The situation from a theory perspective is quite similar. Given a bucket size of $\ell$, the maximum load factor $1 - \epsilon^*$ at which a bucketized cuckoo hash table can operate satisfies $\epsilon^* = (2/e)^{\ell + o(\ell)}$ as $\ell\rightarrow\infty$ \cite{DW07,binsthreshold2, binsthreshold1, Lelarge}. This means that the \emph{query time} is $O(\log (({\epsilon^*})^{-1}))$, which is far better than one obtains with other common schemes such as linear probing, uniform probing, etc. \cite{knuthvol3}. 

But the insertion time is much trickier to reason about. A classic result by Dietzfelbinger and Weidling \cite{DW07} shows that it is possible to support $1/\epsilon^{O(\ell)}$ expected-time insertions at any load factor $1 - \epsilon$ satisfying $\epsilon \ge e^{-\ell/16}$ -- for small $\epsilon$, say $\epsilon = e^{-\Theta(\ell)}$, this bound becomes $\epsilon^{-O(\log \log \epsilon^{-1})}$. Whether the insertion time can be improved to $O(\epsilon^{-1})$, or even to $\poly(\epsilon^{-1})$, has remained open \cite{DW07,FriezePetti,KM25} (although it is conjectured that several natural insertion algorithms should do this \cite{DW07}). And, when $\epsilon$ gets closer to $\epsilon^*$, it remains unknown even whether any bound of the form $f(\epsilon^{*})$ should be possible.

The main contribution of this paper is a new insertion algorithm that supports $\epsilon$ as small as $(1 + O(1)) \cdot \epsilon^*$ with a worst-case expected insertion time of $O(\epsilon^{-1} \cdot \ell)$. When $\epsilon$ is small, satisfying $\epsilon = e^{-\Theta(\ell)}$, even the $\ell$ term disappears, resulting in a time bound of $O(\epsilon^{-1})$. The insertion algorithm can also support $\epsilon$ of the form $(1 + \delta) \epsilon^*$, where $\delta \in [0, 1]$ can be as small as $e^{-\Omega(\ell)}$, and where the expected insertion time becomes $O(\epsilon^{-1} \delta^{-1})$.

In addition to being time efficient, our algorithm comes with several surprising features. Even though it can be viewed as a variation of the classical ``random-walk'' insertion scheme, it actually performs \emph{provably} fewer total evictions in order to get to high load factors. And, even though the primary goal of the algorithm is to support fast insertions, it also comes with an inadvertent advantage for queries: for any element $x$ in the hash table, if $\ell = \omega(1)$, then the query algorithm can \emph{predict} with probability $1 - o(1)$ which of the bins $h_1(x)$ or $h_2(x)$ contains $x$, at any given moment. This means that, in many natural parameter regimes, successful queries can complete with $1 + o(1)$ expected cache misses.

Throughout the paper, we assume that $\ell$ is at least a sufficiently large constant, and that $\ell \le 1.5\ln(m)$. (The reason for the upper bound is to avoid issues that arise once $\E[\epsilon^*]$ gets to about $1/\sqrt{m}$ or smaller.) 

\paragraph{What makes insertions hard?} Before diving into the results, let us take a moment to understand what makes bucketized cuckoo hashing difficult to analyze in the first place. 

Suppose we wish to analyze the basic \emph{random-walk} insertion strategy: to insert an element $x$ into a full bin $h_i(x)$, we simply \emph{evict} a random $y$ from the bin, and recursively insert $y$ into its other bin $b \in \{h_1(y), h_2(y)\} \setminus h_i(x)$. 

The main difficulty is the issue of \emph{recycled randomness}. The first time that an element $x$ tries one of its bins $h_i(x)$, that bin is truly random. But, if the element $x$ is evicted to its other bin $h_{\overline{i}}(x)$, and is then eventually evicted \emph{back} to $h_i(x)$, we can no longer think of $h_i(x)$ as random. Intuitively, the fact that the bin already evicted $x$ in the past makes the same bin more likely to perform additional evictions in the future. 

One way to combat recycled randomness is to allow for $d > 2$ hash functions, so that we can perform more ``truly random'' probes per element. In the case where $d$ is large and $\ell = 1$, this approach is known as $d$-ary cuckoo hashing \cite{FPSS05,bell20241,KM25}, and efficient insertions become possible \cite{KM25}. However, as noted by Kuszmaul and Mitzenmacher \cite{KM25}, the issue of reappearance dependencies seems to be more substantial for the bucketized case, since it is no longer acceptable to ``waste hashes'': one must make good use out of \emph{both} hash functions if one wishes to achieve load factors close to $1 - \epsilon^*$. 

The issue of recycled randomness is closely related to another issue, which is that of a \emph{hotspot}. If a bin $b$ gets many evictions, then the bins $b'$ that it evicts to will also tend to get more evictions, and so on. If, eventually, these bins have tried both hashes for all their elements, then we can end up in a situation where each insertion goes through a long eviction chain of bins that have all already used up all of their ``fresh randomness''. These hotspots are exacerbated by a natural type of positive feedback loop: the more elements a bin has evicted in the past, the more likely they are to return, and the more likely the bin is to evict its other elements, eventually using up all its ``fresh randomness''. Avoiding these types of hotspots is crucial for achieving good insertion times.

\paragraph{Part 1: A Warmup Algorithm. } We begin the paper in Section \ref{firstalgsection} by presenting a relatively simple warmup algorithm that shows how to get around (at least, in some parameter regimes) the issues of recycled randomness and hotspots. The warmup algorithm is already able to achieve an expected insertion time of $O(\epsilon^{-1} \ell)$, but with the restriction that it only supports $\epsilon$ down to $O({\epsilon^*}e^{\ell^{.51}})$ (Theorem \ref{thmfirst}). 

At a high level, the warmup algorithm makes two modifications to the random-walk insertion strategy. First, to combat recycled randomness, it prioritizes evicting elements $y$ whose other hash has never yet been tried. Second, whenever the algorithm \emph{does} return an element $y$ to a bin $h_i(y)$ that had already contained $y$ in the past, the algorithm \emph{commits} to placing $y$ in the top slot of the bin. This may seem like a strange optimization, since it is actually a \emph{restriction} on the algorithm's ability to choose which element to evict. But this restriction ends up being essential to the analysis: it guarantees that the bin contains \emph{at most} one such $y$ at any given moment, preventing positive feedback loops in which the bin becomes more and more saturated by elements who have already been evicted from the bin in the past. 

With these modifications in place, we show that there is a clean path to analyzing the algorithm. Specifically, we formalize the idea that there are no large hotspots, proving that contiguous sequences of evictions that use recycled randomness are $O(1)$ length in expectation (this corresponds to a path in the ``corruption graph'' in the analysis); then, we use the fresh randomness that we encounter whenever we are not in a hotspot to prove that each insertion takes expected time $O(\epsilon^{-1}\ell)$.

An interesting feature of the algorithm (which also holds for the second algorithm) is that the \emph{total} number of evictions over all insertions is expected $O(\ell m)$. Notice that there are $\Theta(\ell m)$ total insertions, so this means that the amortized expected number of \emph{evictions} per operation is $O(1)$. We note that, although our algorithm can be viewed as a variation of the random-walk algorithm, the random-walk algorithm provably does not achieve this amortized bound: We argue in Section \ref{sec:separation} that, if one wishes to get to a load factor of the form $1-(2/e)^{\ell-o(\ell)} = 1 - (\epsilon^*)^{1 - o(1)}$, then the random-walk strategy performs at least $\omega(1)$ amortized expected evictions per insertion. Thus the optimizations in the algorithm make a difference not just analytically but also in terms of the actual behavior of the algorithm.

\paragraph{Part 2: The Full Result. }
Having completed the warmup, in Section \ref{secondalgsection}, we present the main result of the paper: an insertion algorithm that supports expected insertion time $O(\epsilon^{-1} \ell)$ for any $\epsilon \in (1 + \Omega(1))\epsilon^*$. (When $\epsilon$ is small, satisfying $\epsilon \le e^{-\Omega(\ell)}$, the insertion time improves slightly to $O(\epsilon^{-1})$.) In fact, the algorithm even supports $\epsilon$ as small as $(1 + \delta) \epsilon^*$, where $\delta \in (0, 1)$ can itself be as small as $e^{-\Omega(\ell)}$, and where the expected insertion time becomes $O(\epsilon^{-1} \delta^{-1})$ (Theorem \ref{secondalgthm}). 

We remark that the time bound of $O(\epsilon^{-1} \delta^{-1})$ comes from the following subtle phenomenon: at load factor $1 - (1 - \delta)\epsilon^*$, the hash table is so full that the vast majority (roughly a $1- \delta$ fraction) of remaining free slots are \emph{unreachable} (every element $u$ capable of using the free slot is already in the same bin as the free slot). This means that  the \emph{effective} number of remaining free slots is roughly $\delta \epsilon$, hence the time bound of $O(\delta^{-1} \epsilon^{-1})$. We remark that this phenomenon is one that no previous insertion algorithm has had to encounter because no algorithm has been able to support load factors so close to $1 - \epsilon^*$. 

The basic idea of the algorithm is to mix the warmup algorithm with a very \emph{shallow} breadth-first-search step. When evicting from a bin, we first check whether any of the elements $y$ in the bin can be evicted to another bin that has a free slot. If so, we evict the element $y$ to that bin and continue the insertion process. If not, we continue with essentially the same insertion process as used by the warmup algorithm. 

This shallow BFS step allows us to argue that, by the time we have visited each bin $b$, say, $1.6 \ell$ times on average, we have most likely already made ``good use'' of all the hashes of all the elements in the bin. That is, if the bin ever had any elements $y$ whose other hash is a bin with a free slot, we have most likely already sent $y$ there. This, in turn, allows us to argue that the hash table reaches very high load factors even \emph{well before} it has evicted every element once. This means that insertions get to experience a hash table in which there are still many ``pretty fresh'' probes to explore (elements whose second hash has been explored by the BFS step but never by the random-walk step). These ``pretty fresh'' probes, in turn, allow the insertion to efficiently find elements that have never yet been probed even by the BFS step, and those elements have a reasonable chance of taking the insertion to a bin with a free slot. By formalizing this high-level intuition into the analysis, we are able to prove that the expected insertion time is $O(\epsilon^{-1}\delta^{-1})$.

Interestingly, the analysis also naturally yields a precise approximation for $\epsilon^*$, up to $1 \pm o(1)$ factors. Namely, we get that
\begin{equation}\label{eq:epsilonstar}
\epsilon^* = (1 \pm O(1/\ell)) \cdot \frac{2\cdot(2/e)^\ell}{\ell^{1.5}\sqrt{2\pi}},
\end{equation}
or even more precisely that $\epsilon^* \approx (1 \pm e^{-\Omega(\ell)}) \cdot \frac{1}{\ell} \cdot \E_{X\sim\operatorname{Poisson}(2\ell)}[\max(0, \ell - X)]$. This approximation captures the fact that, for the most part, the load threshold is bottlenecked by the existence of bins $b$ for which there are fewer than $\ell$ \emph{distinct elements} $x$ satisfying $h_1(x) = b$ or $h_2(x) = b$. Such bins necessarily must have free slots in any hash-table configuration. Although an exact calculation of $\epsilon^*$ is known \cite{Lelarge,FKPloadthresholds}, the observation that, up to $1 \pm o(1)$ factors, $\epsilon^*$ is approximated by \eqref{eq:epsilonstar} appears to be a new observation that naturally pops out of our analysis.

\paragraph{Optimizing Queries. }In Section \ref{querysection}, we turn our attention to queries. We show that, with a simple modification, both of our algorithms allow for one to predict which bin $h_1(x), h_2(x)$ a given key $x$ will be in at any given moment. And, assuming $\ell = \omega(1)$, this prediction will be correct with probability $1 - o(1)$.

Specifically, we assign each key $x$ a random \emph{eviction priority} $p(x) \in [0,1]$, and when breaking ties to decide who to evict from a bin, we always evict the element $y$ with the highest eviction priority. This results in the following effect: at any given moment, there exists a threshold $\tau$ such that, for a given element $x$ in the hash table, the question of which bin $h_1(x)$ or $h_2(x)$ the element is in is, with probability at least $1 - o (1)$, correctly predicted by whether $p(x) < \tau$. 

As a consequence, for \emph{successful} queries (queries to elements that are present), the algorithm needs only examine $1 + o(1)$ bins in expectation. If we consider such a data structure in the external-memory model, with blocks of size $B = \ell$, then the expected number of \emph{cache misses} per successful query is also $1 + o(1)$, while the \emph{worst-case} number of cache misses remains $2$.

\paragraph{Related work. }Cuckoo hashing (with buckets of size 1) was introduced and analyzed by Pagh and Rodler in a highly influential 2001 paper \cite{PR01} (building on a static construction by Pagh \cite{pagh2001cell} in the same year). Similar data structures were independently proposed in earlier works, including a 1981 technical report by Lyon \cite{Lyon1981AlternationTree,NTRL1981LyonAlternationTree} (also published as a journal article in 1985 \cite{lyon1985achieving}), which described bucketized cuckoo hashing with buckets of size $\ell \ge 2$; and a Sun Microsystems patent filed in 1997 by Hagersten and  Hill \cite{HagerstenHill1999ScalableSharedMemory,hagersten2001shared}, which described the non-bucketized variant.

As discussed earlier, bucketized cuckoo hashing \cite{lyon1985achieving, DW07} has been studied in both the static \cite{DW07,binsthreshold2, binsthreshold1, Lelarge} and dynamic \cite{DW07,FriezePetti} settings. In the dynamic setting, it is conjectured that both breadth-first-search and random-walk insertions should achieve $O(\epsilon^{-1})$ expected time (for $\epsilon$ close to $\epsilon^*$) , but the best known bounds have remained either super-polynomial in $\epsilon^{-1}$ (for breadth-first search \cite{DW07}) or limited only to very large $\epsilon$ (for random-walk insertions \cite{FriezePetti}, with $\epsilon \ge \tilde{\Omega}(1/\sqrt{\ell})$). In the static setting, the maximum load factor $1 - \epsilon^*$ at which a bucketized cuckoo hash table can operate satisfies $\epsilon^* = (2/e)^{\ell + o(\ell)}$ as $\ell\rightarrow\infty$ \cite{DW07,binsthreshold2, binsthreshold1, Lelarge}. Researchers have also studied how the critical load threshold behaves for variations of bucketized cuckoo hashing in which buckets are allowed to overlap \cite{lehman20093, walzer2023load}.

Another widely-studied variant of cuckoo hashing is $d$-ary cuckoo hashing \cite{FPSS05}, where each element hashes to $d > 2$ slots (as opposed to $2$ buckets). There has been a large body of work on analyzing the insertion time of $d$-ary cuckoo hashing under different algorithms (especially random walk insertion) \cite{PolylogAlan,FPS13,wearmin,FriezeJohansson,newinsertion,Walzer,bell20241,KM25}. The state of the art \cite{KM25} allows for $O(\epsilon^{-1})$ time insertions for $\epsilon$ within a constant-factor of the optimal $\epsilon^* = e^{-d \pm o(1)}$ (with $\epsilon^*$ defined, now, for $d$-ary cuckoo hashing).

An advantage of bucketized cuckoo hashing over $d$-ary cuckoo hashing is the data locality \cite{li2014algorithmic,fan2014cuckoo}. While each object has $2\ell$ possible slots, those slots come in two contiguous blocks of memory. In practice, this allows bucketized cuckoo hashing to support queries in at most two cache misses \cite{li2014algorithmic,fan2014cuckoo}.

Bucketized and $d$-ary cuckoo hashing can be combined, with $d$ hash functions and buckets of size $\ell$. This variation has been studied primarily in the offline setting, where the goal is to determine the critical load threshold for any given number $d$ of hash functions and bin size $\ell$. After a long line of work \cite{GWloadthresholds,xorsatthresholds,FP12,FM12,Lelarge}, it is now known how to determine this threshold for arbitrary $(d, \ell)$ \cite{FKPloadthresholds}. 

\section{Preliminaries}\label{epsstarsection}

\subsection{Definitions and Notation}\label{prelimsubsect}

We will sometimes refer to the hash table objects $x\in U$ as \underline{balls}, as we are inspired by traditional ``balls into bins'' analyses. We say that each bin $B\in[m]$ consists of $\ell$ slots, so our hash table has $\ell m$ slots and thus could possibly accommodate up to $\ell m$ balls. Slots that do not currently have any ball stored in them are referred to as \underline{empty} or \underline{free}. Bins that have fewer than $\ell$ balls stored in them are referred to as \underline{non-full}.

For our algorithms and analysis, we imagine that there are $\ell m$ total balls that we have waiting to be inserted (that is, the same number of balls as table slots). Our algorithms insert these balls one by one until a prescribed stopping time is reached. We call the entire process of inserting these balls \underline{the insertion process}. We consider the $\ell$ slots of each bin to be ordered from first to last within the bin (and keep this order throughout the insertion process).

\begin{definition}\label{epsstardef}
Let $\epsilon^*$ be the random variable such that no valid assignment of balls to bins exists when there are $\epsilon^*\ell m$ balls (of the $\ell m$ total balls) left (and thus $\epsilon^*\ell m$ empty slots left).
\end{definition}  This definition is different from that in our introduction and some other works, which use $\epsilon^*$ to denote the mean value of this random variable, which we now call $\mb{E}[\epsilon^*]$. Prior work has shown that for any fixed $\ell$, we have that $\epsilon^*$ is a sharp threshold under the random graphs definition as $m$ tends towards infinity \cite{FKPloadthresholds}. The existence and formula for $\mb{E}[\epsilon^*]$ is known for any fixed $\ell$ and $m\rightarrow\infty$ \cite{Lelarge,FKPloadthresholds}.

We assume throughout the paper that $m\rightarrow\infty$, that $\ell$ is at least a sufficiently large constant, and that $\ell\le 1.5\ln(m)$. As we will show that $\epsilon^*=(2/e)^\ell\poly(\ell)$, this restriction will imply that $\epsilon^*=\Omega(m^{-.47})$.

The definition we use for \underline{with high probability in $m$} is that for any $c\in\mb{N}$, there is a $C$ (depending on $c$) such that the probability is at least $1-Cm^{-c}$.

\subsection{Bounding the Optimal Load Factor $\epsilon^*$}\label{boundingsubsect}

If a given bin receives only $j$ hashes out of our collection of $\ell m$ balls (so $2\ell m$ total hashes), we have that at least $\max(\ell-j,0)$ slots must remain empty under any valid assignment, which we call the \underline{leftover slots} for that bin.
\begin{definition}\label{epstildedef}
Let $\tilde\epsilon$ be the random variable such that there are $\tilde\epsilon\ell m$ total leftover slots from our collection of $\ell m$ balls, that is, $\tilde\epsilon\ell m=\sum_{\text{bins }B}\max(0,\ell-(\text{\# hashes to }B))$.
\end{definition}
We see that $\epsilon^*\ge\tilde\epsilon$ always, as if there are $\tilde\epsilon\ell m$ leftover slots, then there is not any valid assignment of $(1-\tilde\epsilon)\ell m + 1$ balls.

Using the fact that binomial random variables behave essentially like Poissons, we can obtain the following simple approximations for the mean and concentration bounds for $\tilde\epsilon$.
\begin{restatable}{lemma}{expepstilde}\label{expepstilde}
Assuming that $\ell=O(\log m)$, we have that \[\mb{E}[\tilde\epsilon]=(1 \pm O(1/m)) \cdot\frac 1\ell\cdot \E_{X\sim\operatorname{Poisson}(2\ell)}[\max(0, \ell - X)]=(1 \pm O(1/\ell)) \cdot \frac{2\cdot(2/e)^\ell}{\ell^{1.5}\sqrt{2\pi}}.\]
\end{restatable}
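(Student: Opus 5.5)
By linearity of expectation, $\mb{E}[\tilde\epsilon]\ell m = m\cdot \mb{E}[\max(0,\ell-Y)]$, where $Y\sim\operatorname{Bin}(2\ell m,1/m)$ is the number of hashes landing in a fixed bin. (We model each of the $2\ell m$ hashes as an independent uniform bin; whether $h_1(x)=h_2(x)$ is allowed changes the computation only by a $1\pm O(1/m)$ factor.) Hence
\[\mb{E}[\tilde\epsilon]=\frac1\ell\sum_{j<\ell}(\ell-j)\Pr[Y=j],\]
and the proof has two parts: compare the binomial with the Poisson, then evaluate the Poisson sum asymptotically.

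\textbf{Binomial vs.\ Poisson.} Write $n=2\ell m$, $p=1/m$, $\lambda=np=2\ell$. For $j<\ell$,
\[\frac{\Pr[Y=j]}{\Pr[X=j]}=\frac{n(n-1)\cdots(n-j+1)}{n^j}\,(1-p)^{n-j}e^{\lambda}.\]
The first factor is $\exp(-O(j^2/n))$. For the second, $(1-p)^{n-j}e^{\lambda}=\exp(-O(np^2)+jp)$, since $n\ln(1-p)=-\lambda-O(np^2)$. Using $np^2=2\ell/m$ and $j<\ell=O(\log m)$, the ratio is $1\pm O(\ell^2/m)=1\pm O(\log^2 m/m)$ uniformly in $j<\ell$. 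This is $1\pm\tilde O(1/m)$. To get exactly $O(1/m)$ I would track the error more carefully: the sum is dominated by $j=\ell-O(1)$, and the $j^2/n$ and $jp$ corrections there are $O(\ell/m)$. If needed I would state the bound as $1\pm O(\ell^2/m)$, which is still far smaller than $1/\ell$. Summing the ratio bound against the nonnegative weights $(\ell-j)$ gives the first equality.

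\textbf{Poisson asymptotics.} Set $a_j=\Pr[X=j]=e^{-2\ell}(2\ell)^j/j!$. For $j<\ell$, $a_{j-1}/a_j=j/(2\ell)$. So going down from $j=\ell-1$, the terms decay roughly geometrically with ratio close to $1/2$, and
\[\sum_{k\ge1}k\,a_{\ell-k}=a_{\ell}\sum_{k\ge1}k\prod_{i=0}^{k-1}\frac{\ell-i}{2\ell}.\]
Each product equals $2^{-k}\bigl(1-O(k^2/\ell)\bigr)$. Because of the geometric decay, the $k^2/\ell$ errors contribute only a relative $O(1/\ell)$ after summation. Thus the sum is $a_\ell\bigl(\sum_k k2^{-k}\bigr)(1\pm O(1/\ell))=2a_\ell(1\pm O(1/\ell))$. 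The tail $k>\sqrt\ell$ is negligible, being bounded by $2^{-k}$ times a polynomial in $k$.

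Next, apply Stirling with its $1+O(1/\ell)$ error:
\[a_\ell=\frac{e^{-2\ell}(2\ell)^\ell}{\sqrt{2\pi\ell}\,(\ell/e)^\ell}\,(1\pm O(1/\ell))=\frac{(2/e)^\ell}{\sqrt{2\pi\ell}}\,(1\pm O(1/\ell)).\]
Multiplying by $2/\ell$ gives $\frac{2(2/e)^\ell}{\ell^{1.5}\sqrt{2\pi}}(1\pm O(1/\ell))$, which is the claimed formula.

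\textbf{Main obstacle.} The delicate step is the error bookkeeping in the geometric-sum comparison: the products $\prod_i(1-i/\ell)$ must be handled uniformly, splitting at $k=\sqrt\ell$ and bounding the tail crudely. The binomial-to-Poisson error must also be shown to be of the stated order. Everything else is routine.
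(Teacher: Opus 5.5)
Your argument follows the paper's proof step for step. Linearity of expectation reduces $\mb{E}[\tilde\epsilon]$ to $\frac1\ell\sum_{j<\ell}(\ell-j)\Pr[\mathrm{Bin}(2\ell m,1/m)=j]$, and a termwise comparison with $\operatorname{Poisson}(2\ell)$ gives the first equality. The second comes from reversing the sum, factoring out $\Pr[X=\ell]$, replacing $\prod_{i<k}\frac{\ell-i}{2\ell}$ by $2^{-k}$ so that $\sum_k k2^{-k}=2$, and applying Stirling. Your handling of the second equality is correct and more careful than the paper's, which asserts the $2^{-k}$ replacement without justification. You do not even need the split at $k=\sqrt\ell$: $1-\frac{k^2}{2\ell}\le\prod_{i<k}(1-\frac{i}{\ell})\le 1$ for every $k$, and $\sum_k k^3 2^{-k}<\infty$.

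The one thing to correct is the error term in the first equality. Your termwise bound is looser than necessary. For $j<\ell$ and $n=2\ell m$ one has $j^2/n<\ell/(2m)$, $jp<\ell/m$ and $np^2=2\ell/m$, so the ratio is $1\pm O(\ell/m)$, not $1\pm O(\ell^2/m)$.

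However, your plan to reach $1\pm O(1/m)$ by more careful bookkeeping cannot work. Expanding, the log of the ratio at $j=\ell-k$ is $-\frac{(\ell-k)(\ell-k-1)}{4\ell m}-\frac{k}{m}+O(\ell/m^2)$. So that term is off by the factor $1-\frac{\ell}{4m}+O(k/m)$. The weights $k\Pr[X=\ell-k]\approx k2^{-k}\Pr[X=\ell]$ give $k$ a bounded mean, so the whole sum is off by the factor $1-\frac{\ell}{4m}+O(1/m)$.

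Hence the first equality holds with $1\pm O(\ell/m)$. It holds with $1\pm O(1/m)$ only if the implied constant may depend on $\ell$. The paper's own computation likewise produces only $1\pm O(\ell/m)$ factors before asserting $O(1/m)$. Since $\ell/m=o(1/\ell)$ when $\ell=O(\log m)$, nothing in the second equality or later depends on this. Your fallback, stated as $1\pm O(\ell/m)$, is therefore the correct conclusion.
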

\begin{restatable}{lemma}{boundingepstilde}\label{boundingepstilde}
Assuming that $\ell=O(\log m)$, we have that, with high probability in $m$,\[\tilde\epsilon=\mb{E}[\tilde\epsilon]\pm m^{-.49}\]
\end{restatable}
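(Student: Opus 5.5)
The plan is to prove concentration by a bounded-differences (McDiarmid) argument applied to the $2\ell m$ independent hash values. Write
\[
\tilde\epsilon\ell m = F(h_1(x_1),h_2(x_1),\dots,h_1(x_{\ell m}),h_2(x_{\ell m})),
\qquad F=\sum_{B}\max\bigl(0,\ell-N_B\bigr),
\]
where $N_B$ is the number of hashes landing in bin $B$. The function $F$ is a function of $2\ell m$ independent uniform variables on $[m]$.

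\textbf{Step 1 (bounded differences).} Changing a single hash value moves one hash from a bin $B$ to a bin $B'$. This decreases $N_B$ by one and increases $N_{B'}$ by one. Each summand $\max(0,\ell-N)$ is $1$-Lipschitz in $N$, so $F$ changes by at most $2$. (In fact it changes by at most $1$ in absolute value, but $2$ suffices.)

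\textbf{Step 2 (apply McDiarmid).} With $n=2\ell m$ coordinates and differences $c_i\le 2$, McDiarmid's inequality gives
\[
\Pr\bigl[|F-\mb{E}F|\ge t\bigr]\le 2\exp\!\left(-\frac{2t^2}{\sum_i c_i^2}\right)=2\exp\!\left(-\frac{t^2}{4\ell m}\right).
\]

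\textbf{Step 3 (choose $t$).} Since $\tilde\epsilon = F/(\ell m)$, the target deviation is $t=\ell m\cdot m^{-.49}=\ell m^{.51}$. The exponent then becomes
\[
\frac{t^2}{4\ell m}=\frac{\ell m^{.02}}{4}.
\]
Hence the failure probability is at most $2\exp(-\ell m^{.02}/4)$. This is $m^{-c}$ for every constant $c$ once $m$ is large, because $m^{.02}$ grows faster than $c\ln m$. So the bound holds with high probability in $m$.

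\textbf{Remarks.} The assumption $\ell=O(\log m)$ is not actually needed here; since $\ell\ge1$, it only helps the exponent. The only point to check carefully is that the hash values are fully independent, which the paper's random-hash model provides.

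If one preferred to avoid McDiarmid, an alternative is Poissonization: replace the counts $N_B$ by independent $\operatorname{Poisson}(2\ell)$ variables. Then $F$ becomes a sum of $m$ independent terms bounded by $\ell$, and Hoeffding's inequality applies, after transferring back to the multinomial setting. This transfer is the fiddlier route, so I would use McDiarmid as the main argument.

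I do not expect a real obstacle. The only subtlety is confirming that the Lipschitz constant is $O(1)$ rather than $O(\ell)$. That holds because each hash affects only two bin counts, and each term is $1$-Lipschitz.
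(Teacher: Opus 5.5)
Your proposal is correct and is essentially the paper's proof. The paper also applies McDiarmid's inequality over the $2\ell m$ independent hashes, with each hash changing $\tilde\epsilon$ by at most $1/(\ell m)$. It obtains a failure probability of $2e^{-\ell m^{.02}/2}$, matching your $2\exp(-\ell m^{.02}/4)$ up to constants.
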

The proofs of Lemma \ref{expepstilde} and \ref{boundingepstilde} will be given in the Appendix \ref{app:foo}.

Issues with our analysis arise if $\tilde\epsilon$ is not sufficiently concentrated, which we see by Lemma \ref{boundingepstilde} might occur if the hash table stores $\gtrsim\ell m-O(\sqrt{m})$ balls. To avoid this, we want to assume that $\ell\le c\ln(m)$ for some $c<\frac{.5}{\ln(e/2)}\approx 1.63$. To give ourselves a bit more wiggle room, we assume that $\ell\le 1.5\ln(m)$. As we proved in Lemma \ref{expepstilde} that $\mb{E}[\tilde\epsilon]=(2/e)^\ell\poly(\ell)$, this in particular implies that for the rest of the paper we can assume that $\mb{E}[\tilde\epsilon]\ge(2/e)^{1.5\ln(m)}\poly(\log m)=\Omega(m^{-.47})$. We then also have that $m^{-.49}$ is $O(\mb{E}[\tilde\epsilon]/\ell)$, and so Lemmas \ref{expepstilde} and \ref{boundingepstilde} give us that, when $\ell\le 1.5\ln(m)$, then with high probability in $m$ we have $\mb{E}[\tilde\epsilon]=\Omega(m^{-.47})$ and \[\epsilon^*\ge\tilde\epsilon=\mb{E}[\tilde\epsilon]\pm m^{-.49}=(1 \pm O(1/\ell)) \cdot \frac{2\cdot (2/e)^\ell}{\ell^{1.5}\sqrt{2\pi}}.\]

In Section \ref{firstalgsection}, we show that our first algorithm successfully inserts at least $(1-4e^{\ell^{.51}}\tilde\epsilon)\ell m$ elements with probability at least $1-O(\ell^{-6}m^{-1})$, in which case we have \[4e^{\ell^{.51}}\tilde\epsilon=\left(\frac 2e\right)^{\ell-o(\ell)}\ge\epsilon^*.\] In Section \ref{secondalgsection}, and in particular Lemmas \ref{adapted2} and \ref{epsstarsecond}, we get even more precise, and show that our second algorithm succeeds at inserting at least $(1-(1+.99^\ell)\tilde\epsilon)\ell m$ elements with probability at least $1-O(\ell^{-6}m^{-1})$, showing that with probability at least $1-O(\ell^{-6}m^{-1})$ we have \[(1+.99^\ell)\tilde\epsilon=(1\pm O(1/\ell))\cdot\frac{2\cdot (2/e)^\ell}{\ell^{1.5}\sqrt{2\pi}}\ge\epsilon^*\ge\tilde\epsilon=(1 \pm O(1/\ell)) \cdot \frac{2\cdot (2/e)^\ell}{\ell^{1.5}\sqrt{2\pi}}.\]
In conclusion, Lemmas \ref{adapted2} and \ref{epsstarsecond} give the following corollary:
\begin{corollary}\label{boundingepsstar}
Assume that $\ell\le 1.5\ln(m)$. With probability $1-O(\ell^{-6}m^{-1})$, we have that \[
\mb{E}[\tilde\epsilon]-m^{.51}\le\tilde\epsilon\le\epsilon^*\le(1+.99^\ell)\tilde\epsilon\le(1+.99^\ell)\mb{E}[\tilde\epsilon]+m^{.51}
\]
and thus
\[
\epsilon^*=(1 \pm O(1/\ell)) \cdot \frac{2\cdot (2/e)^\ell}{\ell^{1.5}\sqrt{2\pi}}.
\]Or, even more precisely, \[
\epsilon^*=(1 \pm e^{-\Omega(\ell)})\E[\tilde\epsilon]= (1 \pm e^{-\Omega(\ell)})\cdot\frac 1\ell\cdot \E_{X\sim\operatorname{Poisson}(2\ell)}[\max(0, \ell - X)].
\]
\end{corollary}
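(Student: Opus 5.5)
This corollary is a bookkeeping consequence of the four ingredients already in hand: Lemma \ref{expepstilde}, Lemma \ref{boundingepstilde}, the trivial inequality $\tilde\epsilon\le\epsilon^*$, and the upper bound from Lemmas \ref{adapted2} and \ref{epsstarsecond}. The plan is to intersect the good events of these lemmas with a union bound, and then chain the inequalities.

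First I fix the probability space. Lemma \ref{boundingepstilde} gives $|\tilde\epsilon-\mb{E}[\tilde\epsilon]|\le m^{-.49}$ with high probability in $m$. Taking $c=2$, its failure probability is $O(m^{-2})$, which is $O(\ell^{-6}m^{-1})$ since $\ell\le 1.5\ln m$. Lemmas \ref{adapted2} and \ref{epsstarsecond} say that, with probability $1-O(\ell^{-6}m^{-1})$, the second algorithm inserts at least $(1-(1+.99^\ell)\tilde\epsilon)\ell m$ balls into a valid configuration. On that event a valid assignment exists with $(1+.99^\ell)\tilde\epsilon\ell m$ slots left, so by Definition \ref{epsstardef} we get $\epsilon^*\le(1+.99^\ell)\tilde\epsilon$. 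A union bound puts both events, together with the deterministic inequality $\tilde\epsilon\le\epsilon^*$ noted after Definition \ref{epstildedef}, on a common event of probability $1-O(\ell^{-6}m^{-1})$.

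On this event the displayed chain follows directly. The outer additive errors $m^{-.49}$ are at most the stated $m^{.51}$ (which is presumably a typo for $m^{-.51}$ or $m^{-.49}$; any of these works). Multiplying by $1+.99^\ell\le 2$ only changes constants.

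For the asymptotic forms, I use that $\ell\le 1.5\ln m$ implies $\mb{E}[\tilde\epsilon]=\Omega(m^{-.47})$, so $m^{-.49}=o(\mb{E}[\tilde\epsilon])$. More precisely, $m^{-.49}/\mb{E}[\tilde\epsilon]=O(m^{-.02})$, and since $\ell\le1.5\ln m$ we have $m^{-.02}\le e^{-\ell/75}=e^{-\Omega(\ell)}$. Together with $.99^\ell=e^{-\Omega(\ell)}$, this gives $\epsilon^*=(1\pm e^{-\Omega(\ell)})\mb{E}[\tilde\epsilon]$. Substituting the first equality of Lemma \ref{expepstilde} (its $1\pm O(1/m)$ factor is absorbed) yields the Poisson expression, and its second equality yields the $(1\pm O(1/\ell))\frac{2(2/e)^\ell}{\ell^{1.5}\sqrt{2\pi}}$ form.

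The only real content is the upper bound $\epsilon^*\le(1+.99^\ell)\tilde\epsilon$, which is imported from Lemmas \ref{adapted2} and \ref{epsstarsecond}. The one step needing care is converting additive errors into multiplicative $e^{-\Omega(\ell)}$ errors, which relies on the $m^{-.47}$ lower bound coming from $\ell\le1.5\ln m$.
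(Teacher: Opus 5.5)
Your proposal is correct and follows the paper's route: the corollary is the intersection of the good events of Lemma \ref{boundingepstilde} and of Lemmas \ref{adapted2} and \ref{epsstarsecond}, with the latter read as the paper intends, namely $\epsilon^*\le(1+.99^\ell)\tilde\epsilon$. On that event you chain with the deterministic bound $\tilde\epsilon\le\epsilon^*$ and Lemma \ref{expepstilde}, and your conversion of the additive $m^{-.49}$ error into an $e^{-\Omega(\ell)}$ relative error via $\mb{E}[\tilde\epsilon]=\Omega(m^{-.47})$ and $m^{-.02}\le e^{-\ell/75}$ spells out a step the paper leaves implicit.

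The only slip is your parenthetical that an $m^{-.51}$ window would also work. Lemma \ref{boundingepstilde} only gives $m^{-.49}$, and for bounded $\ell$ the fluctuations of $\tilde\epsilon$ are genuinely of order $m^{-1/2}$, so that version would be false. Nothing in your argument relies on it.
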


The fact that binomial random variables can be closely approximated by Poisson random variables will be helpful (implicitly) throughout the paper. To capture this formally, we use the following two basic lemmas.
\begin{restatable}{lemma}{poisson}\label{poisson}
If $c\in(0,1)$, then \[\mb{P}(\mathrm{Bin}(c\ell m,1/m)\ge\ell-1)\le c^{-1}e^{-\ell(c-1-\ln(c))}\le c^{-1}e^{-\ell(1-c)^2/2}.\]
If $c\in(1,1.7)$, then \[\mb{P}(\mathrm{Bin}(c\ell m,1/m)\le\ell)\le e^{-\ell(c-1-\ln(c))}\le e^{-\ell(c-1)^2/3}.\]
\end{restatable}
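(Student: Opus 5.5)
We treat both inequalities as Chernoff-type tail bounds for $Y\sim\mathrm{Bin}(n,1/m)$ with $n=c\ell m$, so $\mu=\mb{E}[Y]=c\ell$. The route is the exponential moment method, then a comparison of the resulting rate function with the elementary quadratic lower bounds. For every real $t$ we have $\mb{E}[e^{tY}]=(1+(e^t-1)/m)^{c\ell m}\le \exp(c\ell(e^t-1))$, using $1+x\le e^x$. This is exactly the moment generating function of $\operatorname{Poisson}(c\ell)$, which makes the binomial-to-Poisson comparison implicit and avoids any error terms in $m$.

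\emph{Lower tail case, $c\in(1,1.7)$.} Take $t=-\ln c<0$ and apply Markov to $e^{tY}$:
\[
\mb{P}(Y\le \ell)\le e^{-t\ell}\,\mb{E}[e^{tY}]\le c^{\ell}\exp\bigl(c\ell(1/c-1)\bigr)=\exp\bigl(-\ell(c-1-\ln c)\bigr).
\]
It then remains to check $c-1-\ln c\ge (c-1)^2/3$ on $(1,1.7)$. Set $x=c-1\in(0,0.7)$. The function $g(x)=x-\ln(1+x)-x^2/3$ has $g(0)=0$ and $g'(x)=x/(1+x)-2x/3=x(1-2x)/(3(1+x))$. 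This is positive on $(0,1/2)$ and negative afterwards, so it suffices to check the right endpoint. There $g(0.7)=0.7-\ln 1.7-0.49/3\approx 0.7-0.5306-0.1633>0$. This explains the restriction $c<1.7$.

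\emph{Upper tail case, $c\in(0,1)$.} Here the event is $Y\ge \ell-1$ rather than $Y\ge\ell$, and this is the source of the factor $c^{-1}$. Take $t=-\ln c>0$:
\[
\mb{P}(Y\ge\ell-1)\le e^{-t(\ell-1)}\exp\bigl(c\ell(e^{t}-1)\bigr)=c^{\ell-1}e^{\ell(1-c)}=c^{-1}\exp\bigl(-\ell(c-1-\ln c)\bigr).
\]
For the quadratic bound, set $x=1-c\in(0,1)$, so that $c-1-\ln c=-x-\ln(1-x)=\sum_{k\ge2}x^k/k\ge x^2/2$. This gives $c^{-1}e^{-\ell(1-c)^2/2}$.

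The only step needing care is the numeric verification of the constant $1/3$ near $c=1.7$. The rest consists of the standard Chernoff computation and the inequality $1+x\le e^x$. That inequality lets the bounds hold for every $m$ with no Poisson approximation error.
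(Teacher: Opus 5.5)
Your proposal is correct and matches the paper's proof. Both bound the binomial moment generating function by $\exp(c\ell(e^t-1))$ using $1+x\le e^x$, apply Markov with the exponent choice $t=\mp\ln c$, and compare $c-1-\ln c$ with the quadratic; in addition, you verify $c-1-\ln c\ge (1-c)^2/2$ and $c-1-\ln c\ge (c-1)^2/3$ on $(1,1.7)$, which the paper only asserts, and you keep the event as $\{Y\ge \ell-1\}$ where the paper's write-up has a small typo ($\ge \ell$).
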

\begin{restatable}{lemma}{genpoisson}\label{genpoisson}
Let $c_1\in (0,1)$. If $c_2\in(0,c_1)$, then \[\mb{P}(\mathrm{Bin}(c_2\ell m,1/m)\ge c_1\ell)\le e^{-\ell(c_1\ln(c_1/c_2)+c_1-c_2)}\le e^{-\ell(c_1-c_2)^2/2}.\]
If $c_2\in(c_1,1)$, then \[\mb{P}(\mathrm{Bin}(c_2\ell m,1/m)\le c_1\ell)\le e^{-\ell(c_1\ln(c_1/c_2)+c_1-c_2)}\le e^{-\ell(c_1-c_2)^2/2}.\]
\end{restatable}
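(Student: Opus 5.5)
We need to prove Lemma \ref{genpoisson}: Chernoff-type tail bounds for $\mathrm{Bin}(c_2\ell m,1/m)$ around $c_1\ell$. The approach is the standard exponential-moment (Chernoff) method with the mean $\mu=c_2\ell$. We use $1+x\le e^x$ to compare the binomial's moment generating function to that of a $\mathrm{Poisson}(\mu)$. Then we optimize the exponent, and finally prove the elementary inequality converting the exact rate function into the quadratic form.

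\textbf{Step 1: moment generating function.} Let $Y\sim\mathrm{Bin}(c_2\ell m,1/m)$. For any real $t$,
\[
\mb{E}[e^{tY}]=\bigl(1+(e^t-1)/m\bigr)^{c_2\ell m}\le \exp\bigl(c_2\ell(e^t-1)\bigr).
\]
This holds for negative $t$ as well, since $1+x\le e^x$ for all real $x$ and the base is positive.

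\textbf{Step 2: optimized Markov bounds.} For the upper tail ($c_2<c_1$), take $t=\ln(c_1/c_2)>0$ and apply Markov to $e^{tY}$:
\[
\mb{P}(Y\ge c_1\ell)\le \exp\bigl(c_2\ell(c_1/c_2-1)-t c_1\ell\bigr)=e^{-\ell(c_1\ln(c_1/c_2)+c_2-c_1)}.
\]
For the lower tail ($c_2>c_1$), take the same $t=\ln(c_1/c_2)<0$; then $\{Y\le c_1\ell\}=\{e^{tY}\ge e^{tc_1\ell}\}$, and the identical computation gives the identical exponent. Both cases thus yield the rate $f(c_1,c_2)=c_1\ln(c_1/c_2)-c_1+c_2$. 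The statement writes the sign as $+c_1-c_2$, which I take as a typo for this standard rate, which is always nonnegative. The displayed quadratic bound $(c_1-c_2)^2/2$ is consistent with $f$.

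\textbf{Step 3: the quadratic bound, $f\ge (c_1-c_2)^2/2$ for $c_1,c_2\in(0,1)$.} Fix $c_1$ and set $g(c_2)=f-(c_1-c_2)^2/2$. Then $g(c_1)=0$ and
\[
g'(c_2)=1-\frac{c_1}{c_2}+(c_1-c_2)=(c_2-c_1)\Bigl(\frac1{c_2}-1\Bigr).
\]
Since $c_2<1$, the factor $1/c_2-1$ is positive, so $g'$ has the sign of $c_2-c_1$. Hence $g$ is minimized at $c_2=c_1$, and $g\ge 0$. This is exactly where the hypothesis $c_1,c_2<1$ is used.

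The only delicate point is the sign convention in the rate function and verifying that Step 3 needs $c_2<1$. Everything else is routine Chernoff bookkeeping.
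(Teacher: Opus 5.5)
Your proof is correct and follows the paper's argument: the same bound $\mb{E}[e^{tY}]\le e^{c_2\ell(e^t-1)}$ and the same optimizing choice of $t$ (the paper handles the lower tail with $e^{-tX}$ and $t=\ln(c_2/c_1)>0$). Your reading of the rate as $c_1\ln(c_1/c_2)-c_1+c_2$ is exactly what the paper's own proof derives, so the sign in the statement is indeed a typo. The only difference is cosmetic and lies in the last step: the paper gets $c_1\ln(c_1/c_2)-c_1+c_2=(c_1-c_2)^2/(2\xi)\ge(c_1-c_2)^2/2$ from Taylor's theorem in $c_1$, with $\xi$ between $c_1$ and $c_2$, whereas you use a sign-of-derivative argument in $c_2$; both use $c_1,c_2<1$ in the same way.
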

Note that Lemmas \ref{poisson} and \ref{genpoisson} do allow $c$ to depend on $\ell$. For instance, we will later apply Lemma \ref{poisson} with $c=1-.5\ell^{-.49}<1$. The proofs of Lemmas \ref{poisson} and \ref{genpoisson} will be given in Appendix \ref{app:foo}.

\section{The First Algorithm: Prioritizing Unrevealed Evictions}\label{firstalgsection}
In this section, we will describe an insertion algorithm, and prove that it gives the following guarantees:

\begin{restatable}{theorem}{thmfirst}\label{thmfirst}
Assume $\ell\le 1.5\ln(m)$ is at least a sufficiently large positive constant. With probability at least $1-O(\ell^{-6}m^{-1})$ over the insertion process, our algorithm succeeds in inserting $(1-\epsilon)\ell m$ objects for $\epsilon\le 4e^{\ell^{.51}}\epsilon^*\le\left(\frac 2e\right)^{\ell-o(\ell)}$. For an insertion at load factor $1 - \epsilon$, the expected number of evictions by the algorithm is $O(\epsilon^{-1})$, and the expected insertion time is $O(\epsilon^{-1} \ell)$. Finally, the expected total number of evictions that our algorithm performs across all insertions is $O(\ell m)$.
\end{restatable}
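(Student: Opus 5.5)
The plan is to analyze the warmup algorithm through an accounting of \emph{revealed} versus \emph{unrevealed} hashes. Call a ball $y$ sitting in bin $h_i(y)$ \emph{unrevealed} if its other bin $h_{\overline{i}}(y)$ has never been probed; that bin is still a uniformly random bin, independent of everything seen so far. The algorithm evicts an unrevealed ball whenever one exists in the current bin. Otherwise it evicts the ball in the top slot, which is the only slot that can hold a ``returned'' ball. I would use the principle of deferred decisions throughout: a hash is sampled only at the moment it is first revealed. Each insertion step then splits into \emph{fresh} steps, which land on a uniformly random bin, and \emph{corrupted} steps, which follow previously revealed randomness.

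\textbf{Step 1: good global structure.} I would first show that, with probability $1-O(\ell^{-6}m^{-1})$, the following holds at every time before the table reaches load $1-4e^{\ell^{.51}}\tilde\epsilon$. The number of bins containing no unrevealed ball is at most about $e^{-\Omega(\ell^{.51})}$ times the number of free slots. Equivalently, the free slots are not mostly ``unreachable.'' I would prove this by applying Lemma~\ref{poisson} and Lemma~\ref{genpoisson} with $c=1-.5\ell^{-.49}$ to each bin's hash count. A bin can only exhaust its unrevealed balls after receiving roughly $\ell$ revealed probes, and the commit-to-top-slot rule means at most one returned ball occupies it at a time. So a bin that exhausts its fresh randomness must have received about $\ell$ distinct hashes. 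That event has Poisson-tail probability $e^{-\ell(1-c)^2/2}$, which is $e^{-\Omega(\ell^{.02})}$ relative to the leftover-slot count $\tilde\epsilon$. Lemma~\ref{boundingepstilde} then gives concentration of the free-slot count around $\epsilon\ell m$. This factor-$e^{\ell^{.51}}$ slack is exactly why the warmup stops at $4e^{\ell^{.51}}\epsilon^*$. Comparing the free-slot count with $\tilde\epsilon\le\epsilon^*$ yields success up to the claimed $\epsilon$.

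\textbf{Step 2: corrupted paths are short.} I would define the ``corruption graph'' on bins, with an edge wherever a revealed hash of some ball connects two bins. A maximal run of corrupted steps is a walk in this graph through bins that have no unrevealed balls. Because each such bin's top slot determines the next move, the run is essentially forced. I would show that its expected length is $O(1)$ by a branching-process or first-moment bound on walks through exhausted bins, using Step 1's density bound. This is the step I expect to be the main obstacle. The dependency between which bins are exhausted and the walk's own history is the recycled-randomness issue, and I would try to handle it by conditioning on the revealed graph and union-bounding over paths, using that each exhausted bin is rare.

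\textbf{Step 3: per-insertion and amortized costs.} Each fresh step lands on a uniform bin, which has a free slot with probability at least about $\epsilon$ (up to a constant factor, since most free slots are reachable). Combined with Step 2, the expected number of evictions is therefore $O(\epsilon^{-1})$. Each eviction costs $O(\ell)$ to scan the bin, so the insertion time is $O(\epsilon^{-1}\ell)$. For the total-evictions bound, I would charge every fresh eviction to the revelation of a distinct hash, of which there are at most $2\ell m$. Corrupted evictions are $O(1)$ per fresh one by Step 2, giving an expected total of $O(\ell m)$. Finally, I would union-bound the failure events of Steps 1 and 2 to obtain the stated probability.
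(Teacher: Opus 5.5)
Your architecture matches the paper's: fresh versus corrupted steps, a subcritical corruption graph with short paths, and charging corrupted steps to fresh reveals for the $O(\ell m)$ total. But Step~1 is wrong, and with it the load-factor guarantee is left unproved. Your claim is that bins with no unrevealed ball number at most $e^{-\Omega(\ell^{.51})}$ times the free slots, and this is false near the target load. There, roughly $(1-\Theta(\ell^{-.49}))\ell m$ second hashes have been revealed. So a bin has absorbed at least $\ell-1$ of them (making it corrupt, and typically leaving no pure-first ball) with probability $e^{-\Theta(\ell^{.02})}$. That gives $e^{-\Theta(\ell^{.02})}m$ such bins, astronomically more than the $(2/e)^{\ell-o(\ell)}\ell m$ free slots.

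Your Poisson computation with $c=1-.5\ell^{-.49}$ is really the paper's bound on the \emph{absolute} number of corrupt bins ($\le\ell^{-10}m$, Lemma~\ref{numcorruptbins}). It has nothing to do with $\tilde\epsilon$ and cannot produce the factor $e^{\ell^{.51}}$. No relative bound is needed anyway, since a fresh hash hits a non-full bin with probability at least $\epsilon$ directly.

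The missing idea is Lemma~\ref{epsstarfirst}. Every hash revealed to a non-full bin is used immediately, so after $k$ reveals the free-slot count equals that of throwing $k$ uniform hashes and filling a slot whenever possible. A slot still free after $(2-\ell^{-.49})\ell m$ throws survives the last $\ell^{.51}m$ with probability at least $e^{-\ell^{.51}}/2$. Hence $\E[\tilde\epsilon]\ge\E[\epsilon_\tau]e^{-\ell^{.51}}/2$, and concentration gives $\epsilon_\tau\le 4e^{\ell^{.51}}\tilde\epsilon$. The threshold $\tau$ thus does double duty. By $\tau$ the target load is reached, and before $\tau$ fewer than $(1-.5\ell^{-.49})\ell m$ second hashes exist, which is what makes corrupt bins rare.

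Step~2 leaves the crux open. Your graph, with an edge for every revealed hash, has $\Theta(\ell m)$ edges on $m$ vertices and is supercritical. Conditioning on it, as you propose, fixes which bins are exhausted and removes exactly the randomness a path count needs. Without conditioning, ``$v$ is exhausted'' is positively correlated with the edges at $v$, so path probabilities do not factor.

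The paper resolves this in two moves. First, its corruption graph keeps only balls whose second hash is a corrupt bin, \emph{excluding the first $\ell-1$ such balls}, which stay in the bottom slots forever. So every edge touches a corrupt vertex, and the edges that caused the corruption are not path edges. Second, Lemma~\ref{lem:Gprime} uses deferred decisions: an unrevealed second hash is uniform over currently non-corrupt bins, so the counters driving corruption increment at uniform bins. This lets $G$ be dominated by a $G'$ whose corrupt set is a uniform random set independent of the edges, with each edge of multiplicity at least $c$ present with probability at most $(5\ell/m)^c$ (Lemma~\ref{probedgeexists}).

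You also never identify the failure mode. Runs of top-slot evictions can cycle forever, so the algorithm needs a failure rule, and failure forces a bicyclic component in the corruption graph (Lemma~\ref{bicycliconlybarrier}). Counting minimal bicyclic components, at least $(k-1)/2$ of whose $k$ vertices are corrupt, is the actual source of the $1-O(\ell^{-6}m^{-1})$ bound (Lemma~\ref{nobicyclic}). Neither of your steps produces that probability, and an expected-path-length bound alone cannot rule out non-terminating insertions.
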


We remark that the $O(\epsilon^{-1})$ bound on number of evictions is actually not tight when $\epsilon = e^{-\Omega(\ell)}$. A tighter analysis can bring this bound down to $O(\epsilon^{-1} / \ell)$, and can likewise bring the insertion time down to $O(\epsilon^{-1})$. For simplicity, we defer this type of tighter analysis to our treatment of our second algorithm in Section \ref{secondalgsection} (Theorem \ref{secondalgthm}).

\subsection{Algorithm Definitions and Basic Properties}
We now describe our first insertion algorithm. One major idea of the algorithm is to first place balls in their first hash position if possible, without revealing their second hash position. Leaving the second hash unrevealed means that our analysis can defer the randomness for later and keep it as a random hash. Furthermore, when evicting a ball from a bin, we prefer to evict these balls whose second hash is unrevealed, rather than those balls who have already been to both of their hash positions.

When inserting an object $x_0$, we first reveal its first hash, $h_1(x_0)$, and start with $B$ initialized to equal $B_0=h_1(x_0)$. The key component of the algorithm is to decide which ball $x'$ should be evicted from $B$ when a ball $x$ is inserted into $B$. We then change $B$ to be the other hash location of $x'$, which becomes our new $x$ to insert. We evict in the following order:
\begin{enumerate}
    \item If $B$ has an empty slot, place $x$ into the first empty slot of $B$.
    \begin{itemize}
        \item If $x$ is on its second hash, this ``first empty slot'' is the $\ell$th slot (so the insertion of $x$ makes $B$ full), and there is another ball in $B$ that is on its first hash, then exchange the positions within $B$ of $x$ and the first ball in $B$ that is on its first hash.
    \end{itemize}
    \item If the first $\ell-1$ slots of $B$ all contain objects on their second hash, place $x$ into the last slot in $B$ (evicting the ball in that slot).
    \item If $x$ was just evicted from its slot under point (2.)~of this list and $x$ has previously been placed into $B$, place $x$ into the last slot in $B$ (evicting the ball in that slot).
    \item Otherwise, there must be balls in the first $\ell-1$ slots of $B$ that are on their first hash. Evict the first one of those balls and place $x$ into its place.
\end{enumerate}

Finally, there is one case where the insertion algorithm may declare failure: If the algorithm performs a sequence of evictions using points (2.)~and (3.), and if the first bin $B$ to be visited in this sequence is visited three times, the algorithm declares failure. (At this point, the algorithm has entered a loop of evictions that will never end.) We will prove, later in our analysis, that every insertion either succeeds or declares failure; and, that the overall probability of the algorithm declaring failure, across all insertions, is $O(1/(\ell^6m))$.

Say that the first hash $h_1(x)$ of a key $x$ is \underline{revealed} when the key is first inserted, and that the second hash $h_2(x)$ is \underline{revealed} the first time that the key is placed in bin $h_2(x)$ (or, equivalently, is evicted from $h_1(x)$). It will be helpful to define the following event $\tau$ which captures the point in time at which the total number of hashes revealed crosses a certain natural threshold:
\begin{definition}
Let $\tau$ be the event that at least $(2-\ell^{-.49})\ell m$ hashes have been revealed.
\end{definition}
Note that $\tau$ may never occur during our insertion process, as it is possible that before this time, our algorithm fails and there is no assignment of balls to bins under our algorithm, and it is also possible that the insertions all complete without $\tau$ occurring.

A nice feature of our algorithm is that, by the time $\tau$ occurs we will likely have reached a \emph{very} high load factor:

\begin{lemma}\label{epsstarfirst}

Let $\epsilon_{\tau}$ be the random variable such that $\epsilon_{\tau}\ell m$ is the number of free slots in the table at the time when the event $\tau$ occurs (set $\epsilon_\tau=\tilde\epsilon$ if $\tau$ never occurs). Then with high probability in $m$, we have that $\epsilon_\tau\le 4e^{\ell^{.51}}\tilde\epsilon$. Here, $\tilde\epsilon$ is as defined in Definition \ref{epstildedef}.
\end{lemma}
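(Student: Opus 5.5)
The plan is to show that, at the moment $\tau$ occurs, the free slots are bounded by the leftover-slot count of a slightly \emph{thinned} random hash assignment. Then we compare that count to $\tilde\epsilon$ using the Poisson approximation, and finish with a concentration argument. If $\tau$ never occurs the statement is trivial by convention, so assume it does.

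\textbf{Step 1 (structural).} Under the algorithm, a ball leaves a bin only by eviction, and evictions from a bin $B$ happen only when $B$ is full (rule 1 applies whenever $B$ has an empty slot). Moreover, once a bin is full it stays full, since every eviction puts a ball into the vacated slot. Hence a bin $B$ that is non-full at time $\tau$ has never evicted anything. So every ball whose revealed hash is $B$ was placed into $B$ at the time of revealing and is still there. Writing $R_B$ for the number of revealed hashes equal to $B$, the number of free slots in $B$ at time $\tau$ is then exactly $\max(0,\ell-R_B)$, so
\[\epsilon_\tau \ell m=\sum_B \max(0,\ell-R_B).\]

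\textbf{Step 2 (deferred decisions).} Each reveal is a fresh uniform bin, independent of everything before it. A first hash is revealed at insertion; a second hash is revealed at the first eviction from $h_1$. In both cases the decision to reveal depends only on prior information, not on the hash value. So we may couple the process with an infinite i.i.d.\ uniform sequence $Z_1,Z_2,\dots$ of bins, where the $i$th reveal uses $Z_i$.

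Let $N=(2-\ell^{-.49})\ell m=2c\ell m$ with $c=1-\ell^{-.49}/2$. At time $\tau$ at least $N$ reveals have occurred, so $R_B\ge Y_B$, where $Y_B$ counts the indices $i\le N$ with $Z_i=B$. Therefore
\[\epsilon_\tau\ell m\le F:=\sum_B\max(0,\ell-Y_B).\]
Crucially, $F$ is a fixed function of the first $N$ i.i.d.\ draws, so no stopping-time subtlety remains.

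\textbf{Step 3 (expectation).} Each $Y_B\sim\mathrm{Bin}(N,1/m)$ is approximately Poisson$(2c\ell)$, exactly as in the proof of Lemma~\ref{expepstilde}. For each $j\le\ell$, compare point probabilities:
\[\frac{\mb P(\text{Poi}(2c\ell)=j)}{\mb P(\text{Poi}(2\ell)=j)}=c^je^{2\ell(1-c)}\le e^{\ell^{.51}}.\]
Consequently $\E[F]\le(1+o(1))e^{\ell^{.51}}\E[\tilde\epsilon]\ell m\le 2e^{\ell^{.51}}\E[\tilde\epsilon]\ell m$.

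\textbf{Step 4 (concentration).} Changing one draw $Z_i$ changes $F$ by at most $1$. McDiarmid's inequality over $N=O(\ell m)$ draws then gives $F=\E[F]\pm m^{.51}$ with high probability in $m$, mirroring the proof of Lemma~\ref{boundingepstilde}.

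Combining with Lemma~\ref{boundingepstilde} and the bound $\E[\tilde\epsilon]=\Omega(m^{-.47})$, the additive errors $m^{-.49}$ are negligible relative to $\E[\tilde\epsilon]$. The remaining factor $2$ of slack then yields $\epsilon_\tau\le 4e^{\ell^{.51}}\tilde\epsilon$ with high probability.

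\textbf{Main obstacle.} The delicate point is Step 2. We must justify rigorously that revealed hashes form an i.i.d.\ sequence despite the adaptive, data-dependent choice of which hashes get revealed. We must also justify that $\tau$ can be replaced by the deterministic prefix length $N$. I would handle this by constructing the coupling explicitly: the hash function is sampled lazily from the sequence $Z_1,Z_2,\dots$ in reveal order.
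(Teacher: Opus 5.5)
Your proof is correct, and its skeleton matches the paper's. Your Steps 1--2 are a more explicit version of the paper's reduction to the abstract process in which each revealed hash fills a free slot of its bin whenever one exists. Your Step 4, together with absorbing the $m^{-.49}$ errors via $\E[\tilde\epsilon]=\Omega(m^{-.47})$, is the same as the paper's finish.

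The real difference is how you compare expectations. The paper conditions on the configuration at the moment $\tau$ occurs. It then observes that each free slot survives the remaining $\ell^{.51}m$ reveals with probability at least $(1-1/m)^{\ell^{.51}m}\ge e^{-\ell^{.51}}/2$, since its bin may simply never be hit. This gives $\E[\tilde\epsilon]\ge \E[\epsilon_\tau]e^{-\ell^{.51}}/2$ without computing either expectation. You instead write both expectations explicitly as $m\,\E[\max(0,\ell-\mathrm{Bin}(\cdot,1/m))]$ and bound the Poisson point-probability ratio by $c^je^{\ell^{.51}}\le e^{\ell^{.51}}$.

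These are really the same fact seen two ways: $\mathrm{Poisson}(2\ell)$ is $\mathrm{Poisson}(2c\ell)$ plus an independent $\mathrm{Poisson}(\ell^{.51})$ that is zero with probability $e^{-\ell^{.51}}$. The paper's coupling is shorter and needs no distributional approximation. Yours gives the sharper constant $1+o(1)$ in place of $2$, at the price of a binomial-to-Poisson comparison.

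If you cite the proof of Lemma~\ref{expepstilde} for that comparison, use the \emph{multiplicative} form. For $j<\ell$, each binomial point probability equals the Poisson one up to a factor $1\pm O(\ell/m)$. An additive $O(\ell/m)$ error, as written in that proof, would be useless here, because $e^{-2\ell}$ can be as small as $m^{-3}$ when $\ell=1.5\ln m$.
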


\begin{proof}
Every time that we reveal a hash to a bin that has empty slots remaining, we always put that ball in the bin. Therefore, it suffices to prove the following variation of the lemma (which abstracts away the specifics of how the hash table works). Suppose that we reveal $2\ell m$ random hashes (independently and uniformly at random in $[m]$); and that, each time we reveal a hash $B \in [m]$, if bin $B$ still has any free slots, we fill one of them. Let $\epsilon_\tau \ell m$ be the number of remaining free slots after $(2 - \ell^{-.49})\ell m = \ell m - \ell^{.51} m$ reveals, and let $\tilde\epsilon \ell m$ be the number of remaining free slots after all $\ell m$ reveals (matching Definition \ref{epstildedef}). We wish to show with high probability in $m$ that $\epsilon_\tau\le 4e^{\ell^{.51}}\tilde\epsilon$.

The key to the proof is to analyze the number of slots filled by the final $\ell^{.51} m$ reveals. For a slot that is free prior to these reveals, the probability of that slot remaining free during all $\ell^{.51} m$ reveals is at least
$$(1 - 1/m)^{\ell^{.51} m} \ge e^{-\ell^{.51}}/2$$
for sufficiently large $m$. It follows that 
$$\E[\tilde\epsilon] \ge \E[\epsilon_\tau]e^{-\ell^{.51}}/2.$$
Both random variables $\tilde\epsilon \ell m$ and $\epsilon_\tau \ell m$ have the feature that, if we change any one of the hash reveals, it changes the outcome of $\tilde\epsilon \ell m$ and $\epsilon_\tau \ell m$ by at most one. Therefore, by McDiarmid's Inequality, we have with high probability in $m$ that $\tilde\epsilon \ell m$ and $\epsilon_\tau \ell m$ are within $O(\sqrt{\ell m \log m}) \le m^{.51}/\ell$ of their means. Thus, with high probability in $m$, we have
$$\tilde\epsilon \ge \epsilon_\tau \cdot e^{-\ell^{.51}}/2 - m^{-.49}.$$
Recall that, as noted in Subsection \ref{boundingsubsect}, $\mb{E}[\tilde\epsilon]=\Omega(m^{-.47})$, which implies by Lemma \ref{boundingepstilde} that, with high probability in $m$, $\tilde\epsilon=\Omega(m^{-.48})$. Since $\ell = O(\log m)$, the fact that $\tilde\epsilon=\Omega(m^{-.48})$ combines with the centered equation above to imply that
$$\tilde\epsilon \ge \epsilon_\tau \cdot e^{-\ell^{.51}}/4.$$

\end{proof}

Now, we will analyze some useful properties that this algorithm has that will help our analysis.

\begin{definition}
A bin $B$ is \underline{corrupt} if its first $\ell-1$ slots are filled with balls that are on their second hash.
\end{definition}
Note that we only go to point (2.)~if $B$ is corrupt. Also, note that any bin that becomes corrupt stays corrupt, as corrupt bins can then only have balls placed into their final slot.

Our analysis will argue that the insertion algorithm makes many ``fresh probes'' (probing, for some ball $x$, either $h_1(x)$ or $h_2(x)$ for the first time ever). The following lemma tells us that, whenever a \emph{non-fresh} probe performed (a ball is returned to a bin $h_i(x)$ that it has already been in formerly), the bin $h_2(x)$ is necessarily corrupt (and was corrupt on $x$'s first eviction from $h_1(x)$). 
\begin{lemma}\label{corruptreturn}
If a ball $x$ is ever returned to its first bin, $h_1(x)$, after having been evicted from it in the past, then $h_2(x)$ was a corrupt bin at the time of $x$'s initial eviction from $h_1(x)$. Additionally, $x$ is placed into the last slot of $h_1(x)$ when returned to $h_1(x)$.
\end{lemma}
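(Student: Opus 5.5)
The plan is to follow ball $x$ through the bin $h_2(x)$. The key observation is that a ball on its second hash can leave a bin only from the last slot. From that, the corruption of $h_2(x)$ and the last-slot placement in $h_1(x)$ both follow by checking which rule of the insertion algorithm was applied at each step.

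\emph{Step 1: positions within a bin are stable.} Balls only move between slots of the same bin through the exchange in point (1.). That exchange moves only the arriving ball and a ball that is on its first hash. So a ball sitting in bin $h_2(x)$ on its second hash keeps its slot until it is evicted. Next, I check which eviction rules can remove a ball on its second hash. Point (4.) evicts only balls on their first hash, and points (2.) and (3.) evict only the occupant of the last slot. Hence $x$ can be evicted from $h_2(x)$, which it must be in order to return to $h_1(x)$, only if $x$ occupies the last slot of $h_2(x)$.

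\emph{Step 2: $x$ in the last slot forces corruption.} The moment $x$ is first evicted from $h_1(x)$ is the moment $x$ is placed into $h_2(x)$. I case on the rule used for that placement.
\begin{itemize}
\item Under (1.), $x$ ends in the last slot only if the empty slot was the $\ell$th and no other ball in the bin is on its first hash. Then the first $\ell-1$ slots hold second-hash balls, so the bin is corrupt. Otherwise $x$ lands in, or is swapped into, one of the first $\ell-1$ slots, and by Step 1 it is never evicted.
\item Under (4.), $x$ lands in one of the first $\ell-1$ slots and is likewise never evicted.
\item Under (2.), the bin is corrupt by definition.
\item Point (3.) cannot apply, since $x$ has never been in $h_2(x)$ before. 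If $h_1(x)=h_2(x)$ I would treat this degenerate case separately or note that it is trivial.
\end{itemize}
In every case where $x$ can later return, $h_2(x)$ is corrupt at the time of placement. Corruption is permanent, so it remains corrupt afterwards.

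\emph{Step 3: $x$ returns to the last slot of $h_1(x)$.} When $x$ is later evicted from the last slot of $h_2(x)$, that bin is corrupt and full. Since point (2.) is checked before (3.), the eviction happened under point (2.). Next, $h_1(x)$ was full when $x$ was evicted from it. Bins never lose balls except by having them replaced, so $h_1(x)$ is still full, and point (1.) does not apply on the return. Either $h_1(x)$ is corrupt and point (2.) places $x$ in the last slot, or point (3.) applies, because $x$ was just evicted under (2.) and was previously placed in $h_1(x)$; this also places $x$ in the last slot. The main point requiring care is Step 2, namely verifying that the exchange in point (1.) never strands $x$ in the last slot of a non-corrupt bin.
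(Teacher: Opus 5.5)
Your proof is correct and follows essentially the same route as the paper's: a second-hash ball can leave a bin only from the last slot via (2.) or (3.); a case check on the rule that places $x$ into $h_2(x)$ shows that landing in the last slot forces $h_2(x)$ to be corrupt; and the return into the still-full bin $h_1(x)$ must go through (2.) or (3.), hence into the last slot. You are somewhat more explicit than the paper in justifying that the exchange in point (1.) never moves a second-hash ball already in the bin, which the paper uses implicitly when it says $x$ ``will never again be moved'' from $h_2(x)$.
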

\begin{proof}
Points (1.)~and (4.)~do not evict balls that are on their second hash. Therefore, the only way for a ball $x$ to be returned to $h_1(x)$ after being moved to $h_2(x)$ is through points (2.)~and (3.). Both points (2.)~and (3.)~evict the last slot in a bin. So if, when $x$ is put into $h_2(x)$, the slot in $h_2(x)$ it gets is not the last slot, then $x$ will never again be moved from $h_2(x)$. Thus the only way for $x$ to ever be returned to $h_1(x)$ is if $x$ was put into the last slot of $h_2(x)$.

Now, we claim that the only way for $x$ to be put into the last slot of $h_2(x)$ is if $h_2(x)$ is already corrupt, completing the proof. Point (1.)~will only place $x$ into the final slot of $h_2(x)$ if $h_2(x)$ is already corrupt. Point (2.)~only happens if $h_2(x)$ is corrupt. The first time that $x$ is placed into $h_2(x)$ cannot happen by point (3.). Point (4.)~will not place $x$ into the final slot of $h_2(x)$. So, the only ways that $x$ can be put into the final slot of $h_2(x)$, via points (1.)~and (2.), can only happen if $h_2(x)$ is already corrupt.

Finally, any time $x$ is evicted from $h_2(x)$ must come by point (2.), putting $x$ back into the last slot in $h_1(x)$ by points (2.)~or (3.)~(also noting that $h_1(x)$ can no longer have an empty slot if $x$ was evicted from it).
\end{proof}
Lemma \ref{corruptreturn} quickly gives us a corollary:
\begin{lemma}\label{uniformpurefirst}
Any ball $x$ evicted from $h_1(x)$ by point (4.)~has never previously been placed in $h_2(x)$.
\end{lemma}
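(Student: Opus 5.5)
We argue by contradiction, using Lemma \ref{corruptreturn}. Suppose some ball $x$ is evicted from $h_1(x)$ by point (4.)~at a time $t$, and that before $t$ it had already been placed in $h_2(x)$. Placement in $h_2(x)$ only happens after $x$ has been evicted from $h_1(x)$. So $x$ was evicted from $h_1(x)$ at some earlier time, went to $h_2(x)$, and, since it sits in $h_1(x)$ again at time $t$, was later returned to $h_1(x)$. Lemma \ref{corruptreturn} applies to this return and gives two facts: $h_2(x)$ was corrupt at $x$'s initial eviction from $h_1(x)$, and $x$ was placed into the last slot of $h_1(x)$ on its return.

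The main step is to show that $x$ stays in the last slot of $h_1(x)$ until time $t$. The rules only move balls within a bin in two ways. First, the exchange in point (1.), which applies only when a ball is being placed into the $\ell$th slot, i.e.\ only when that slot is empty. Second, an eviction, which removes $x$ from the bin altogether. So while $x$ occupies the $\ell$th slot, the exchange cannot fire in $h_1(x)$ and $x$ is never relocated inside the bin. Likewise, $x$ cannot leave $h_1(x)$ and come back between its return and time $t$ without this argument applying afresh, since any return is again into the last slot by Lemma \ref{corruptreturn}. Hence at time $t$ the ball $x$ is in slot $\ell$ of $h_1(x)$.

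The contradiction is then immediate. Point (4.)~only evicts a ball from one of the first $\ell-1$ slots of $B$: it evicts the first ball on its first hash among those slots. So the ball evicted at time $t$ cannot be $x$.

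The step I expect to need the most care is the within-bin bookkeeping, namely confirming that no rule ever moves the occupant of the last slot to an earlier slot. This amounts to checking that the swap in point (1.)~involves the incoming ball itself, placed into an empty $\ell$th slot, and never an existing occupant of slot $\ell$. If that check fails, the fallback is a different route: show directly that a returned ball never sits among the first $\ell-1$ slots while marked as on its first hash. By the last part of the proof of Lemma \ref{corruptreturn}, the only way out of $h_2(x)$ is via point (2.)~back into the final slot, so this amounts to the same invariant.
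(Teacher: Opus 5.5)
Your proof is correct and follows the paper's argument: both invoke Lemma \ref{corruptreturn} to place a returned ball in the last slot of $h_1(x)$, and then note that point (4.)~only evicts from the first $\ell-1$ slots. Your extra check, that no rule ever moves the occupant of slot $\ell$ (the swap in point (1.)~requires slot $\ell$ to be empty, and $h_1(x)$ stays full after its first eviction), spells out a detail the paper leaves implicit.
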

\begin{proof}
This follows as the contrapositive of Lemma \ref{corruptreturn}, as any ball $x$ that is returned to $h_1(x)$ will be in the last position of $h_1(x)$, and therefore will not be in the first $\ell-1$ positions that (4.)~evicts from. 
\end{proof}

 
\subsection{The Corruption Graph}
We will consider an auxiliary graph, which we call the \underline{corruption graph} $G$ (technically, a multigraph, as it could have multiedges or loops). This will be a subgraph of the (essentially) Erd\H os--R\'enyi graph $\{(h_1(u), h_2(u)) \mid u \text{ is an element}\}$. In particular, define the corruption graph as the subgraph consisting of all balls (edges) whose second hash goes to a corrupt bin -- excluding the $\ell-1$ second hashes that are the first to land on that corrupt bin during the insertion process.

Note that the corruption graph $G$ develops over time (as bins become corrupt). When analyzing a given insertion, we will often refer to $G$ at a specific point in time (i.e., including only bin-corruption events that occur up to that point). Lemma \ref{epsstarfirst} tells us that the $\epsilon$ that we care about for Theorem \ref{thmfirst} are likely to occur before $\tau$ occurs, so we can focus on properties that the corruption graph is likely to have at any point before $\tau$ occurs.

One reason that the corruption graph $G$ is important to us is that, if an insertion declares failure, then it turns out that we can ``blame $G$'', showing that $G$ contained a bicyclic component. Note that the corruption graph may have multiedges or loops, so a bicyclic component is defined as a set of $k$ vertices such that the induced subgraph on those $k$ vertices has at least $k+1$ edges.
\begin{lemma}\label{bicycliconlybarrier}
Every insertion either succeeds or declares failure. Moreover, if an insertion declares failure, then the corruption graph (at the point in time where failure is declared) contains a bicyclic component. Moreover, no insertion (even one that fails) will ever perform a sequence of (2.)~and (3.)~evictions in which it visits the same slot more than 3 times.
\label{lem:fail}
\end{lemma}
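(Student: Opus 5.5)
We split the eviction sequence of a single insertion into maximal \emph{runs} of consecutive point (2.)/(3.) evictions, and argue that each run either terminates or yields a bicyclic component of $G$. Since points (1.) and (4.) always make progress, this gives the lemma.

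\textbf{Step 1: runs outside (2.)/(3.) make progress.} A point (1.) step ends the insertion. A point (4.) step evicts a ball $y$ from $h_1(y)$, and by Lemma \ref{uniformpurefirst} that ball has never been in $h_2(y)$. Each point (4.) step therefore strictly increases the number of revealed second hashes, which is at most $\ell m$. Hence there are finitely many point (4.) steps, and it suffices to show that every maximal run of (2.)/(3.) evictions either ends (by a (1.) or (4.) step) or declares failure. The failure rule ensures a run visits its first bin at most three times. So the content of the lemma is: within a run no slot is visited more than three times, and a failure forces a bicyclic component.

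\textbf{Step 2: a run traces a walk in $G$.} In a (2.)/(3.) step at bin $B$, the evicted ball $y$ sits in the last slot of a corrupt bin. Consider first the case where $B=h_2(y)$. Then $y$ was placed into $B$ after $B$'s first $\ell-1$ slots were filled by second-hash balls. So $y$ is not among the first $\ell-1$ second hashes to land on $B$, and by definition $y$ is an edge of $G$. In the other case, $B=h_1(y)$ and $y$ was returned there. By Lemma \ref{corruptreturn}, $h_2(y)$ was corrupt when $y$ was first moved there and $y$ occupied its last slot, so again $y\in G$. Every eviction in a run therefore moves along an edge of $G$ to the ball's other endpoint. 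Because only the last slot of each visited bin is touched, the run is a walk in $G$ that, at each vertex, leaves via the unique edge currently sitting in that vertex's last slot. This is a deterministic ``functional'' walk: the last-slot occupant of a bin is replaced by the incoming ball, and the outgoing edge at a vertex is whatever edge most recently entered.

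\textbf{Step 3: the cycle-counting argument, which I expect to be the main obstacle.} This is the standard cuckoo-hashing argument. Consider the walk from the first bin $B_0$ of the run. If the component of $G$ containing the walk has at most one cycle, the walk goes along a path and around the cycle. On traversing the cycle it flips the orientation of the last-slot balls, so it returns along the path back through $B_0$ (the second visit). It then continues out the other side of $B_0$ along the edge that originally entered $B_0$, which is the edge of the ball that started this run. That side is a tree part of the component, so the walk must terminate there in a (1.) or (4.) step before a third visit to $B_0$.

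The delicate part is bookkeeping which ball occupies each last slot after each pass, using the rule in point (3.) that a returning ball goes into the last slot. With that bookkeeping one shows that each edge is traversed at most twice per direction pattern, which bounds slot visits by $3$. Conversely, if $B_0$ is visited a third time, the walk must have closed a second independent cycle. That gives a connected edge set with more edges than vertices, i.e.\ a bicyclic component at the moment of failure.

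I would formalize Step 3 by induction on the walk length, tracking the invariant ``the set of edges traversed so far induces a connected subgraph with at most as many edges as vertices'', until the third visit to $B_0$ violates it.
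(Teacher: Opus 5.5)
Your reduction is the paper's: (4.) steps are bounded by the number of second hashes, and a maximal (2.)/(3.) run is a one-slot-per-vertex cuckoo walk on $G$ (the slot being the last slot), to which the standard path/cycle/retrace analysis applies. The genuine gap is in Step 3. The bound you plan to extract from the ``bookkeeping'', that no slot is visited more than three times, is false, and so is your proposed invariant.

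For a counterexample, let $A,P,Q,C,D,U$ be corrupt bins whose last slots hold balls with edges $\{A,P\},\{P,Q\},\{Q,C\},\{C,D\},\{D,Q\},\{U,P\}$ respectively (all in $G$). Let the run begin by placing $x$, with edge $\{A,U\}$, into $A$. Every step is a (2.) step, and the run visits
\[A,P,Q,C,D,Q,P,A,U,P,Q,D,C,Q,P,U,A.\]
The second cycle closes when $P$ receives its third visit, while $A$ has been visited only twice, so your edges-versus-vertices invariant already fails there. Moreover, $P$ and $Q$ each receive a fourth visit before the failure rule fires at the third visit to $A$. In general, suppose the path leaving the far endpoint of $x$'s edge re-enters the tail $v_1,\dots,v_i$ that joins $B_0$ to the cycle. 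Then every vertex from the re-entry point through the attachment vertex $v_i$ is visited four times.

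The paper's proof makes the same mistake: its claim that the first vertex visited three times is the start vertex fails here, since $P$ is. So the statement's ``3'' should really be ``4''. This is harmless downstream, because Lemma \ref{runtime} only needs $O(P_b)$ evictions per run.

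You intended to derive termination from the three-visit bound, so you need the explicit phase analysis instead. The walk follows a path $v_0,v_1,\dots$ until the first repeat $v_j=v_i$. It then retraces $v_i,v_{i-1},\dots,v_0$, which restores the tail's last slots and rotates the cycle's. Next it evicts $x$ to its other endpoint and starts a fresh path.
\begin{itemize}
\item If that path exits via a (1.) or (4.) step, the run ends with every bin visited at most twice.
\item If it hits any previously visited vertex, a second cycle has closed, so the component is bicyclic. The walk is then forced to come back along the fresh path and through $x$'s edge into $B_0$ for a third time, with at most four visits to any bin along the way.
\end{itemize}

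Separately, Step 2 needs tightening. First, (3.) steps occur at non-corrupt bins, since (2.) takes precedence at corrupt ones. Second, you omit the case of a pure-first ball $y$ in the last slot of $h_1(y)$. Such a $y$, like the ball $x$ that starts the run, need not be an edge of $G$. However, if $h_2(y)$ is not corrupt, the next step is necessarily (1.) or (4.). Hence every ball moved in a run, other than $x$ and a final exiting ball, is an edge of $G$.
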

\begin{proof}
During the insertion algorithm, point (1.)~can happen at most once, and point (4.)~can happen at most $m \ell$ times, since each instance of (4.)~reveals the second hash for some ball for the first time. Therefore, the only way the insertion algorithm can loop indefinitely is if it alternates between (2.)~and (3.)~indefinitely. 

However, any sequence of calls to (2.)~and (3.)~can be viewed as performing a \emph{standard} cuckoo-hashing eviction chain (with no bins) on the elements $u$ that have edges in the corruption graph. By the standard analysis of cuckoo hashing (see, e.g., \cite{PR01} or the discussion in \cite{KM25}), we can conclude that the only way for such an eviction chain to go indefinitely is if it takes place in a component of the corruption graph that is bicyclic; that, in any sequence of evictions that alternates points (2.)~and (3.), if the sequence goes on indefinitely, then the first vertex to be visited three times will be the one at the start of the sequence (at which point the insertion algorithm will declare failure); and that in any sequence of evictions that alternates points (2.)~and (3.), if the sequence \emph{does not} go on indefinitely, then each bin is visited at most twice (so the algorithm does not declare failure). These three observations together imply the lemma. 
\end{proof}

In addition to determining whether our algorithm fails, the structure of $G$ also determines whether our algorithm is efficient, as captured by the following lemma. Roughly speaking, what the lemma says is that, so long as the connected components of $G$ are small (each roughly $O(1)$ size), then our insertion algorithm is fast. This is because, whenever the algorithm is \emph{not} exploring a connected component of $G$, it is instead exploring some hash that has never before been examined (and that therefore has good probability of taking us to a free slot). 
\begin{lemma}\label{runtime}
Let $G$ be the corruption graph immediately prior to the $k$-th insertion (or if one of the first $k$ insertions fail, let it be the corruption graph then), but with the edge corresponding to the $k$-th insertion removed (if present). Suppose we have already proven that:
\begin{enumerate}
    \item For a uniformly random bin $B$, the expected length of the longest path containing $B$ (and using each edge at most once) in $G$ is $\le C$ (here $G$ is a random variable). Here, path length is measured in number of edges. 
    \item Immediately prior to the insertion, we have with high probability in $m$ that at least an $\epsilon$ fraction of bins contain at least one free slot. 
    \item  Immediately prior to the insertion, there are at most $m/2$ corrupt bins, with high probability in $m$.
    \item In the corruption graph $\overline{G}$ after the insertion, the  longest path (anywhere in the graph) has length at most $\polylog m$ with high probability in $m$. 
    \end{enumerate}
Then, the expected number of evictions to complete the $i$-th insertion (where the number is $0$ if some previous insertion failed, and is the number of evictions to failure if the $i$-th insertion fails) is $\epsilon^{-1} + O(1 + C \epsilon^{-1})$.
\end{lemma}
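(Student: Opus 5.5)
Note on the statement: hypotheses 2–4 refer to "the insertion", which is the $k$-th insertion. The conclusion's "$i$-th insertion" is read as the $k$-th throughout.

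The plan is to split the eviction sequence of the $k$-th insertion into maximal runs of (2.)/(3.) evictions, which are \emph{non-fresh}, separated by fresh steps. A fresh step is either the initial placement at $h_1(x_0)$ or a (4.) eviction; by Lemma \ref{uniformpurefirst}, a (4.) eviction sends a ball to its never-before-probed second hash. The count is therefore (number of fresh steps) times $(1+{}$expected run length$)$. Each fresh step lands at a bin $h$ that is uniform over $[m]$ and independent of the history, by deferred randomness. Conditioned on hypothesis 2, this bin has a free slot with probability at least $\epsilon$, and the insertion then terminates by (1.). Stochastic domination gives $\mathbb{E}[\#\text{fresh steps}]\le \epsilon^{-1}$. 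The low-probability failures of hypotheses 2 and 3 are handled by hypothesis 4 together with a crude bound. Conditioned on 4, each maximal (2.)/(3.) run has length $\polylog m$. The number of fresh (4.) steps is trivially at most $\ell m$. So on the bad event the total is $\poly(m)$, and multiplying by a $1/\poly(m)$ probability contributes $O(1)$. When 4 itself fails, the failure rule of Lemma \ref{lem:fail} still gives a finite bound: each run visits each slot at most 3 times. So the total is $O(\ell m\cdot \ell m)$ times a $m^{-c}$ probability, which is again $O(1)$.

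Next, bound the run following each fresh step. When a fresh step sends ball $y$ to a bin $B=h$ with no free slot, either $B$ is non-corrupt and (4.) fires immediately, giving another fresh step with no run, or $B$ is corrupt and a run of (2.)/(3.) evictions starts. Apart from its first edge, that run is a cuckoo-style walk through edges of the corruption graph $G$. Following the proof of Lemma \ref{lem:fail}, I would argue that before returning to a non-corrupt bin, the run traces a path in $G$ starting at $B$ that uses each edge at most once, except for a bounded constant factor from the at most 3 visits per slot. Its length is therefore at most a constant times the longest path through $B$ in $G$. The removed $k$-th edge and any edges created during the insertion need separate treatment. Corruptions occurring mid-insertion can be charged through hypothesis 4, or one can observe that they only add the last-slot edges traversed.

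Since $h$ is uniform and independent of $G$ (it is an unrevealed hash), hypothesis 1 gives an expected run length of at most $C$ after each fresh step. Hypothesis 3 ensures that $B$ is non-corrupt with probability at least $1/2$, so the next step is genuinely fresh with constant probability. That matters so that runs cannot chain back-to-back without fresh progress; a run always ends by moving into a non-corrupt bin, which then performs a (4.) or (1.) step. By Wald's identity the total is at most
\[
\mathbb{E}[\#\text{fresh}]\cdot(1+C+O(1)) \le \epsilon^{-1}+O(1+C\epsilon^{-1}).
\]
The leading $\epsilon^{-1}$ is the fresh evictions themselves, and the $O(1)$ absorbs the bad events.

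The main obstacle is justifying independence. The fresh hash $h$ must be uniform conditioned on the current $G$ and on the fullness of bins, and $G$ changes during the insertion itself. I would first try to condition on the pre-insertion $G$ and show that evictions within the insertion can only append edges along the current path, with each corruption adding at most one extra traversal. This keeps the run dominated by the pre-insertion longest path through $B$ plus $O(1)$, with hypothesis 4 covering the tail.
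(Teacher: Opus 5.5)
Your overall architecture matches the paper's. You cut the eviction sequence at steps that use fresh randomness, bound the number of such steps geometrically, bound each intervening run by a longest-path quantity, and pay for bad events with a crude $\poly(m)$ bound. However, the step where you invoke hypothesis 1 does not hold. You cut at every (4.)-eviction and assert that its destination $h=h_2(y)$ is ``uniform and independent of $G$ (it is an unrevealed hash).'' It is not. By definition, $G$ contains the edge of \emph{every} ball whose second hash goes to a corrupt bin (other than the $\ell-1$ that corrupted it), including pure-first balls whose second hash is still unrevealed. So, given $G$, either $y$ is an edge of $G$ and $h_2(y)$ is determined (it is the corrupt endpoint), or $h_2(y)$ is uniform over the non-corrupt bins. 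In exactly the case where a run starts, $h$ is the far end of a $G$-edge whose near end is the bin you are currently in. The run is therefore governed by the component of the \emph{current} bin, and hypothesis 1, which concerns a vertex independent of $G$, says nothing about the longest path through $h$.

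The missing idea is to choose cut points by the status of the evicted ball rather than by which rule fired. Call an eviction \emph{special} if the ball's second hash is unrevealed \emph{and} its edge is not in $G$; this also covers (2.)-evictions of a pure-first ball sitting in the last slot of a corrupt bin. Absent new edges, everything between consecutive special evictions stays in one component of $G\cup\{(h_1(x_0),h_2(x_0))\}$, including (4.)-evictions of balls that are $G$-edges. That segment costs $O(P_b)$, where $P_b$ is the longest path through the segment's starting bin $b$. For $j\ge 1$ the start $b_j$ is uniform over the non-corrupt bins \emph{conditioned on} $G$. Hypothesis 3 is precisely what turns this into an $O(C)$ bound on the expected longest path through $b_j$, since a uniform non-corrupt bin has density at most $2/m$. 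In your write-up, hypothesis 3 is used only to separate runs by fresh steps, which holds anyway.

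Second, your bookkeeping for edges created during the insertion does not give the stated form. Your final display $\E[\#\text{fresh}]\cdot(1+C+O(1))$ equals $\epsilon^{-1}+O(\epsilon^{-1})$, not $\epsilon^{-1}+O(1+C\epsilon^{-1})$. That is weaker than the lemma, whose point (used in the proof of Theorem \ref{thmfirst}) is that the overhead beyond the $\epsilon^{-1}$ fresh steps is only $O(1+C\epsilon^{-1})$. Your per-step $O(1)$ for mid-insertion corruptions (``each corruption adding at most one extra traversal'') is also unproven, since a newly corrupt bin acquires the edges of all balls whose second hash goes there.

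The paper instead shows this is a rare event. Let $E$ be the event that some ball $u\neq x_0$ whose edge lies in $\overline{G}$ but not in $G$ is evicted. This requires two second hashes freshly revealed during this same insertion to coincide, and its contribution is then controlled using hypothesis 4 and the geometric tail on fresh reveals. Likewise, the removed edge of $x_0$ is handled by noting that a uniform $b_j$ falls into the ($\polylog m$-size) components of $h_1(x_0)$ or $h_2(x_0)$ with probability only $\polylog m/m$.
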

\begin{proof}

Let $x_0$ be the ball being inserted. Let $G$ be the corruption graph before the insertion occurs, but with the edge corresponding to $x_0$ removed if it is present. Let $\overline{G}$ be the corruption graph after the insertion. Define the event $E$ to be the event that we ever evict a ball $u \neq x_0$ whose edge is in $\overline{G}$ but not in $G$. We will break the number of evictions $I$ into two pieces, $I_1 = I \cdot \mathbf{1}(\overline{E})$ and $I_2 = I \cdot \mathbf{1}(E)$. Most of the proof is spent bounding $\E[I_1]$.

Throughout the analysis of $I_1$, we will fix the graph $G$, and define $C(G)$ to be the expected length of the longest path in the component containing a uniformly random vertex of $G$. We will also condition on the high-probability event that there are at least $\epsilon m$ bins with free slots, that there are at most $m/2$ corrupt bins in the graph $G$, and that every path in every component of $G$ has length at most $\polylog m$. To show that $\E[I_1] = \epsilon^{-1} + O(1 + C \epsilon^{-1})$ (without these conditions), it suffices to show that, with these conditions in place, the expected number of  evictions is $\epsilon^{-1} + O(1 + C(G) \epsilon^{-1})$.  

We can think of the insertion as proceeding as follows: First place the ball $x_0$ being inserted in bin $h_1(x_0)$ (we will refer to this as \underline{eviction} $0$). Then, perform a sequence of evictions, where \underline{eviction $i$} moves some ball $x_i$ between bins. Call an eviction \underline{special} if it evicts a ball $x_i$ whose second hash has never yet been revealed (i.e., ball $x_i$ has never used its second hash before), and whose edge is not in the graph $G$. 

Assuming that $E$ does not occur, then the following is true: For any maximal sequence of non-special evictions, these evictions all take place in a single component of $G \cup \{(h_1(x_0), h_2(x_0))\}$ (and involve balls that have edges in $G\cup \{(h_1(x_0), h_2(x_0))\}$). By Lemma \ref{lem:fail}, if such a sequence starts at some bin $b$, then the number of evictions in the sequence is at most $O(P_b)$, where $P_b$ is the length of the longest path in $G \cup \{(h_1(x_0), h_2(x_0)\}$ containing node $b$. (If node $b$ has no incident edges in $G \cup \{(h_1(x_0), h_2(x_0)\}$, then $P_b = 0$.) If, in total, there are $J$ special evictions performed, define $b_0, b_1, b_2, \ldots, b_J \in [m]$, where $b_0$ is $h_1(x_0)$, and where, for $0 < j \le J$, $b_j$ is defined to be $h_2(x)$ for whichever ball $x$ is evicted by the $j$-th special eviction. Then, 
$$\E[I_1] \le \sum_{j \ge 0} \Pr[b_j\text{ exists}] \cdot (1 + O(\E[P_{b_j} \mid b_j \text{ exists}])).$$
Define $P_b'$ to be the length of the longest path in the component of $G$ (rather than $G \cup (h_1(x_0), h_2(x_9))$) containing $b$. Let $Y_j$ be the indicator random variable for the event that $b_j$ exists (that is, we perform at least $j$ special evictions) \emph{and} that  $b_j$ is in the same connected component in $G$ as one of $h_1(x_0)$ or $h_2(x_0)$. Then,  
\begin{equation}\E[I_1] \le \E\left[O(P_{h_1(x_0)}' + P_{h_2(x_0)}' + 1) \cdot \sum_{j \ge 0} Y_j\right] + \sum_{j \ge 1} \Pr[b_j\text{ exists}] \cdot (1 + O(\E[P_{b_j}' \mid b_j \text{ exists}])).
\label{eq:I1twoparts}
\end{equation}
Recall that each component of $G$ has longest-path-length at most $\polylog m$. For $j \ge 1$, if the $j$-th special eviction evicts some ball $y \neq x_0$, then the eviction has probability at most $\polylog m / m$ of satisfying $Y_j = 1$. Since $P_{h_1(x_0)} + P_{h_2(x_0)} \le \polylog m$, and since $x_0$ can be subject to at most two special evictions (eviction $j =0$, and possibly some eviction $j \ge 1$), it follows that
\begin{align*}
    & \E\left[(P_{h_1(x_0)}' + P_{h_2(x_0)}' + 1) \cdot \sum_{j \ge 0} Y_j\right] \\
    & \le 2 \E[P_{h_1(x_0)}' + P_{h_2(x_0)}' + 1] + \sum_{j \ge 1} \Pr[b_j \text{ exists}] \cdot \E[P_{h_1(x_0)}' + P_{h_2(x_0)}' + 1 \mid b_j \text{ exists}] \cdot \frac{\polylog m}{m} \\
    & \le O(C(G) + 1) + \sum_{j \ge 1} \Pr[b_j \text{ exists}] \cdot \E[P_{h_1(x_0)}' + P_{h_2(x_0)}' + 1 \mid b_j \text{ exists}] \cdot \frac{\polylog m}{m} \\
    & \le O(C(G) + 1) + \sum_{j \ge 1} \Pr[b_j \text{ exists}] \cdot \frac{\polylog m}{m},
\end{align*}
where the final step uses the fact that every component in $G$ has size at most $\polylog m$. 

Note that $\Pr[b_j \text{ exists}]$ is at most the probability that the insertion reveals at least $j + 1$ fresh hashes (hashes never yet revealed before) without completing, which is at most $(1 - \epsilon)^{j + 1}$.  Therefore,
\begin{equation}E\left[(P_{h_1(x_0)}' + P_{h_2(x_0)}' + 1) \cdot \sum_{j \ge 0} Y_j\right] \le O(C(G) + 1) + \sum_{j \ge 1} (1 - \epsilon)^{j + 1} \cdot \frac{\polylog m}{m} \le O(C(G) + 1).
\label{eq:I1part1}
\end{equation}
Now, turning our attention to the second sum in \eqref{eq:I1twoparts}, observe that
$$\sum_{j \ge 1} \Pr[b_j\text{ exists}] \cdot (1 + O(\E[P_{b_j}' \mid b_j \text{ exists}])) \le \sum_{j \ge 1}(1 - \epsilon)^{j + 1} \cdot (1 + O(\E[P_{b_j}' \mid b_j \text{ exists}])).$$
For $j \ge 1$, the distribution of $b_j$ (conditioned on it existing) is uniformly random across all non-corrupt bins in $G$. Since there are $\Theta(m)$ such bins, it follows that 
\begin{align*} 
& \E[ P_{b_j}' \mid b_j \text{ exists}] \\ 
& \le \sum_{\text{non-corrupt }b} P_b'\cdot O(1/m) \le O\left(\sum_{\text{all bins }b} P_b'/m\right) \\
& = O(C(G)).
\end{align*}
Therefore, $\E[P_{b_j}' \mid b_j \text{ exists}] = O(C(G))$, and 
\begin{equation}\sum_{j \ge 1} \Pr[b_j\text{ exists}] \cdot (1 + O(\E[P_{b_j}' \mid b_j \text{ exists}])) \le
\sum_{j > 1} (1 - \epsilon)^{j + 1} \cdot (1 + O(C(G))) \le \epsilon^{-1} + O(C(G) \epsilon^{-1}).
\label{eq:I1part2}
\end{equation}
Combining \eqref{eq:I1twoparts}, \eqref{eq:I1part1}, and \eqref{eq:I1part2}, we can conclude that
$$\E[I_1] \le O(1 + C(G)) + \epsilon^{-1}.$$

Finally, we must also bound $\E[I_2]$ (here we do not condition on $G$). Let $P$ be the length of the longest path in \emph{any} component of $\overline{G}$. By assumption, we have $P \le \polylog m$ with high probability in $m$. Any sequence of non-special evictions has length at most $O(P) \le \polylog m$
Moreover, the number of special evictions is, with high probability in $m$ at most $O(\epsilon^{-1} \log m)$ (since $\Pr[b_j \text{ exists}] \le (1 -\epsilon)^j + 1/\poly(m)$). Therefore, the total number of evictions is, with high probability in $m$, at most some number
$$q := \polylog m + O(\epsilon^{-1} \log m).$$
For $E$ to occur, there must be \emph{two different balls} $x_i$ and $x_j$ that are evicted during the insertion, whose second hashes were never revealed beforehand, and whose second hashes happen to be equal. (In particular, it takes one ball $x_i$ to make a new bin $h_2(x_i)$ corrupt, and then a second ball $x_j$ with $h_2(x_j) = h_2(x_i)$ to make event $E$ occur.) The expected number of such pairs of balls is at most $1/\poly(m) + O(q^2/m) \le \polylog m / m$, meaning that $\Pr[E] \le \polylog m / m$. The expected value of $I_2 = I \cdot \mathbf{1}(E)$ therefore satisfies
$$\E[I_2] \le q \cdot \Pr[E] + 1/\poly(m) = (\epsilon^{-1} \polylog m) / m   = o(1)$$
as desired.
\end{proof}

The following lemma gives us a simple stochastic process that produces a graph $G'$ which dominates the corruption graph $G$:

\begin{lemma}
Consider a threshold $T \le m / 20$, and let $G$ be the corruption graph at the final point in time during which it has $\le T$ corrupt vertices, or when $\tau_1$ occurs (whichever comes first). Then, there exists a graph $G'$ that is constructed via the following process, and where the edges (resp.~corrupt vertices) in $G$ are a subset of the edges (resp.~corrupt vertices) in $G'$. 

The graph $G'$ is constructed by selecting a uniformly random set of $T$ (distinct) vertices to be corrupt, and then adding (unordered) edges to each corrupt vertex $B$ as follows:  
\begin{enumerate}
    \item For each bin $B'$, $\mathrm{Bin}(\ell,\frac{1.1}{m})$ edges are added from $B$ to $B'$. This process occurs independently for different $B'$.
    \item $\mathrm{Bin}(\ell m,\frac{1.1}{m})$ edges of the form $(B, B')$ are added, where each $B'$ is independent and uniformly random.
\end{enumerate}
The two steps above are independent both of each other, and across all corrupt vertices.
\label{lem:Gprime}
\end{lemma}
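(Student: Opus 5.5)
We plan to prove Lemma \ref{lem:Gprime} by coupling: we reveal the randomness of the insertion process and, alongside it, build $G'$ so that the containments hold deterministically under the coupling. We use the principle of deferred decisions. Each ball's first hash is revealed at insertion, and its second hash is revealed only when it is first evicted from $h_1$. Before $\tau$ occurs, at most $(2-\ell^{-.49})\ell m\le 2\ell m$ hashes are revealed. Each revealed hash is uniform on $[m]$ and independent of the past, given the algorithm's choice of which ball to probe.

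The first step is to fix which vertices are corrupt. Let $S$ be the set of corrupt bins at the stopping time; it has $|S|\le T$. We would like to say that $S$ is contained in a uniformly random $T$-set. This is awkward, because corruption depends on the hashes. The cleaner route is symmetry. The algorithm and the hash distribution are invariant under relabelings of $[m]$, except for the label of the inserted ball's first bin, which is itself random. So, conditioned on $|S|=s$, the set $S$ is a uniformly random $s$-subset. We then pad $S$ with $T-s$ uniformly random further bins, which gives a uniform $T$-set. This is the step where we must be careful: exchangeability holds for the entire process law, so it holds for the law of $S$.

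The main work is the edges. An edge of $G$ at a corrupt bin $B$ is a ball whose second hash lands on $B$, beyond the first $\ell-1$ such hashes. Since balls are edges of the hash graph, we account for edge endpoints in two groups. The first group is balls whose \emph{first} hash is some $B'$, with the edge landing at $B$. The second group is balls whose second hash lands on $B$, paired with a uniform other endpoint. We generate $G'$ by running the process and inserting independent Bernoulli variables.

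For step 2 we handle the second-hash reveals. There are at most $2\ell m\cdot$(fraction) of them. More precisely, at most $\ell m$ balls have their second hash revealed, and each lands on a given $B$ with probability $1/m$, independently. So the number landing on $B$ is stochastically dominated by $\mathrm{Bin}(\ell m,1/m)$, and $\mathrm{Bin}(\ell m, 1.1/m)$ leaves slack. The slack of $1.1$ absorbs the conditioning on $B$ being in the corrupt set. Conditioning on corruption biases $B$ upward, and this is the main obstacle. We handle it by conditioning, instead, on the full identity of the $T$-set in the symmetric coupling. We then argue that, because $T\le m/20$, the remaining hash reveals to a fixed $B$ still have probability at most $1/(m-T)\le 1.1/m$ after we exclude information about other bins. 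The other endpoint of such a ball is its first hash, which is uniform and independent. This matches the ``uniform $B'$'' in step 2, and we take these endpoints as the step 2 edges. Any surplus over the process is filled with dummy edges.

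For step 1 we handle the balls that land at corrupt $B$ as their first hash and were later sent to a corrupt bin, for example between two corrupt bins. For each pair $(B,B')$ we bound the first-hash contributions by $\ell$ potential slots, each of probability $1.1/m$. Independence across pairs comes from dominating a multinomial allocation by independent binomials, with a Poissonization margin of $1.1$ versus $1$. Finally we check that the two steps together cover every edge of $G$. The ones we exclude are the $\ell-1$ first arrivals, which makes the domination a one-sided containment. We also check that the steps are independent across corrupt vertices, since the Bernoulli trials are indexed by distinct (vertex, reveal) pairs.
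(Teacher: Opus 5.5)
There is a genuine gap, and it sits at the point you flag as the main obstacle. Your relabeling-symmetry argument only shows that the \emph{marginal} law of the corrupt set $S$ is exchangeable. The lemma needs the \emph{joint} statement: given the $T$-set, the edges are independent of it and of each other, with the stated laws.

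Your proposed fix is to condition on the final corrupt set and assert that each remaining reveal still hits a fixed $B$ with probability at most $1/(m-T)$. This does not work. $S$ is a function of the whole trajectory up to the stopping time, so conditioning on it reweights and correlates all of the hashes. For instance, $b\notin S$ forces $b$ to have received fewer than $\ell-1$ second hashes, which is information about the second hash of every evicted ball. No per-reveal bound or independence survives this conditioning.

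The missing idea is to never condition on the future. Proceed forward in time, and at the moment a bin $B$ becomes corrupt, decide right then \emph{all} of its $G$-edges. This is legitimate because any ball that is not yet an edge of $G$ has $h_2$ uniform over the current $K\ge m-T\ge .95m$ non-corrupt bins, independently of the history. That is where the $1.1$ comes from: $1/K\le 1.1/m$. It is not a Poissonization margin.

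Separately, track counters $y_b$ of second hashes revealed into each non-corrupt bin $b$. Every increment goes to a uniformly random non-corrupt bin, independent of the system state, including the edges already placed. A bin becomes corrupt exactly when its counter reaches $\ell-1$. After the stopping time, continue with ``fake'' uniform increments until exactly $T$ bins are corrupt. The corrupt set is then generated by randomness disjoint from the randomness that places edges, which gives uniformity and independence at once.

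Your edge bookkeeping also needs to be redone. $G$ contains \emph{every} ball whose second hash is a corrupt bin, beyond the first $\ell-1$ such hashes. That includes balls whose second hash is unrevealed and balls not yet inserted, so counting revealed second hashes is not enough.

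Every edge is charged to the corrupt vertex at its second-hash end. There is no category of balls ``whose first hash is the corrupt $B$''. The correct split is by insertion status at the moment $B$ becomes corrupt.
\begin{itemize}
\item For a ball already in the table, its $h_1$ endpoint is fixed by the history and is not a fresh uniform bin. So your claim that the other endpoint is uniform and independent fails for these balls. This is exactly what step 1 handles: each bin $B'$ holds at most $\ell$ balls with unrevealed second hash, each with $h_2=B$ with probability $1/K$. That gives $\mathrm{Bin}(\ell,1.1/m)$ edges from $B$ to $B'$, independently over $B'$.
\item Step 2 handles the at most $\ell m$ balls not yet inserted, whose $h_1$ really is uniform. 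That gives $\mathrm{Bin}(\ell m,1.1/m)$ edges with uniform other endpoints.
\end{itemize}
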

\begin{proof}
We will consider how $G$ evolves over time (as elements are inserted), and construct $G'$ alongside it. As we perform the construction, we will also keep track for each non-corrupt bin $b$ of the quantity defined to be: the total number of balls $u$ inserted so far with $h_2(u) = b$, and where ball $u$ is in bin $b$. We will define the \underline{system state} at any given moment to be the tuple $S = (G, G', \{y_b\})$ encoding all of $G$, $G'$, and the $y_b$ values for each non-corrupt bin $b$.

Consider the moment in which a bin $B$ becomes corrupt in $G$ (and assume $B$ is among the first $T$ bins to become corrupt). We will also declare bin $B$ corrupt in $G'$, and we will argue below that we can add edges to $G'$ in such a way that (1) all edges added to $G$ get added to $G'$, (2) the edges added to $G'$ are as described in the lemma statement, and (3) which edges are added to $G'$ is fully independent of the system state prior to $B$ becoming corrupt. 

There are two types of edges that get added to $G$ as a result of the bin becoming corrupt: Edges $(h_1(x), h_2(x))$ for balls $x$ that are already present, that have never-yet used their second hash, and that satisfy $h_2(x) = B$ (call these Type 1 edges); and edges $(h_1(x), h_2(x))$ for balls that have not yet been inserted, but whose second hash is $h_2(x) = B$ (call these Type 2 edges). 

We can think of Type 1 edges as being added to $G$ by the following process. For each bin $B'$, and for each ball $x$ in $B'$ (with $h_1(x) = B'$) that has never-yet used its second hash, check if $x$'s second hash is $B$ -- if so, add the edge $(h_1(x), h_2(x)) = (B', B)$ to $G$. If the number of non-corrupt vertices prior to $B$ becoming corrupt was $K$, then each ball $x$ in $B'$ that has not yet used its second hash (and that does not already correspond to an edge in $G$) independently has probability $1/K$ of satisfying $h_2(x) = B$. Since there are at most $\ell$ such balls $x$, the number of edges added from $B'$ to $B$ is at most a binomial random variable $\mathrm{Bin}(\ell,1/K)$. By assumption $K \ge m - T \ge .95m$, so the number of edges added from $B'$ to $B$ is dominated by the random variable $\mathrm{Bin}(\ell, 1.1/m)$. This is true independently for each bin $B'$. Therefore, it is possible to add edges to $G'$ so that $G'$ receives all the Type 1 edges that $G$ receives, and so that the number of edges $G'$ receives from each bin $B'$ is an independent $\mathrm{Bin}(\ell, 1.1/m)$ random variable (which is also fully independent of the system state prior to $B$ becoming corrupt). 

Likewise, we can think of Type 2 edges as being added to $G$ by the following process. Each of the at-most $\ell m$ keys $x$ that remain to be inserted (and that do not yet correspond to edges in $G$) have independent uniformly random $h_1(x)$ value and have probability $1/K \le 1.1 / m$ of satisfying $h_2(x) = B$. This means that the number of Type 2 edges added to $G$ is dominated by $\mathrm{Bin}(\ell m, 1.1/m)$. Therefore, it is possible to add edges to $G'$ so that $G'$ receives all the Type 2 edges that $G$ receives, so that the number of edges added to $G'$ is a $\mathrm{Bin}(\ell m, 1.1/m)$ random variable, so that each edge added is independent and uniformly random from $\{(B', B) \mid B' \in [m]\}$, and so that the edge additions to $G'$ are fully independent of the system state prior to $B$ becoming corrupt.

The above process tells us how to add edges to $G'$ whenever a bin $B$ becomes corrupt in $G$. In addition, if at the end of the construction of $G$ (the final moment at which $G$ has $\le T$ corrupt vertices) $G$ has fewer than $T$ corrupt vertices, we add additional corrupt vertices to $G'$ until the total number of corrupt vertices is $T$, and we do it as follows: We continue to increment random $y_b$ values (we call these \underline{fake increments}), making a bin $b$ corrupt once its $y_b$ hits $\ell - 1$ (and stopping once the number of corrupt vertices is $T$). Each time a bin $b$ becomes corrupt, we add edges as described in the lemma statement.

The resulting graph $G'$ has exactly $T$ corrupt vertices, and has the property that, when a corrupt vertex is introduced, it gets edges according to the distribution described in the lemma statement. The final (and most subtle) step in the proof is to argue that the actual set of corrupt vertices in $G'$ is uniformly random (and independent of the edges). For this, we must show that, when a bin $B$ becomes corrupt in $G'$, it is selected \emph{uniformly random out of all remaining uncorrupted bins}, and \emph{independently} of the current system state.

This part of the argument is where the $\{y_b\}$ values finally come into play. Notice that a bin $B$ becomes corrupt in $G'$ exactly when $y_B$ gets incremented to $\ell - 1$. We will argue that, each time some $y_b$ gets incremented, the bin $b$ is uniformly random (out of the non-corrupt bins) and independent of the system state -- it follows that the $K$ bins that become corrupt are uniformly random and independent of the edges in the graph.

By construction, fake increments to $y_b$ (increments that happen at the end of the construction bring the number of corrupt nodes in $G'$ to $K$) are each to uniformly random bins $b$ (out of those not-yet-corrupt). Thus we focus in the rest of the proof on non-fake increments to $y_b$. 

Besides a fake increment, the only way for a counter $y_b$ to change (for a non-corrupt bin $b$) is for some ball $x$ with $h_2(x) = b$ to be evicted from its first bin. Note that, when an element $x$ is evicted for the first time, if $x$ does not already correspond to an edge in $G$, then we are guaranteed that its \emph{second hash is uniformly random out of all non-corrupt bins}. So the effect of evicting $x$ will be exactly equivalent to incrementing $y_b$ for a uniformly random non-corrupt bin $b$ (independent of the current system state). 

Thus, each time some $y_b$ gets incremented, the bin $b$ is uniformly random and independent of the system state.

Putting the pieces together, we have argued that the construction of $G'$ is equivalent to the following: We increment counters $y_B$ for uniformly random not-yet-corrupt bins $B$ (independent of the current system state), and declare a bin $B$ to be corrupt once its counter reaches $\ell - 1$, stopping once we have $K$ corrupt bins; when a bin becomes corrupt, we add edges as described in the lemma statement, and in a way that is also independent of the current system state. This, overall, is equivalent to picking $K$ random vertices and adding edges as described in the lemma statement. 

\end{proof}

We will often find ourselves in situations where it suffices to reason about $G'$ rather than $G$. For these situations, the following lemma about $G'$ will be critical:

\begin{lemma}\label{probedgeexists}
Consider the state of the dominating corruption graph $G'$, described in Lemma \ref{lem:Gprime}, at any given moment. Any edge (including any loop) has probability at most $\frac{5\ell}{m}$ of existing in the graph, and this remains true when conditioning on which vertices are corrupt. Similarly, the probability of a multi-edge existing with multiplicity at least $c$ is at most $\left(\frac{5\ell}{m}\right)^c$, again regardless of which vertices are corrupt; and the probability of any $c'$ edges (possibly with repeats) all existing is at most $\left(\frac{5\ell}{m}\right)^{c'}$, again regardless of which vertices are corrupt.
\end{lemma}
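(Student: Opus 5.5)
We work directly with the construction of $G'$ in Lemma \ref{lem:Gprime}, conditioning on the (uniformly random) set $S$ of $T$ corrupt vertices. After this conditioning, the edges are generated by independent processes, one for each corrupt vertex $B\in S$. Each process contributes (i) independent $\mathrm{Bin}(\ell,1.1/m)$ edges $(B,B')$ for each $B'$, and (ii) $\mathrm{Bin}(\ell m,1.1/m)$ edges whose other endpoint is uniform. The plan is to write the multiplicity of a fixed unordered pair $\{u,v\}$ as a sum of independent binomial-type counts, and bound its tail.

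\textbf{Step 1 (a single pair).} Fix $u,v$ (possibly $u=v$). Edges between $u$ and $v$ arise only from the processes of $u$ (if $u\in S$) and of $v$ (if $v\in S$). From the process of $u$, part (i) gives $\mathrm{Bin}(\ell,1.1/m)$ copies. Part (ii) gives a thinned count: each of the $\mathrm{Bin}(\ell m,1.1/m)$ edges lands on $v$ with probability $1/m$, so the count is $\mathrm{Bin}(\ell m,1.1/m^2)$. The process of $v$ contributes the same. Hence the multiplicity $M_{uv}$ is stochastically dominated by a sum of independent Bernoulli trials: $4\ell$ trials of probability $1.1/m$ and $2\ell m$ trials of probability $1.1/m^2$. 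This holds regardless of $S$, since corruption only decides whether these trials are present. The expected multiplicity is at most $4.4\ell/m+2.2\ell/m=6.6\ell/m$. That exceeds $5\ell/m$, so the crude union over both endpoints is too lossy and I must be more careful.

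The fix I would try first is to orient edges: each edge is generated by exactly one corrupt endpoint, namely its second hash. Under this reading, a given unordered pair gets its "$B\to B'$" contributions from the process of only one of its endpoints for any particular ball. In the domination I therefore charge part (i) only to the ordered pair (corrupt $B$, other endpoint $B'$). Even charging both orientations, though, $\Pr[M_{uv}\ge1]\le \E[M_{uv}]$, and the bound needed is $5\ell/m$. If the $4.4+2.2$ count stands, I would note the constant is not important downstream and check whether the paper's intended accounting is $1.1\ell/m+1.1\ell/m$ per endpoint, giving $4.4\ell/m\le5\ell/m$. This per-endpoint accounting is plausible if part (i) is read as $\ell$ trials for the ordered pair, summed over both orientations, and part (ii) as at most $\ell m\cdot1.1/m^2$ per orientation. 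Getting this constant to fit is the main (bookkeeping) obstacle.

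\textbf{Step 2 (multiplicity $\ge c$).} $M_{uv}$ is dominated by a sum of $N=O(\ell m)$ independent Bernoullis with total mean $\mu\le 5\ell/m$ (after Step 1's accounting). By a union bound over $c$-subsets of trials, $\Pr[M_{uv}\ge c]\le \sum_{|A|=c}\prod_{i\in A}p_i\le \mu^c/c!\le(5\ell/m)^c$.

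\textbf{Step 3 ($c'$ specified edges, possibly repeated).} Group the $c'$ requested edges by pair, with multiplicities $c_1,\dots,c_r$ on distinct pairs. The trials for different pairs are disjoint, and a single trial produces one edge at a fixed pair. The event that every pair $j$ reaches multiplicity $c_j$ therefore requires $\sum c_j=c'$ distinct successful trials, assigned to the pairs. The same union bound over the sets of assigned trials gives $\prod_j \mu_j^{c_j}/c_j!\le(5\ell/m)^{c'}$, using independence of distinct trials. The part-(ii) trials couple pairs through the shared $\mathrm{Bin}(\ell m,1.1/m)$ count, but each underlying trial still chooses one endpoint uniformly, so distinct trials remain independent and the product bound holds. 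All of this is conditional on $S$, which gives the "regardless of which vertices are corrupt" clause.
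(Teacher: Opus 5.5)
There is one concrete error in Step 1, and the ``constant problem'' you then wrestle with comes from it. For a fixed pair $\{u,v\}$, the process of $u$ contributes $\ell$ part-(i) trials of probability $1.1/m$ (part (i) adds $\mathrm{Bin}(\ell,1.1/m)$ copies of $(u,v)$) and $\ell m$ part-(ii) trials of probability $1.1/m^2$. The process of $v$ contributes the same.

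That is $2\ell$ trials of probability $1.1/m$, not $4\ell$. So $\E[M_{uv}]\le 2.2\ell/m+2.2\ell/m=4.4\ell/m\le 5\ell/m$, and only $2.2\ell/m$ for a loop, where there is a single process. This is exactly the ``$1.1\ell/m+1.1\ell/m$ per endpoint'' accounting you guessed at the end of Step 1. It is not an alternative reading but simply the correct count.

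Drop the edge-orientation digression: both endpoints' processes genuinely contribute, and orienting edges removes nothing. The fix is not cosmetic. With your stated mean $6.6\ell/m$, the $c=1$ claim $\Pr[M_{uv}\ge 1]\le 5\ell/m$ fails, and so does Step 3 whenever many $c_j=1$. With $\mu\le 4.4\ell/m$, Steps 2 and 3 go through exactly as you wrote them.

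With that correction your proof is complete. It differs from the paper's in how the tails and the joint events are handled.

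\textbf{The paper's route.} It bounds the probability that \emph{exactly} $c_x$ copies of the edge arise at $x$'s corruption by $(c_x+1)(1.1\ell/m)^{c_x}\le(2.2\ell/m)^{c_x}$, using $\Pr[\mathrm{Bin}(n,p)=k]\le(np)^k$ for each edge type. It convolves over the two endpoints to get $(c+1)(2.2\ell/m)^c$ for multiplicity exactly $c$, then sums the tail to reach $(5\ell/m)^c$. For several edges, it argues that part-(i) edges are independent and that part-(ii) edges, viewed as $\ell m$ multinomial trials, are only negatively dependent.

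\textbf{Your Step 2.} It replaces the convolution and tail sum with the single bound $\Pr[\sum_i\mathrm{Ber}(p_i)\ge c]\le\sum_{|A|=c}\prod_{i\in A}p_i\le\mu^c/c!$. This is shorter and sharper by a factor of $c!$.

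\textbf{Your Step 3.} It is a union bound over disjoint sets of witness trials, using only that distinct trials are independent and each trial produces at most one edge. This proves the $c'$-edge product bound directly. The paper's negative-dependence remark is phrased for pairs of edges. Turning pairwise negative dependence into a product bound over $c'$ events really needs negative association of multinomial counts, or an argument like yours. So your route is the more self-contained one on this point.
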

\begin{proof}
Take an edge $(x,y)$ (where we could have $x=y$). This edge could appear in $G'$ either when $x$ becomes corrupt or when $y$ becomes corrupt. We first show that for any $c_x\in\mb{N}$, we have that the probability of it being added with multiplicity exactly $c_x$ when $x$ becomes corrupt is at most $\left(\frac{2.2\ell}{m}\right)^{c_x}$ (and, by symmetry, the same holds for $y$ in place of $x$).

Consider some $c_1, c_2 \ge 0$ with $c_1 + c_2 = c_x$. When $x$ becomes corrupt, $G'$ receives $\mathrm{Bin}\left(\ell,\frac{1.1}{m}\right)$ edges from each bin $z$ to bin $x$ (call these Type 1 edges); and then additionally receives $\mathrm{Bin}\left(\ell m,\frac{1.1}{m}\right)$ edges from $x$ to random other bins $z$. Thus \[\mb{P}\left(\mathrm{Bin}\left(\ell,\frac{1.1}{m}\right)=c_1\right)\le\left(\frac{1.1\ell}{m}\right)^{c_1}\]is a bound on the probability that $c_1$ Type 1 edges are added between $x$ and $y$, and  \[\mb{P}\left(\mathrm{Bin}\left(\ell m,\frac{1.1}{m^2}\right)=c_2\right)\le\left(\frac{1.1\ell}{m}\right)^{c_2}\]is a bound on the probability of $c_2$ Type 2 edges being added between $x$ and $y$ (and these two probabilities are independent). Then, summing over the options for $c_1, c_2$, \[\sum_{c_1=0}^{c_x}\left(\frac{1.1\ell}{m}\right)^{c_1}\left(\frac{1.1\ell}{m}\right)^{c_x-c_1}=(c_x+1)\left(\frac{1.1\ell}{m}\right)^{c_x}\le\left(\frac{2.2\ell}{m}\right)^{c_x}\]is an upper bound on the probability that $c_x$ total edges are added between $x$ and $y$ when $x$ becomes corrupt. 

Finally, for the edge $(x, y)$ to appear with multiplicity $c$ overall, it needs to appear with multiplicity $c_x$ when $x$ becomes corrupt and $c_y$ when $y$ becomes corrupt, where $c_x+c_y=c$. Using that the edges that appear at $x$'s corruption are independent of the edges that appear at $y$'s corruption, the probability of it appearing with multiplicity exactly $c$ overall is at most \[
\sum_{c_x=0}^c\left(\frac{2.2\ell}{m}\right)^{c_x}\left(\frac{2.2\ell}{m}\right)^{c-c_x}=(c+1)\left(\frac{2.2\ell}{m}\right)^c.
\]Then we have that the probability of it appearing with multiplicity at least $c$ is\[\sum_{a=c}^\infty (a+1)\left(\frac{2.2\ell}{m}\right)^a\le\left(\frac{5\ell}{m}\right)^c\](using $\ell=O(\log m)$)
as desired.

We claim that the edges generated by (1.) in Lemma \ref{lem:Gprime} are independent, while those generated by (2.) can only have a negative dependency between the existence of one edge (with at least a certain multiplicity) and the existence of a different edge (with at least a certain multiplicity). To justify the latter statement, note process (2.) is equivalent to the following: for a given $B$ and each of $\ell m$ independent trials, we have for each bin $B'$ a \emph{disjoint} $\frac{1.1}{m^2}$ probability that the trial will add $(B,B')$, and the remaining $1-\frac{1.1}{m}$ probability it will add no edge. Therefore, we see that the probability of an edge existing at least $c$ times is either independent of another edge existing at least $c'$ times (if they are not incident), or if they are incident (but not the same edge) one edge existing at least $c$ times makes it conditionally (very slightly) less likely for another edge to exist at least $c'$ times. This gives us the final statement that the probability of any $C$ edges (possibly with repeats) all existing is at most $\left(\frac{5\ell}{m}\right)^C$, again independently of which vertices are corrupt.
\end{proof}

Thus, to complete the analysis of the algorithm, the key is to prove that $G$ itself behaves nicely -- namely that $G$ most likely does not contain any bicyclic components, and that $G$'s connected components contain only short paths (in expectation). 

\subsection{``Subcritical'' Properties of the Corruption Graph}

The next lemma says that, up until $\tau$ occurs, we have very few total corrupted bins. This, combined with Lemma \ref{probedgeexists}, will be enough for us to prove essentially everything we want to show about the corruption graph $G$.

\begin{lemma}\label{numcorruptbins}
With high probability in $m$ over the insertion process, there will be no time at which $\tau$ has not occurred but the number of corrupted bins is at least $\ell^{-10}m$.
\end{lemma}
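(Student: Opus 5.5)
Our plan is to reduce corruption to a pure balls-into-bins statement about revealed second hashes, as in the proof of Lemma~\ref{epsstarfirst}. A bin $B$ becomes corrupt only once its first $\ell-1$ slots hold balls on their second hash. Each such ball $x$ has $h_2(x)=B$, and its second hash was revealed at some point. So if $B$ is corrupt at a time before $\tau$, then at least $\ell-1$ of the second hashes revealed so far land on $B$. Before $\tau$ occurs, fewer than $(2-\ell^{-.49})\ell m$ hashes have been revealed in total. At least the $\ell m'$ first hashes of inserted balls (at most $\ell m$) are among these, so at most $(1-\ell^{-.49})\ell m$ second hashes have been revealed.

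Our main task is therefore to show the following. Suppose we reveal $N:=(1-\ell^{-.49})\ell m$ second hashes. Then with high probability in $m$, fewer than $\ell^{-10}m$ bins receive at least $\ell-1$ of them. Since the count of such bins is monotone in the number of reveals, it suffices to bound it at the end of $N$ reveals. That in turn bounds it at every time before $\tau$.

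The subtlety is that second hashes are not uniform over all of $[m]$ given the history. When a ball is evicted from $h_1(x)$ its second hash is fresh, but conditioned on the algorithm's choices it is still uniform on $[m]$. This is because the algorithm never looks at unrevealed second hashes; only $h_1(x)\neq h_2(x)$ issues arise, which cost a factor $1+O(1/m)$. We would formalize this as a coupling: the sequence of revealed second hashes is a sequence of i.i.d. uniform bins, truncated at a stopping time, so the loads are dominated by throwing $N$ i.i.d. uniform balls.

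For a single bin the load is $\mathrm{Bin}(N,1/m)=\mathrm{Bin}(c\ell m,1/m)$ with $c=1-\ell^{-.49}$. By Lemma~\ref{poisson},
\[\mb{P}(\text{load}\ge\ell-1)\le c^{-1}e^{-\ell(1-c)^2/2}=O(e^{-\ell^{.02}/2}).\]
\textbf{This is the main obstacle:} $e^{-\ell^{.02}/2}$ is nowhere near $\ell^{-10}$ for moderate $\ell$. The crude bound fails, so we would sharpen it with the exact exponent $\ell(c-1-\ln c)$. For $\delta=1-c=\ell^{-.49}$ this exponent is $\ell(\delta^2/2+\delta^3/3+\dots)\approx \ell^{.02}/2$, still too weak.

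We therefore expect that the intended threshold must exploit more. Either the first-hash placements already fill bins so that second hashes concentrate, or the statement uses constants where $\ell$ is "sufficiently large" so that $e^{-\ell^{.02}/2}\le \ell^{-11}$. The latter holds once $\ell$ exceeds an (astronomically large) constant, and we would take that route, since the paper assumes $\ell$ is a sufficiently large constant. This gives $\E[\#\text{corrupt}]\le \ell^{-11}m$.

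Finally we need concentration. The number of bins with load at least $\ell-1$ is a $1$-Lipschitz function of the $N$ independent reveals. McDiarmid's inequality then gives deviation $O(\sqrt{\ell m\log m})=o(\ell^{-11}m)$ with high probability in $m$, using $\ell\le1.5\ln m$. Hence the count is below $\ell^{-10}m$ with high probability in $m$, which proves the lemma.
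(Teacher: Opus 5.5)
Your approach is the paper's. You reduce corruption to at least $\ell-1$ revealed second hashes landing on a bin. You treat the revealed second hashes as a prefix of an i.i.d.\ uniform sequence and apply Lemma~\ref{poisson}, accepting that $e^{-\Theta(\ell^{.02})}\le\ell^{-11}$ only holds once $\ell$ is a (very) large constant. You finish with McDiarmid. The paper does exactly this and exploits no extra structure, so your resolution of the ``main obstacle'' is the intended one.

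\textbf{The step that fails:} your count of second hashes is wrong as written. From ``fewer than $(2-\ell^{-.49})\ell m$ hashes in total'' and ``the first hashes of the inserted balls are among them'' you cannot conclude that at most $(1-\ell^{-.49})\ell m$ are second hashes. That would require all $\ell m$ balls to have been inserted; when fewer are inserted, the bound gets weaker, not stronger.

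\textbf{The fix:} every revealed second hash belongs to a ball whose first hash was revealed earlier. So the number of second hashes is at most the number of first hashes, hence at most half the total, i.e.\ below $(1-\tfrac12\ell^{-.49})\ell m$. Take $c=1-\tfrac12\ell^{-.49}$ in Lemma~\ref{poisson}. This gives $c^{-1}e^{-\ell^{.02}/8}$ instead of $e^{-\ell^{.02}/2}$, which is still at most $\ell^{-11}$ for sufficiently large $\ell$. The rest of your argument goes through unchanged: monotonicity in the number of reveals, then McDiarmid with deviation $o(\ell^{-11}m)$.
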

\begin{proof}
A bin $B$ is corrupt if and only if there are at least $\ell-1$ balls whose second hashes have been revealed to be $B$.

If $\tau$ has not yet occurred, then there have been at most $((2-\ell^{-.49})\ell m)/2=(1-.5\ell^{-.49})\ell m$ second hashes revealed (as each ball's first hash is revealed before its second hash). It therefore suffices to prove the following: that, if we reveal $K = (1-.5\ell^{-.49})\ell m$ second hashes $R_1, R_2, \ldots, R_K \in [m]$, then the number of bins $B \in [m]$ that are hit at least $\ell - 1$ times is at most $\ell^{-10} m$ with high probability.

The number of second hashes revealed to be a given bin $B$ is  $\mathrm{Bin}((1-.5\ell^{-.49})\ell m,\frac 1m)$. By Lemma \ref{poisson} with $c=1-.5\ell^{-.49}$,
\[\mb{P}(\mathrm{Bin}((1-.5\ell^{-.49})\ell m,1/m)\ge\ell-1)\le e^{-\ell((.5\ell^{-.49})^2/2}/(1-.5\ell^{-.49})\le e^{-\ell^{.02}/10}\le\ell^{-11}\]for $\ell$ at least a sufficiently large constant.

Now, let $X$ be the number of bins that are hit at least $\ell - 1$ times by $R_1, R_2, \ldots, R_K$. The above shows that $\mb{E}[X]<\ell^{-11}m$. Since $X$ is determined by $K = O(m\ell)$ independent random variables $R_1, R_2, \ldots, R_K$, and since each $R_i$ can change $X$ by at most 1, we can apply McDiarmid's Bounded Difference Inequality to say that \[
\mb{P}(X\ge \ell^{-10}m)\le\mb{P}(X\ge 2\ell^{-11}m)\le\mb{P}(X-\mb{E}[X]\ge\ell^{-11}m)\le e^{-\frac{(\ell^{-11}m)^2}{\ell m}}\le e^{-\frac{m}{\ell^{23}}}.
\]
We note that $\mb{P}(X<\ell^{-10}m)\ge 1-e^{-\frac{m}{\ell^{23}}}$ is then with high probability in $m$ as $\ell=O(\log m)$.
\end{proof}
Intuitively, the bound in Lemma \ref{numcorruptbins} is small enough that the corruption graph has similar properties to the subcritical regime of the Erd\H os--Renyi random graph (that is, $G(n,p)$ for $p\le\frac{1-\Omega(1)}{n}$ or $G(n,m)$ for $m\le (\frac 12-\Omega(1))n$) \cite{ErdosRenyi,FriezeKaronski}. In particular, we can now argue that the corruption graph most likely does not contain any bicyclic component prior to $\tau$ occurring (Lemma \ref{nobicyclic}) and that the components tend to have only short paths (Lemma \ref{CorruptComponentSize} and Lemma \ref{longestpathpolylog}). 

\begin{lemma}\label{nobicyclic}
With probability at least $1-O(\ell^{-6}m^{-1})$, the point in the insertion process where $\tau$ is reached comes before any point in the insertion process where we fail to insert any additional elements.
\end{lemma}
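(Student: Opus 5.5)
By Lemma \ref{bicycliconlybarrier}, an insertion can fail only if, at the moment failure is declared, the corruption graph $G$ contains a bicyclic component. It therefore suffices to show that, with probability $1-O(\ell^{-6}m^{-1})$, the corruption graph $G$ contains no bicyclic component at any time before $\tau$ occurs. Since $G$ only grows over time, it is enough to check a single snapshot: the graph at the last moment before $\tau$ at which it has at most $T=\ell^{-10}m$ corrupt vertices. By Lemma \ref{numcorruptbins}, with high probability in $m$ this moment is simply the moment $\tau$ occurs. The exceptional probability is $1/\poly(m)$, which is absorbed into the bound. We then apply Lemma \ref{lem:Gprime} with $T=\ell^{-10}m\le m/20$. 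This gives a dominating graph $G'$ containing $G$, and it suffices to show that $G'$ has no bicyclic component with the claimed probability.

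Every edge of $G'$ is incident to a corrupt vertex, and the corrupt set is a uniformly random $T$-set chosen independently of the edge distribution bounds in Lemma \ref{probedgeexists}. A bicyclic component contains a minimal bicyclic subgraph $H$ with $k$ vertices and $k+1$ edges. Such an $H$ is either a theta graph or two cycles joined by a path, possibly with loops or multiedges as degenerate cases. Every edge of $H$ touches a corrupt vertex, so the corrupt vertices form a vertex cover of $H$. Since $H$ has minimum degree at least $2$ and contains a cycle, a vertex cover must include at least about half its vertices, so at least $\lceil k/2\rceil$ of them.

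We now union bound over all such $H$:
\begin{itemize}
\item The number of labeled shapes on $k$ vertices is at most $m^k k^{O(1)} k!$, or more crudely $m^k\,\poly(k)\,k^{2}$ once we count paths and cycles structurally. The standard count for bicyclic structures is $O(k^{2}) \cdot m^k$ times the number of orderings, and we can bound this by $m^k k^{O(1)}$ up to a $k!$ that cancels against the ordering of the path.
\item The probability that all $k+1$ edges are present is at most $(5\ell/m)^{k+1}$.
\item The probability that a fixed set of at least $k/2$ vertices is corrupt is at most $(T/m)^{k/2}=\ell^{-5k}$.
\end{itemize}
The total is therefore $\sum_k m^k k^{O(1)}(5\ell/m)^{k+1}\ell^{-5k}=\frac{5\ell}{m}\sum_k k^{O(1)}(5\ell\cdot\ell^{-5})^k$. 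The sum is dominated by $k=1$ or $k=2$.

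The main obstacle is tightening this small-$k$ count to reach exactly $\ell^{-6}$. For $k=1$, the configuration is a vertex with two loops, or a loop together with a double edge. It needs one corrupt vertex (factor $\ell^{-10}$) and two edges (factor $(5\ell/m)^2$), over $m$ choices, giving $O(\ell^{-8}m^{-1})$. For $k=2$, it needs at least one corrupt vertex ($\ell^{-10}$) and three edges, over $m^2$ choices, giving $O(\ell^{3}\ell^{-10}/m)=O(\ell^{-7}/m)$. For general $k$, the cover argument gives a factor of at least $\ell^{-10\lceil k/2\rceil}$, which beats $(5\ell)^{k+1}$ by a geometric factor. I would carefully verify the vertex-cover lower bound, which relies on the corrupt vertices forming an independent-set complement. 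I would also account for the conditional independence given in Lemma \ref{probedgeexists}, which holds regardless of which vertices are corrupt. This makes the product bound legitimate and yields a total of $O(\ell^{-6}m^{-1})$.
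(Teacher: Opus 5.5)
\textbf{Gap in the vertex-cover step.} Your overall strategy matches the paper's. You pass to the dominating graph $G'$ of Lemma~\ref{lem:Gprime} with $T=\ell^{-10}m$, union-bound over minimal bicyclic subgraphs, and combine the edge bound of Lemma~\ref{probedgeexists} with the fact that every edge touches a corrupt vertex. However, the claim that at least $\lceil k/2\rceil$ of the $k$ vertices must be corrupt is false. Two counterexamples:
\begin{itemize}
\item a corrupt vertex $v$ joined by double edges to two non-corrupt vertices $a,b$, which has $k=3$, four edges, and one corrupt vertex;
\item $K_{2,3}$ with its two-vertex side corrupt, which has $k=5$ and two corrupt vertices.
\end{itemize}
The correct bound is the paper's $j\ge (k-1)/2$, together with $j\ge 1$. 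The reason is that the non-corrupt vertices form an independent set, carry no loops, and each has degree at least $2$, so $2(k-j)\le k+1$.

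\textbf{Effect on the sum.} For odd $k$ the corrupt-vertex factor is only $\ell^{-5(k-1)}$, not $\ell^{-5k}$. The union-bound term is then $m^{-1}5^{k+1}\ell^{6-4k}$, up to $\poly(k)$ factors. In particular the $k=3$ configuration above contributes about $m\cdot m^2\cdot \ell^{-10}(5\ell/m)^4 = 625\,\ell^{-6}m^{-1}$. So $k=3$, not $k\le 2$, is the dominant term, and it is exactly where the $\ell^{-6}$ in the statement comes from. Your conclusion that the sum is dominated by $k=1,2$ would give $O(\ell^{-7}m^{-1})$, which is an artifact of the false cover bound. Relatedly, your own small-$k$ computations ($\ell^{-8}$ and $\ell^{-7}$) never produce the $\ell^{-6}$ you say you are aiming for.

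With the corrected bound, and with $k=1,2$ handled separately as you already do, the series is still $O(\ell^{-6}m^{-1})$, so the argument is repairable.

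\textbf{Counting fixes.}
\begin{itemize}
\item The number of minimal bicyclic shapes on $k$ labelled vertices drawn from $[m]$ is $m^k\cdot O(k^3)$. This comes from an ordered vertex sequence plus the positions of the branch points, with no extra $k!$.
\item You need a factor $2^k$ for choosing which vertices form the corrupt cover.
\item For $k=1$ the only configuration is a vertex carrying two loops. A ``double edge'' needs two vertices.
\end{itemize}
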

\begin{proof}
For failure to occur, there has to be a bicyclic component in the corruption graph ($k+1$ edges on $k$ vertices), as explained in Lemma \ref{bicycliconlybarrier}. As this is an increasing graph property and the corruption graph is stochastically dominated by the dominating corruption graph $G'$ described in Lemma \ref{lem:Gprime}, it suffices to prove that there is no bicyclic component in $G'$ with probability at least $1-O(\ell^{-6}m^{-1})$.

We will count the number of minimal bicyclic components, that is, a vertex set $V$ induces a bicyclic component but no subset $V'$ of $V$ does. If a bicyclic component exists, a minimal bicyclic component must also exist. In a minimal bicyclic component, every $v\in V$ has degree at least two, or else we could remove $v$ and the (at most) one incident edge and maintain a bicyclic component.

Note that in the corruption graph, every edge is incident to at least one corrupt bin. Assume a minimal bicyclic component on $k$ vertices $C$ has $j$ corrupt vertices and thus $k-j$ non-corrupt vertices. If $j\ne k$, let $w\in C$ be a non-corrupt vertex. $C\setminus\{w\}$ is not bicyclic, so it has at most $k-1$ edges. Each of the $k-j-1$ non-corrupt vertices in $C\setminus\{w\}$ has degree at least two (as none connected to $w$). Those $k-j-1$ vertices form an independent set, so $C\setminus\{w\}$ has at least $2(k-j-1)$ edges. This gives us that $2(k-j-1)\le E[C]\le k-1$, so $j\ge\frac{k-1}{2}$.

Let $N_k$ be the number of minimal bicyclic components in the corruption graph with $k$ vertices. We claim that \[
\mb{E}[N_k]\le\binom{\ell^{-10}m}{\lceil(k-1)/2\rceil}\binom{m}{\lfloor(k+1)/2\rfloor}k^{k-2}k^4\left(\frac{5\ell}{m}\right)^{k+1}
\]In the inequality above, the first two factors choose $\lceil(k-1)/2\rceil$ corrupt vertices for the component and then the remaining $\lfloor(k+1)/2\rfloor$ vertices to get $k$ total. Then, any bicyclic component must consist of a spanning tree plus two edges, so we choose one of the $k^{k-2}$ spanning trees on those $k$ vertices followed by at most $k^4$ ways to choose the final two edges. Then, each of the $k+1$ edges (possibly repeated) that we have chosen to exist will be appear in $G'$ with probability at most $(5\ell/m)^{k+1}$ by Lemma \ref{probedgeexists}.

Then for all $k\ge 1$ and $\ell\ge 2$, we get
\begin{align*}
\mb{E}[N_k]&\le\binom{\ell^{-10}m}{\lceil(k-1)/2\rceil}\binom{m}{\lfloor(k+1)/2\rfloor}k^{k-2}k^4\left(\frac{5\ell}{m}\right)^{k+1}
\\&\le m^{\lfloor(k+1)/2\rfloor}(\ell^{-10}m)^{\lceil(k-1)/2\rceil}\frac{k^k}{\lfloor(k+1)/2\rfloor!\lceil(k-1)/2\rceil!}k^2\left(\frac{5\ell}{m}\right)^{k+1}
\\&\le m^{\lfloor(k+1)/2\rfloor}(\ell^{-10}m)^{\lceil(k-1)/2\rceil}\frac{k^k}{((k/(2e))^{k/2})^2}k^2\left(\frac{5\ell}{m}\right)^{k+1}
\\&\le m^{\lfloor(k+1)/2\rfloor}(\ell^{-10}m)^{\lceil(k-1)/2\rceil}(2e)^kk^2\left(\frac{5\ell}{m}\right)^{k+1}
\\&\le m^{-1}\ell^{-5k+5}(2e)^kk^25^{k+1}\ell^{k+1}
\\&\le m^{-1}\ell^{-4k+6}k^250^k
\end{align*}
While the above works for all $k\ge 1$, we can achieve a better dependency on $\ell$ in the $k=1$ and $k=2$ cases. For $k=1$, we note that the one vertex involved must be corrupt (a corrupt bin with two loops), so we get
\[\mb{E}[N_1]\le(\ell^{-10}m)\left(\frac{5\ell}{m}\right)^2=5m^{-1}\ell^{-8}.\]
For $k=2$ we have \begin{align*}
\mb{E}[N_2]&\le m^{\lfloor(k+1)/2\rfloor}(\ell^{-10}m)^{\lceil(k-1)/2\rceil}(2e)^kk^2\left(\frac{5\ell}{m}\right)^{k+1}=m^2\ell^{-10}(4e)^2\left(\frac{5\ell}{m}\right)^3
\\&\le 10^6m^{-1}\ell^{-7}.
\end{align*}

Then letting $N$ be the total number of bicyclic components, we have\begin{align*}
\mb{E}[N]&\le 5m^{-1}\ell^{-8}+10^6m^{-1}\ell^{-7}+\sum_{k=3}^{\infty}m^{-1}\ell^{-4k+6}k^222^k\\&\le m^{-1}\ell^{-6}\left(5\ell^{-2}+10^6\ell^{-1}+50^3\sum_{k=0}^\infty\ell^{-6k}(k+3)^250^k\right)
\\&\le 10^8m^{-1}\ell^{-6}\say{for all $\ell\ge 2$}
\end{align*}
Therefore, Markov's inequality tells us that with probability at least $1-\frac{10^8}{\ell^6m}$, the corruption graph has no bicyclic component.
\end{proof}

Note that throughout this Section \ref{firstalgsection}, the probability of a bicyclic component existing in the corruption graph, which is at most $\frac{10^8}{\ell^6m}$, will be our only failure mode that does not give a ``with high probability in $m$'' statement. In other words, every other assumption we make holds with probability $O(m^{-10})$, say.

Now, in order to invoke Lemma \ref{runtime}, we need to bound the expected longest-path length starting from a random vertex in the corruption graph.

\begin{lemma}\label{CorruptComponentSize}
Consider the corruption graph $G$ when $\tau$ occurs (or if the insertion process fails before $\tau$, let $G$ be the corruption graph before the failed insertion). Choose a bin $B$ uniformly at random and let $W(B)$ be the longest path in $G$ that contains $B$. Then \[
\mb{E}[W(B)]\le 200\ell^{-4}.
\]
\end{lemma}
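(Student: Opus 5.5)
We bound $\mathbb{E}[W(B)]$ by the expected number of paths through $B$ of length at least one, and pass from $G$ to the dominating graph $G'$ of Lemma \ref{lem:Gprime}. Take $T=\ell^{-10}m$. By Lemma \ref{numcorruptbins}, with high probability in $m$ the graph $G$ at time $\tau$ has at most $T$ corrupt bins. On that event its edges form a subset of those of $G'$, and so $W_G(B)\le W_{G'}(B)$. The complementary event has probability $m^{-\omega(1)}$. On it we use the trivial bound $W(B)\le \ell m$ (the total number of edges), so it contributes $o(1/m)$.

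In $G'$ we use $W(B)\le \sum_{k\ge1} k\cdot N_k(B)$, where $N_k(B)$ counts the simple paths (sequences of distinct edges) of length $k$ that contain $B$. It is simpler to count walks on distinct vertices; a path that reuses vertices would force a cycle. Those paths that contain a cycle can be charged separately or absorbed, since the expected contribution of cycles through a fixed vertex is also small. Every edge of $G'$ has at least one corrupt endpoint, so corrupt and non-corrupt vertices alternate in the sense that no two consecutive vertices on the path are both non-corrupt. A path with $k$ edges therefore has $k+1$ vertices, of which at least $\lfloor k/2\rfloor$ are corrupt.

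Fix $B$ and the position of $B$ on the path ($\le k+1$ choices). Choose the other $k$ vertices, with $j\ge \lceil k/2\rceil-1$ of them corrupt: there are at most $\binom{k}{j}(\ell^{-10}m)^{j}m^{k-j}$ ways. Lemma \ref{probedgeexists} gives $(5\ell/m)^k$ for the probability that all $k$ edges are present, uniformly in the corrupt set. Since $B$ is uniform, the corrupt set is a uniform $T$-subset independent of the edge structure; so we may bound the probability that any specific vertex is corrupt by $\ell^{-10}$, and this includes $B$ itself. The result is
\[
\mathbb{E}[N_k(B)]\le (k+1)2^k m^k\,(\ell^{-10})^{\lfloor k/2\rfloor}\Big(\frac{5\ell}{m}\Big)^k\le (k+1)\,10^k\,\ell^{k-10\lfloor k/2\rfloor}.
\]
For $k=1$ we need more care, because this bound only gives $O(\ell)$. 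A single edge at $B$ needs one endpoint corrupt; either $B$ is corrupt (probability $\ell^{-10}$) and has about $2.2\ell$ incident edges, or a neighbor is corrupt, giving $m\cdot\ell^{-10}\cdot 5\ell/m$. So $\mathbb{E}[N_1(B)]=O(\ell^{-9})$. Similarly, for $k=2$ either the middle vertex or both end vertices must be corrupt. Enumerating which vertices are corrupt gives $O(\ell^{-8})$ or so. For general $k$ the exponent of $\ell$ is about $k-5k=-4k$ for even $k$ and $-4k+5$ for odd $k$ (the counting via $\lceil k/2\rceil$ corrupt among $k+1$ vertices). Hence the sum $\sum_k k\,\mathbb{E}[N_k(B)]$ is dominated by the terms with $k\le 2$ and is at most $200\ell^{-4}$ for large $\ell$.

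The main obstacle is the bookkeeping of which vertices must be corrupt for small $k$. A crude count such as ``half the vertices are corrupt'' yields only $\ell^{1-0}$ for $k=1$ and $\ell^{2-10}$ for $k=2$. The $k=1$ term must use the fact that one endpoint of the edge is corrupt, at cost $\ell^{-10}$, while the number of choices for the other endpoint is compensated by $5\ell/m$; this gives $\ell^{-9}$. A second, related issue is making sure the corrupt-vertex probability $\ell^{-10}$ can be multiplied with the edge probabilities. This is justified by the ``regardless of which vertices are corrupt'' clause of Lemma \ref{probedgeexists} together with the uniformity of the corrupt set in Lemma \ref{lem:Gprime}, for which the probability that $j$ specified vertices are all corrupt is at most $(T/m)^j$.
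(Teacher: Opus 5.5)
Your proposal is correct and follows essentially the paper's argument: you pass to the dominating graph $G'$ on the high-probability event of Lemma \ref{numcorruptbins}, use that no two consecutive vertices of a path are non-corrupt, and union-bound over vertex sequences, drawing corrupt positions from the $\le \ell^{-10}m$ corrupt bins and bounding edge existence by $(5\ell/m)^k$ via Lemma \ref{probedgeexists}. The one difference is that you use $\lfloor k/2\rfloor$ instead of the correct minimum of $\lceil k/2\rceil=\lfloor (k+1)/2\rfloor$ corrupt vertices, which is what forces your separate $k=1$ case (the paper's count gives $\ell^{-4k}$ uniformly); also, your detour about repeated vertices is unnecessary, because counting vertex sequences with replacement, as your own count already does, covers trails that revisit vertices.
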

Again, we count path length by the number of edges. Also, we do allow paths to repeat vertices, but we do not allow them to repeat edges (a different edge connecting the same two vertices, in case of multi-edges, is allowed).
\begin{proof}
As in Lemma \ref{bicycliconlybarrier}, it suffices to prove this in the dominating corruption graph $G'$ from Lemma \ref{lem:Gprime}, as this $W(B)$ is an increasing graph property (cannot decrease as new edges are added). We can also assume the result of Lemma \ref{numcorruptbins} holds, as this happens with high probability in $m$. In the case when Lemma \ref{numcorruptbins} (or any ``with high probability in $m$'' statement) fails, we can upper bound $W(B)$ by $\ell m$ (as the corruption graph can never have more than $\ell m$ total edges), adding an $o(\ell^{-4})$ factor to the overall $\mb{E}[W(B)]$ (for sufficiently large $m$, using that $\ell=O(\log m)$).

Let $P_k$ be the number of vertices on a path in the corruption graph of length $k$. Note that any path must have that at least $k/2$ of its $k+1$ vertices are corrupt, as there are no edges between non-corrupt vertices. Then \begin{align*}
\mb{E}[P_k]&\le (k+1)2^{k+1}(\ell^{-10}m)^{\lceil k/2\rceil}m^{\lfloor k/2+1\rfloor}\left(\frac{5\ell}{m}\right)^k\\&\le 2m(k+1)\ell^{-4k}10^k,
\end{align*}
where the first inequality uses the following reasoning: to choose a path, we choose which of the $k+1$ vertices in it will be corrupt in $\le 2^{k+1}$ ways, then choose the $\ge k/2$ corrupt vertices in order and the $\le k/2$ remaining vertices in order, then the $k-1$ edges required with multiplicities have a $\left(\frac{5\ell}{m}\right)^{k-1}$ chance of all appearing by Lemma \ref{probedgeexists}.

Therefore, if we choose a vertex uniformly at random from the $m$ vertices, the probability that the longest path containing it has length at least $k$ is upper bounded by the probability that it is on a path of length $k$, which is at most $2(k+1)\ell^{-4k}10^k$.

Then as we have $\mb{P}(W(B)\ge k)\le 2(k+1)\ell^{-4k}10^k$ for all $k\ge 2$, we see \[\mb{E}_{\text{random bin $B$, random hashes}}[W(B)]\le\sum_{k=1}^\infty 2(k+1)\ell^{-4k}10^k\le\ell^{-4}\sum_{k=1}^\infty 2(k+1)2^{-4k+4}10^k\le 200\ell^{-4}.\]
\end{proof}
To invoke Lemma \ref{runtime}, we also need to satisfy its condition (4.):
\begin{lemma}\label{longestpathpolylog}
With high probability in $m$, the point at which $\tau$ is reached comes before any point at which the corruption graph has a path of length at least $\log^{1.5}(m)$.
\end{lemma}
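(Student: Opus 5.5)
We follow the template of Lemma \ref{CorruptComponentSize}, replacing the expectation bound at a random vertex by a first-moment bound on the total number of long paths. Having a path of length at least $k$ is a monotone increasing graph property, and the corruption graph only gains edges over time. So it suffices to show that, at the moment $\tau$ occurs (or at the final moment before it, if $\tau$ never occurs), the corruption graph has no path of length $k_0:=\lceil\log^{1.5}(m)\rceil$. Any path of length at least $k_0$ contains a subpath of exactly $k_0$ edges, so it suffices to rule out paths of exactly this length. We also work on the high-probability event of Lemma \ref{numcorruptbins}: before $\tau$ there are at most $\ell^{-10}m$ corrupt bins. On that event we apply Lemma \ref{lem:Gprime} with $T=\ell^{-10}m\le m/20$, so $G$ is contained in the dominating graph $G'$, which has exactly $T$ uniformly random corrupt vertices. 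It then suffices to bound the probability that $G'$ has a path of length $k_0$.

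Next we count paths in $G'$. As in Lemma \ref{CorruptComponentSize}, every edge has at least one corrupt endpoint, so a path with $k$ edges has at least $\lceil k/2\rceil$ corrupt vertices among its $k+1$ vertices (edges may repeat vertices but not edges). The count proceeds in four steps:
\begin{enumerate}
\item choose which positions along the path are corrupt, in at most $2^{k+1}$ ways;
\item choose the corrupt vertices in order, in at most $(\ell^{-10}m)^{\lceil k/2\rceil}$ ways;
\item choose the remaining vertices, in at most $m^{\lfloor k/2\rfloor+1}$ ways;
\item bound the probability that all $k$ specified edges, with multiplicities, are present by $(5\ell/m)^k$, using Lemma \ref{probedgeexists}, which holds regardless of which vertices are corrupt.
\end{enumerate}
Multiplying these gives, as in that lemma, an expected number of length-$k$ paths of at most $2m(k+1)\,10^k\ell^{-4k}$. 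For $\ell$ at least a sufficiently large constant, $10\ell^{-4}\le 1/2$.

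Plugging in $k=k_0$, the expected count is at most $2m(k_0+1)2^{-\log^{1.5}m}=m^{-\omega(1)}$. Markov's inequality then says $G'$ has no such path except with superpolynomially small probability. Combining this with the failure probability of Lemma \ref{numcorruptbins} yields the claim with high probability in $m$.

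The main obstacle is the time-uniformity of the statement: it must hold for every moment before $\tau$, and $G'$ is tied to a fixed threshold $T$. Monotonicity resolves this, since the corruption graph just before $\tau$ (as constructed in Lemma \ref{lem:Gprime}) contains every earlier version. The one remaining point to check is that the case where the process fails before $\tau$ is also covered; there we take $G$ at the failure time, and the same domination applies.
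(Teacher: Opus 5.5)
Your proposal is correct and follows the paper's proof. The paper likewise assumes Lemma \ref{numcorruptbins}, reuses the path-count bound $2m(k+1)10^k\ell^{-4k}$ from the proof of Lemma \ref{CorruptComponentSize} with $k=\log^{1.5}(m)$, and applies Markov's inequality; you only make explicit the monotonicity, the domination via Lemma \ref{lem:Gprime} with $T=\ell^{-10}m$, and the reduction to paths of exactly $k_0$ edges, which the paper leaves implicit.
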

\begin{proof}
We may assume that Lemma \ref{numcorruptbins} holds (since it holds with high probability in $m$). Then, as proven in the proof of Lemma \ref{CorruptComponentSize}, we see that the expected number of paths of length $\log^{1.5}(m)$ is at most \[2m\ell^{-4(\log^{1.5}(m))}10^{\log^{1.5}(m)}\le e^{-\log^{1.2}(m)}.\](using that $\ell=O(\log m)$, $\ell\ge 2$, and $m$ sufficiently large). Then Markov's inequality gives that the probability of any path of length $\log^{1.5}(m)$ is at most $e^{-\log^{1.2}(m)}$, which gives our result with high probability in $m$.
\end{proof}

Finally, we prove that, indeed, by the time $\tau$ occurs, we have (with high probability) reached a very high load factor. As noted earlier, this will allow to assume, up to that load factor, that $\tau$ has \emph{not yet} occurred, which will allow us to employ Lemmas \ref{nobicyclic} and \ref{CorruptComponentSize} to reason about the structure of the corruption graph.

\subsection{Run-Time Analysis}
Finally, we can put the pieces together to reason about the running time of the algorithm.

\begin{lemma}\label{firstalgruntime}
Define $T_i$ to be the number of evictions performed by the $i$-th insertion (this is zero if the insertion process fails prior to the insertion, and is the number of evictions to failure if it fails during the insertion). Let $A$ be the indicator for $\tau$ not being reached by the end of the $i$-th insertion. Then, supposing $i \le (1-\epsilon)\ell m$, we have $\E[T_i \cdot A] \le O(\epsilon^{-1})$. 

\end{lemma}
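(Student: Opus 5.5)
The plan is to invoke Lemma \ref{runtime} for the $i$-th insertion, with $C = O(\ell^{-4})$, and then handle the indicator $A$ and the low-probability failure modes separately. The natural choice is to apply Lemma \ref{runtime} not to the actual corruption graph but to a stopped version: define $G$ to be the corruption graph just before the $i$-th insertion if $\tau$ has not yet occurred, and otherwise the corruption graph at the moment $\tau$ occurs. On the event $A$ the two agree. On the complement, $T_i\cdot A=0$, so the product is unaffected. Since path lengths are monotone in the edge set, Lemma \ref{CorruptComponentSize} bounds the expected longest path through a uniformly random bin of this stopped graph by $200\ell^{-4}$, giving hypothesis (1) with $C=O(\ell^{-4})$. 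Lemma \ref{numcorruptbins} gives hypothesis (3): with high probability there are at most $\ell^{-10}m \le m/2$ corrupt bins before $\tau$. Lemma \ref{longestpathpolylog} gives hypothesis (4) on the event $A$.

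Hypothesis (2) needs at least an $\epsilon$ fraction of bins to have a free slot before the insertion. Since $i\le(1-\epsilon)\ell m$, there are at least $\epsilon\ell m$ free slots. Each bin holds at most $\ell$ free slots, so at least $\epsilon m$ bins are non-full, which is exactly an $\epsilon$ fraction. This holds deterministically, as long as no previous insertion has failed.

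Lemma \ref{runtime} then yields
$$\E[T_i\cdot A]\le \epsilon^{-1}+O(1+\ell^{-4}\epsilon^{-1})=O(\epsilon^{-1}).$$

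The main obstacle is that Lemma \ref{runtime} assumes its hypotheses hold with high probability, while the bicyclic-component failure of Lemma \ref{nobicyclic} only has probability $O(\ell^{-6}m^{-1})$. Here the definition of $T_i$ helps. If the insertion fails, it is charged only the evictions up to failure. By Lemma \ref{bicycliconlybarrier}, each run of (2.)/(3.) evictions visits each slot at most 3 times, so that run has length $O(\text{longest path})=\polylog m$. The special evictions are still geometrically bounded with parameter $\epsilon$. So the proof of Lemma \ref{runtime} goes through unchanged for failing insertions.

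For the remaining with-high-probability bad events (Lemmas \ref{numcorruptbins} and \ref{longestpathpolylog} failing), I would bound $T_i$ crudely. While $\tau$ has not occurred, each (4.)-eviction reveals a fresh hash, so there are at most $2\ell m$ of them. Each run of (2.)/(3.) evictions has length at most $3\ell m$, since by Lemma \ref{bicycliconlybarrier} no slot is visited more than three times. This gives $T_i=\poly(m)$ on $A$. Multiplying by the $m^{-c}$ probability of these events contributes $o(1)$. The delicate point to check is that conditioning on these stopped-graph events does not disturb the argument in Lemma \ref{runtime} that each fresh second hash is uniform over non-corrupt bins. I expect this to hold because the stopping rule depends only on past reveals.
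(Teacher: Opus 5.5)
Your proof is correct and has the same skeleton as the paper's. Both apply Lemma \ref{runtime} with $C\le 200\ell^{-4}$ from Lemma \ref{CorruptComponentSize}. Both get hypothesis (2) from the observation that $\epsilon\ell m$ free slots force at least $\epsilon m$ non-full bins. Both get hypotheses (3) and (4) from Lemmas \ref{numcorruptbins} and \ref{longestpathpolylog}.

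Where you genuinely differ is in how you treat the bicyclic-failure event. The paper conditions on every probability-$(1-O(\ell^{-6}m^{-1}))$ lemma, including Lemma \ref{nobicyclic}. On the complement it charges a ``trivial'' $3\ell m$ evictions per insertion, which contributes $O(\ell^{-5})$. You instead note that Lemma \ref{runtime} already counts evictions up to failure. You also note that a failing (2.)/(3.) run is still bounded by a constant times the longest path through its starting bin. So Lemma \ref{nobicyclic} is not needed for this lemma at all. Crude bounds are then paid only against with-high-probability events, where any $\poly(m)$ bound suffices, such as your $O(\ell^2m^2)$.

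This buys robustness. Lemma \ref{bicycliconlybarrier} by itself bounds only a single (2.)/(3.) run by $3\ell m$. An insertion can interleave up to $\ell m$ such runs with (4.)-evictions, so the per-insertion worst case that lemma certifies is polynomial in $m$, not $O(\ell m)$. A polynomial bound is fine against an $m^{-c}$ event for arbitrary $c$. It is not enough against an event of probability only $O(\ell^{-6}m^{-1})$.

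The paper's route is shorter and lets the runtime analysis pretend the process never fails. Yours relies instead on the standard fact that a cuckoo eviction walk, even one ending in failure, splits into $O(1)$ trails through its start vertex. Your stopped-graph device also makes explicit the restriction to times before $\tau$, which the paper leaves implicit in ``at any point before $\tau$ is reached.'' The uniformity concern you flag at the end is equally implicit in the paper.
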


\begin{proof}
Note that we can assume that all lemmas that happen with probability at least $1-O(\ell^{-6}m^{-1})$ over the insertion process hold, as if not we can use the trivial run time bound of $3\ell m$ per object (as discussed in the algorithm definition and Lemma \ref{bicycliconlybarrier}) to add an $O(\ell^{-5})=o(1)$ term onto our expected runtime. If $i\le (1-\epsilon)\ell m$, then there are $\epsilon\ell m$ empty slots, so there must be at least $\epsilon m$ non-full bins.

Then, the lemma follows from Lemmas \ref{runtime}, \ref{numcorruptbins} \ref{nobicyclic}, \ref{CorruptComponentSize}, and \ref{longestpathpolylog}. Lemma \ref{nobicyclic} gives that the process (with probability $\ge 1-O(\ell^{-6}m^{-1})$) does not fail before $\tau$ is reached. Then, we have that Lemmas \ref{numcorruptbins}, \ref{CorruptComponentSize}, and \ref{longestpathpolylog}, along with the definition of $\tau$, give us that, at any point before $\tau$ is reached, the conditions of Lemma \ref{runtime} are satisfied with $C=\mb{E}_{\text{random bin $B$, random hashes}}[W(B)]\le 200\ell^{-4}$. The conclusion of Lemma \ref{runtime} then implies that our expected number of evictions is $\epsilon^{-1}+O(1+\ell^{-4}\epsilon^{-1})=O(\epsilon^{-1})$ when an $\epsilon$ fraction of bins are not full.

\end{proof}

Putting together Lemmas \ref{boundingepstilde}, \ref{nobicyclic}, \ref{epsstarfirst}, and \ref{firstalgruntime} give the following theorem:

\thmfirst*
\begin{proof}
The first part of this theorem statement comes directly from Lemmas \ref{boundingepstilde}, \ref{nobicyclic}, \ref{epsstarfirst}, and \ref{firstalgruntime}. For the expected run time, we can note that the case where our algorithm does not actually fail, but one of those lemmas fails, also has probability $O(\ell^{-6}m^{-1})$. (Really, it really has $O(m^{-10})$ probability by the note after Lemma \ref{nobicyclic}, but the $O(\ell^{-6}m^{-1})$ given by the lemma statement is enough here.) In that case, we can use the trivial run time bound of $3\ell m$ per object (as discussed in the algorithm definition and Lemma \ref{bicycliconlybarrier}) to add an $O(\ell^{-5})=o(1)$ term onto our expected runtime.

Now, to bound the total number of evictions, recall the definition of special evictions from Lemma \ref{runtime}. Every special eviction reveals a new second hash, so there can be at most $\ell m$ special evictions. Therefore, to show that the expected total number of evictions is $O(\ell m)$, we need to show that the expected total number of non-special evictions is $O(\ell m)$.

Note that for any insertion with $\epsilon\ell m$ slots remaining, the expected number of special evictions is $\epsilon^{-1}$ (noting again that $\epsilon\ell m$ free slots implies at least an $\epsilon$ fraction of non-full bins), as each special eviction has at least an $\epsilon$ probability of finishing the process. Therefore, Lemma \ref{runtime} implies that the expected number of non-special evictions is $O(1+C\epsilon^{-1})$. Lemma \ref{boundingepstilde} gives with high probability in $m$ that our algorithm (and in particular, any algorithm) will stop when there are at least $.7^\ell m$ uninserted balls (noting $.7<2/e$). Therefore, considering the insertion of the $i$-th from last of the $\ell m$ balls, we see that there must be at least $i/\ell$ free slots, so we get that\begin{align*}
\mb{E}[\text{non-special evictions}]&\le O\left(\sum_{i=.7^\ell m}^{\ell m}\left(1+C(i/(\ell m))^{-1}\right)\right)\\&\le O\left(\sum_{i=.7^\ell m}^{\ell m}\left(1+200\ell^{-3}i^{-1}m\right)\right)\say{Lemma \ref{CorruptComponentSize}}\\&\le O(\ell m)+O\left(\ell^{-3}m\sum_{i=.7^\ell m}^{\ell m}i^{-1}\right)\\&\le O(\ell m)+O\left(\ell^{-3}m\ln\left(\frac{\ell m}{.7^\ell m}\right)\right)\\&\le O(\ell m)+O(\ell^{-2}m)\\&\le O(\ell m)
\end{align*}as desired.
\end{proof}

Note that, for small $\epsilon$ ($\epsilon = e^{-\Theta(\ell)}$), one could improve the bound on expected evictions to $O(\epsilon^{-1}\ell^{-1})$. This is because we use in the proof of Lemma \ref{firstalgruntime} that each non-full bin has at most $\ell$ empty slots, when really, sufficiently close to $\epsilon^*$, each non-full bin has $2\pm O(1/\ell)$ empty slots in expectation (see the proof of Lemma \ref{expepstilde}). We will defer formalizing this observation until our second algorithm, where we will get the appropriate $\ell^{-1}$ factor in Theorem \ref{secondalgthm}.

\section{The Second Algorithm: One-Step Look-Ahead}\label{secondalgsection}

In this section, we will modify the insertion algorithm from Section \ref{firstalgsection} and prove that our new algorithm gives the following guarantees:
\begin{restatable}{theorem}{secondalgthm}\label{secondalgthm}
Assume $\ell\le 1.5\ln(m)$ is at least a sufficiently large positive constant. Choose any value $\delta\in[.99^\ell, 1]$. Our algorithm has at least a $1-O(\ell^{-6}m^{-1})$ probability of successfully inserting $(1-\epsilon)\ell m$ balls into the hash table, where $\epsilon=(1+\delta)(\mb{E}[\epsilon^*])$. For insertions at load factor $1 - \epsilon$ for $\epsilon\ge .9^\ell$, our algorithm performs $O(\epsilon^{-1})$ expected evictions per ball. If $\epsilon\le .9^\ell$, then our algorithm performs $O((\epsilon-\mb{E}[\epsilon^*])^{-1}\ell^{-1})\le O(\delta^{-1}(\mb{E}[\epsilon^*])^{-1}\ell^{-1})$ expected evictions per ball, where each eviction takes $O(\ell)$ time.
\end{restatable}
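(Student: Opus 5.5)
The plan is to mirror the structure of Section~\ref{firstalgsection} with one change. The second algorithm is the first algorithm plus a \emph{look-ahead} step: when a ball must be evicted from a full bin $B$, first examine every ball $y$ in $B$ that sits on its first hash, revealing $h_2(y)$ if needed. If some $h_2(y)$ has a free slot, move $y$ there and finish. Otherwise, proceed with rules (1.)--(4.) of the first algorithm.

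\textbf{Step 1 (structure).} Lemmas~\ref{corruptreturn}, \ref{uniformpurefirst} and~\ref{bicycliconlybarrier} should carry over almost verbatim, because the look-ahead only ever moves first-hash balls into free slots. A look-ahead reveal counts as a hash reveal. Lemma~\ref{lem:Gprime} needs only cosmetic changes: a reveal by look-ahead also increments $y_b$ at a uniformly random non-corrupt bin. I would then define a new threshold event $\tau'$: at least $(2-\ell^{-.49})\ell m$ hashes have been revealed, counting look-ahead reveals. Lemma~\ref{numcorruptbins} bounds only second-hash \emph{placements}, not reveals, so under $\tau'$ it still holds. Lemmas~\ref{nobicyclic}, \ref{CorruptComponentSize} and~\ref{longestpathpolylog} then follow unchanged, giving failure probability $O(\ell^{-6}m^{-1})$ and $C=O(\ell^{-4})$ before $\tau'$.

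\textbf{Step 2 (load factor; the counterpart of Lemma~\ref{epsstarfirst}).} I would show that when $\tau'$ occurs, the number of free slots is at most $(1+.99^\ell)\tilde\epsilon\ell m$ with high probability in $m$. Combined with Lemma~\ref{boundingepstilde}, this also yields Corollary~\ref{boundingepsstar}. The idea is to classify the free slots of a bin $b$ at time $\tau'$:
\begin{itemize}
\item \emph{Leftover slots}, which no assignment can fill. These are counted by $\tilde\epsilon$.
\item Slots in a bin $b$ that some ball $y$ hashes to, where $y$ sits in its other bin $b'$. Such a $y$ is on its first hash, since it was never moved to $b$. When $b'$ was visited by an eviction after $h_{\cdot}(y)$ was revealed, the look-ahead would have sent $y$ to $b$.
\end{itemize}
So a non-leftover free slot survives only if every such $b'$ was never visited by an eviction that examined $y$. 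Near $\tau'$, almost all hashes are revealed, and visits are spread roughly uniformly over non-corrupt bins (using the $y_b$-uniformity argument). A Poisson/Chernoff computation with Lemmas~\ref{poisson} and~\ref{genpoisson} should then show that each non-leftover free slot survives with probability $e^{-\Omega(\ell)}$. Making this smaller than $.99^\ell\cdot \tilde\epsilon/\E[\tilde\epsilon]$ relative to the leftover count, and concentrating via McDiarmid as in Lemma~\ref{epsstarfirst}, gives the bound.

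\textbf{Step 3 (running time).} I would adapt Lemma~\ref{runtime}, redefining a special eviction as a visit to a bin whose first-hash balls have not all been looked ahead. The issue is that the per-step success probability is no longer $\epsilon$, because unreachable free slots do not help. Each special step reveals about $\ell$ fresh or ``pretty fresh'' hashes. Each such hash lands, nearly uniformly, on a bin with a \emph{reachable} free slot with probability about $(\epsilon-\E[\epsilon^*])/(\text{\# free slots per non-full bin})$. The proof of Lemma~\ref{expepstilde} shows that non-full bins carry $O(1)$ free slots each when $\epsilon\le .9^\ell$, so a step succeeds with probability $\Omega(\ell(\epsilon-\E[\epsilon^*]))$. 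This gives $O((\epsilon-\E[\epsilon^*])^{-1}\ell^{-1})$ expected evictions. When $\epsilon\ge .9^\ell$, the crude bound from Lemma~\ref{runtime} (success probability at least $\epsilon$) already gives $O(\epsilon^{-1})$. Each eviction costs $O(\ell)$ time for the look-ahead scan.

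The main obstacle is the dependence in Steps 2 and 3. The revealed hashes seen by look-ahead are not fresh: they were revealed while the table was conditioned on earlier failures. I would handle this with a deferred-decisions coupling like Lemma~\ref{lem:Gprime}, arguing that the hashes inspected during an insertion are dominated by independent uniform hashes except on $O(1)$-size corruption components. For the reachability bound I would try to dominate the unvisited-neighbor events by independent Bernoullis.
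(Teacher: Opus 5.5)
There is a genuine gap. It comes from treating every look-ahead inspection as a full reveal that counts toward $\tau'$.

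\emph{Step 2 is false as stated.} In your algorithm, an inspected hash that hits a non-full bin fills a slot, and full bins never become non-full. So at $\tau'$ the number of free slots is at least $\sum_b\max(0,\ell-R_b)$, where $R_b$ counts inspected hashes landing on $b$. These are $(2-\ell^{-.49})\ell m$ i.i.d.\ uniform values, exactly the setting of Lemma~\ref{epsstarfirst}. The $\ell^{.51}m$ never-inspected hashes are uniform and independent of the state. A Poisson comparison then gives $\epsilon_{\tau'}\ge e^{\Omega(\ell^{.51})}\tilde\epsilon$ with high probability. That is no better than the first algorithm, and nowhere near $1+.99^\ell$.

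To get within $1+.99^\ell$, all but about $.99^\ell m$ balls must have been inspected (cf.\ Lemma~\ref{fewunchecked}). In your accounting that means running to nearly $2\ell m$ reveals. At that point your Step 1 bound on corrupt bins, which goes through the reveal count as in Lemma~\ref{numcorruptbins}, only gives a constant fraction of corrupt bins, not a subcritical graph.

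\emph{Step 3 fails for the matching reason.} Near the threshold almost every bin's first-hash balls have already been inspected. So success must come from the walk \emph{landing} on one of the roughly $\ell(\epsilon-\E[\epsilon^*])m$ full bins that still hold a lone uninspected ball pointing at a free slot. In your setup the walk's next bin is $h_2(y)$ for a ball whose hash was already ``revealed'', so your analysis has no randomness left to bound that landing probability. Also, under your redefinition of special evictions, runs of non-special evictions wander among full non-corrupt bins rather than staying inside corruption-graph components. The $O(C)$ accounting of Lemma~\ref{runtime} therefore no longer applies.

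The missing idea is the paper's distinction between \emph{checking} and \emph{revealing}, plus a device that makes a failed check uninformative. The algorithm only uses one bit of a look-ahead: whether $h_2(y)$ is non-full. So only actual placement into $h_2(y)$ counts as a reveal, and the budget is $\tau_2$, at $1.6\ell m$ reveals.

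The look-ahead is switched on only at $\tau_1$, after $1.4\ell m$ reveals, when at most $.95^\ell m$ bins are non-full (Lemma~\ref{fewemptyattau1}). At that moment every non-full bin is preemptively declared corrupt. Now a ball that has not used its second hash and is not an edge of $G$ must have a non-corrupt, hence full, second bin. Its check fails with certainty and leaks nothing about which non-corrupt bin $h_2(y)$ is, so $h_2(y)$ stays uniform over non-corrupt bins.

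This device does three things:
\begin{itemize}
\item It keeps the corruption graph subcritical, since only $.6\ell m$ second hashes are revealed (Lemmas~\ref{lem:Gprime2} and~\ref{adapted1}).
\item It still lets almost every ball be checked by $\tau_2$ (Lemma~\ref{fewunchecked}), which yields Lemma~\ref{epsstarsecond}.
\item It drives Lemma~\ref{runtimesecondalg}. Each special eviction follows a checked-but-unrevealed hash to a uniformly random full non-corrupt bin. After discounting outside balls and doubled bins (Lemmas~\ref{fewoutsidehashed} and~\ref{fewdoubles}), that bin contains a lone unchecked ball pointing at a free slot with probability at least $\ell(\epsilon-\E[\epsilon^*])/10$.
\end{itemize}
Your closing appeal to a deferred-decisions coupling points in this direction. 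But the switch at $\tau_1$ plus preemptive corruption is exactly what that coupling needs, and without it Steps 2 and 3 do not go through.
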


Lemmas in this section give Lemma \ref{boundingepsstar} as a corollary, which says that $\epsilon^*=(2/e)^\ell\poly(\ell)$ with high probability in $m$, so (noting $2/e<.9$) the condition $\epsilon\le .9^\ell$ does indeed apply during later insertions. Theorem \ref{secondalgthm} means you could choose to fill the hash table until there are $1.001(\mb{E}[\epsilon^*])\ell m$ empty slots remaining (already a much better load factor than the first algorithm), and the expected number of evictions would be $O(\epsilon^{-1}\ell^{-1})$ after $\epsilon\le .9^\ell$. Or you could choose, say, $\delta=\ell^{10}$ to get an expected number of evictions of $O(\ell^{10}(\mb{E}[\epsilon^*])^{-1})=\tilde O((\mb{E}[\epsilon^*])^{-1})$ with $(1+\ell^{-10})(\mb{E}[\epsilon^*])\ell m$ free slots remaining.

In the algorithm definition, we will see that the algorithm performs a one-step look-ahead: before each eviction, the algorithm may need to check whether up to $\ell$ other bins are full or not (the second hashes of some balls in the current bin). This counteracts the $\ell^{-1}$ factor in the expected number of evictions to give a run time of $O(\delta^{-1}(\mb{E}[\epsilon^*])^{-1})$. A careful implementation of our algorithm could likely avoid paying the full $\ell$ factor. 
\subsection{Algorithm Definition}
 In Section \ref{firstalgsection}, we ran the same algorithm until a given event $\tau$ occurred. Now, we have two similar events, $\tau_1$ and $\tau_2$, that change which algorithm we will perform.
\begin{definition}\label{firsttosecondalgdef}
We start by performing the first algorithm from the previous section. Let $\tau_1$ be the event that $1.4\ell m$ hashes have been revealed under that algorithm. Once $\tau_1$ has occurred, we switch to performing the below modified algorithm. Let $\tau_2$ be the event that $.2\ell m$ hashes have been revealed since $\tau_1$ (so $1.6\ell m$ total hashes revealed).
\end{definition}
For our lemmas, we will consider running the modified algorithm until $\tau_2$ occurs. For the final run-time analysis in the theorem statement of Theorem \ref{secondalgthm}, we consider instead stopping the algorithm when the fraction of free slots $\epsilon$ satisfies $\epsilon=(1+\delta)(\mb{E}[\epsilon^*])$ (which we will show is likely to happen before $\tau_2$).

After $\tau_1$ occurs, we choose more judiciously which first-hash balls we evict from a bin, if some exist. The modified algorithm is below. The difference between this and the previous algorithm comes from point (4.)~of the previous algorithm being split into points (4.)~and (5.)~here.
\begin{enumerate} 
    \item If $B$ has an empty slot, place $x$ into the first empty slot of $B$.
    \begin{itemize}
        \item If $x$ is on its second hash, this ``first empty slot'' is the $\ell$th slot (so the insertion of $x$ makes $B$ full), and there is another ball in $B$ that is on its first hash, then exchange the positions within $B$ of $x$ and the first ball in $B$ that is on its first hash.
    \end{itemize}
    \item If the first $\ell-1$ slots of $B$ all contain objects on their second hash, place $x$ into the last slot in $B$.
    \item If $x$ was just evicted from its slot under point (2.)~of this list and $x$ has previously been placed into $B$, place $x$ into the last slot in $B$
    \item If there are any balls in the first $\ell-1$ slots of $B$ that are on their first hash and whose second hash goes to a bin with an empty slot, then evict the first one of those balls and place $x$ into its place.
    \item Otherwise, there must be balls in the first $\ell-1$ slots of $B$ that are on their first hash. Evict the first one of those balls and place $x$ into its place.
\end{enumerate}
We also use the same failure condition as for the first algorithm: if, in a sequence of evictions using points (2.)~and (3.), the first bin $B$ in that sequence is visited three times, we declare failure. 

Call a ball $x$ \underline{pure first} if it is currently stored in $h_1(x)$ and has never been placed into $h_2(x)$. Notice that, during an eviction, to distinguish between Cases (4.)~or (5.), we do not need to fully reveal the second hash $h_2(x)$ of the pure first balls $x$ in the bin -- it suffices to reveal whether or not bin $h_2(x)$ is empty. We refer to this as \underline{checking} a ball's second hash, as opposed to \underline{revealing} that hash, which is what occurs when a ball $x$ is actually \emph{evicted} to its second bin $h_2(x)$ for the first time.

In Section \ref{firstalgsection}, it was necessary for the load factor analysis (Lemma \ref{epsstarfirst}) that every time we revealed a hash to a non-full bin, we did place that ball in that non-full bin. Here, it is important for the load factor analysis for us to note that we can perform our checks such that every time the check finds a second hash to a non-full bin, we do place that ball in that non-full bin. In other words, when evicting from a bin $B$ by points (4.)~and (5.), we check the pure first balls in $B$ in some order, and as soon as a second hash to an available bin is found, the rest of the slots in $B$ are not checked.

Analogous proofs to Section \ref{firstalgsection}, but with point (4.)~replaced with ``points (4.)~and (5.)'', quickly give the following lemma, analogues of Lemmas \ref{corruptreturn}, \ref{uniformpurefirst}, and \ref{bicycliconlybarrier}:

\begin{lemma}
If a ball $x$ is ever returned to its first bin, $h_1(x)$, after having been evicted from it in the past, then $h_2(x)$ was a corrupt bin at the time of $x$'s initial eviction from $h_1(x)$. Additionally, $x$ is placed into the last slot of $h_1(x)$ when returned to $h_1(x)$.
\end{lemma}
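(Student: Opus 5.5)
I will follow the structure of the proof of Lemma \ref{corruptreturn} essentially verbatim, replacing every reference to point (4.)~of the first algorithm by ``points (4.)~and (5.)'' of the modified algorithm. The process runs the first algorithm until $\tau_1$ and the modified algorithm afterwards. I will therefore first check that the two rule sets agree on the features the argument uses: which slots a rule may evict from, and which slot a placed ball lands in. Then a single argument covers the whole insertion process, including balls whose first eviction happened before $\tau_1$ and whose return happens after.

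The first step is the observation that points (1.), (4.)~and (5.)~never evict a ball that is on its second hash. Point (1.)~evicts nothing. Points (4.)~and (5.)~explicitly evict only balls in the first $\ell-1$ slots that are on their first hash, and the check performed in (4.)~only inspects $h_2$ of pure first balls without moving anything. Consequently a ball $x$ sitting in $h_2(x)$ can only be moved by (2.)~or (3.). Both of these evict only the last slot of the bin. So if $x$ does not occupy the last slot of $h_2(x)$, it is never moved again and never returns to $h_1(x)$.

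The second step is to show that $x$ can only be placed into the last slot of $h_2(x)$ if $h_2(x)$ is already corrupt at that moment. I will go through the cases. Under (1.), the swap clause moves an incoming second-hash ball out of the $\ell$th slot whenever some first-hash ball is present, so $x$ stays in the last slot only if the first $\ell-1$ slots all hold second-hash balls, which is corruption. Point (2.)~requires corruption by definition. Point (3.)~cannot be the first placement of $x$ into $h_2(x)$. Points (4.)~and (5.)~place $x$ into one of the first $\ell-1$ slots. Since corruption is monotone (a corrupt bin only ever has its last slot changed), $h_2(x)$ was corrupt when $x$ was first evicted from $h_1(x)$ into it.

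The final step handles the claim about the slot on return. Any eviction of $x$ from $h_2(x)$ is by (2.). When $x$ arrives back at $h_1(x)$, that bin has no empty slot, because $x$ was evicted from it and bins never lose balls except by eviction with replacement. So either (2.)~applies, or (3.)~applies, since $x$ was just evicted under (2.)~and was previously in $h_1(x)$. Either way $x$ goes to the last slot. The hardest point is making sure the new rule (4.)~cannot place $x$ anywhere else. This holds because (3.)~has priority over (4.)~in the ordered list, exactly as it had over the old (4.), so I expect no real obstacle beyond careful case bookkeeping.
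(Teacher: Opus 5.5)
Your proposal is correct and follows the paper's route exactly: the paper obtains this lemma by rerunning the proof of Lemma~\ref{corruptreturn} with point (4.)~replaced by ``points (4.)~and (5.)'', which is your case analysis. Your explicit checks go slightly beyond the paper, which leaves them implicit: that the two rule sets agree across $\tau_1$, and that (3.)~still precedes the new (4.). The appeal to monotonicity of corruption is harmless but unnecessary, since your case analysis already concerns the first placement of $x$ into $h_2(x)$.
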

\begin{lemma}
Any ball $x$ evicted from $h_1(x)$ by point (4.)~or (5.)~has never previously been placed in $h_2(x)$.
\end{lemma}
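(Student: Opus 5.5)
The plan is to mirror the proof of Lemma \ref{uniformpurefirst}, taking it as the contrapositive of the preceding lemma (the analogue of Lemma \ref{corruptreturn} for the modified algorithm), which we may assume. The goal is to show that a ball $x$ evicted from $h_1(x)$ by point (4.)~or (5.)~has never previously been placed in $h_2(x)$.

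First, observe from the algorithm definition that points (4.)~and (5.)~only ever evict balls that sit in the first $\ell-1$ slots of $B$ and are on their first hash. So such an $x$ currently occupies one of the first $\ell-1$ slots of $h_1(x)$.

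Now suppose, for contradiction, that $x$ had previously been placed in $h_2(x)$. Since $x$ is now back in $h_1(x)$ and on its first hash, $x$ must at some point have been evicted from $h_1(x)$ and later returned to it. The preceding lemma then says that on any such return, $x$ is placed into the last slot of $h_1(x)$.

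The one step needing care is showing that $x$ stays in the last slot until it next leaves $h_1(x)$. I would check each point of the algorithm:
\begin{itemize}
\item Points (2.)~and (3.)~only write into the last slot, so they can only evict $x$ from it, never move it forward.
\item Points (4.)~and (5.)~overwrite a slot among the first $\ell-1$ and leave the last slot untouched.
\item Point (1.)~with its exchange rule cannot apply, because $x$ occupying the last slot means $h_1(x)$ has no empty slot while $x$ is there. Slots are only freed by evictions, which are immediately refilled by the inserted ball.
\end{itemize}
Hence $x$ cannot be found in the first $\ell-1$ slots while on its first hash after having visited $h_2(x)$. This contradicts the first observation, so $x$ was never placed in $h_2(x)$.
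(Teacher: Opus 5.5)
Your proof is correct and follows the paper's route. The paper proves this as the contrapositive of the analogue of Lemma~\ref{corruptreturn}: a ball returned to $h_1(x)$ sits in the last slot, which points (4.)~and (5.)~never evict from. Your extra check that nothing can move $x$ out of that last slot makes explicit a step the paper leaves implicit: point (1.)~never again applies to the now-full bin, and points (2.)--(5.)~only overwrite slots in place.
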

\begin{lemma}
Every insertion either succeeds or declares failure. Moreover, if an insertion declares failure, then the corruption graph (at the point in time where failure is declared) contains a bicyclic component. Moreover, no insertion (even one that fails) will ever perform a sequence of (2.)~and (3.)~evictions in which it visits the same slot more than 3 times.
\end{lemma}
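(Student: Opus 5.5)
The plan is to reuse the proof of Lemma \ref{bicycliconlybarrier} essentially verbatim, after checking that the two new eviction rules (4.) and (5.) play exactly the role that rule (4.) played for the first algorithm. The argument has three parts: bound the number of (1.), (4.) and (5.) steps in one insertion; show that any infinite run must therefore consist of a tail of (2.)/(3.) steps; and analyze such tails as standard cuckoo eviction chains on the corruption graph.

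First, termination outside of (2.)/(3.) chains. Point (1.) ends the insertion, so it happens at most once. Each application of (4.) or (5.) evicts a ball $x$ that is pure first, sitting in one of the first $\ell-1$ slots. By the analogue of Lemma \ref{uniformpurefirst} stated just above, such an $x$ has never been placed in $h_2(x)$. So every (4.)/(5.) eviction reveals a fresh second hash, and there can be at most $\ell m$ of them over the whole process.

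The look-ahead in (4.) only \emph{checks} second hashes and never moves a ball to a bin other than one of its two hashes. It therefore does not affect the invariant from the analogue of Lemma \ref{corruptreturn}: a ball is placed in the last slot of $h_2(x)$ only when $h_2(x)$ is already corrupt, and returned balls always occupy last slots. Consequently, any infinite execution must eventually consist solely of steps (2.) and (3.).

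Second, the (2.)/(3.) chains. I will argue that every ball moved in a maximal run of (2.)/(3.) steps has its edge $(h_1(x),h_2(x))$ in the corruption graph $G$:
\begin{itemize}[label={}]
\item a ball evicted by (2.) sits in the last slot of a corrupt bin, so it is not among the first $\ell-1$ second-hash arrivals;
\item a ball moved by (3.) is returning to a bin whose other hash was corrupt at its first eviction.
\end{itemize}
With this in hand, the run is a standard (unbucketized) cuckoo eviction walk. Each corrupt bin effectively has the single slot $\ell$, and the walk moves along edges of $G$. By the classical analysis cited in the proof of Lemma \ref{bicycliconlybarrier} (\cite{PR01}, \cite{KM25}), three facts follow:
\begin{itemize}[label={}]
\item such a walk can be infinite only inside a bicyclic component;
\item if it is infinite, the first vertex visited a third time is the starting vertex, which triggers the failure rule;
\item a finite walk visits each bin at most twice.
\end{itemize}
Together these give all three claims of the lemma.

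The main obstacle is verifying that the reduction to a single-slot cuckoo walk survives the modifications. The delicate points are the swap in the sub-bullet of (1.) and the tie-breaking in (4.), which must never place a second-hash ball into slot $\ell$ of a non-corrupt bin. I will check these case by case exactly as in Lemma \ref{corruptreturn}. Since (4.) and (5.) both write only into first-$(\ell-1)$ slots previously held by first-hash balls, the check should go through unchanged.
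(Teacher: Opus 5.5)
Your proposal follows the paper's own argument: it reruns the proof of Lemma~\ref{bicycliconlybarrier} with (4.) replaced by (4.)/(5.), bounding the non-(2.)/(3.) steps by fresh second-hash reveals and treating each run of (2.)/(3.) evictions as a standard cuckoo walk on the corruption graph. One correction to your bullet for (2.): the last slot of a corrupt bin can still hold a pure-first ball $y$ (rules (4.) and (5.) never touch slot $\ell$), so its edge need not be in $G$; this does not hurt the argument, because evicting $y$ reveals a fresh second hash and then either ends the run or lands $y$ in a corrupt bin $h_2(y)$, which puts its edge in $G$ after all.
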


\subsection{Properties between $\tau_1$ and $\tau_2$}

Using a similar argument as in Lemma \ref{numcorruptbins}, we can prove that there are likely to be few non-full bins by the time $\tau_1$ occurs. We also extend this to give a lower bound on the number of non-full bins as well.
\begin{lemma}\label{fewemptyattau1}
Let $E$ be the event that fewer than $.95^\ell m$ bins have an empty slot remaining and let $F$ be the event that fewer than $.91^\ell m$ bins have an empty slot remaining. With high probability in $m$ over the insertion process, $\tau_1$ will not occur before $E$ occurs, but $F$ will not occur before $\tau_1$ occurs.
\end{lemma}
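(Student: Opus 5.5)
The plan is to reduce to the same abstract ``reveal hashes, fill if possible'' process used in the proof of Lemma \ref{epsstarfirst}. Before $\tau_1$ we run the first algorithm, and under it every revealed hash that lands on a non-full bin places its ball there. So the set of bins that still have an empty slot after $k$ reveals is exactly what we get by throwing $k$ i.i.d.\ uniform hashes into $m$ bins of capacity $\ell$. A bin then still has an empty slot iff it received at most $\ell-1$ of the hashes. Consequently, at any moment before $\tau_1$, i.e.\ after $k\le 1.4\ell m$ reveals, the number of non-full bins equals the number of bins hit at most $\ell-1$ times by the first $k$ uniform hashes. This count is monotone nonincreasing in $k$.

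For the statement about $E$, I look at the moment just before $\tau_1$, after $K_1 = 1.4\ell m - 1$ reveals. For a fixed bin, the load is $\mathrm{Bin}(K_1,1/m)$. Applying Lemma \ref{poisson} with $c=1.4$ gives
\[
\mb{P}(\le \ell) \le e^{-\ell(0.4-\ln 1.4)} = e^{-0.0635\ell} \le 0.94^\ell .
\]
Hence the expected number of non-full bins is at most $0.94^\ell m$. The count changes by at most $1$ when one hash changes, so McDiarmid's inequality over the $O(\ell m)$ hashes shows it is within $O(\sqrt{\ell m\log m})\le m^{.51}$ of its mean w.h.p. Since $\ell\le 1.5\ln m$, we have $0.95^\ell m - 0.94^\ell m \ge m^{.92}\ell^{-1}\gg m^{.51}$. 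So w.h.p.\ fewer than $.95^\ell m$ bins are non-full before $\tau_1$ occurs, i.e.\ $E$ has occurred by then.

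For the statement about $F$, I need a lower bound at the latest time before $\tau_1$, after $1.4\ell m$ reveals; by monotonicity it then holds at all earlier times. A bin is non-full if it gets at most $\ell-1$ hashes, and this probability is bounded below by the single term
\[
\mb{P}(\mathrm{Bin}(1.4\ell m,1/m)=\ell-1).
\]
By Poisson/Stirling approximation this is about $e^{-1.4\ell}(1.4\ell)^{\ell-1}/(\ell-1)!$, which is $\Theta(\ell^{-1/2})\cdot(1.4/e^{0.4})^\ell$. Since $1.4e^{-0.4}\approx 0.9384>0.92$, this gives a mean of at least $0.92^\ell m$ for large $\ell$. McDiarmid again yields concentration up to $m^{.51}$, and $0.92^\ell m - 0.91^\ell m \gg m^{.51}$ by the same computation as before (using $0.91^{1.5\ln m}=m^{-0.14}$). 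So w.h.p.\ at least $.91^\ell m$ bins remain non-full when $\tau_1$ occurs, meaning $F$ has not yet occurred.

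The main obstacle is the lower-bound step. Lemma \ref{poisson} only provides upper tails, so I would do a direct binomial-to-Poisson comparison; the ratio is $1\pm O(\ell^2/m)$ because $\ell = O(\log m)$. A smaller point is checking that the reduction to the abstract process holds exactly up to $\tau_1$: balls may be placed into bins only via revealed hashes (point (1.)), and evictions never free a bin's slot permanently, because each eviction is immediately replaced by the incoming ball.
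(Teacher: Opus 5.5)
Your proposal is correct and follows the paper's proof. Both arguments reduce to counting bins hit fewer than $\ell$ times among the first $1.4\ell m$ i.i.d.\ revealed hashes; this is exact because non-reveal placements are returns, which only go to bins that are already full. They then sandwich the mean between $.92^\ell m$ and $.94^\ell m$ and concentrate with McDiarmid's inequality. Two differences are inessential: your single-term estimate $\Theta(\ell^{-1/2})(1.4e^{-0.4})^\ell$ spells out the lower-bound calculation the paper only asserts, and you concentrate to within $m^{.51}$ rather than the paper's $.91^\ell m$.
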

(In other words, assuming the algorithm does not fail before $\tau_1$, the fraction of non-full bins at $\tau_1$ will be between $.91^\ell$ and $.95^\ell$.)
\begin{proof}
Consider the first $1.4\ell m$ revealed hashes $R_1, R_2, \ldots, R_{1.4\ell m} \in [m]$. As noted in Lemma \ref{epsstarfirst}, because, whenever we reveal a hash to a non-full bin, we place that ball in that bin, we have that the number of non-full bins at $\tau_1$ is exactly the number $X$ of bins $B$ satisfying $\sum_i \mathbf{1}_{R_i = b} < \ell$. Thus, it suffices to argue that, with high probability in $m$, $X$ is at most $.95^\ell m$ and at least $.91^\ell m$.

The distribution of $\sum_i \mathbf{1}_{R_i = b}$ is $\mathrm{Bin}(1.4\ell m,\frac 1m)$, which, by Lemma \ref{poisson}, satisfies
\[\mb{P}\left(\mathrm{Bin}\left(1.4\ell m,\frac 1m\right)<\ell\right)\le e^{-\ell(1.4-1-\ln(1.4))}<.94^\ell.\]
Similarly, analogous calculations to Lemma \ref{poisson} (standard Poisson-approximation bounds) give that
\[\mb{P}\left(\mathrm{Bin}\left(1.4\ell m,\frac 1m\right)<\ell\right)>.92^\ell.\]
The expected value of $X$ is therefore $.92^\ell m\le\E[X] \le .94^\ell m$. Notice that $X$ is a function of the independent random variables $R_1, R_2, \ldots, R_{1.4\ell m}$, and that changing any one $R_i$ changes $X$ by at most 1. Thus we can apply McDiarmid's inequality to deduce that \[
\mb{P}(|X-\mb{E}[X]|\ge.91^\ell m)\le 2e^{-\frac{(.91^\ell m)^2}{1.4\ell m}}\le 2e^{-.8^\ell m}
\]
for sufficiently large $\ell$. Noting that we are assuming that $\ell\le 1.5\ln(m)$, we have that $.8^\ell\ge m^{1.5\ln(.8)}\ge m^{-.4}$, giving that $e^{-.8^\ell m}\le e^{-m^{.6}}$, so we have $|X-\mb{E}[X]|\le.91^\ell m$ with high probability in $m$. Then \[.92^\ell m\le\E[X] \le .94^\ell m\text{~~~~and~~~~}|X-\mb{E}[X]|\le.91^\ell m\]together imply the lemma for sufficiently large $\ell$.
\end{proof}

One difficulty in analyzing the new algorithm will be the following subtlety: After $\tau_1$, once we swap to the new algorithm, insertions begin to make a concerted effort to evict elements whose second hash goes to a bin with a free slot -- so such bins are \emph{unusually} likely to accumulate new second-hash elements. This means that, ironically, these bins may also be more likely to become (subsequently) corrupt. 

To handle this issue in the analysis, we modify the definition of the corruption graph so that, when $\tau_1$ occurs, we \emph{preemptively} declare any bin that has at least one free slot to be corrupt. With his modification in place, we can recover the following variation of Lemma \ref{lem:Gprime} (although, as we will see, the proof is somewhat more intricate than before):



\begin{lemma}
Consider a threshold $T\le m/\ell$, and let $G$ be the corruption graph at the final point in time during which it has $\le T$ corrupt vertices, or when $\tau_2$ occurs (whichever comes first). Then, there exists a graph $G'$ that is constructed via the following process, and where the edges (resp.~corrupt vertices) in $G$ are a subset of the edges (resp.~corrupt vertices) in $G'$. 

The graph $G'$ is constructed by selecting a uniformly random set of (with-high-probability in $m$) at most $T + me^{-\ell/30}$ (distinct) vertices to be corrupt, and then adding (unordered) edges to each corrupt vertex $B$ as follows:  
\begin{enumerate}
    \item For each bin $B'$, $\mathrm{Bin}(\ell,\frac{1.1}{m})$ edges are added from $B$ to $B'$. This process occurs independently for different $B'$.
    \item $\mathrm{Bin}(\ell m,\frac{1.1}{m})$ edges of the form $(B, B')$ are added, where each $B'$ is independent and uniformly random.
\end{enumerate}
The two steps above are independent both of each other, and across all corrupt vertices.
\label{lem:Gprime2}
\end{lemma}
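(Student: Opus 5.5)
We will follow the coupling argument of Lemma \ref{lem:Gprime} as closely as possible, running $G$ forward in time and building $G'$ alongside it, with the same system state $S=(G,G',\{y_b\})$. There are two departures from that proof. First, at the moment $\tau_1$ occurs, every non-full bin is preemptively declared corrupt. Second, after $\tau_1$ the look-ahead in point (4.) means that a first eviction of a pure first ball is no longer to a uniformly random non-corrupt bin. The proof is organized around showing that the second issue is entirely absorbed by the first.

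\textbf{Step 1 (the preemptive batch).} By Lemma \ref{fewemptyattau1}, with high probability in $m$ the number of non-full bins at $\tau_1$ is at most $.95^\ell m\le me^{-\ell/30}$ for large $\ell$; this is the source of the additive $me^{-\ell/30}$ term. We must check that this set is uniformly random and independent of the edges added so far. Before $\tau_1$, the set of non-full bins is a function only of the sequence of revealed hashes. These hashes are i.i.d.\ uniform, as in the proof of Lemma \ref{epsstarfirst}, so the set's law is exchangeable across bins. At corruption time we add edges to each such bin exactly as in Lemma \ref{lem:Gprime}: Type 1 edges from unrevealed pure-first balls, dominated by $\mathrm{Bin}(\ell,1.1/m)$ per $B'$, and Type 2 edges from uninserted balls, dominated by $\mathrm{Bin}(\ell m,1.1/m)$. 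Here we use that the number $K$ of non-corrupt bins is still at least $m-T-me^{-\ell/30}\ge .95m$.

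The subtle point is independence from the $y_b$-driven corruptions that happened before $\tau_1$. We handle it by revealing everything in two layers. The multiset of revealed hash values determines both the $y_b$ counters and fullness. By symmetry over relabelings of bins, which fixes the pre-existing corrupt set, the non-full set is uniform among the remaining bins. Conditioned on being non-full, the bins have no extra edges, since their unrevealed second hashes are still fresh.

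\textbf{Step 2 (after $\tau_1$).} After $\tau_1$, every bin that had a free slot is already corrupt. Any bin that later has a free slot must therefore also be among these, because a full bin never regains a free slot except through deletions, which do not occur. Hence a \emph{check} in point (4.) only asks whether $h_2(x)$ lies in the already-corrupt set; it reveals nothing about which non-corrupt bin $h_2(x)$ is. So whenever a pure-first ball is evicted by point (4.) or (5.) to a non-corrupt bin, its second hash is uniform over the non-corrupt bins, conditioned on the system state. The counter argument of Lemma \ref{lem:Gprime} then goes through verbatim. Each non-fake increment of $y_b$ hits a uniformly random non-corrupt bin, and fake increments pad up to the final count. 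Evictions whose hash lands in a corrupt bin only add edges that $G'$ already dominates through the Type 1 and Type 2 edges drawn when that bin was corrupted.

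Combining the two layers, the corrupt set is a union of sets that are each uniform given the past. It is therefore uniform overall, of size at most $T+me^{-\ell/30}$ with high probability, and it is independent of the edge process.

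\textbf{Main obstacle.} The hardest step is the uniformity claim in Step 1. The preemptive set and the pre-$\tau_1$ $y_b$-corruptions are correlated through the shared hash sequence: a bin with many second-hash arrivals is both more likely corrupt and less likely non-full. Our first attempt will be the symmetry argument above, that the joint law of (corrupt set, non-full set, edges) is invariant under bin permutations. If that proves too delicate, we can instead upper bound by adjoining to $G'$ an extra independently chosen uniform random set of size $me^{-\ell/30}$ that contains the actual set. This would weaken the statement to domination, which is all later uses (Lemma \ref{probedgeexists}-style bounds) require.
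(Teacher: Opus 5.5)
There is a genuine gap in Step 1, exactly where you flagged it. The lemma requires the final corrupt set of $G'$ to be independent of the edges of $G'$. However, the preemptive set $Q$ of bins that are non-full at $\tau_1$ is correlated with edges placed before $\tau_1$.

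Consider a Type 2 edge $(b,B')$ at a bin $b$ corrupted before $\tau_1$. It records the first hash $B'$ of a not-yet-inserted ball with $h_2=b$. When that ball is inserted (almost all balls are inserted before $\tau_1$), it takes a slot in $B'$. So conditioning on this edge makes $B'$ noticeably less likely to lie in $Q$. Type 1 edges likewise steer later evictions and hence which bins fill up.

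Your relabeling argument cannot remove this correlation. Invariance of the joint law of (corrupt set, non-full set, edges) under a simultaneous permutation of bins only shows that $Q$ is exchangeable given the earlier corrupt set $C$. It does not show that $Q$ is uniform given the edges, because a permutation fixing $C$ moves the edges. Similarly, your remark that the newly corrupted bins carry only fresh second hashes concerns the edges added at $\tau_1$, which are indeed fine, not the edges already present.

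Your fallback, adjoining ``an independently chosen uniform set that contains the actual set,'' is the right target. But such a set cannot simply be chosen: a set independent of the edges will not contain a set correlated with them unless containment is engineered, and you give no construction.

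The missing idea is the paper's tape decoupling, which exploits monotonicity. Split the randomness into three tapes:
\begin{itemize}
\item $R_1$: all edge randomness, including first hashes of balls whose second hash is already corrupt;
\item $R_2$: first hashes of balls whose second hash is not yet corrupt at insertion;
\item $R_3$: fresh second hashes.
\end{itemize}
By Chernoff bounds, with high probability the prefix of length $r_2=(1-2/\ell)\ell m$ of $R_2$ and the prefix of length $r_3=.4(1-2/\ell)\ell m$ of $R_3$ are fully read before $\tau_1$. A bin's load never decreases, and the arrivals driven by $R_1$ only add load. Hence every bin newly corrupted at $\tau_1$ lies in $\overline{Q}$, the set of bins appearing fewer than $\ell$ times in these two prefixes.

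This $\overline{Q}$ is a deterministic function of $R_2,R_3$, so it is independent of $R_1$ and exchangeable. Lemma \ref{poisson} together with McDiarmid gives $|\overline{Q}|\le me^{-\ell/30}$ with high probability. So the additive term comes from this Poisson tail for the superset, not from the $.95^\ell m$ count of Lemma \ref{fewemptyattau1}. The threshold-driven corruptions are handled the same way, by a tape-defined set $\overline{W}$ of size $T$ rather than by fake increments.

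Your Step 2, that checks after $\tau_1$ reveal nothing about second hashes landing in non-corrupt bins, agrees with the paper and is fine.
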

\begin{proof}
Define for each bin $b$:
\begin{itemize}
    \item $x_b$ to be the total number of balls $u$ inserted so far with $h_1(u) = b$, and where, \emph{at the time immediately prior to insertion}, $h_2(u)$ was not yet corrupt.
    \item $y_b$ to be the total number of balls $u$ inserted so far with $h_2(u) = b$, and where ball $u$ has, at some point, used its second hash (i.e., been placed in bin $b$ using $h_2$). 
\end{itemize}

Whereas, in the proof of Lemma \ref{lem:Gprime}, the system state consisted of $(G, G', \{x_b\})$, the system state will now be $(G, G', \{x_b\}, \{y_b\})$ (where, as before, $b$ ranges over only the remaining non-corrupt bins). 

We note that, in the proof of Lemma \ref{lem:Gprime} (specifically, in the handling of Type 1 balls) we made use of the following fact: when a bin $B$ becomes corrupt, each ball $x$ in each bin $B'$ that has not yet used its second hash (and that does not already correspond to an edge in $G$) independently has probability $1/K$ of satisfying $h_2(x) = B$, whre $K$ is the number of remaining non-corrupt bins prior to $B$ becoming corrupt. We claim that this fact is still true (but a bit more subtle than it was before). We must be careful because, even if a ball $x$ with $h_2(x)$ has never-yet \emph{used} its second hash, it is possible that it has \emph{checked} it (after $\tau_1$ occurs). Notice, however, that after $\tau_1$ occurs, if a ball $x$ has not yet used its second choice and doesn't correspond to an edge in $G$, then $h_2(x)$ cannot have any free slots (since all bins with free slots are corrupt). Therefore, whenever the algorithm \emph{checks} the second hash of such a ball, the check is guaranteed to fail, revealing no information about which of the (not-yet-corrupt bins) $h_2(x)$ is. As such, it remains the case that $h_2(x)$ is uniformly random out of the not-yet-corrupt bins. 

With this point in mind, we can begin the construction of $G'$ exactly as in Lemma \ref{lem:Gprime}. As vertices become corrupt in $G$, we declare the same vertices to be corrupt in $G'$, and add edges as in the proof of Lemma \ref{lem:Gprime}. The resulting graph $G'$ (which we are not done adding vertices and edges to) has the property that, when a corrupt vertex is introduced, it gets edges according to the distribution described in the lemma statement (independent of the system state). 

Finally, the final step in the proof is to add additional vertices (and edges for those vertices) to $G'$ in order ensure that the fianl set of corrupt vertices is uniformly random and independent of the edges (while also ensuring that the total number of corrupt vertices is at most $T + O(me^{-\ell/30})$). This is the part of the proof that requires more care than in the proof of Lemma \ref{lem:Gprime}.

To describe which additional vertices (and edges) to add to $G'$, and to aid in our analysis, it will be helpful to imagine that the randomness in the construction comes from three sources: 
\begin{itemize}
    \item $R_1$: a tape used to generate the randomness for adding edges to $G'$ once a vertex $b$ becomes corrupt in $G$. This includes determining for each element $x$ that has not yet revealed its second hash (but that is also not yet an edge) whether $h_2(x) = b$; determining $h_1(x)$ for each element $x$ that is being inserted in the future and that has second hash $h_2(x) = b$; and determining randomness for edges that don't actually correspond to elements (i.e., for the edges $G'$ has that $G$ does not have). 
    \item $R_2$: a tape of random numbers in $[m]$ used as follows. Whenever an element $x$ is inserted, if it does not already correspond to an edge in $G'$ (i.e., $h_2(x)$ is not already corrupt), it selects its first hash $h_1(x)$ from $R_2$. 
    \item $R_3$: a tape of random numbers in $[m]$ used as follows: Whenever an element $x$ that is not yet an edge in $G'$ reveals its second hash (which is therefore uniformly random out of the not-yet-corrupt nodes), the second hash is determined by reading numbers off of $R_3$ until we get a not-yet-corrupt node. 
\end{itemize}

It will be important to calculate the number of times $q_2$ and $q_3$ that we read from each of $R_2$ and $R_3$ before $\tau_1$ occurs (if $\tau_1$ does not occur, define $q_1, q_2 =\infty$, as that makes the following analysis go through smoothly). Lemma \ref{fewemptyattau1} tells us that, by the time $\tau_1$ occurs, the load factor of the hash table is at least $1 - .95^\ell$ (with high probability in $m$); the fact that $T \le m/\ell$ tells us that, at any time during the construction, $q_2$ is (with high probability) a sum of at least $(1 - .95^\ell)\ell m$ indicator RVs that are each independently $1$ with probability at least $1 - 1/\ell$. This implies by a Chernoff bound that $q_2 \ge r_2$ for $r_2 = (1- 2/\ell) \ell m$, with high probability in $m$ using $\ell=O(\log m)$ (note that, if $\tau_1$ does not occur, then $q_2 := \infty$ also satisfies $q_2 \ge r_2$).  On the other hand, by the time $\tau_1$ occurs, $1.4 \ell m$ hashes have been revealed (by definition), at least $.4 \ell m$ of which are second hashes; since there are at most $m/\ell$ corrupt nodes, we have (with high probability in $m$) that each of these hashes independently has probability $(1 - 1/\ell)$ of being a not-yet-corrupt node (and using $R_3$); it follows by a Chernoff bound that, with high probability in $m$, that $q_3 \ge r_3$ for some $r_3 = .4 \ell m (1 - 2/\ell)$ (as before, if $\tau_1$ doesn't occur, then $q_3 := \infty$ trivially satisfies this inequality). 

Let $\overline{R}_2$ be the first $r_2$ entries of $R_2$, and let $\overline{R}_3$ be the first $r_3$ entries of $R_3$. With high probability, all of $\overline{R}_2$ and $\overline{R}_3$ are read before $\tau_1$ (or $\tau_1$ never occurs). Every time a number $j$ is read from $\overline{R}_2$ (resp. $\overline{R}_3$), then either bin $j$ is already corrupt, or we increment $x_j$ (resp.~$y_j$) by 1. It follows that, with high probability in $m$, if $\tau_1$ occurs, then for each bin $j$ that is not already corrupt by the time $\tau_1$ occurs, we have (when $\tau_1$ occurs) that 
$$x_j + y_j \ge \sum_{i = 1}^{r_2} \mathbf{1}_{R_2[i] = j} + \sum_{i = 1}^{r_3} \mathbf{1}_{R_3[i] = j}.$$
Define $Q$ to be the set of bins that become corrupt at $\tau_1$ (bins that are not corrupt prior to $\tau_1$ and that have at least one free slot at $\tau_1$). Such bins must satisfy $x_j + y_j < \ell$, meaning that every bin $j \in Q$ satisfies 
$$\sum_{i = 1}^{r_2} \mathbf{1}_{R_2[i] = j} + \sum_{i = 1}^{r_3} \mathbf{1}_{R_3[i] = j} < \ell.$$
Define $\overline{Q}$ to be the set of bins $j$ that satisfy the inequality above. With high probability in $m$, $Q \subseteq \overline{Q}$. 

Additionally, define $W$ to be the set of all bins $b$ that, prior to $\tau_2$ during the construction of $G$, experience $x_b \ge \ell- 1$. By construction, $W \cup Q$ contains all bins that become corrupt in $G$. Finally, define $\overline{W}$ to be the set containing the first $T$ numbers to appear $\ell - 1$ times in $R_2$ (i.e., if we read $R_2$ and record when a number has appeared $\ell - 1$ times, then the first $T$ numbers we record are in $\overline{W}$). Since $G$ has at most $T$ corrupt vertices, we can deduce that $W \subseteq \overline{W}$. 

Putting the pieces together, so far, we can conclude that, with high probability in $m$, the set of corrupt bins in $G$ is a subset of $\overline{W} \cup \overline{Q}$. To complete our construction of $G'$, for each bin $b$ in $\overline{W} \cup \overline{Q}$ that is not corrupt in $G$, declare it to be corrupt in $G'$ and add edges to $G'$ as specified in the lemma statement (to get the randomness for these edges, one can use further randomness from $R_1$). This completes the construction of $G'$.

To complete the proof of the lemma, we must argue two things: (1) that the random bits used to select the corrupt vertices in $G'$ is independent of the randomness used to determine the edges that are added to $G'$; (2) that, the total number of corrupt vertices in $G'$ is, with high probability in $m$, at most $T + O(me^{-\ell/30})$. 

By construction, the set of corrupt vertices in $G'$ is fully determined by the randomness in $R_2$ and $R_3$, while the edges that are added to $G'$ use randomness exclusively from $R_1$. Therefore, by construction, the set of vertices is independent of the set of edges. (Also, by symmetry between vertices, the set of corrupt vertices is a random subset of all bins.)

Finally, we bound the number of corrupt vertices in $G'$. By construction, $|\overline{W}| = T$, so it suffices to prove (with high probability in $m$) that
$$|\overline{Q}| \le O(m e^{-\ell/30}).$$
Recall that $\overline{Q}$ is the set of bins that appear fewer than $\ell$ times in $\overline{R}_2$ and $\overline{R}_3$, which together have size $r_2 + r_3 \ge (1- 2/\ell) 1.4 \ell m$. The expected number of such bins is 
$$m \Pr[\mathrm{Bin}((1-2/\ell) 1.4 \ell m, 1/m) < \ell],$$
which by Lemma \ref{poisson} is at most $m e^{-\ell((1-2/\ell).4)^2/3}\le me^{-\ell/20}$. Since the number of such bins is determined by the $O(\ell m)$ random numbers in $\overline{R}_2$ and $\overline{R}_3$, and since each random entry in $\overline{R}_2$ and $\overline{R}_3$ can change $|\overline{Q}|$ by at most 1, we can apply McDiarmid's inequality to deduce that 
$$\Pr[|\overline{Q}| > \E[|\overline{Q}|] + k \sqrt{\ell m}] \le e^{- \Omega(k^2)}.$$
Using $k = \log m$, we get that 
$$\Pr[|\overline{Q}| > \E[|\overline{Q}|] + \log m \sqrt{\ell m}] \le 1/2^{\omega(\log m)}.$$
Since $\E[|\overline{Q}|] \le m e^{-\ell/20}$, this gives that with high probability in $m$, 
$$|\overline{Q}| \le \E[|\overline{Q}|] + \log m \sqrt{\ell m} = m e^{-\ell/20} + \log m \sqrt{\ell m} \le m e^{-\ell/30},$$
where the final step uses that $\ell = O(\log m)$. This completes the proof of the lemma. 
\end{proof}

Lemma \ref{lem:Gprime2} will play the exact same role in our analysis as Lemma \ref{lem:Gprime} did in the previous algorithm. In particular, the analogue of Lemma \ref{probedgeexists} goes through with the exact same proof:

\begin{lemma}
Consider the state of the dominating corruption graph $G'$, described in Lemma \ref{lem:Gprime2}, at any given moment. Any edge (including any loop) has probability at most $\frac{5\ell}{m}$ of existing in the graph, and this remains true when conditioning on which vertices are corrupt. Similarly, the probability of a multi-edge existing with multiplicity at least $c$ is at most $\left(\frac{5\ell}{m}\right)^c$, again regardless of which vertices are corrupt; and the probability of any $c'$ edges (possibly with repeats) all existing is at most $\left(\frac{5\ell}{m}\right)^{c'}$, again regardless of which vertices are corrupt.
\end{lemma}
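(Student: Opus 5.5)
The plan is to reuse the proof of Lemma \ref{probedgeexists} verbatim, after checking that $G'$ from Lemma \ref{lem:Gprime2} has exactly the structural features that proof relied on. There are three such features, and each follows from the construction in Lemma \ref{lem:Gprime2}. First, edges of $G'$ are only ever created at the moment a vertex becomes corrupt, via the two processes (1.) and (2.). Second, the edge-generating randomness (tape $R_1$) is independent of the choice of corrupt vertices (tapes $R_2,R_3$). Third, edge additions at distinct corrupt vertices are mutually independent. Because of the second feature, we may condition on the set of corrupt vertices and treat all edge events as functions of $R_1$ alone. That is what gives the phrase ``regardless of which vertices are corrupt.''

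Fix an edge $(x,y)$, possibly a loop. For a single endpoint $x$ that is corrupt, I would bound the probability that exactly $c_x$ copies of $(x,y)$ are added when $x$ becomes corrupt. Split these into $c_1$ copies from process (1.), with probability at most $(1.1\ell/m)^{c_1}$ from $\mathrm{Bin}(\ell,1.1/m)$. The remaining $c_2$ copies come from process (2.), which is $\mathrm{Bin}(\ell m,1.1/m^2)$ for a fixed target, with probability at most $(1.1\ell/m)^{c_2}$. Summing over $c_1+c_2=c_x$ gives at most $(2.2\ell/m)^{c_x}$. Combining the independent contributions from $x$ and $y$ gives multiplicity exactly $c$ with probability at most $(c+1)(2.2\ell/m)^c$. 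Summing the tail over $a\ge c$ and using $\ell=O(\log m)$ yields $(5\ell/m)^c$. If an endpoint is not corrupt, it contributes nothing, which only helps.

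For the joint statement about $c'$ edges, I would repeat the dependence argument from before. Process (1.) edges are independent across pairs. Process (2.) at a fixed vertex $B$ is $\ell m$ trials, each choosing at most one target among disjoint outcomes, so it is a multinomial. Events of the form ``edge $(B,B')$ has multiplicity $\ge c$'' for distinct $B'$ are then negatively associated. Across different corrupt vertices everything is independent, so the probability that all $c'$ specified edges (with repeats) exist is at most the product of the marginals, $(5\ell/m)^{c'}$.

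The only step needing real care is that Lemma \ref{lem:Gprime2}'s $G'$ has a random, not fixed, number of corrupt vertices, and holds only with high probability in $m$. I would handle this by conditioning on the realized corrupt set, which is legitimate because that set is determined by $R_2,R_3$ and is independent of $R_1$. The edge distribution given the corrupt set is then exactly as in Lemma \ref{lem:Gprime}, and the bound above holds pointwise for every possible corrupt set.
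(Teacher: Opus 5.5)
Your proposal is correct and follows the paper's route: the paper simply asserts that the proof of Lemma \ref{probedgeexists} goes through unchanged for the $G'$ of Lemma \ref{lem:Gprime2}, and you reuse that proof in the same way. You also make explicit what the reuse needs: the corrupt set is determined by $R_2,R_3$ and is independent of the edge randomness $R_1$, after which the per-endpoint $(2.2\ell/m)^{c_x}$ bound, the tail sum to $(5\ell/m)^c$, and the negative-dependence argument for process (2.) apply verbatim.
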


Additionally, we will need the following Lemma \ref{adapted1}, which bounds the total number of bins that become corrupt at any point before $\tau_2$ (this lemma plays the role of Lemma \ref{numcorruptbins} in the previous section).

\begin{lemma}\label{adapted1}
Let $E$ be the event that $.96^\ell m$ bins are corrupt. With high probability in $m$, $E$ will not occur before $\tau_2$.
\end{lemma}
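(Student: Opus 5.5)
We would bound the corrupt bins before $\tau_2$ by splitting them into two classes, as in the proof of Lemma \ref{lem:Gprime2}. The first class is the bins preemptively declared corrupt at $\tau_1$ because they still had a free slot. The second class is the bins that became corrupt ``organically'', by accumulating $\ell-1$ balls that have used their second hash there.

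For the first class we invoke Lemma \ref{fewemptyattau1} directly. With high probability in $m$, at most $.95^\ell m$ bins are non-full when $\tau_1$ occurs. Hence at most $.95^\ell m$ bins are made corrupt at $\tau_1$.

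For the second class we mirror Lemma \ref{numcorruptbins}. Before $\tau_2$, at most $1.6\ell m$ hashes in total have been revealed, and each ball reveals its first hash before its second. So at most $0.8\ell m$ second hashes have been revealed. A bin becomes organically corrupt only if at least $\ell-1$ revealed second hashes land on it.

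The subtlety is that after $\tau_1$ the algorithm deliberately steers balls, via point (4.), toward bins with free slots, so revealed second hashes are not uniform. However, every bin with a free slot at or after $\tau_1$ is already counted in the first class: bins only fill up over time, so any bin non-full after $\tau_1$ was non-full at $\tau_1$. Any revealed second hash landing on a bin not yet corrupt must therefore go to a full, non-corrupt bin. As argued in the proof of Lemma \ref{lem:Gprime2}, the failed checks on such balls reveal nothing, so their second hash is uniform over the not-yet-corrupt bins.

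We would formalize this with the tape $R_3$. Each second-hash reveal that lands on a not-yet-corrupt bin reads entries of $R_3$ until it hits a non-corrupt bin. With at most $.96^\ell m \le m/\ell$ corrupt bins (we argue up to the first time the count reaches $.96^\ell m$), a Chernoff bound shows that with high probability at most $(1+o(1))\cdot 0.8\ell m \le 0.81\ell m$ entries are read. Every organically corrupt bin then appears at least $\ell-1$ times among the first $0.81\ell m$ entries of $R_3$.

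By Lemma \ref{genpoisson} (or Lemma \ref{poisson} with $c_1\approx 1$, $c_2=0.81$), a fixed bin appears at least $\ell-1$ times with probability at most $e^{-\ell(\ln(1/0.81)-0.19)}\approx e^{-0.0207\ell}$. This is smaller than $.96^\ell/4$ for large $\ell$, since $-\ln .96\approx 0.0408$ and so $.96^\ell\approx e^{-0.0408\ell}$.

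We now have to check that this comparison goes the right way. The required inequality is $e^{-0.0207\ell}\le .96^\ell/4$. Since $e^{-0.0207\ell}$ is larger than $e^{-0.0408\ell}$, the inequality fails, and the crude second-hash count alone is insufficient.

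To fix this, we combine first- and second-hash information as in the proof of Lemma \ref{lem:Gprime2}. A full, non-corrupt bin $b$ counts toward $x_b$ and $y_b$ through all its entries in $R_2$ and $R_3$. So an organically corrupt bin needs at least $\ell-1$ entries in $R_3$ among the first $0.81\ell m$ reads. Comparing only with the total reads does not help here, so we instead use the exact exponent: $c\ln c - c + 1$ at $c=0.81$ gives $0.81\ln 0.81+0.19\approx 0.0193$. This is still weaker than needed.

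The fix we would then try is to use that second-hash reveals after $\tau_1$ are at most $0.2\ell m$. Before $\tau_1$ the second-hash reveals number at most $0.7\ell m$ and are uniform, as in Lemma \ref{numcorruptbins}; combining these gives at most $0.8\ell m$ total but with a better first-phase bound. We expect this exponent comparison to be the main obstacle. If it still falls short, we would weaken to bounding corrupt bins by $.96^\ell m$ using the $e^{-\ell/30}$-type estimate from Lemma \ref{lem:Gprime2}, noting that $e^{-\ell/30}\le .96^\ell$ since $1/30 > 0.0408$ is false. If that also fails, we would sharpen the count of second hashes (many balls are never evicted) to obtain the needed exponent.

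The last step is concentration. The organic count is a function of the independent entries of $R_3$, each affecting it by at most 1, so McDiarmid's inequality concentrates it within $\tilde O(\sqrt{\ell m})\ll .96^\ell m$ of its mean, using $\ell\le 1.5\ln m$. Adding the two classes then gives the bound.
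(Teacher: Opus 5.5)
There is a genuine gap, and it is the one you flagged yourself. Your count of at most $0.8\ell m$ second hashes revealed before $\tau_2$ comes from ``each ball reveals its first hash before its second''. That count is too weak: it gives a per-bin corruption probability of only about $e^{-0.0207\ell}\approx .979^\ell$, which exceeds $.96^\ell$. None of your fallbacks repairs this:
\begin{itemize}
\item Switching exponents, or splitting into pre- and post-$\tau_1$ reveals, still bounds the pre-$\tau_1$ count by halving ($0.7\ell m$), so nothing improves.
\item The $e^{-\ell/30}$ route fails, as you note.
\item ``Many balls are never evicted'' restates what must be shown rather than arguing it.
\end{itemize}
As written, the proposal does not prove the lemma.

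The missing observation is that $\tau_2$ fixes the total number of revealed hashes at $1.6\ell m$, and nearly all $\ell m$ first hashes are among them. By Lemma \ref{fewemptyattau1}, when $\tau_1$ occurs fewer than $.95^\ell m$ bins are non-full. So fewer than $.95^\ell \ell m$ slots are free, and more than $(1-.95^\ell)\ell m$ balls have already been inserted and revealed their first hash. Hence the number of second hashes revealed before $\tau_2$ is at most $1.6\ell m-(1-.95^\ell)\ell m=.6\ell m+.95^\ell\ell m$. This is essentially $.6\ell m$, which is the count the paper uses.

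With this count, allow a little slack for non-uniformity (or for your rejection sampling on $R_3$). A fixed bin then receives at least $\ell-1$ second hashes with probability at most $\Pr[\mathrm{Bin}(.65\ell m,1/m)\ge \ell-1]\le e^{-\ell(.65-1-\ln .65)}/.65$. Since $.65-1-\ln .65\approx .081$, this is below $.94^\ell$ for large $\ell$, comfortably under $.96^\ell$.

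The rest of your plan then goes through: the two classes of corrupt bins, Lemma \ref{fewemptyattau1} for the bins corrupted at $\tau_1$, uniformity of reveals onto non-corrupt bins, and McDiarmid. It yields at most $.95^\ell m+.95^\ell m\le .96^\ell m$ corrupt bins, which is the paper's argument. The only difference is how uniformity is handled. The paper notes that a hash checked to be full is uniform over the full bins, which inflates each bin's probability by at most a factor $1.01$; you instead use the tape $R_3$ as in Lemma \ref{lem:Gprime2}. Either is fine.
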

\begin{proof}
Throughout the lemma, we will focus on the point in time immediately before $\tau_2$ occurs (or at the end of the insertion process if $\tau_2$ never occurs). We will argue that, with high probability in $m$, the number of corrupt bins is at most $.96^\ell m$.

We must first bound the number of bins that become corrupt when $\tau_1$ occurs, i.e., bins that contain a free slot at that time. By Lemma \ref{fewemptyattau1}, the number of such bins is, with high probability in $m$, at most $.95^\ell m$.

Next we bound the number of bins that become corrupt by virtue of collecting $\ell -1 $ elements that use the bin as a second hash. For our analysis, it will be useful to assume that there are two separate random tapes which each contain uniformly random bins in $[m]$: the first that first hashes read off of and the second that second hashes read off of. When a second hash is checked, we simply choose from a third source of randomness whether the check will show an non-full bin or a full bin. Then, when we reveal a hash that has already been checked to be a full bin, we read its bin off the second tape, but employ rejection sampling until we find one that is full. At any time after $\tau_1$, the number of bins with free slots is at most $(.95^\ell)m$ by Lemma \ref{fewemptyattau1}, so knowing that a hash has been checked to be full multiplies the probability of $h_2(u)$ being a given bin by at most a $1 + O(.95^\ell) \le 1.01$ factor.


Then doing a similar calculation as Lemma \ref{numcorruptbins}, we note that, prior to $\tau_2$, we have at most $1.6\ell m$ hashes revealed, so the number of second hashes revealed is at most $.6\ell m$. Then the number of second hashes to a given bin (that was not empty at $\tau_1$) is stochastically dominated by $\mathrm{Bin}(.6\ell m,1.01/m)$, which is stochastically dominated by $\mathrm{Bin}(.65\ell m,1/m)$ for sufficiently large $m$. Then we can apply Lemma \ref{poisson} to show
\[\mb{P}(\mathrm{Bin}(.65\ell m,1/m)\ge\ell-1)\le e^{-\ell(.65-1-\ln(.65))}<(.94)^{\ell},\]
and if we let $X$ be the number of corrupt bins (that did not have a free slot at $\tau_1$), we see that $\mb{E}[X]\le(.94)^\ell m$. As in Lemma \ref{numcorruptbins}, we note that changing any second hash location (from the second tape) can change $X$ by at most one, so we apply McDiarmid's Inequality to say that \[\mb{P}(X\ge .95^\ell m)\le\mb{P}(X-\mb{E}[X]\ge .94^\ell m)\le e^{-\frac{((.94)^\ell m)^2}{.6\ell m}}\le e^{-.8^\ell m}.\]
As noted in the proof of Lemma \ref{fewemptyattau1}, $1-e^{-.8^\ell m}$ is with high probability in $m$. Therefore, we have with high probability in $m$ that the total number of corrupt bins is at most $.95^\ell m + .95^\ell m \le .96^\ell m$, as desired.
\end{proof}

Now, we see from Lemma \ref{adapted1} that we can hope to apply Lemma \ref{lem:Gprime2} with threshold $T=.96^\ell m$, giving $T + me^{-\ell/30}\le \ell^{-10}m$ (for sufficiently large $\ell$) corrupt vertices in the creation of $G'$. This serves as a direct analogue of Lemma \ref{lem:Gprime}, so now the analysis of the corruption graph in the previous section (Lemmas \ref{nobicyclic}, \ref{CorruptComponentSize}, and \ref{longestpathpolylog}) continues to hold with exactly the same proofs:

\begin{lemma}\label{adapted2}
With probability at least $1-O(\ell^{-6}m^{-1})$, our algorithm will not fail before $\tau_2$ occurs.
\end{lemma}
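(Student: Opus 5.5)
The plan is to mirror the proof of Lemma \ref{nobicyclic}, with Lemma \ref{lem:Gprime2} in place of Lemma \ref{lem:Gprime} and Lemma \ref{adapted1} in place of Lemma \ref{numcorruptbins}.

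\textbf{Step 1: failure forces a bicyclic component.} By the analogue of Lemma \ref{bicycliconlybarrier} for the modified algorithm, any failure before $\tau_2$ means the corruption graph $G$ contains a bicyclic component at the moment of failure. This is an increasing graph property. So it suffices to show that, at any time before $\tau_2$, $G$ is dominated by a graph with no bicyclic component, except with probability $O(\ell^{-6}m^{-1})$. The failure can occur either before $\tau_1$ (when the first algorithm is running) or between $\tau_1$ and $\tau_2$. Since the corruption graph only gains edges and corrupt vertices over time, one domination argument up to $\tau_2$ covers both cases. The preemptively corrupted bins at $\tau_1$ are also accounted for in Lemma \ref{lem:Gprime2}.

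\textbf{Step 2: domination by $G'$.} Apply Lemma \ref{lem:Gprime2} with threshold $T=.96^\ell m$. By Lemma \ref{adapted1}, with high probability in $m$, $G$ never has more than $T$ corrupt vertices before $\tau_2$. Hence the graph $G$ taken at $\tau_2$ (or at failure) is a subgraph of $G'$. With high probability in $m$, $G'$ has at most $T+me^{-\ell/30}\le \ell^{-10}m$ corrupt vertices, chosen uniformly at random. Its edges also satisfy the bounds of the analogue of Lemma \ref{probedgeexists}: any $c'$ specified edges, possibly with repeats, all appear with probability at most $(5\ell/m)^{c'}$, regardless of which vertices are corrupt. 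The high-probability exceptions contribute only $O(m^{-10})$, which is absorbed.

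\textbf{Step 3: counting minimal bicyclic components.} Repeat the counting in Lemma \ref{nobicyclic} verbatim.
\begin{itemize}
\item In a minimal bicyclic component on $k$ vertices, every edge touches a corrupt vertex, so at least $\lceil (k-1)/2\rceil$ of the vertices are corrupt.
\item Choose those corrupt vertices from a set of size at most $\ell^{-10}m$ and the remaining vertices from $[m]$.
\item Choose a spanning tree plus two extra edges, at most $k^{k-2}k^4$ ways.
\item Multiply by $(5\ell/m)^{k+1}$ for all chosen edges appearing.
\end{itemize}
This gives $\E[N_k]\le m^{-1}\ell^{-4k+6}k^2 50^k$, together with the sharper bounds for $k=1,2$. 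Summing over $k$ gives $\E[N]\le 10^8\ell^{-6}m^{-1}$, and Markov's inequality finishes the proof.

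\textbf{Main obstacle.} The one step needing care is that the count in Lemma \ref{nobicyclic} only used two facts: the corrupt set is a uniform random set of size at most $\ell^{-10}m$, and the edge probabilities hold conditionally on that set. Lemma \ref{lem:Gprime2} provides exactly these, with the corrupt set independent of the edge randomness. So I expect no new difficulty beyond checking that $.96^\ell m+me^{-\ell/30}\le\ell^{-10}m$ for sufficiently large $\ell$.
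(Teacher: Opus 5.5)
Your proposal is correct and is essentially the paper's own argument: the paper likewise uses Lemma~\ref{adapted1} to apply Lemma~\ref{lem:Gprime2} with $T=.96^\ell m$ (so that $T+me^{-\ell/30}\le \ell^{-10}m$), and then reruns the minimal-bicyclic-component count of Lemma~\ref{nobicyclic} unchanged. One small thing to tighten: the bound of $\lceil (k-1)/2\rceil$ corrupt vertices does not follow from ``every edge touches a corrupt vertex'' alone; it needs minimality (every vertex has degree at least $2$) together with the fact that the non-corrupt vertices form an independent set.
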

\begin{lemma}\label{corruptcomp2}
Consider the corruption graph $G$ when $\tau_2$ occurs (or if the insertion process fails before $\tau_2$, let $G$ be the corruption graph before the failed insertion). Choose a bin $B$ uniformly at random and let $W(B)$ be the longest path in $G$ that contains $B$. Then \[
\mb{E}[W(B)]\le 200\ell^{-4}.
\]
\end{lemma}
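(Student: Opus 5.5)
The plan is to follow the proof of Lemma \ref{CorruptComponentSize} essentially verbatim, replacing the first algorithm's domination lemma by Lemma \ref{lem:Gprime2} and Lemma \ref{numcorruptbins} by Lemma \ref{adapted1}.

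\textbf{Step 1: reduce to the dominating graph.} Since $W(B)$ is an increasing graph property, it suffices to bound $\E[W(B)]$ in the dominating graph $G'$. By Lemma \ref{adapted1}, with high probability in $m$ at most $.96^\ell m$ bins are corrupt before $\tau_2$. So I apply Lemma \ref{lem:Gprime2} with $T=.96^\ell m$. This gives a uniformly random corrupt set of size at most $T+me^{-\ell/30}\le \ell^{-10}m$ for $\ell$ large, again with high probability in $m$.

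On the complementary low-probability event, I bound $W(B)\le \ell m$ trivially. Since $\ell=O(\log m)$, this contributes only $o(\ell^{-4})$ to the expectation.

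\textbf{Step 2: count paths in $G'$.} By the analogue of Lemma \ref{probedgeexists} for $G'$, any prescribed collection of $k$ edges, with multiplicity, is present with probability at most $(5\ell/m)^k$, regardless of which vertices are corrupt. Every edge of $G'$ is incident to a corrupt vertex, so a path with $k$ edges has at least $\lceil k/2\rceil$ corrupt vertices among its $k+1$ vertices. I then count paths as in Lemma \ref{CorruptComponentSize}:
\begin{itemize}
\item choose which positions are corrupt, in at most $2^{k+1}$ ways;
\item choose the corrupt vertices, in at most $(\ell^{-10}m)^{\lceil k/2\rceil}$ ways;
\item choose the remaining vertices, in at most $m^{\lfloor k/2\rfloor+1}$ ways;
\item multiply by the edge probability $(5\ell/m)^k$.
\end{itemize}
This bounds the expected number of length-$k$ paths by $2m(k+1)\ell^{-4k}10^k$. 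Dividing by $m$ gives $\Pr[W(B)\ge k]\le 2(k+1)\ell^{-4k}10^k$ for a uniformly random bin $B$. Summing over $k\ge1$ yields $\E[W(B)]\le 200\ell^{-4}$ once $\ell$ is a sufficiently large constant.

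\textbf{Main obstacle.} The delicate point is justifying that Lemma \ref{lem:Gprime2} applies in the form needed. Its corrupt set is only with high probability of size $\le T+me^{-\ell/30}$, rather than of exactly prescribed size. It also includes the bins preemptively declared corrupt at $\tau_1$.

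I would handle this by conditioning on the high-probability event that the corrupt set has size at most $\ell^{-10}m$. The corrupt set is a uniformly random set, independent of the edge randomness (tape $R_1$), so on this event the union-bound counting above is unchanged. The failure event is absorbed into the $o(\ell^{-4})$ term from Step 1.

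I also need that the edge-probability bound holds conditionally on the corrupt set. This is exactly what the $G'$ analogue of Lemma \ref{probedgeexists} asserts, so with that in hand the remainder is routine.
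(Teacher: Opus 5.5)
Your proposal is correct and matches the paper's argument: the paper likewise invokes Lemma \ref{adapted1} to apply Lemma \ref{lem:Gprime2} with $T=.96^\ell m$ (so at most $\ell^{-10}m$ corrupt vertices), and then reuses the path-counting proof of Lemma \ref{CorruptComponentSize} verbatim. One bookkeeping note: the factor $k+1$ in your bound comes from counting (vertex, path) incidences rather than paths. Your bullet list omits that factor, but your final bound correctly includes it.
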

\begin{lemma}\label{polylog2}
With high probability in $m$, the point at which $\tau_2$ is reached comes before any point at which the corruption graph has a path of length at least $\log^{1.5}(m)$.
\end{lemma}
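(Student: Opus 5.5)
The plan is to transcribe the proof of Lemma \ref{longestpathpolylog} almost verbatim, replacing its two inputs: Lemma \ref{lem:Gprime} becomes Lemma \ref{lem:Gprime2}, and Lemma \ref{numcorruptbins} becomes Lemma \ref{adapted1}.

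First, by Lemma \ref{adapted1}, with high probability in $m$ fewer than $.96^\ell m$ bins are corrupt at every time before $\tau_2$. On that event, the corruption graph at any moment before $\tau_2$ is a subgraph of the graph $G$ described in Lemma \ref{lem:Gprime2} with threshold $T=.96^\ell m$. That lemma in turn dominates $G$ by a graph $G'$ which, with high probability in $m$, has a uniformly random set of at most $T+me^{-\ell/30}\le \ell^{-10}m$ corrupt vertices, for $\ell$ sufficiently large. Containing a path of length at least $\log^{1.5}(m)$ is a monotone increasing property, and the corruption graph only gains edges and corrupt vertices over time. It therefore suffices to show that $G'$ has no such path with high probability in $m$.

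Second, we count paths in $G'$ exactly as in Lemma \ref{CorruptComponentSize}. Every edge is incident to a corrupt vertex, so a path of length $k$ has at least $\lceil k/2\rceil$ corrupt vertices. We then:
\begin{itemize}
\item choose the corrupt/non-corrupt pattern along the path in at most $2^{k+1}$ ways;
\item choose the ordered vertices, using at most $(\ell^{-10}m)^{\lceil k/2\rceil}m^{\lfloor k/2\rfloor+1}$ choices;
\item bound the probability that all $k$ edges are present by $(5\ell/m)^k$, via the analogue of Lemma \ref{probedgeexists} for $G'$ (which holds conditionally on the corrupt set).
\end{itemize}
This gives an expected count of at most $2m(k+1)\ell^{-4k}10^k$. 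Taking $k=\lceil\log^{1.5}(m)\rceil$, with $\ell$ a large constant and $\ell=O(\log m)$, this is at most $e^{-\log^{1.2}(m)}$, which is superpolynomially small. Markov's inequality finishes the argument.

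The only step needing care is the union over time. We apply the bound to the single dominating graph $G'$ associated with the last moment before $\tau_2$ (or before the corrupt count would exceed $T$). Every earlier corruption graph is a subgraph of it, so one high-probability event covers all times at once. The failure probabilities, from Lemma \ref{adapted1}, from the corrupt-set-size bound in Lemma \ref{lem:Gprime2}, and from the path count, are each polynomially small in $m$, so their sum is too.
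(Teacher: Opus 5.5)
Your proposal is correct and matches the paper's argument. The paper proves this lemma by rerunning the proof of Lemma~\ref{longestpathpolylog} (the path count from Lemma~\ref{CorruptComponentSize} followed by Markov), with Lemma~\ref{lem:Gprime2} at threshold $T=.96^\ell m$ and Lemma~\ref{adapted1} substituted for Lemmas~\ref{lem:Gprime} and~\ref{numcorruptbins}; your explicit covering of all earlier times via monotonicity of the corruption graph is left implicit in the paper but is consistent with it.
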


Next we prove that, by the time $\tau_2$ occurs, we are very likely to have visited \emph{almost all} of the bins, and to have at least \emph{checked} the second hashes of almost all elements. This lemma will be critical, as it will allow us to subsequently argue that, by $\tau_2$, our algorithm will have reached a very high load factor (much higher than we could get using the argument in the proof of Lemma \ref{fewemptyattau1}).
\begin{lemma}\label{fewunchecked}
Let $E$ be the event that at most $.99^\ell m$ out of the total $\ell m$ balls have not had their second hash checked or revealed. With probability at least $1-O(\ell^{-6}m^{-1})$, $E$ will happen before $\tau_2$.
\end{lemma}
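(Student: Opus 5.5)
We argue by contradiction: suppose $\tau_2$ occurs while more than $.99^\ell m$ balls remain unchecked and unrevealed. We may assume the conclusions of Lemmas \ref{adapted1}, \ref{adapted2}, \ref{corruptcomp2} and \ref{polylog2} and of Lemma \ref{fewemptyattau1}; together these fail with probability $O(\ell^{-6}m^{-1})$. The key observation is that an unchecked ball $x$ that is pure first cannot sit in a bin through which an eviction chain passes via points (4.)/(5.). When a chain evicts from a bin $B$ by (4.) or (5.), the algorithm checks the pure-first balls in the first $\ell-1$ slots, in order, until it finds a free second hash. If none is found, every such ball has been checked. If one is found, the bin is filled and that ball is moved. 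So after a (4.)/(5.) visit, the only balls of $B$ that can remain unchecked are those past the stopping point or in the last slot. Thus the plan is to show two things: between $\tau_1$ and $\tau_2$ the chains visit almost all bins many times, and each visit destroys unchecked status in a controlled way.

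\textbf{Step 1: counting visits.} Between $\tau_1$ and $\tau_2$, exactly $.2\ell m$ new hashes are revealed. Every revealed second hash comes from a special eviction out of some bin by (4.) or (5.), and it lands on a bin $h_2(y)$ that, conditioned on the system state, is nearly uniform among non-corrupt bins (by the rejection-sampling argument in Lemma \ref{adapted1}, up to a factor $1.01$). So between $\tau_1$ and $\tau_2$ there are about $.2\ell m$ chain steps that each enter a nearly uniformly random bin. Some steps may instead be (2.)/(3.) steps inside components of the corruption graph; Lemma \ref{corruptcomp2} shows these total only $O(\ell^{-4})$ per special step in expectation, so they do not matter. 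A non-corrupt bin that is entered while full is then forced to do a (4.)/(5.) eviction. By a balls-into-bins calculation, the number of arrivals to a fixed bin is dominated from below by roughly $\mathrm{Bin}(.2\ell m,1/m)$, i.e.\ about Poisson$(.2\ell)$.

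\textbf{Step 2: how one visit destroys unchecked status.} The point is that a visit does not check every ball, only up to the first hit. A ball newly placed in a bin (a first-hash insertion) arrives unchecked, and the first $\ell-1$ slots are scanned each visit. I would argue per bin: a pure-first ball remains unchecked through the whole interval only if it lies in the last slot, or if every visit stopped before reaching it. If a visit stops early, it is because a check succeeded, and that happens only with probability $O(.95^\ell \ell)$ per visit, since fewer than $.95^\ell m$ bins are non-full. So with probability $1-O(\ell .95^\ell)$ per visit, the whole prefix of $\ell-1$ slots is checked. A ball escapes only if (a) its bin receives no visits, which has probability about $e^{-.2\ell}$, or (b) it occupies the last slot, or it was inserted into its bin after the bin's last visit. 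Summing, the expected number of unchecked balls is $m\cdot(\ell e^{-.2\ell}+O(1)\cdot \Pr[\text{bin visited only rarely}]+\dots)$. Here I need $\ell e^{-.2\ell}\ll .99^\ell$, which is fine since $e^{-.2}\approx .82$. Term (b) is the worry. Last-slot balls can number up to $m$ in total, so I would need to argue that a last-slot first-hash ball is checked when it is exchanged or shifted. Alternatively, note that case (1.) swaps second-hash balls into the last slot, so that the last slot holds a first-hash ball only in bins that have never been filled by a second-hash arrival. After one (4.)/(5.) visit, the last slot does hold a second-hash ball in most cases. This bookkeeping, and also the balls inserted after the last visit (handled via a Poisson bound on the number of late arrivals: very few new insertions happen after $\tau_1$, since the table is at load $1-.95^\ell$), is where I expect the main obstacle.

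\textbf{Step 3: concentration.} The count of unchecked balls is a function of the hash tapes with bounded differences, each change affecting $O(\polylog m)$ balls by Lemma \ref{polylog2}. So McDiarmid gives concentration within $m^{.51}\ll .99^\ell m$, using $\ell\le 1.5\ln m$. The only non-high-probability losses come from Lemma \ref{adapted2}, which yields the stated $1-O(\ell^{-6}m^{-1})$.
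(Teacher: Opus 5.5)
Your decomposition matches the paper's: a ball stays unchecked only if it was outside the table at $\tau_1$, its bin was non-full or becomes corrupt, a look-ahead stops early, or the bin never gets a fresh visitor. The gap is in the step you flag as the main obstacle, which you argue with the algorithm's mechanics reversed. The exchange in (1.) moves the arriving second-hash ball \emph{out} of slot $\ell$ and puts a first-hash ball \emph{into} it. Moreover, (4.)/(5.) only replace balls in the first $\ell-1$ slots; this is exactly the invariant behind Lemma \ref{corruptreturn}. So in essentially every non-corrupt full bin, the last slot permanently holds a pure-first ball. Under the literal wording of (4.), which scans only the first $\ell-1$ slots, these $\Theta(m)$ balls are never checked. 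That is far more than $.99^\ell m$, so no bookkeeping along your lines can close this. The paper's proof does not try to vacate the last slot. It asserts that a (5.)-eviction certifies that every pure-first ball of $h_1(x)$ has been checked, i.e.\ it treats the look-ahead as covering the whole bin, as the variant in Subsection \ref{prioritizedalgdef} states explicitly. You need that reading (or the corresponding modification of (4.)). Then one fresh visitor to a full, non-corrupt bin with no ``lucky'' ball checks the entire bin, and your term (b) disappears.

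Your Step 3 also fails. The process is adaptive: changing one tape entry can reroute every later eviction chain, and hence change which bins are visited and checked. Lemma \ref{polylog2} bounds paths in the corruption graph, not the influence of one hash on the whole trajectory. So there is no $O(\polylog m)$ bounded-difference constant, and the expectation bounds of Step 2 alone cannot give the required failure probability. The paper avoids concentration of an adaptive quantity and bounds each bad class directly:
\begin{itemize}
\item Balls outside the table at $\tau_1$ number at most $\ell\cdot .96^\ell m$, straight from the empty-slot count. This also settles your late-arrival worry.
\item Early stops become a deterministic condition. A full bin stays full, so a look-ahead in $h_1(x)$ can succeed only if $h_1(x)$ contains some $y$ whose $h_2(y)$ was non-full at $\tau_1$. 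A union bound over sets of $.96^\ell m$ bins shows at most $.98^\ell m$ balls hash into such bins, with failure probability at most $(.97)^{.98^\ell m}$.
\item For unvisited bins: at least $.1\ell m$ of the $.2\ell m$ fresh hashes are near-uniform over full bins given the past. So the probability that some set of $.98^\ell m$ bins receives none of them is at most $\binom{m}{.98^\ell m}(1-.98^\ell)^{.1\ell m}$.
\end{itemize}
Only Lemma \ref{adapted2} contributes a loss that is not high-probability in $m$.
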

\begin{proof}
We can assume that Lemma \ref{adapted2} holds (the algorithm does not fail before $\tau_2$), which happens with probability at least $1-O(\ell^{-6}m^{-1})$.

For $\tau_2$ to occur, and for some ball $x$ to not have its second hash checked or revealed, one of the following five conditions must hold:
\begin{itemize}
    \item $x$ is not in the table at $\tau_1$
    \item $h_1(x)$ has an empty slot at $\tau_1$
    \item $h_1(x)$ is corrupt by $\tau_2$ (technically, this condition contains the previous one)
    \item There is a ball $y$ (possibly $y = x$) such that $h_1(x)=h_1(y)$, $h_2(y)$ had an empty slot at $\tau_1$, and $y$ was not in $h_2(y)$ at $\tau_1$
    \item $h_1(x)$ is never visited between $\tau_1$ and $\tau_2$, except by balls that had previously been placed in $h_1(x)$.
\end{itemize}
First let us explain why, if none of these five occur, the second hash of $x$ must get revealed or checked. If $x$ is ever on its second hash then its second hash has already been revealed or checked, and we are done. Otherwise, since $x$ is in the table by $\tau_1$ (by bullet 1 not occurring), it must be in $h_1(x)$. If bullet five does not occur, then $h_1(x)$ is visited between $\tau_1$ and $\tau_2$ by a ball $z$ that has never before been placed into $h_1(x)$. That ball\begin{itemize}
    \item does not evict by point (1.)~(since the second bullet is false, so $h_1(x)$ has no empty slots)
    \item does not evict by point (2.)~(since the third bullet is false, so $h_1(x)$ is not corrupt)
    \item does not evict by point (3.)~(since $z$ has never before been placed into $h_1(x)$)
    \item does not evict by point (4.)~(since the fourth bullet is false, so no pure first ball in $h_1(x)$ has its second hash going to an empty slot)
\end{itemize}Therefore, $h_1(x)$ gets an eviction by point (5.), which requires (due to point (4.)) that the second hash of every ball in $h_1(x)$ must be checked.

So, now we know that one of the five bullet points must occur for every ball that does not have its second hash checked or revealed. We can now go through each bullet point and upper bound the number of balls that have that bullet point occur.

By Lemma \ref{fewemptyattau1}, with high probability in $m$ there are at most $.96^\ell m$ non-full bins at $\tau_1$, and thus $\ell(.96^\ell m)$ balls that have not yet been inserted at $\tau_1$ (the first bullet).

The second bullet is a special case of the third bullet, and the number of bins due to the third bullet is bounded by $.96^\ell m$ with high probability in $m$ by Lemma \ref{adapted1}.

For the fourth bullet, we first assume that Lemma \ref{fewemptyattau1} holds. We claim that, conditioned on that, with probability at least $1-\left(.97\right)^{.98^\ell m}$ the number of balls with any hash to a non-full bin is at most $.98^\ell m$. In particular, the probability that any set of $.96^\ell$ bins has more than $.98^\ell m$ hashes to them is upper bounded by \begin{align*}
\binom{m}{.96^\ell m}&\binom{2\ell m}{.98^\ell m}\left(\frac{.999^\ell m}{m}\right)^{.98^\ell m}\\&\le\left(\frac{em}{.96^\ell m}\right)^{.96^\ell m}\left(\frac{2e\ell m}{.98^\ell m}\right)^{.98^\ell m}\left(\frac{.96^\ell m}{m}\right)^{.98^\ell m}
\\&\le\left(\left(\frac{e}{.96^\ell }\right)^{(.96/.98)^\ell}\left(\frac{2e\ell}{.98^\ell}\right)\left(.96\right)\right)^{.98^\ell m}
\\&\le\left((1.0001)(1.0001)(.96)\right)^{.98^\ell m}\\&\tag{for sufficiently large $\ell$, using $\lim_{x\rightarrow\infty}(cx)^{x^{-d}}=1$ for any constants $c,d>0$}
\\&\le\left(.97\right)^{.98^\ell m}.
\end{align*}
Then, we see that $1 - \left(.97\right)^{.98^\ell m}$ is with high probability in $m$ as $.98^\ell m\ge .98^{1.5\ln(m)}m\ge m^{.9}$.

Finally, for the fifth bullet, we need to count the number of bins that are not visited by new hashes between $\tau_1$ and $\tau_2$. There are $.2\ell m$ new hashes between $\tau_1$ and $\tau_2$. The only possible conditionings of these hashes before they are revealed is whether or not they will go to a non-full bin. The previous calculation showed that with high probability in $m$, at most $.98^\ell m$ hashes go to the set of open bins. So regardless of any conditioning, at least $.2\ell m-.9999^\ell m\ge .1\ell m$ hashes must go to a uniformly random bin out of ones without an empty slot.

Once we are in that with high probability case, the probability that at least $.98^{\ell}m$ bins will not get any of those $.1\ell m$ hashes is \begin{align*}
\binom{m}{.98^\ell m}&\left(1-\frac{.98^\ell m}{m}\right)^{.1\ell m}
\\&\le\left(e(.98^{-\ell})\right)^{.98^\ell m}e^{-.99^\ell(.1\ell m)}
\\&\le\left(.98^{-\ell}\right)^{.99^\ell m}e^{-.99^\ell(.1\ell m)}
\\&\le\left((.98^{-1})e^{-.1}\right)^{\ell(.99^\ell m)}
\\&\le\left(.93\right)^{\ell(.99^\ell m)},
\end{align*}
which again is enough for a high-probability bound.

To sum it up, we have that, with high probability in $m$ (conditioned on Lemma \ref{nobicyclic} holding), the total number of balls that satisfy one of the five conditions (which upper bounds the total number of balls that do not have their second hash checked by $\tau_2$) is at most $\ell(.96^\ell)m+\ell(.96^\ell)m+.97^\ell m+.98^\ell m+\ell(.98^\ell m)$, which for sufficiently large $\ell$ is at most $.99^\ell m$.
\end{proof}

Intuitively, if we have checked (or revealed) the second hash of almost every element, then we must be at a very high load factor. The following lemma captures this. Note that the lemma refers to $\tilde\epsilon$, which is defined in Section \ref{epsstarsection} and satisfies $\tilde\epsilon \le \epsilon^*$. 
\begin{lemma}\label{epsstarsecond}
Let $\epsilon_{\tau_2}$ be the random variable such that $\epsilon_{\tau_2}\ell m$ is the number of free slots in the table at the time when the event $\tau_2$ occurs (if the algorithm fails before $\tau_2$ occurs, set $\epsilon_{\tau_2}=\tilde{\epsilon}$). Then with high probability in $m$, we have $\epsilon_{\tau_2}\le\tilde\epsilon/(1+.99^\ell)$.
\end{lemma}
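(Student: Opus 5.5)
The plan is to compare free slots at $\tau_2$ with leftover slots bin by bin. We charge every free slot at $\tau_2$ to its bin $B$ and split the non-full bins into two classes. \emph{Starved} bins are those receiving fewer than $\ell$ hashes among all $\ell m$ balls; they contribute exactly the leftover slots counted in $\tilde\epsilon\ell m$. \emph{Non-starved} non-full bins are the rest. The statement then reduces to a sharp bound on the free slots in non-starved bins, which must be compared against $\tilde\epsilon\ell m$ in the form the lemma demands.

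\emph{Step 1: reduce to a few bad balls.} Take a non-starved bin $B$ that is still non-full at $\tau_2$. Some ball $x$ with $h_i(x)=B$ is then not in $B$. By the algorithm's rules (points (1.) and (4.)), that ball must satisfy one of the following:
\begin{itemize}
\item it was never inserted by $\tau_2$;
\item it sits in its first bin with $h_2(x)=B$ never checked or revealed;
\item it is stuck in a corrupt bin.
\end{itemize}
The reason is that a checked second hash landing on a bin with a free slot is always taken, and a revealed hash to a non-full bin is always placed. Lemma \ref{fewunchecked} bounds the unchecked balls by $.99^\ell m$. Lemmas \ref{adapted1} and \ref{fewemptyattau1} bound the corrupt-bin and uninserted contributions by $\ell\cdot .96^\ell m$. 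Each such ball accounts for at most one free slot, and only if it hits a non-full bin.

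\emph{Step 2: discount by the hitting probability.} This is the main obstacle. Those $.99^\ell m$ bad balls are, conditionally, nearly uniform over non-full bins, because checks leak only the bit ``full or not.'' So the expected number landing in non-full bins is at most $.99^\ell m\cdot O(\epsilon_{\tau_2})$. Since $\epsilon_{\tau_2}\le .95^\ell$ by Lemma \ref{fewemptyattau1}, this is $e^{-\Omega(\ell)}\cdot$(free slots). McDiarmid, combined with $\tilde\epsilon=\Omega(m^{-.48})$ from Lemma \ref{boundingepstilde}, upgrades this to a high-probability statement. The result is
\[
\epsilon_{\tau_2}\ell m \le \tilde\epsilon\ell m + e^{-\Omega(\ell)}\epsilon_{\tau_2}\ell m + o(\tilde\epsilon \ell m).
\]

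\emph{Step 3: rearrange, and the sign problem.} Rearranging Step 2 and absorbing the constants into the $.99^\ell$ slack gives the relation $\epsilon_{\tau_2}(1-e^{-\Omega(\ell)})\le\tilde\epsilon$, which I would then try to rewrite in the stated form $\epsilon_{\tau_2}\le\tilde\epsilon/(1+.99^\ell)$. Before closing, I must check the sign of the correction term. Any valid placement of a subset of balls leaves at least the leftover slots empty, so $\epsilon_{\tau_2}\ge\tilde\epsilon$ always holds. As worded, the lemma therefore forces $\tilde\epsilon=0$, which contradicts $\tilde\epsilon=\Omega(m^{-.48})$. The argument above genuinely yields only $\tilde\epsilon\le\epsilon_{\tau_2}\le(1+.99^\ell)\tilde\epsilon$, and that is the form used in Corollary \ref{boundingepsstar}. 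The literal inequality can only be reached if the intended quantity is $\tilde\epsilon$ rescaled accordingly.
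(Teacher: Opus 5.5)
Your sign diagnosis is correct. At $\tau_2$ the table is a valid placement of a subset of the balls, so each bin $B$ has at least $\max(0,\ell-(\text{number of hashes to }B))$ free slots. Hence $\epsilon_{\tau_2}\ge\tilde\epsilon$ always, and by convention in the failure case. The inequality as stated would therefore force $\tilde\epsilon=0$, which fails with high probability.

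The paper's proof is the argument you sketch. Lemma~\ref{fewunchecked} bounds the unchecked balls, their unchecked hashes are uniform given the history, and a Chernoff bound controls how many of them land on bins with free slots. Rearranging gives $\epsilon_{\tau_2}(1-2\cdot .99^\ell/\ell)\le\tilde\epsilon\,(1+o(.99^\ell))$. The paper then writes ``implying $\epsilon_{\tau_2}\le\tilde\epsilon/(1+.99^\ell)$'', which does not follow. What the computation supports, and what Corollary~\ref{boundingepsstar} and the proof of Theorem~\ref{secondalgthm} actually use, is $\epsilon_{\tau_2}\le(1+.99^\ell)\tilde\epsilon$. For that corrected statement your route coincides with the paper's. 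It is still a with-high-probability-in-$m$ statement: the only non-w.h.p.\ loss in Lemma~\ref{fewunchecked} is failure before $\tau_2$, where $\epsilon_{\tau_2}=\tilde\epsilon$ by definition.

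Two repairs to your accounting.

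\emph{Starved bins.} They do not contribute exactly their leftover slots. A bin with $j<\ell$ hashes has more than $\ell-j$ free slots whenever some ball hashing to it is uninserted or still has that hash unchecked. The clean bookkeeping, which is the paper's, is per non-full bin. Free slots minus leftover slots is at most the number of hash incidences into the bin from balls not in it, up to the $O(\log m)$ balls with $h_1(x)=h_2(x)$. Every such incidence is unchecked and unrevealed. A bin that is non-full at $\tau_2$ was never full, so any ball whose hash to it was checked or revealed was placed there, and a non-full bin never evicts. Your Step~2 count over all non-full bins then covers starved and non-starved bins alike.

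\emph{Hitting probability.} The chance that an unrevealed hash hits a non-full bin is the fraction of non-full bins. That fraction can be as large as $\ell\epsilon_{\tau_2}$, not $O(\epsilon_{\tau_2})$; the paper's ``$2\epsilon_{\tau_2}$'' has the same slip. The correction term is therefore of order $.99^\ell$ rather than $.99^\ell/\ell$. It is still $e^{-\Omega(\ell)}$. To land on exactly $1+.99^\ell$, use the fact that the proof of Lemma~\ref{fewunchecked} actually bounds the unchecked balls by $O(\ell\cdot .98^\ell m)=o(.99^\ell m)$.
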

This proof will follow a similar outline to Lemma \ref{epsstarfirst}, except that we can now use Lemmas \ref{fewunchecked} and \ref{fewemptyattau1} to get a better bound.

Note that this lemma says that for arbitrarily large $\ell$, we can get within a factor of 1.0001 to the optimal $\epsilon$. In this sense, this is much stronger than Lemma \ref{epsstarfirst}.
\begin{proof}
It suffices to bound the probability that $\tau_2$ occurs, and that when $\tau_2$ occurs, we have more than $\tilde\epsilon/(1+.99^\ell)$ free slots. 

Every time that we check or reveal a hash to a bin that has empty slots remaining, we always put that ball in the bin (this is true for both first and second hashes); and we do not evict a ball from a bin unless that bin is full. So, once $\tau_2$ has occurred, the only balls $x$ that have at least one hash referencing a bin with empty slots (and that are not already in that bin) are balls $x$ with at least one hash (namely, their second hash) that has not yet been checked/revealed. With probability at least $1-O(\ell^{-6}m^{-1})$, Lemma \ref{fewunchecked} bounds the total number of such balls by $.99^\ell m$. Call these balls the \underline{unchecked balls}.

The difference between $(1-\epsilon_{\tau_2})\ell m$ and $(1-\tilde\epsilon)\ell m$ is upper bounded by the number of unchecked balls that, after $\tau_2$ occurs, have at least one unchecked/unrevealed hash pointing at a bin with a free slot. Each unchecked ball independently has probability at most $2 \epsilon_{\tau_2}$ of having a unchecked/unrevealed hash pointing at a bin with a free slot. Therefore (for a given $\epsilon_{\tau_2}$, and assuming there are at most $.99^\ell m$ unchecked balls), we have that 
$$\epsilon_{\tau_2} \ell m - \tilde\epsilon \ell m$$
is dominated by $\mathrm{Bin}(.99^\ell m, 2 \epsilon_{\tau_2})$. By a Chernoff bound, we have with high probability in $m$ that such a binomial random variable is at most $.99^\ell m \cdot 2 \epsilon_{\tau_2} + \tilde{O}(\sqrt{m})$. It follows that, with high probability in $m$,
$$\epsilon_{\tau_2} \ell m - \tilde\epsilon \ell m \le .99^\ell m \cdot 2 \epsilon_{\tau_2} + \tilde{O}(\sqrt{m}).$$
This rearranges to 
$$\epsilon_{\tau_2} - \tilde\epsilon \le .99^\ell \cdot 2 \epsilon_{\tau_2} \ell^{-1} + \tilde{O}(1/(\ell\sqrt{m})),$$
which further rearranges to 
\begin{equation}\epsilon_{\tau_2} \cdot (1 - .99^\ell \cdot 2 / \ell) \le \tilde\epsilon + \tilde{O}(1/(\ell\sqrt{m})) \le \tilde\epsilon + m^{-.49},
    \label{eq:fooa}
\end{equation}
assuming $m$ is sufficiently large.

Finally, since $\ell \le 1.5\ln m$, we have that $(.99)^\ell\ge m^{1.5\ln(.99)}=\Omega(m^{-.01})$. Therefore, with high probability in $m$, we have
\begin{equation}
\tilde\epsilon + m^{.49} \le \tilde\epsilon \cdot (1 + O((.99)^\ell / m^{.1})).
\label{eq:foob}
\end{equation}
Supposing $\ell$ and $m$ are sufficiently large, \eqref{eq:fooa} and \eqref{eq:foob} combine to give that \[\epsilon_{\tau_2} \cdot (1 - .99^\ell \cdot 2 / \ell)\le\tilde\epsilon \cdot (1 + O((.99)^\ell / m^{.1})),\]implying that \[\epsilon_{\tau_2}\le\tilde\epsilon/(1+.99^\ell)\]as desired.

\end{proof}

\subsection{Run-time analysis}

Finally, we can now analyze running time in a similar way to the earlier algorithm, but where we now benefit from the fact that, by the time $\tau_2$ occurs, our algorithm is guaranteed to have reached a very high load factor. During the running of this algorithm, define $\epsilon$ such that $\epsilon\ell m$ slots are currently open.

The next two lemmas consider, at any time after $\tau_1$ has occurred but before $\tau_2$, the elements (including those not yet inserted) that are capable of filling a currently-free slot. The lemmas establish that the majority of these elements (1) are already in the hash table (Lemma \ref{fewoutsidehashed}) and (2) are the only such element in the bin where they currently reside (Lemma \ref{fewdoubles}). This will allow us to argue in the insertion-analysis that each time we check a hash for the first time, we have probability $\Omega(\epsilon)$ of finding a bin with a free slot. 

\begin{lemma}\label{fewoutsidehashed}
With high probability in $m$ over the insertion process, there is no point when $\tau_1$ has occurred, $\tau_2$ has not, and the number of hashes from balls outside the hash table to a bin that is currently non-full is at least $.99^\ell\epsilon\ell m/5$.
\end{lemma}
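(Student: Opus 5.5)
The plan is to exploit the fact that balls not yet inserted have \emph{completely unrevealed} hashes. Fix a moment $t$ after $\tau_1$ and before $\tau_2$. Let $S_t$ be the set of balls not yet in the table, so $|S_t| = \epsilon\ell m$ where $\epsilon$ is the current free fraction. Let $N_t$ be the set of currently non-full bins.

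The configuration of the table at time $t$ (including $N_t$ and $\epsilon$) is a function only of the hashes of the balls inserted so far and of the algorithm's internal choices. It is therefore independent of the $2|S_t|$ hashes $h_1(x),h_2(x)$ for $x\in S_t$. The corruption-graph bookkeeping of Lemma \ref{lem:Gprime2} "reveals" future edges, but this is only an analysis device and does not affect the algorithm. So, conditioned on the state at time $t$, the quantity $X_t$ we must bound is a sum of $2\epsilon\ell m$ independent indicators, each equal to $1$ with probability $|N_t|/m$.

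First I bound the mean. The set of non-full bins only shrinks over time, so for every $t$ after $\tau_1$ Lemma \ref{fewemptyattau1} gives $|N_t|\le .95^\ell m$ with high probability in $m$. Hence
\[\E[X_t \mid \text{state at } t] \le 2\epsilon\ell m\cdot .95^\ell =: \mu .\]
The target threshold is $T_\epsilon = .99^\ell\epsilon\ell m/5$, and $T_\epsilon/\mu \ge (.99/.95)^\ell/10$, which is a large constant (indeed it grows exponentially) for $\ell$ sufficiently large. A multiplicative Chernoff bound in the form $\Pr[X\ge T]\le (e\mu/T)^{T}\le 2^{-T}$ then applies.

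To make this high probability in $m$, I need $T_\epsilon$ to be polynomially large. The number of free slots is always at least $\tilde\epsilon\ell m$, so $\epsilon\ge\tilde\epsilon=\Omega(m^{-.48})$ with high probability (Section \ref{boundingsubsect}). Also $.99^\ell\ge m^{-.02}$ since $\ell\le 1.5\ln m$. Together these give $T_\epsilon\ge m^{.4}$ and a failure probability of $2^{-m^{.4}}$ at each fixed moment.

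Finally I union bound over moments. The pair $(S_t,N_t)$ can only change when an insertion completes, because evictions never create or destroy a free slot. So it suffices to check the at most $\ell m$ moments between consecutive insertions, and the union bound costs only a factor $\ell m$.

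The main obstacle I expect is justifying rigorously the conditional independence between the hashes of uninserted balls and the stopping-time-like moments we test, such as "$\tau_1$ has occurred and $\tau_2$ has not". My approach is to define, for each index $i$, the event that the bound fails right after the $i$-th insertion and that $\tau_1\le$ this time $<\tau_2$. This event is decided by the first $i$ balls' hashes together with the fresh, independent hashes of balls $i+1,\dots,\ell m$. Applying the Chernoff bound conditionally on the first part for each $i$ handles the dependence cleanly. The failure cases of Lemma \ref{fewemptyattau1} and of the bound $\tilde\epsilon=\Omega(m^{-.48})$ are each high-probability exceptions and are simply added to the error.
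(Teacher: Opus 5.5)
Your proof is correct and is essentially the paper's argument. It uses the same ingredients: the hashes of not-yet-inserted balls are fresh and independent of the current table state; Lemma \ref{fewemptyattau1} caps the number of non-full bins at $.95^\ell m$; a $\binom{n}{k}p^k$-type tail bound becomes high-probability because $\epsilon\ge\tilde\epsilon=\Omega(m^{-.48})$ and $\ell\le 1.5\ln m$; and a union bound is taken over the at most $\ell m$ insertions. The only difference is bookkeeping: you use $2\epsilon\ell m$ trials, each succeeding with probability at most $.95^\ell$, while the paper uses about $.95^\ell m$ balls, each succeeding with probability about $2\epsilon$. Your version, with its explicit conditioning on the state after each insertion, is if anything the more careful of the two.
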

\begin{proof}
Consider some insertion after $\tau_1$ has occurred but before $\tau_2$ has occurred. By Lemma \ref{fewemptyattau1}, we have with high probability in $m$ that at most $.95^\ell m$ balls are not inside the hash table. Then we can use the fact that each unrevealed hash has probability at most $\epsilon$ of landing on a bin with a free slot (independently of anything that has happened in our algorithm so far) to say that the probability that more than $.99^\ell\epsilon\ell m/4$ hashes outside our table point to a currently non-full bin is at most
\[
\binom{.95^\ell m}{.99^\ell\epsilon\ell m/4}(2\epsilon)^{.99^\ell\epsilon\ell m/4},
\]as the first factor gives the number of ways to choose $.99^\ell\epsilon\ell m/4$ of the balls outside the table; and the second factor gives the probability, for each of those balls, that either of their two hashes goes to a currently non-full bin. Then we have that
\begin{multline*}
\binom{.95^\ell m}{.99^\ell\epsilon\ell m/4}(2\epsilon)^{.99^\ell\epsilon\ell m/4}\le\left(\frac{.95^\ell m(8e\epsilon)}{.99^\ell\epsilon\ell m}\right)^{.99^\ell\epsilon\ell m/4}\\\le\left(\frac{.95^\ell(8e)}{.99^\ell\ell}\right)^{.99^\ell\epsilon\ell m/4}\le(.97^\ell)^{.99^\ell\epsilon\ell m/4}.
\end{multline*}
Recall by assumption that $\E[\tilde\epsilon]=\Omega(m^{-.47})$, which implies by Lemma \ref{boundingepstilde} that $\epsilon \le \tilde\epsilon=\Omega(m^{-.47})$ with high probability in $m$.
Additionally note (as in Lemma \ref{epsstarsecond}) that $\ell^2(.99^\ell)\ge m^{1.5\ln(.99)} O(\log^2 m)\ge\Omega(m^{-.01})$. Thus $(.97^\ell)^{.99^\ell\epsilon\ell m/4}\le(.97)^{\Omega(m^{.52})}$, so we have with high probability in $m$ that at most $.99^\ell\epsilon\ell m/4$ hashes outside our table point to a currently non-full bin. 

The above analysis considered a specific insertion between $\tau_1$ and $\tau_2$ occurring. Union bounding over the up to $\ell m$ insertions that occur completes the proof.
\end{proof}
\begin{lemma}\label{fewdoubles}
With high probability in $m$ over the insertion process, there is no point when $\tau_1$ has occurred, $\tau_2$ has not, the number of empty slots is $\epsilon\ell m$ for some $\epsilon\ge\mb{E}[\tilde\epsilon]$, and the number of bins that currently contain two different pure first balls with unrevealed second hashes that go to a non-full bin is at least $.99^\ell\epsilon\ell m/5$.
\end{lemma}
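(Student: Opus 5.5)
Fix a single moment between $\tau_1$ and $\tau_2$ with $\epsilon\ell m$ empty slots and $\epsilon \ge \E[\tilde\epsilon]$. I would first bound the bad count at that moment by a first-moment plus high-concentration estimate, exactly as in Lemma \ref{fewoutsidehashed}, and then union bound over the at most $\ell m$ insertions. The key structural fact is the one from the proof of Lemma \ref{lem:Gprime2} and Lemma \ref{adapted1}. For a pure first ball $x$ whose second hash is unrevealed, $h_2(x)$ carries almost no information. The only conditioning comes from checks, and those only say that $h_2(x)$ was full at some earlier time. Since the set of non-full bins only shrinks, each such $h_2(x)$ lands in a currently non-full bin with probability at most $(1+O(.95^\ell))\epsilon\cdot(\ell m/m)/\ell$. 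More carefully: the number of non-full bins is at most $\epsilon\ell m$, so the probability is at most about $1.01\,\epsilon\ell$. This bound holds essentially independently across balls, via the rejection-sampling tape device from Lemma \ref{adapted1}.

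With that in hand, a bin $B$ is bad only if two of its at most $\ell$ pure-first unrevealed balls both have $h_2$ in a non-full bin. That happens with probability at most $\binom{\ell}{2}(1.01\epsilon\ell)^2 \le \ell^4\epsilon^2$. So the expected number of bad bins is at most $\ell^4\epsilon^2 m$. We want this to be at most $.99^\ell\epsilon\ell m/10$, which needs $\epsilon \le .99^\ell/(10\ell^3)$. That fails for large $\epsilon$, so I would split into cases. Using the sharper count that non-full bins number at most $\epsilon\ell m$ but hashes are uniform over $m$ bins, the per-ball probability is really $(\#\text{non-full bins})/m$. I would also use Lemma \ref{fewemptyattau1}: after $\tau_1$ the number of non-full bins is at most $.95^\ell m$. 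Hence $\epsilon\le .95^\ell$ effectively, the per-ball probability is $p\le \min(\epsilon\ell, .95^\ell)\cdot 1.01$, and the expected count is $\le \ell^2 p^2 m \le \ell^2\cdot 1.01^2\,\epsilon\ell\cdot .95^\ell m \le .99^\ell\epsilon\ell m/10$ for large $\ell$. This step is the main point to get right: bounding one factor of $p$ by $\epsilon\ell$ and the other by $.95^\ell$ gives the required $\epsilon$-linear bound without any case split.

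For concentration, I would avoid McDiarmid over adaptive hashes. Instead I would mimic the counting argument of Lemma \ref{fewoutsidehashed}. The probability that some specific set of $k=.99^\ell\epsilon\ell m/5$ bins each contain such a pair is at most $\binom{m}{k}(\ell^2p^2)^k \le (e\ell^2p^2 m/k)^k \le (e\cdot 5\cdot 1.01^2\ell^2\cdot .95^\ell/.99^\ell)^k \le (.97^\ell)^k$. The justification is that the $2k$ involved hashes are distinct balls, so the product bound over independent second-hash draws applies. Since $\epsilon\ge\E[\tilde\epsilon]=\Omega(m^{-.47})$ and $.99^\ell\ell=\Omega(m^{-.01})$, we get $k\ge m^{.5}$, so this is $e^{-\Omega(m^{.5})}$, which survives the union bound over all $\ell m$ moments.

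The delicate step I expect to be the obstacle is justifying the "essentially independent with probability $\le 1.01\cdot(\#\text{non-full})/m$" claim for the joint event over $2k$ balls at an adaptively chosen time. I would handle it by revealing second hashes from the tape as in Lemma \ref{adapted1}. The event that a given unrevealed ball's hash is currently non-full then implies its tape value is among the (at most $.95^\ell m$, and at most $\epsilon\ell m$) non-full bins of the current moment. Conditioning on all revealed information only rescales each such probability by $1+O(.95^\ell)$.
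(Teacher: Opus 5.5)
Your argument is correct and is essentially the paper's proof. Both union bound over sets of $k=.99^\ell\epsilon\ell m/5$ bins with at most $\ell^2$ pair choices per bin, bound the chance that each ball's second hash lands on a non-full bin, and use the $.95^\ell m$ cap on non-full bins after $\tau_1$ (Lemma \ref{fewemptyattau1}) to reach $(.97^\ell)^k$. They then get $k=\Omega(m^{.52})$ from $\epsilon\ge\E[\tilde\epsilon]$ and union bound over the at most $\ell m$ insertions.

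Your per-ball bound of $\epsilon\ell$, rather than the paper's $\epsilon$, is the more careful one, and the extra factor of $\ell$ is absorbed as you show. The first-moment estimate and the $1+O(.95^\ell)$ rescaling are unnecessary: a ball whose check found a full bin can never later point to a non-full one, and a never-checked ball's second hash is fresh and uniform.
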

\begin{proof}
Consider an insertion that takes place after $\tau_1$ occurs but before $\tau_2$ occurs. 
By Lemma \ref{fewemptyattau1}, we have with high probability in $m$ that $\epsilon\le.95^\ell$.

Let $X$ be the number of bins that currently contain two different pure first balls with unrevealed second hashes that go to a non-full bin. We claim that the probability that $X$ is at least $.99^\ell\epsilon\ell m/5$ is at most
\[
\binom{m}{.99^\ell\epsilon\ell m/5}(\ell^2)^{.99^\ell\epsilon\ell m/5}(\epsilon)^{2(.99^\ell\epsilon\ell m/5)}.
\]This is because the first factor chooses $.99^\ell\epsilon\ell m/5$ candidate bins; the second factor then chooses, for each of those bins, two of their pure first balls (of which each bin has at most $\ell$); and the third, given the $2(.99^\ell\epsilon\ell m/5)$ specified pure first balls, gives the probability that each of their second hashes goes to a non-full bin. Then we have that
\begin{align*}
\binom{m}{.99^\ell\epsilon\ell m/5}(\ell^2)^{.99^\ell\epsilon\ell m/5}(\epsilon)^{2(.99^\ell\epsilon\ell m/5)}&\le\left(\frac{5em\ell^2\epsilon^2}{.99^\ell\epsilon\ell m}\right)^{.99^\ell\epsilon\ell m/5}\le\left(\frac{5e\ell\epsilon}{.99^\ell}\right)^{.99^\ell\epsilon\ell m/5}\\&\le\left(.97^\ell\right)^{.99^\ell\epsilon\ell m/5}\le\left(.97\right)^{.99^\ell\epsilon\ell^2 m/5}.
\end{align*}
Which, as in Lemma \ref{fewoutsidehashed}, gives a with high probability in $m$ statement, and again we union bound over all $\le\ell m$ balls that we insert.
\end{proof}

Now, we have the tools to prove an equivalent of Lemma \ref{runtime} but for our modified algorithm. Note that the conclusion will look differently, with an extra factor of $(\epsilon-\E[\epsilon^*])^{-1}$:
\begin{lemma}\label{runtimesecondalg}
Assume that $\tau_1$ has occurred but $\tau_2$ has not. Let $G$ be the corruption graph immediately prior to the $k$-th insertion (or if one of the first $k$ insertions fail, let it be the corruption graph then), but with the edge corresponding to the $k$-th insertion removed (if present). Suppose we have already proven that:
\begin{enumerate}
    \item For a uniformly random bin $B$, the expected length of the longest path containing $B$ (and using each edge at most once) in $G$ is $\le C$ (here $G$ is a random variable). Here, path length is measured in number of edges. 
    \item Immediately prior to the insertion, we have with high probability in $m$ that at least an $\epsilon$ fraction of bins contain at least one free slot. 
    \item  Immediately prior to the insertion, there are at most $m/2$ corrupt bins, with high probability in $m$.
    \item In the corruption graph $\overline{G}$ after the insertion, the  longest path (anywhere in the graph) has length at most $\polylog m$ with high probability in $m$. 
    \end{enumerate}
Then, the expected number of evictions to complete the $i$-th insertion (where the number is $0$ if some previous insertion failed, and is the number of evictions to failure if the $i$-th insertion fails) is $10(\epsilon-\E[\epsilon^*])^{-1}\ell^{-1} + O(1 + C(\epsilon-\E[\epsilon^*])^{-1})\ell^{-1}$.
\end{lemma}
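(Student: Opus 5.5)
We follow the proof of Lemma \ref{runtime} essentially verbatim; the only change is the bound on the number of special evictions. As there, we split the eviction count into $I_1 = I\cdot\mathbf{1}(\overline{E})$ and $I_2 = I\cdot\mathbf{1}(E)$, where $E$ is the event that we evict a ball whose edge is in $\overline G$ but not in $G$. The bound $\E[I_2]=o(1)$ goes through unchanged, since it only used a polylog bound on path lengths, a with-high-probability bound on the number of special evictions, and a birthday bound on two fresh second hashes colliding. For $I_1$ we again decompose the walk into maximal runs of non-special evictions, each confined to one component of $G\cup\{(h_1(x_0),h_2(x_0))\}$. By Lemma \ref{lem:fail} (in its second-algorithm form), each run starting at $b_j$ costs $O(P_{b_j}+1)$. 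This gives inequality \eqref{eq:I1twoparts} with the same treatment of the $Y_j$ terms. It therefore suffices to replace the bound $\Pr[b_j\text{ exists}]\le(1-\epsilon)^{j+1}$ by
\[\Pr[b_j\text{ exists}]\le (1-p)^{j}\qquad\text{with } p\ge \tfrac{\ell}{10}(\epsilon-\E[\epsilon^*]),\]
at least up to an additive term that is only $1/\poly(m)$. Summing then gives $10(\epsilon-\E[\epsilon^*])^{-1}\ell^{-1}(1+O(C))+O(1+C)$, which is the claimed bound.

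The heart of the proof is this per-special-eviction success probability. A special eviction at a non-corrupt bin $B$ is made by point (4.) or (5.). If point (4.) applies, the insertion finishes immediately. So the danger is a point-(5.) eviction, which sends a pure-first ball $y$ whose second hash is known to go to a full bin. That only happens after all pure first balls of $B$ have been checked and found full, so the bin $b_{j}=h_2(y)$ is uniform over non-corrupt bins. The remaining question is: with what probability does the \emph{next} step succeed, i.e.\ with what probability does $b_j$ contain a pure first ball whose unrevealed second hash points to a non-full bin? We will count the ``useful'' balls: balls $u$ with $u$ pure first, $h_2(u)$ unrevealed, and $h_2(u)$ non-full. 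The number of free slots is $\epsilon\ell m$. At most $\tilde\epsilon\ell m\le(\E[\epsilon^*]+m^{-.49})\ell m$ of them are leftover slots in the sense of Definition \ref{epstildedef}, by Lemma \ref{boundingepstilde} and $\tilde\epsilon\le\epsilon^*$. Every other free slot in a bin $b$ is matched to a distinct ball hashing to $b$ that is not in $b$. Such a ball is either outside the table, and Lemma \ref{fewoutsidehashed} says there are at most $.99^\ell\epsilon\ell m/5$ of those hashes; or it sits in its other bin. A ball sitting on its second hash can be ruled out: its first bin is then full and it was evicted, and by the algorithm it would have taken a free slot when checked or revealed. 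The remaining case is a pure first ball with $h_2$ pointing to a non-full bin, i.e.\ a useful ball. Hence the number of useful balls is at least $(\epsilon-\E[\epsilon^*])\ell m-.99^\ell\epsilon\ell m/5-o(\cdot)$. By Lemma \ref{fewdoubles}, all but $.99^\ell\epsilon\ell m/5$ of them lie in distinct bins, and at most $.96^\ell m$ of those bins are corrupt (Lemma \ref{adapted1}). Therefore at least roughly $(\epsilon-\E[\epsilon^*])\ell m/2$ distinct non-corrupt bins contain a useful ball, using $\epsilon\ge(1+.99^\ell)\E[\epsilon^*]$ to absorb the $.99^\ell\epsilon$ error terms. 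A uniform non-corrupt $b_j$ lands in one of them with probability at least $(\epsilon-\E[\epsilon^*])\ell/2\cdot(1-o(1))$. At that point point (4.) fires and the insertion completes.

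The main obstacle is justifying this probability conditionally on the history of the current insertion, since the walk has already checked some hashes and the sets change as the insertion proceeds. We would handle this by noting that within one insertion only $O(\text{number of evictions})$ bins are touched. With high probability that number is $\polylog m\cdot\epsilon^{-1}\ll(\epsilon-\E[\epsilon^*])\ell m$, because $\epsilon=\Omega(m^{-.47})$. The counts above, which are taken at the start of the insertion, therefore drop by a negligible amount. We would also note that the uniformity of $b_j$ over non-corrupt bins survives the checks, by the same argument used in Lemma \ref{lem:Gprime2}. Everything else — the $C$ terms, the polylog path bound, and the corrupt-bin count — is reused verbatim from Lemma \ref{runtime}.
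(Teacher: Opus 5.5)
Your outline follows the paper's proof. You lower-bound the free slots that are not leftover slots via $\tilde\epsilon\le\E[\tilde\epsilon]+m^{-.49}\le\E[\epsilon^*]+m^{-.49}$, identify the matching balls as useful pure-first balls, discard those outside the table (Lemma \ref{fewoutsidehashed}) and those sharing a bin (Lemma \ref{fewdoubles}), and rerun Lemma \ref{runtime} with $(1-\ell(\epsilon-\E[\epsilon^*])/10)^{j+1}$ in place of $(1-\epsilon)^{j+1}$.

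\textbf{The step that fails is the removal of corrupt bins.} You subtract the number of corrupt bins, at most $.96^\ell m$ by Lemma \ref{adapted1}, from your lower bound of roughly $(\epsilon-\E[\epsilon^*])\ell m$ on the number of bins holding a lone useful ball. But after $\tau_1$ we have $\epsilon<.95^\ell$ by Lemma \ref{fewemptyattau1}, so $(\epsilon-\E[\epsilon^*])\ell m<.95^\ell\ell m=o(.96^\ell m)$. Near the threshold, where $\epsilon-\E[\epsilon^*]$ is of order $\delta(2/e)^\ell\poly(\ell)$, the discrepancy is exponential in $\ell$. The subtraction therefore leaves a negative number, and ``at least roughly $(\epsilon-\E[\epsilon^*])\ell m/2$ distinct non-corrupt bins contain a useful ball'' does not follow. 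The issue is not cosmetic: $b_j$ is uniform over non-corrupt bins only, and the bins made corrupt preemptively at $\tau_1$ can contain pure-first balls and hence useful balls.

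\textbf{How to repair it.} Bound the number of useful balls lying in corrupt bins, not the number of corrupt bins. A corrupt bin holds at most $\ell$ pure-first balls. The unchecked second hash of each such ball hits a currently non-full bin with probability at most about $\epsilon\ell$. A union bound in the style of Lemmas \ref{fewoutsidehashed} and \ref{fewdoubles} then gives, with high probability in $m$, at most $O(.96^\ell\ell^2\epsilon m)$ useful balls in corrupt bins. This is $o((\epsilon-\E[\epsilon^*])\ell m)$, because $\epsilon-\E[\epsilon^*]\ge .99^\ell\epsilon/2$ when $\delta\ge .99^\ell$, and $\ell(.96/.99)^\ell\to 0$. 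The paper's own proof does not handle this point either; it simply asserts that the good bins are non-corrupt.

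\textbf{A smaller slip of the same kind.} Lemma \ref{fewdoubles} bounds the number of \emph{bins} containing two useful balls, but such a bin may contain up to $\ell$ of them. So the number of useful balls not alone in their bin can be up to $\ell$ times larger than your ``all but $.99^\ell\epsilon\ell m/5$'' claim. A direct first-moment bound on $\sum_b\binom{k_b}{2}$, where $k_b$ is the number of useful balls in bin $b$, repairs this too, since $\ell^3(.95/.99)^\ell\to 0$.
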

\begin{proof}
First, note that we can assume any lemmas that hold with high probability in $m$ do in fact hold here, as otherwise we can use the trivial run-time bound of $3\ell m$ per ball, since as discussed after the definitions of our two algorithms and in Lemma \ref{bicycliconlybarrier}, this adds an $o(m)$ summand to our expected number of evictions. Therefore, we assume that lemmas \ref{fewoutsidehashed} and \ref{fewdoubles} hold at every point when $\tau_1$ has occurred, and that the number of non-full bins when $\tau_1$ occurs is between $.91^\ell m$ and $.95^\ell m$ by Lemma \ref{fewemptyattau1}.

We maintain the definition of special evictions from the proof of Lemma \ref{runtime}, that is, an eviction is \underline{special} if it evicts a ball $x_i$ whose second hash has never yet been revealed and whose edge is not in the graph $G$. The first sentence in the proof of Lemma \ref{runtime} that does not continue to hold here is the line ``Note that $\Pr[b_j \text{ exists}]$ is at most the probability that the insertion reveals at least $j + 1$ fresh hashes (hashes never yet revealed before) without completing, which is at most $(1 - \epsilon)^{j + 1}$.'' Now, we might have some information about hashes before they are revealed - namely, whether they go to a non-full bin or not. If the $j$th fresh hash has been checked to go to a non-full bin, then the algorithm will simply terminate with that reveal. So to get an analogous sentence to the above, we need to bound the probability that a $(j+1)$st fresh hash exists, given that the $j$th fresh hash was previously checked to go to a full bin.

We have at most $\delta\epsilon\ell m/5$ hashes outside our table pointing to a bin currently not full and the same number of bins with two pure first balls whose second hashes go to a non-full bin whenever we have $\epsilon\ell m$ slots unoccupied. Similarly, Lemma \ref{boundingepstilde} gives that for this sequence of balls we have $\tilde\epsilon\ge\mb{E}[\tilde\epsilon]-m^{-.49}$. Then we have $\epsilon$ empty slots, our algorithm hasn't failed, and $\tilde\epsilon\ge\mb{E}[\tilde\epsilon]-m^{-.49}\ge\mb{E}[\epsilon^*]-m^{-.49}$. Using that bound on $\tilde\epsilon$ means that there must be at least $(\epsilon-\mb{E}[\tilde\epsilon])\ell m-\ell m^{.51}\ge 3(\epsilon-\mb{E}[\tilde\epsilon])\ell m/4\ge 3(\epsilon-\mb{E}[\epsilon^*])\ell m/4$ remaining (unchecked and unrevealed) hashes that point to a slot that is currently empty. By the lemmas discussed, at most $3(.99^\ell)\epsilon\ell m/5\le 3(\epsilon-\mb{E}[\epsilon^*])\ell m/5$ of those are not in the table or share a bin with another such hash.

This means that there are at least $(\epsilon-\mb{E}[\epsilon^*])\ell m/10$ pure first balls in full bins of our table whose second hash goes to an empty slot and who are alone in their bin, and therefore at least $(\epsilon-\mb{E}[\epsilon^*])\ell m/10$ full bins containing such a ball. Every time we land on one of those bins, by point (4.)~we will finish in just one more eviction.

Every time we perform a special eviction whose hash has been checked to be full, we have probability \[\ge\frac{(\epsilon-\mb{E}[\epsilon^*])\ell m/10}{\#\text{full bins}}\ge \ell(\epsilon-\mb{E}[\epsilon^*])/10\]of landing on a non-corrupt, full bin containing a pure first ball whose second hash goes to an empty slot, and thus there is probability at least $\ell(\epsilon-\mb{E}[\epsilon^*])/10$ that a $(j+1)$st fresh hash exists, given that the $j$th fresh hash was previously checked to go to a full bin.

Then we see that the rest of the proof finishes in the same way, but with $\Pr[b_j\text{ exists}]$ being at most $(1- \ell(\epsilon-\mb{E}[\epsilon^*])/10)^{j+1}$ instead of $(1-\epsilon)^{j+1}$, giving the desired result. Note that the line ``the distribution of $b_j$ (conditioned on it existing) is uniformly random across all non-corrupt bins in $G$'' is still true, if $b_j$ has also been checked to be full, as all bins that were non-full at $\tau_1$ were deemed corrupt.
\end{proof}

Finally, putting the pieces together, we prove the main theorem of the section. Note that Theorem \ref{secondalgthm} proves a slightly stronger runtime bound (by a factor of $\ell$) than in Theorem \ref{thmfirst}.

\secondalgthm*


\begin{proof}
The expected number of evictions up until $\tau_1$ occurs is $O(\epsilon^{-1})$ by Theorem \ref{thmfirst}, so we only have to worry about the run time after $\tau_1$.

Lemma \ref{epsstarsecond} gives that with high probability in $m$, the stopping time of our algorithm (that is, when $(1-\epsilon)\ell m$ balls are inserted for $\epsilon=(1+\delta)(\mb{E}[\epsilon^*])$) occurs before $\tau_2$ occurs. In the case that Lemma \ref{epsstarsecond} does not hold, we can use the trivial run-time bound of $3\ell m$ per ball, as discussed after the definitions of our two algorithms and in Lemma \ref{bicycliconlybarrier}. Then $3\ell m(O(\ell^{-6}m^{-1})=O(\ell^{-5})$, so the event occurring but the algorithm not failing only adds a negligible amount to our expected number of evictions. Similar logic means that we can then assume Lemmas \ref{fewoutsidehashed} and \ref{fewdoubles} hold at every point when $\tau_1$ has occurred, and that the number of non-full bins when $\tau_1$ occurs is between $.91^\ell m$ and $.95^\ell m$ by Lemma \ref{fewemptyattau1}.

Then Lemma \ref{runtimesecondalg}, along with Lemmas \ref{corruptcomp2} and \ref{polylog2}, gives (as in the start of the proof of Lemma \ref{firstalgruntime}) that our final expected number of evictions at any point after $\tau_1$ has occurred is $O((\epsilon-\mb{E}[\epsilon^*])^{-1}\ell^{-1})\le O(\delta^{-1}(\mb{E}[\epsilon^*])^{-1}\ell^{-1})$ as desired.

If $\epsilon\le .9^\ell$, then the number of non-full bins is at most $.9^\ell\ell m\le .91^\ell m$, which by Lemma \ref{fewemptyattau1}means that $\tau_1$ has already occurred. If $\epsilon\ge .9^\ell$ but $\tau_1$ has already occurred, then $O((\epsilon-\mb{E}[\epsilon^*])^{-1}\ell^{-1})$ is $O(\epsilon^{-1})$ as desired (by Corollary \ref{boundingepsstar}, which has now been proven from Lemmas \ref{adapted2} and \ref{epsstarsecond}).
\end{proof}

\section{Query Time}\label{querysection}
While the insertion time of the bucketized cuckoo hash table has been the primary purpose of this paper, we will now turn our attention to the query time.

Assuming that the insertion algorithm has found a valid assignment of objects to bins, then each query must look at at most two buckets. For positive queries, a random choice of the two buckets will find an object in an expected 1.5 bucket checks, as it has a 50/50 chance of guessing the first bucket correctly. This section aims to adapt our algorithm to bring that 1.5 down to $1+o(1)$ as $\ell\rightarrow\infty$.

To do this, we define a third hash function, $p:U\rightarrow(0,1)$, that maps each object to a uniformly random number in the interval $(0,1)$. We think of this as giving the object an eviction priority, and whenever there are multiple balls that we are considering evicting from some bin, we will ``break ties'' by prioritizing the eviction of the ball with the lowest $p$-value. 

In order to guess which bin to check for a given query, we count the number of hashes that have been revealed, which will be $(1+r)\ell m$ for some $r\in[-1,.6]$. Given a query for an object $x$, if $p(x)<r$ we first search for $x$ in $h_2(x)$, while if $p(x)\ge r$ (including if $r<0$) we first search in $h_1(x)$. The hash table algorithm in Subsection \ref{prioritizedalgdef}, along this query algorithm, obtains the guarantee of the following theorem:
\begin{restatable}{theorem}{prioritythm}\label{prioritythm}
Assume $\ell\le 1.5\ln(m)$. There is an algorithm that obtains all properties given in Theorem \ref{secondalgthm}, while also having the property that for any $x$ inserted in our table, the expected number of bins to look in before finding $x$ is $1+o(1)$ as $\ell\rightarrow\infty$.
\end{restatable}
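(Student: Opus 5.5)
The plan is to keep the algorithm of Section \ref{secondalgsection} unchanged except for tie-breaking. Whenever points (4.) or (5.) choose among several pure first balls in the first $\ell-1$ slots of a bin, the algorithm will now choose by $p$-value instead of by slot order. The checks in point (4.) will also be performed in $p$-order, so we still stop at the first success. First we verify that this change preserves all properties used for Theorem \ref{secondalgthm}. The key observation is that every lemma of Sections \ref{firstalgsection} and \ref{secondalgsection} uses only which \emph{type} of ball is evicted (pure first vs.\ second hash, and the last-slot rules of points (2.)--(3.)), never the identity of the ball among pure first balls. Since $p$ is independent of $h_1,h_2$, the second hash of the chosen ball is still uniform over the not-yet-corrupt bins (or over full non-corrupt bins, after a failed check). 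So Lemma \ref{lem:Gprime}, Lemma \ref{lem:Gprime2}, the corruption-graph bounds, Lemmas \ref{fewunchecked}, \ref{epsstarsecond}, and the run-time lemmas all go through verbatim. The slot-exchange in point (1.) only permutes positions within a bin, so we also let the algorithm reorder pure first balls freely within the first $\ell-1$ slots.

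Second, we show that the eviction order of pure first balls is essentially a global threshold in $p$. Fix the current time, with $(1+r)\ell m$ hashes revealed, so that roughly $r\ell m$ balls have revealed their second hash. Consider a ball $x$ sitting in $h_1(x)$. Suppose $h_1(x)$ is non-corrupt and has been evicted from $k$ times by points (4.)/(5.). Apart from (4.)-evictions, which we handle below, those evictions removed the $k$ lowest-$p$ pure first balls ever present. Pure first balls never return once evicted, and the new arrivals are second-hash balls. So $x$ has been evicted exactly when its $p$-rank among balls with $h_1=b$ is at most $k$. The number of balls with first hash $b$ is about $\mathrm{Poisson}(\ell)$, concentrated in $\ell\pm O(\sqrt{\ell\log\ell})$. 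The number of second-hash arrivals to $b$, which equals $k$ for a full non-corrupt bin, is about $\mathrm{Poisson}(r\ell)$ and concentrated similarly. Hence, with probability $1-o(1)$ over a random bin, the evicted set is $\{x: p(x)< r\pm O(\sqrt{\log \ell/\ell})\}$. Consequently, for a uniformly random ball $x$ in the table, the query rule misguesses only if one of the following holds: $p(x)$ lies in a window of width $O(\sqrt{\log\ell/\ell})$ around $r$; its bin is atypical; its bin is corrupt or had a free slot at $\tau_1$ (a $.96^\ell$ fraction by Lemma \ref{adapted1}); or $x$ was moved by a point-(4.) eviction. Each of these has probability $o(1)$.

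The main obstacle is this last exceptional case, together with making the threshold argument rigorous despite the adaptivity of which bins receive evictions. Point-(4.) evictions select a ball by its second hash rather than by $p$. However, after $\tau_1$ each such eviction fills a free slot, so there are at most $\epsilon_{\tau_1}\ell m\le .95^\ell\ell m=o(m)$ of them. Each perturbs the rank ordering of one bin by one position, affecting $O(1)$ balls' predictions. For the adaptivity, I would first try to couple each bin's arrival count with the global count of revealed second hashes: a revealed second hash lands uniformly on non-corrupt bins, independent of $p$, as in the tape argument of Lemma \ref{lem:Gprime2}. Then each bin's count is a binomial with mean $\approx r\ell$, and McDiarmid's inequality gives concentration of the number of mispredicted balls. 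For queries to $x$ we finally average over $x$, which is what the theorem asserts: the expected number of bins examined is $1+\Pr[\text{misguess}]=1+o(1)$.
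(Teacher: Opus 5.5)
\textbf{Gap: the proposal proves a weaker, averaged statement.} Your algorithmic change and the threshold picture match the paper: lowest-priority tie-breaking, checks in priority order, guessing $h_2(x)$ iff $p(x)<r$, and per-bin hash counts controlled by tape couplings as in Lemmas \ref{lem:numfirsthashes} and \ref{lem:numhashes}. The problem is the last step. The theorem is a per-key guarantee, and the paper proves it for a \emph{fixed} $x$ (Lemma \ref{prioritylem}). You explicitly average over $x$, and three of your steps are only controlled on average.

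\begin{itemize}
\item[(i)] Point-(4.) evictions. Bounding their number by the at most $.95^\ell\ell m$ free slots at $\tau_1$ shows that few balls are affected overall. It does not show that the particular bin $h_1(x)$ is unlikely to be affected, and the adaptivity of point (4.) prevents a symmetry argument. What is needed is the paper's per-bin argument. Condition on $B=h_1(x)$. At most $\ell+\ell^{.52}$ balls have first hash $B$, and each has probability at most $.95^\ell$ that its second hash points to a bin that is non-full at $\tau_1$ (Lemma \ref{fewemptyattau1}). So with high probability in $\ell$, point (4.) never fires in $B$, and every eviction from $B$ removes its lowest-priority pure-first ball (Lemma \ref{lem:lowestpriority}).
\item[(ii)] The priority window. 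For a fixed $x$, the event that $p(x)$ lands in a window of width $w$ around the final $r$ has probability $O(w)$ only if $r$ is essentially independent of $p(x)$. Priorities decide which ball is evicted, so this needs proof. The paper supplies it with a two-tape coupling: whichever pure-first ball is evicted reads the next second-hash entry, so the sequence of revealed hashes, and hence $r$, does not depend on priorities. Averaging over $x$ sidesteps this by uniform concentration of the empirical distribution of priorities, but yields only the average statement.
\item[(iii)] Arrival counts. Concentration must hold for the specific bins $h_1(x)$ and $h_2(x)$, conditioned on $x$. Holding ``with probability $1-o(1)$ over a random bin'' is not enough.
\end{itemize}

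\textbf{Secondary: the rank claim is not right as stated, though it is repairable.} You claim $x$ has been evicted exactly when its $p$-rank is at most $k$, with $k$ the number of second-hash arrivals. In fact a full non-corrupt bin with $n_b$ first-hash and $s_b$ second-hash arrivals performs $k=n_b+s_b-\ell$ evictions, because late first-hash arrivals also force evictions. More importantly, you keep points (4.) and (5.) restricted to the first $\ell-1$ slots. So a pure-first ball moved into the last slot by the exchange in point (1.) is never evicted while its bin is non-corrupt, whatever its priority. Likewise, an arriving first-hash ball never competes with the ball it displaces. The paper's modified points (1.), (4.), (5.) make every pure-first ball in the bin eligible, and the point-(1.) swap parks the lowest-priority first-hash ball in the last slot. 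In your variant these effects mispredict only $O(1)$ balls per bin, costing $O(1/\ell)$ per key, so they fit inside the $o(1)$. They must still be accounted for explicitly.
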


\subsection{``Prioritized'' Algorithm Definition}\label{prioritizedalgdef}

Concretely, we modify the algorithms as follows. Before $\tau_1$ occurs, we modify points (1.)~and (4.) of the initial algorithm:

\begin{enumerate}
    \item If $B$ has an empty slot, place $x$ into the first empty slot of $B$.
    \begin{itemize}
        \item If this ``first empty slot'' is the $\ell$th slot (so the insertion of $x$ makes $B$ full) and there is a ball in $B$ that is on its first hash, then exchange the positions within $B$ of $x$ and the lowest priority ball in $B$ that is on its first hash (which may just be $x$ itself, in which case we leave it).
    \end{itemize}
    \item If the first $\ell-1$ slots of $B$ all contain objects on their second hash, place $x$ into the last slot in $B$.
    \item If $x$ was just evicted from its slot under point (2.)~of this list and $x$ has previously been placed into $B$, place $x$ into the last slot in $B$.
    \item Otherwise, there must be balls in $B$ that are on their first hash. Out of all ``pure first'' balls in $B$, evict the one with minimal priority and place $x$ into its place.
\begin{itemize}
        \item If ``its place'' is the $\ell$th slot and there is another ball in $B$ that is on its first hash, then exchange the positions within $B$ of $x$ and the lowest priority ball in $B$ that is on its first hash (which may just be $x$ itself, in which case we leave it).
\end{itemize}
\end{enumerate}

Similarly, after $\tau_1$ occurs, for the one-step look-ahead algorithm, we modify points (1.), (4.), and (5.):
\begin{enumerate} 
    \item If $B$ has an empty slot, place $x$ into the first empty slot of $B$.
    \begin{itemize}
        \item If this ``first empty slot'' is the $\ell$th slot (so the insertion of $x$ makes $B$ full) and there is a ball in $B$ that is on its first hash, then exchange the positions within $B$ of $x$ and the lowest priority ball in $B$ that is on its first hash (which may just be $x$ itself, in which case we leave it).
    \end{itemize}
    \item If the first $\ell-1$ slots of $B$ all contain objects on their second hash, place $x$ into the last slot in $B$.
    \item If $x$ was just evicted from its slot under point (2.)~of this list and $x$ has previously been placed into $B$, place $x$ into the last slot in $B$
    \item If there are any balls in $B$ that are on their first hash and whose second hash goes to a bin with an empty slot, then evict the lowest priority one of those balls and place $x$ into its place.
\begin{itemize}
        \item If ``its place'' is the $\ell$th slot and there is anothers ball in $B$ that is on its first hash, then exchange the positions within $B$ of $x$ and the lowest priority ball in $B$ that is on its first hash (which may just be $x$ itself, in which case we leave it).
\end{itemize}
    \item Otherwise, there must be balls in $B$ that are on their first hash. Out of those first hash balls, evict the one with minimal priority and place $x$ into its place.
\end{enumerate}
Then, as in Section \ref{secondalgthm}, we choose which one to run based on whether $\tau_1$ has or has not occurred ($\tau_1$ continues to be defined as in Definition \ref{firsttosecondalgdef}). To maintain the property that, every time we check or reveal a hash to a non-full bin, we always put that ball in the bin, we assume that we start checking the second hash for the pure first balls starting with the lowest priority pure first ball, and stop checking if we find a ball whose second hash is non-full.

\subsection{Proving Theorem \ref{prioritythm}}
\begin{restatable}{lemma}{prioritylem}\label{prioritylem}
Let $x$ be a ball with priority $p(x)$. Assume that $(1+r)\ell m$ hashes have been revealed for some $r\in(0,.6)$. Then if $x$ is in our table, \begin{align*}p(x)&<r-\ell^{-.4}\\&\implies x\text{ is in }h_2(x)\text{ with probability at least }1-o(1)\text{ as }\ell\rightarrow\infty.
\\\text{On the other hand, }&r+\ell^{-.4}<p(x)
\\&\implies x\text{ is in }h_1(x)\text{ with probability at least }1-o(1)\text{ as }\ell\rightarrow\infty.
\end{align*}
\end{restatable}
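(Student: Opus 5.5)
The plan is to reduce the question to a per-bin order-statistics calculation. Fix $x$ and let $B=h_1(x)$. Under the prioritized rules, as long as $B$ is full, non-corrupt, and a look-ahead eviction under point (4.)\ does not fire, every arrival to $B$ evicts the lowest-priority \emph{pure first} ball in $B$. So the pure first balls leave $B$ in increasing order of priority.

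\textbf{Counting evictions from $B$.} Let $F_B$ be the number of balls inserted so far with first hash $B$, and let $S_B$ be the number of balls whose second hash has been revealed to be $B$. Ignoring the rare events listed below, balls are never returned to their first bin, so conservation of balls in $B$ gives that the number of first-hash balls evicted from $B$ is $D_B=F_B+S_B-\ell$. Hence $x$ has been evicted from $B$ (and is therefore in $h_2(x)$) exactly when fewer than $D_B$ of the $F_B$ first-hash balls of $B$ have priority below $p(x)$. Call this count $N_B(p(x))$.

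\textbf{Concentration of $F_B$, $S_B$ and $N_B$.} I will show each is within $O(\sqrt{\ell\log\ell})$ of its mean with probability $1-o(1)$.
\begin{itemize}
\item With $(1+r)\ell m$ hashes revealed and $r>0$, the insertion process is at load factor $1-O(.95^\ell)$ (Lemma \ref{fewemptyattau1} and Lemma \ref{epsstarfirst}-type arguments). So about $\ell m(1-o(1))$ first hashes and $r\ell m\pm o(\ell m)$ second hashes have been revealed.
\item First hashes are uniform, so $F_B\sim\mathrm{Bin}(\approx\ell m,1/m)=\ell\pm O(\sqrt{\ell\log \ell})$ with probability $1-o(1)$. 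Conditioning on $x$ itself hashing to $B$ changes this by only $1$.
\item For $S_B$, I reuse the tape argument from Lemmas \ref{lem:Gprime} and \ref{lem:Gprime2}. A freshly revealed second hash is uniform over non-corrupt bins. Before $\tau_1$ this is exact; after $\tau_1$ the check/look-ahead conditioning biases the distribution by at most a $1+O(.95^\ell)$ factor, as in Lemma \ref{adapted1}. So $S_B$ is dominated by, and dominates, binomials with mean $r\ell(1\pm o(1))$, giving $S_B=r\ell\pm O(\sqrt{\ell\log\ell})$.
\item The priorities are independent of all hashes. Given $F_B$, the count $N_B(p)$ is $\mathrm{Bin}(F_B-1,p)=p\ell\pm O(\sqrt{\ell\log\ell})$.
\end{itemize}
If $p(x)<r-\ell^{-.4}$, then $N_B(p(x))\le(r-\ell^{-.4})\ell+O(\sqrt{\ell\log \ell})<r\ell-O(\sqrt{\ell\log\ell})\le D_B$, so $x$ has been evicted to $h_2(x)$. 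Once evicted via points (4.)/(5.), $x$ stays in $h_2(x)$ unless it landed in the last slot of a corrupt bin (Lemma \ref{corruptreturn}), which has probability $O(\ell^{-10})$ by Lemma \ref{numcorruptbins} and Lemma \ref{adapted1}. The case $p(x)>r+\ell^{-.4}$ is symmetric: $N_B(p(x))>D_B$, so $x$ is never evicted and remains in $h_1(x)$.

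\textbf{Bad events.} Each of the following has probability $o(1)$ and I will simply add them to the error:
\begin{itemize}
\item $B$ or $h_2(x)$ is corrupt: probability $O(\ell^{-10})$.
\item $B$ was non-full at some point during the relevant window: only $.95^\ell$ of bins.
\item Balls were returned to $B$: returns require an incident corrupt edge, and the expected path length in the corruption graph is $O(\ell^{-4})$ by Lemmas \ref{CorruptComponentSize} and \ref{corruptcomp2}.
\item Look-ahead evictions under point (4.)\ took a non-minimal-priority ball from $B$.
\end{itemize}
The last item is what I expect to be the main obstacle. After $\tau_1$, point (4.)\ can evict a higher-priority ball, which perturbs the order in which balls leave $B$. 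I would handle it by bounding the expected number of such events at $B$. Each such event needs a pure first ball of $B$ whose second hash is a bin with a free slot. Since at most $O(.95^\ell)$ of bins are non-full, the expected number of such balls in $B$ is $O(\ell\cdot .95^\ell)$, so by Markov $B$ sees none with probability $1-o(1)$. Conditioning on $x$ being in the table is harmless, because after $\tau_1$ all but an $O(.95^\ell)$ fraction of balls are in the table.
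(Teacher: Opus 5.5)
Your argument is sound in substance, but it follows a different route from the paper's.

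\textbf{How the paper does it.} The paper tracks $B=h_1(x)$ through time:
\begin{enumerate}
\item Lemmas \ref{fewsecondhashes} and \ref{Tminus51} show that all but $\ell^{.52}$ of $B$'s first hashes arrive before $T(-\ell^{-.49})$, i.e.\ before $B$ performs any eviction.
\item Lemma \ref{lem:numhashes} gives $\ell+r\ell\pm2\ell^{.52}$ hashes to $B$ at every time.
\item Lemma \ref{lem:lowestpriority} shows that $B$ never becomes corrupt and that the look-ahead never fires at $B$. This is the same $(\ell+\ell^{.52})\cdot .95^\ell$ union bound you use.
\item Lemmas \ref{lem:bunchofcorollaries} and \ref{lem:evictedfirsthashes} show that $B$ has evicted only priorities at most $r+\ell^{-.47}$ and retains no pure-first ball below $r-\ell^{-.45}$. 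Each is proved by a union bound over $O(\ell)$ critical values of $r$.
\end{enumerate}

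\textbf{How yours differs.} You take a single snapshot at $T(r)$ and reduce everything to an exact order-statistics criterion, plus concentration of $F_B$, $S_B$ and $N_B$. This avoids the early/late split of first hashes and the union over times. It also gives a sharper window, of width $O(\sqrt{\log\ell/\ell})$. In exchange, the paper's version yields statements that hold simultaneously for all $T(r)$. Both routes rely on the same independence of priorities from the hash/fullness process (the two-tape argument in the proof of Theorem \ref{prioritythm}).

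\textbf{Two statements need repair, though neither is fatal.}

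First, ``pure first balls leave $B$ in increasing order of priority'' is false. A low-priority first-hash ball that reaches $B$ after evictions have begun leaves after higher-priority balls have already left. What your criterion actually needs is the following invariant, valid while $B$ is full, non-corrupt, receives no returns, and the look-ahead does not fire: the pure-first balls in $B$ are exactly the top $k$ by priority among all first-hash balls that have arrived, where $k$ is their current number.
\begin{itemize}
\item A first-hash arrival acts as ``insert, then delete the minimum''. Here the arriving ball is itself a candidate, as the caveat following the statement of Lemma \ref{lem:evictedfirsthashes} indicates.
\item A second-hash arrival acts as ``delete the minimum''.
\end{itemize}
Both operations preserve the invariant. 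Hence the evicted set is exactly the $D_B$ lowest priorities, which gives $N_B(p(x))<D_B$. If the arriving ball were not a candidate, the error would be at most the number of late first-hash arrivals, $O(\ell^{.52})$ by Lemma \ref{Tminus51}, which is still inside the $\ell^{.6}$ margin.

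Second, Lemma \ref{fewemptyattau1} applies only at $\tau_1=T(.4)$. For $r$ close to $0$, a constant fraction of bins is non-full at $T(r)$ and the free-slot fraction is $\Theta(\ell^{-1/2})$, not $O(.95^\ell)$. Also, ``$r\ell m\pm o(\ell m)$ second hashes'' yields only $S_B=r\ell\pm o(\ell)$, which is too weak against the $\ell^{.6}$ margin. The fix goes as follows:
\begin{itemize}
\item For $r\ge 0$, the number of free slots at $T(r)$ is at most the number after $\ell m$ reveals in the abstract process of Lemma \ref{epsstarfirst}.
\item That number is $O(m\sqrt{\ell})$ with high probability.
\item So first hashes number at least $\ell m-O(m\sqrt\ell)$, and second hashes number $r\ell m+O(m\sqrt\ell)$.
\item This restores $F_B$ and $S_B$ to within $O(\sqrt{\ell\log\ell})$ of $\ell$ and $r\ell$ respectively, as you claimed.
\end{itemize}

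Your bad event ``$B$ non-full'' is then unnecessary. If $B$ is not full, nothing has been evicted from it. And for $r>\ell^{-.4}$, $B$ has received more than $\ell$ hashes with probability $1-e^{-\Omega(\ell^{.2})}$.
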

Lemma \ref{prioritylem} (along with Lemma \ref{lem:lowestpriority}) gives the following corollary:
\prioritythm*

\begin{proof}[Proof of Theorem \ref{prioritythm} from Lemma \ref{prioritylem}]
Naively, there is only probability at most $2\ell^{-.4}=o(1)$ that $x$ does not satisfy one of the two conditions in Lemma \ref{prioritylem}. 

(To make the previous sentence fully rigorous, we have to show that $p(x)$ cannot affect the end value of $r$ (assuming the algorithm stopping time is based on inserting a fixed number of elements). E.g., one could be worried that if many $p(x)$ happen to cluster around a point, $r$ may also be more likely to be near that point. To overcome this issue, it is helpful for the analysis to assume that there are two random tapes of uniformly random bins in $[m]$: the first $R_1$ that is read from any time a first hash is revealed, and the second $R_2$ that is read from any time a second hash is revealed. Of course, after $\tau_1$, $R_2$ also needs to be adjusted to account for the checking of hashes. The important thing is that it remains independent from $R_1$. With these two tapes, we see that when an element is about to be evicted from $h_1(x)$, whether it is $x$ or not, will get the same second hash (the next entry of $R_2$). Therefore, $p(x)$ has no effect on the value of $r$ at the algorithm's termination, as it is essentially just an ordering of the balls within a given bin.)

If $x$ satisfies the first condition, there is only an $o(1)$ probability that $x$ is not in the first bin that we check. If $x$ satisfies the second condition, there is also only an $o(1)$ probability that $x$ is not in the first bin that we check. Therefore, the total probability that $x$ is not in the first bin we check is $o(1)+o(1)+o(1)=o(1)$. Since we check at most two bins, this gives the guarantee of $1+o(1)$ expected bins per positive query.

The other properties given in Theorem \ref{secondalgthm} go through in the exact same way as in its proof in Section \ref{secondalgsection}, as this minor modification to this algorithm does not affect the proof of any lemma. In fact, we are still preforming the same algorithm, and are now just specifying a tiebreak proceedure for some decisions that could have been made arbitrarily in Section \ref{secondalgsection}.
\end{proof}
To simplify the notation in our analysis, we will adopt the following definition
\begin{definition}
For any $r\in[-1,1]$, let \underline{$T(r)$} be the event that $(1+r)\ell m$ hashes have been revealed.
\end{definition}

We will also refer to $T(r)$ as a ``time'' and will talk about events likely to occur when $T(r)$ happens. For this section, our notion of likely will be ``with high probability in $\ell$'', that is, for any $c\in\mb{N}$, there is a $C$ (depending on $c$) such that the probability is at least $1-C\ell^{-c}$.

For this, we will implicitly assume that, if $T(r)$ does not occur (due to the algorithm failing), the event in question does occur. For $r\le .6$, this assumption is justified by Lemma \ref{adapted2}, which says that there is probability $O(\ell^{-6}m^{-1})$ that $T(r)$ does not occur.

We observe that most first hashes are revealed before most second hashes:

\begin{lemma}\label{fewsecondhashes}
With high probability in $m$ over the insertion process, we have that at $T(-\ell^{-.49})$, at most $e^{-\ell^{.01}}m$ second hashes have been revealed.
\end{lemma}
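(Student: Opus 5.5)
The plan is a charging argument: charge every second-hash reveal to an ``overflow'' revealed hash, then bound the number of overflow hashes with the same balls-into-bins abstraction used in Lemma \ref{epsstarfirst}. Since $1-\ell^{-.49}<1.4$, the time $T(-\ell^{-.49})$ comes before $\tau_1$ (if it occurs at all), so only the first algorithm (prioritized version) is running. In that algorithm a second hash $h_2(y)$ is revealed exactly when a pure first ball $y$ is evicted from $h_1(y)$, i.e.\ by point (4.), since points (2.) and (3.) only evict the last-slot ball (by Lemma \ref{corruptreturn}, any ball that can be moved this way was already on its second hash, or is a returning ball). A point-(4.) eviction at a bin $B$ requires $B$ to be full and non-corrupt, and it is triggered by the arrival of some ball $x$ at $B$.

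\textbf{Step 1: every point-(4.) eviction is triggered by a revealed hash.} I claim the arrival of $x$ at $B$ that triggers a point-(4.) eviction is itself a hash reveal. The only non-fresh arrivals are returns of a ball to a bin it has previously occupied. By Lemma \ref{corruptreturn} these are handled by points (2.)/(3.) and are placed in the last slot, so they never trigger point (4.). The same applies to a ball returning to $h_2$: it can only have left $h_2$ via point (2.), and it comes back through (2.)/(3.).

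\textbf{Step 2: the triggering hash is an overflow hash.} Because $B$ is full when the trigger arrives, and every ball ever placed in $B$ entered through a revealed hash to $B$ (a bin never loses balls except through evictions that bring a new ball in), at least $\ell$ earlier revealed hashes point to $B$. Distinct point-(4.) evictions have distinct triggering reveals. Hence, writing $N_b$ for the number of revealed hashes landing on $b$ by time $T(-\ell^{-.49})$,
\[
\#\{\text{second hashes revealed}\}\le \sum_{b} \max(0, N_b-\ell).
\]

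\textbf{Step 3: bound the overflow sum.} As in the proof of Lemma \ref{epsstarfirst} (and the tape arguments of Lemma \ref{lem:Gprime}), the sequence of revealed hash values is an i.i.d.\ uniform sequence on $[m]$, read off a tape. First hashes are fresh. A second hash revealed by point (4.) is also fresh, because in the first algorithm a pure first ball's second hash has never been checked. The adaptive choice of which ball to evict does not change the distribution of the next tape entry. So the right-hand side above is a function of $K=(1-\ell^{-.49})\ell m$ i.i.d.\ uniform values, and each $N_b\sim \mathrm{Bin}(K,1/m)$. I would bound
\[
\E[\max(0,N_b-\ell)]\le \sum_{j\ge \ell}\Pr[N_b\ge j],
\]
using Lemma \ref{poisson}/Lemma \ref{genpoisson} with $c=1-\ell^{-.49}$ for the leading term. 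This gives roughly $e^{-\ell^{.02}/2}$, and the tail of the sum decays geometrically in $j$, so the total is $\poly(\ell)\,e^{-\ell^{.02}/2}\le e^{-\ell^{.01}}/2$ for $\ell$ large. Changing one tape entry changes the overflow sum by at most $1$, so McDiarmid's inequality gives a deviation of $O(\sqrt{\ell m\log m})\le e^{-\ell^{.01}}m/2$ with high probability in $m$; this uses $\ell\le 1.5\ln m$, so that $e^{-\ell^{.01}}m\ge m^{.9}$.

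The main obstacle is making Steps 1--2 airtight, i.e.\ that the charging map is injective and never charges a return arrival. The delicate cases are the point-(1.)/(4.) slot swaps, which only permute balls within a bin and do not affect bin counts, and arrivals via point (3.) at non-corrupt bins. I would handle both using Lemma \ref{corruptreturn} and the observation that point (3.) evicts only the last-slot ball, so it never reveals a new second hash.
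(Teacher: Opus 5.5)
Your overall plan is the paper's: bound the number $S$ of second hashes revealed by an overflow sum over bins, then apply a binomial tail bound (Lemmas~\ref{poisson}/\ref{genpoisson}) and McDiarmid. Step 3 is fine. The gap is in Steps 1--2, which rest on two claims: that second hashes are revealed only by point (4.), and, for the ``main obstacle,'' that point (3.) evicts only the last-slot ball and so never reveals a new second hash.

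Both claims are false. When a bin fills, the point-(1.) rule and its swap leave a first-hash ball in slot $\ell$ whenever the bin contains one. So the last slot of a full non-corrupt bin typically holds a pure first ball $z$.

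Now suppose a ball $x$ was once evicted from $h_1(x)$ into the last slot of a corrupt $h_2(x)$, and is later kicked out by point (2.). Then $x$ returns to a non-corrupt $h_1(x)$ via point (3.) and evicts $z$, revealing $h_2(z)$. Similarly, a bin can become corrupt while a pure first ball still sits in its last slot. The next arrival there, possibly a returning ball, evicts that ball by point (2.).

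Lemma~\ref{corruptreturn} only tells you where a returning ball is \emph{placed}, not what it displaces. In these cases the triggering arrival is a return, not a revealed hash, so your injective charge misses these reveals.

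The inequality $S\le\sum_b\max(0,N_b-\ell)$ is nonetheless true. The clean proof is a conservation count rather than a trigger-by-trigger charge:
\begin{itemize}
\item For a bin $b$, let $a_b$ and $d_b$ be its numbers of arrivals and evictions. A bin evicts only when full, and each eviction accompanies an arrival, so $d_b=\max(0,a_b-\ell)$.
\item Write $a_b=N_b+r_b$, where $r_b$ counts arrivals of a ball at a bin it has been at before.
\item Every eviction that does not reveal a second hash sends its ball to such a bin. Hence $\sum_b r_b\le D-S$, where $D=\sum_b d_b$.
\item Therefore $D\le\sum_b\max(0,N_b-\ell)+\sum_b r_b\le\sum_b\max(0,N_b-\ell)+D-S$, which gives the inequality.
\end{itemize}
This needs no case analysis of points (1.)--(4.). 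The extra reveals through (2.)/(3.) are still fresh uniform draws, so Step 3 is unaffected.

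With this repair your argument is complete. Your choice to let $N_b$ count \emph{all} revealed hashes is actually more careful than the paper. The paper uses only the overflow of the first $(\ell-\ell^{.51})m$ entries of the first-hash tape $R_1$ and asserts that this clearly bounds the evictions. That overlooks that a fresh second hash landing on a full bin also forces an eviction, which your $N_b$ accounts for.
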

\begin{proof}
It is again helpful for the analysis to assume that there are two random tapes of uniformly random bins in $[m]$: the first $R_1$ that is read from any time a first hash is revealed, and the second $R_2$ that is read from any time a second hash is revealed. (Note that $T(-\ell^{-.49})$ occurs before the second algorithm, so we do not have to worry about checking hashes before they are revealed.) Then for a bin $B$, let $f(B)$ be the number of times that $B$ appears in the first $(\ell-\ell^{.51})m$ positions of $R_1$. It is clear that the ``overflow'' \[
Ov:=\sum_B\max(f(B)-\ell, 0)
\] is an upper bound on the number of balls that can be evicted by $T(-\ell^{-.49})$. Therefore, $Ov$ is also an upper bound on the number of second hashes revealed at $T(-\ell^{-.49})$. We see that\begin{align*}
\mb{E}[Ov]&=\mb{E}(\sum_B\max(f(B)-\ell, 0)) = m\mb{E}\max(f(B)-\ell, 0))
\\&=m\sum_{j=1}^\infty j\mb{P}(f(B)=\ell+j)=m\sum_{j=1}^\infty\mb{P}(f(B)\ge\ell+j)
\\&=m\sum_{j=1}^\infty\mb{P}(\mathrm{Bin}((\ell-\ell^{.51})m,1/m)\ge\ell+j)
\\&\le m\sum_{j=1}^\infty e^{-\ell((1+j/\ell)-(1-\ell^{-.49}))^2/2}\say{by Lemma \ref{genpoisson}}
\\&\le m\sum_{j=1}^\infty e^{-\ell(j^2/(\ell^2)+\ell^{-.98})/2}=me^{-\ell^{.02}/2}\left(\sum_{j=1}^\infty e^{-j^2/\ell}\right)
\\&\le me^{-\ell^{.02}/2}(\ell^3)
\end{align*}

Note that changing any one value in $R_1$ can change $Ov$ by at most 1. So, McDiarmid's Bounded Difference Inequality then give us that \[
\mb{P}(Ov\ge e^{-\ell^{.01}m}\le\mb{P}(Ov\ge 2m\ell^3e^{-\ell^{.02}/2})\le\mb{P}(Ov-\mb{E}[Ov])\ge m\ell^3e^{-\ell^{.02}/2}\ell^3)\le e^{\frac{(m\ell^3e^{-\ell^{.02}/2}))^2}{(\ell-\ell^{.51})m}},
\]
giving us a with high probability in $m$ statement as desired, noting that $\ell\le 1.5\ln(m)$.
\end{proof}

Given Lemma \ref{fewsecondhashes}, it is natural to assume that most bins only have first hashes at $T(-\ell^{-.49})$, and mostly receive second hashes after that. We formalize this idea in the following lemma.
\begin{lemma}\label{Tminus51}
Let $x$ be a ball, and let $B$ be a bin, possibly with $B$ defined to be either $h_1(x)$ or $h_2(x)$. With high probability in $\ell$, the following are true: \begin{enumerate}
\item Prior to time $T(-\ell^{-.49})$, we have that $B$ receives only first hashes.
\item By time $T(-\ell^{-.49})$, we have that $B$ has received at least $\ell - \ell^{.52}$ and at most $\ell$ first hashes.
\item Across all time after $T(-\ell^{-.49})$, we have that $B$ receives at most $\ell^{.52}$ total additional first hashes.\end{enumerate}
    \label{lem:numfirsthashes}
\end{lemma}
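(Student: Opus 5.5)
We will work with the two random tapes $R_1$ (first hashes) and $R_2$ (second hashes) from the proof of Lemma \ref{fewsecondhashes}. Time $T(-\ell^{-.49})$ occurs before $\tau_1$, so no checking is involved. The $t$-th first hash revealed is $R_1[t]$, and first hashes are revealed exactly when balls are inserted. Hence the set of first hashes that bin $B$ receives in any time window is determined by the counts of $B$ in the corresponding stretch of $R_1$. The one subtlety is that $B$ may be $h_1(x)$ or $h_2(x)$ for the specific ball $x$. This conditions a single tape entry (or two, if $h_1(x)=h_2(x)$) to equal $B$, and it shifts each count by at most $1$, which is negligible against the $\ell^{.52}$ slack. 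So we may treat $B$ as a fixed bin with the tapes otherwise uniform.

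\textbf{Part 2.} Let $N$ be the number of first hashes revealed by $T(-\ell^{-.49})$. Up to that time at most $e^{-\ell^{.01}}m$ second hashes have been revealed (Lemma \ref{fewsecondhashes}, w.h.p.\ in $m$). Therefore $N\in[(1-\ell^{-.49})\ell m-e^{-\ell^{.01}}m,\ (1-\ell^{-.49})\ell m]$. The count of $B$ in $R_1[1..N]$ is then sandwiched between counts of $B$ in two fixed-length prefixes, each distributed as $\mathrm{Bin}(c\ell m,1/m)$ with $c=1-\ell^{-.49}\pm o(\ell^{-1})$. The mean is $\ell-\ell^{.51}\pm o(1)$.
\begin{itemize}
\item For the lower bound, Lemma \ref{genpoisson} with $c_1=1-\ell^{-.48}$ gives that the count is below $\ell-\ell^{.52}$ with probability $e^{-\Omega(\ell\cdot\ell^{-.96})}=e^{-\Omega(\ell^{.04})}$. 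This is high probability in $\ell$.
\item For the upper bound (at most $\ell$), Lemma \ref{poisson} with $c=1-\ell^{-.49}$ bounds the probability of reaching $\ell$ by $e^{-\ell\cdot\ell^{-.98}/2}=e^{-\ell^{.02}/2}$. This is again superpolynomially small in $\ell$.
\end{itemize}

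\textbf{Part 3.} Across all time at most $\ell m$ first hashes are revealed in total, so the hashes after $T(-\ell^{-.49})$ come from a window of $R_1$ of length at most $\ell^{.51}m+e^{-\ell^{.01}}m\le 2\ell^{.51}m$. The count of $B$ there is dominated by $\mathrm{Bin}(2\ell^{.51}m,1/m)$, whose mean is $2\ell^{.51}$. A Chernoff bound shows it exceeds $\ell^{.52}=\ell^{.01}\cdot\ell^{.51}$ with probability $e^{-\Omega(\ell^{.52}\log\ell)}$. The window is random, since it starts at $N$, so we take a union over the $O(1)$ possible endpoints after the sandwiching above, or simply over all starting points in a range of length $e^{-\ell^{.01}}m$ with a slightly larger fixed window.

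\textbf{Part 1.} This is the step I expect to be the main obstacle, since it concerns second hashes. Bin $B$ receives a second hash before $T(-\ell^{-.49})$ only if some evicted ball's second hash (an entry of $R_2$ among the first $e^{-\ell^{.01}}m$) equals $B$. A fixed $B$ is hit by one of these with probability at most $e^{-\ell^{.01}}$ by a union bound, which is high probability in $\ell$. If $B=h_2(x)$, the relevant event is instead that some \emph{other} ball's second hash lands on $B$ early. When $x$ itself is evicted early, $B$ receives $x$, which is a second hash. That happens only if $h_1(x)$ overflows, i.e.\ its $R_1$ count exceeds $\ell$, and Part 2 applied to $h_1(x)$ shows this has probability $e^{-\ell^{.02}/2}$. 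The dependence between $x$'s own hashes and the tapes is handled by the one-entry conditioning noted at the start. Combining the three parts with a union bound gives the lemma with high probability in $\ell$.
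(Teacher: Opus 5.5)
Your proposal is correct and follows essentially the same route as the paper. Both use the two-tape model, Lemma \ref{fewsecondhashes} to cap early second hashes at $e^{-\ell^{.01}}m$, and a union bound over $R_2$ for Part 1, with the $B=h_2(x)$ case reduced to $x$ being evicted from an overflowing $h_1(x)$; Parts 2 and 3 then come from binomial tails (Lemmas \ref{poisson} and \ref{genpoisson}) on sandwiching prefixes and on the final $(\ell^{.51}+e^{-\ell^{.01}})m$ entries of $R_1$. Two small tightenings: $h_1(x)$ can also fill up through early second-hash arrivals, not only through its $R_1$ count, which the paper handles by counting both tapes together and your own Part 1 union bound applied to $h_1(x)$ also covers; and Part 3 needs no union bound over endpoints, since the post-$T(-\ell^{-.49})$ window always lies inside that fixed final segment of $R_1$.
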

\begin{proof}
It is again helpful to assume that first hashes are read off a random tape $R_1$ and second hashes are read off a random tape $R_2$.

Then assuming Lemma \ref{fewsecondhashes} holds (which happens with high probability in $m$), we have that $B$ only will receive a second hash if it appears in the first $e^{-\ell^{.01}}m$ slots in $R_2$. Since $B$ has only a $1/m$ probability of appearing in any slot, a union bound tells us that the probability of $B$ appearing in the first $e^{-\ell^{.01}}m$ slots of $R_2$ is $e^{-\ell^{.01}}m$, which is indeed $O(\ell^{-.4})$.

We see from this two-tape analysis that $B$ being set equal to $h_1(x)$ does not break the logic in the previous paragraph, so we have proven (1.) if $B$ is $h_1(x)$ or a non-conditioned bin.

If $B$ is set to equal $h_2(x)$, we can now assume that all other second hashes are read off of $R_2$, but the second hash of $x$ is set to equal $B$. Therefore, we also need to be worried about the probability that $x$ is evicted. $x$ can only be evicted if $h_1(x)$ fills up by $T(-\ell^{.52})$. That can only happen if $h_1(x)$ appears at least $\ell-1$ additional times between the first $(\ell-\ell^{.51})$ slots of $R_1$ and the first $e^{-\ell^{.01}}m$ slots of $R_2$. The probability of this is at most \[\mb{P}(\mathrm{Bin}((1-\ell^{-.49}+e^{-\ell^{.01}})\ell m,1/m)\le 2e^{-\ell(\ell^{-.49}-e^{-\ell^{.01}})^2/2}\le e^{-\ell^{.01}}\]
by Lemma \ref{poisson}. This completes the proof that (1.) holds with high probability in $\ell$.

Assuming that Lemma \ref{fewsecondhashes} holds, for the lower bound in statement (2.), it is sufficient to show that $B$ appears at least $\ell - \ell^{.52}$ times in the first $(\ell-\ell^{.51}-e^{-\ell^{.01}})m$ slots of $R_1$. (The previous sentence is still true if we condition on $B=h_1(x)$ or $B=h_2(x)$.) The probability of this not happening is \[
\mb{P}(\mathrm{Bin}((\ell-\ell^{.51}-e^{-\ell^{.01}})m, 1/m)\le \ell-\ell^{.52})\le e^{-\ell(\ell^{-.48}-\ell^{-.49}-e^{-\ell^{.01}})^2/2}\le e^{-\ell^{.001}/2}
\]by Lemma \ref{genpoisson}. Conditioning on $B=h_2(x)$ has no effect on this, and conditioning on $B=h_1(x)$, similarly to above, can allow us to read all other first hashes (except that of $x$) off of $R_1$, and the above equation still holds.

For the upper bound in Statement (2.), it suffices to show that $B$ appears at most $\ell$ times in the first $\ell-\ell^{.51}$ slots of $R_1$ (or at most $\ell-1$ times if $B=h_1(x)$). Then \[\mb{P}(\mathrm{Bin}((1-\ell^{-.49})\ell m,1/m)\ge \ell-1)\le (1-\ell^{-.49})^{-1}e^{-\ell(\ell^{-.49})^2/2}\]by Lemma \ref{poisson}, giving us a with high probability in $\ell$ statement as desired.

Finally, assuming that Lemma \ref{fewsecondhashes} holds, for statement (3.) to hold it suffices to show that $B$ appears at most $\ell^{.52}$ times in the final $(\ell^{.51}+e^{-\ell^{.01}})m$ indices in $R_1$ (taking $R_1$ to have length $\ell m$). The probability that this does not hold is then \[
\mb{P}(\mathrm{Bin}((\ell^{.51}+e^{-\ell^{.01}})m, 1/m)\ge \ell^{.52})\le e^{-\ell(\ell^{-.48}-\ell^{-.49}-e^{-\ell^{.01}})^2/2}\le e^{-\ell^{.001}/2}
\]as desired, again by Lemma \ref{genpoisson}. Note that conditioning on $B=h_2(x)$ has no effect here, and conditioning on $B=h_1(x)$ may increase its number of first hashes by at most one, so the proof still goes through but with $\ell^{.52}$ replaced with $\ell^{.52}-1$.
\end{proof}
The next natural extension, now that we have bounded the number of first hashes to $B$, is to bound the number of second hashes to $B$, and extend these to any times in our process. This is done in the following lemma.
\begin{lemma}
Let $x$ be a ball, and let $B$ be a bin, possibly with $B$ defined to be either $h_1(x)$ or $h_2(x)$. With high probability in $\ell$, the following is true: At all times until $\tau_2$ occurs, if $r$ is the current value of $r$ at that time, then $B$ has received $\ell + r \ell \pm 2\ell^{.52}$ total hashes.    
    \label{lem:numhashes}
\end{lemma}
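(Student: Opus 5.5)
We would prove Lemma \ref{lem:numhashes} by splitting the hashes received by $B$ into first hashes and second hashes, and handling each with the two-tape device used in Lemmas \ref{fewsecondhashes} and \ref{lem:numfirsthashes}. Here $R_1$ supplies revealed first hashes and $R_2$ supplies revealed second hashes. After $\tau_1$, as in Lemma \ref{adapted1}, a checked-full hash is resolved by rejection sampling, which distorts the law of any single bin by at most a $1+O(.95^\ell)$ factor.

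For first hashes, Lemma \ref{lem:numfirsthashes} already gives what we need. With high probability in $\ell$, $B$ receives only first hashes before $T(-\ell^{-.49})$. At $T(-\ell^{-.49})$ it has between $\ell-\ell^{.52}$ and $\ell$ first hashes, and afterward it gains at most $\ell^{.52}$ more. So for all $r\ge -\ell^{-.49}$, the first-hash count of $B$ is $\ell\pm\ell^{.52}$. For $r<-\ell^{-.49}$, the only hashes revealed are first hashes, so the count is the number of occurrences of $B$ in the first $(1+r)\ell m$ entries of $R_1$. A Chernoff/Lemma \ref{genpoisson} bound, taken over a grid of $O(\ell)$ values of $r$ and interpolated by monotonicity, gives $(1+r)\ell\pm\ell^{.52}$ for this count.

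For second hashes, we use that when $(1+r)\ell m$ hashes are revealed with $r\ge 0$, the number of second hashes revealed is $r\ell m\pm e^{-\ell^{.01}}m\pm(\text{uninserted})$. Once $r\ge -\ell^{-.49}$, essentially all first hashes are revealed, and Lemma \ref{fewsecondhashes} bounds the early second hashes. Hence, up to the small correction from first hashes revealed after $T(-\ell^{-.49})$, the number of second hashes at $B$ is the number of appearances of $B$ among the first roughly $(\max(r,0)\pm\ell^{-.49})\ell m$ entries of $R_2$. This is $\mathrm{Bin}(s\ell m,(1\pm O(.95^\ell))/m)$ with $s\le .6$. Its deviation exceeds $\ell^{.52}/2$ only with probability $e^{-\Omega(\ell^{.04})}$ by Lemma \ref{genpoisson}, since the standard deviation is $O(\sqrt\ell)$. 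Summing the first- and second-hash counts, the total is $\ell+r\ell\pm 2\ell^{.52}$.

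The conditioning $B=h_1(x)$ or $B=h_2(x)$ shifts each count by at most one, which is absorbed exactly as in Lemma \ref{lem:numfirsthashes}.

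The main obstacle is making the bound hold \emph{at all times} simultaneously, rather than at a fixed $r$. We would discretize $r$ into steps of $\ell^{-.5}$, of which there are $O(\ell^{.5})$. In each step, $B$ gains only $O(\ell^{.5})\ll\ell^{.52}$ hashes with high probability in $\ell$. A union bound over the grid then costs only a polynomial factor against the $e^{-\ell^{\Omega(1)}}$ failure probabilities. A second delicate point is that after $\tau_1$ the look-ahead makes second hashes prefer bins with free slots. Such bins number at most $.95^\ell m$ and are declared corrupt. We would handle them by the same bounded-bias argument as in Lemma \ref{adapted1}, noting that a bias of $.95^\ell$ changes the mean by only $O(\ell\cdot .95^\ell)=o(1)$.
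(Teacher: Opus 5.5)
Your proposal is correct and follows essentially the same route as the paper. It takes first-hash counts from Lemma~\ref{lem:numfirsthashes}, gets second-hash counts by binomial concentration on the tape $R_2$ with rejection/skipping of non-full bins after $\tau_1$, and uses a union bound over polynomially many values of $r$ glued together by monotonicity (the paper uses the $O(\ell)$ values of $r$ at which $\lfloor r\ell\pm 2\ell^{.52}\rfloor$ jumps, rather than your $\ell^{-.5}$-grid). Your $\pm\ell^{-.49}$ window for balls still uninserted after $T(-\ell^{-.49})$ is actually more careful than the paper's bound of $r\ell m+e^{-\ell^{.01}}m$ on revealed second hashes; at $T(0)$ roughly $\sqrt{\ell}\,m$ balls remain uninserted, so that bound fails there, though harmlessly, since the per-bin error is absorbed by the $\ell^{.52}$ slack. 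The one step you leave implicit should be stated, as the paper does: your pre-$\tau_1$ bound at $r=.4$ already forces $B$ to be full at $\tau_1$, so $B$ is never one of the favored non-full bins.
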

\begin{proof}
Throughout this lemma, we will assume that Lemmas \ref{fewsecondhashes} and \ref{Tminus51} hold, and again consider the random tapes $R_1$ and $R_2$.

Note that we can assume $r$ is an integral multiple of $1/(\ell m)$ for $T(r)$ to be well-defined. Furthermore, it then suffices to prove this for those $r$ such that $\lfloor r\ell+2\ell^{.52}\rfloor$ is one lower than $\lfloor (r+1/(\ell m))\ell+2\ell^{.52}\rfloor$ and those $r$ such that $\lfloor r\ell-2\ell^{.52}\rfloor$ is one higher than $\lfloor (r-1/(\ell m))\ell-2\ell^{.52}\rfloor$, as if the lemma fails for any $r$, it must fail for such an $r$. This gives us that we only have $O(\ell)$ values of $r$ that we are interested in proving this statement for. Since we are looking for a ``with high probability in $\ell$'' statement, this means that it suffices to prove that the with high probability statement holds for any particular $r$, as that then overcomes this $O(\ell)$ union bound.

Case 1: $0\le r\le .4$.  By points (2.) and (3.) of Lemma \ref{Tminus51}, it must be true that the number of first hashes that $B$ has received at $T(r)$ is $\ell\pm O(\ell^{.52})$. By Lemma \ref{fewsecondhashes}, the number of second hashes revealed is at least $rm$ and at most $r\ell m+e^{-\ell^{.01}}m$. Then the probability that $B$ receives fewer than $r-\ell^{.52}$ second hashes is at most \begin{equation}
    \mb{P}(\mathrm{Bin}(r\ell m,1/m)\le r\ell-\ell^{.52})\le e^{-\ell(\ell^{-.48})^2/2}\label{lowerr}
\end{equation}by Lemma \ref{genpoisson}, giving us that with high probability in $\ell$, $B$ receives at least $r-\ell^{.52}$ second hashes. Similarly, the probability that $B$ receives more than $r+\ell^{.52}$ second hashes is at most \begin{equation}
\mb{P}(\mathrm{Bin}((r+\ell^{-1}e^{-\ell^{.01}})\ell m,1/m)\ge r\ell+\ell^{.52})\le e^{-\ell(\ell^{-.48}-\ell^{-1}e^{-\ell^{.01}})^2/2}\label{upperr}
\end{equation} again by Lemma \ref{genpoisson}, giving us that with high probability in $\ell$, $B$ receives at most $r+\ell^{.52}$ second hashes.

These both remain true if $B$ is conditioned to equal $h_1(x)$, as they both only deal with the random tape $R_2$. If $B$ is conditioned to equal $h_2(x)$, then we can now assume that all other second hashes are read off of $R_2$, but the second hash of $x$ is set to equal $B$, and then the same proof goes through but with equation \eqref{lowerr} altered to $\mb{P}(\mathrm{Bin}(rm-1,1/m)\le r-\ell^{.52})$ for the lower bound on number of second hashes and equation \eqref{upperr} altered to $\mb{P}(\mathrm{Bin}((r+e^{-\ell^{.01}})m,1/m)\ge r+\ell^{.52})-1$ for the upper bound on number of hashes.

Case 2: $r>.4$. We have shown that with high probability in $\ell$, $B$ has at least $1.4\ell-O(\ell^{.52})$ hashes at $T(.4)$, so $B$ does not have any empty slots remaining at $T(.4)$. Now, we must also consider that the second hashes revealed will have also been checked. Assume that there is a third source of randomness, $R_3$, that determines whether a check is to a non-full bin or to a full bin. Then, say that $R_2$ is only read from when a second hash to a full bin is revealed, and there is a fourth source of randomness $R_4$ that determines the revealed for second hashes that were checked to go to a non-full bin. $R_2$ still has all bins in it (each with probability $1/m$ in each slot), but when an entry of $R_2$ is read that points to a non-full bin, we simply skip over that entry and read off the next one, continuing from there.

To bound how many entries are skipped, it suffices to show that there is some quantity $q$ such that with high probability, at most $q$ of the entries between positions $.4\ell m-e^{-\ell^{.01}}m$ and $.6\ell m + q$ of $R_2$ are non-full at $\tau_1$. We claim that this is true for $q=m/\ell$. Assume that Lemma \ref{fewemptyattau1} holds. As there are at most $.95^\ell m$ non-full bins at $\tau_1$, we have by standard Chernoff bounds that \[\mb{P}\left(\mathrm{Bin}\left(\left(.2\ell+e^{-\ell^{.01}}+\ell^{-1}\right) m,.95^\ell\right)\ge m/\ell\right)\le \left(\frac{.95^\ell\left(.2\ell+e^{-\ell^{.01}}+\ell^{-1}\right) em}{m/\ell}\right)^{m/\ell}\le e^{-m/\ell}\]for sufficiently large $\ell$, giving us a with high probability in $\ell$ statement as desired.

For the upper bound on the number of hashes to $B$, it then suffices to bound the number of times that $B$ appears in $R_2$ between position $.4\ell m-e^{-\ell^{.01}}m$ and position $r\ell m+m/\ell$. The same proof as in Case 1 then goes through but with Equation \eqref{upperr} altered to $\mathrm{Bin}((r+\ell^{-1}e^{-\ell^{.01}}+\ell^{-2})\ell m,1/m)$.

Then for the lower bound, we also need to lower bound the number of times that $R_2$ is read from. For this, it suffices to note that every time a hash is revealed to a non-full bin, an empty slot gets filled. Lemma \ref{fewemptyattau1} then implies that this can happen at most $.95^\ell m$ times, so the same proof as in Case 1 goes through but with Equation \eqref{lowerr} altered to $\mathrm{Bin}((r-\ell^{-1}(.95)^\ell)\ell m,1/m)$.

Case 3: $r\le 0$. We know by the above with $r=0$ that $B$ (with high probability in $\ell$) receives at most $\ell^{.52}$ second hashes before $T(0)$. Then, regarding first hashes, the proof goes through analogously to case 1, but now analyzing $R_1$, and with \eqref{lowerr} altered to $\mb{P}(\mathrm{Bin}((r-\ell^{-.48})\ell m,1/m)\le r\ell-2\ell^{.52}$ and equation \eqref{upperr} altered to $\mathrm{Bin}(r\ell m,1/m)$.
\end{proof}

We now have shown how the pattern of first and second hashes to an average $B$ evolves over time. The next few lemmas will show how this interacts with our notion of priority. First, a more basic lemma to set the framework:

\begin{lemma}
Let $x$ be a ball, and let $B$ be a bin, possibly with $B$ defined to be either $h_1(x)$ or $h_2(x)$. With high probability in $\ell$, the following are true:\begin{enumerate}
    \item $B$ never becomes corrupt
    \item Every time an element is evicted from $B$, it is the pure-first ball with the lowest priority.
\end{enumerate}
    \label{lem:lowestpriority}
\end{lemma}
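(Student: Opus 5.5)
Both parts reduce to a single bin-level claim: with high probability in $\ell$, the bin $B$ (possibly equal to $h_1(x)$ or $h_2(x)$) receives fewer than $\ell-1$ second hashes before $\tau_2$. Part (1) is then immediate, since corruption needs $\ell-1$ balls in the first $\ell-1$ slots on their second hash. The same holds for preemptive corruption at $\tau_1$: by Lemma \ref{lem:numhashes} with $r=.4$, $B$ has received $1.4\ell-O(\ell^{.52})>\ell$ hashes at $\tau_1$ and so has no free slot there. Every time a hash is revealed to a non-full bin the ball is placed there, so $B$ is full at $\tau_1$ and is not declared corrupt then.

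For the second-hash count, I apply Lemma \ref{lem:numhashes} at $r$ just below $.6$, giving at most $1.6\ell+2\ell^{.52}$ total hashes to $B$. Lemma \ref{Tminus51} gives at least $\ell-\ell^{.52}$ first hashes. So $B$ receives at most $.6\ell+3\ell^{.52}<\ell-1$ second hashes before $\tau_2$ for large $\ell$. Since every ball in $B$ on its second hash accounts for a distinct second hash to $B$, $B$ never has $\ell-1$ second-hash balls and is never corrupt. The conditioning on $B=h_1(x)$ or $h_2(x)$ is already handled inside those lemmas. For insertions after $\tau_2$ (which the stopping time avoids w.h.p.\ by Lemma \ref{epsstarsecond}), the failure probability is absorbed into the $O(\ell^{-6}m^{-1})$ term.

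For part (2), I go case by case through the rules of the prioritized algorithm, conditioning on part (1). Rule (1) never evicts. Rules (2) and (3) require $B$ to be corrupt (rule (3) is only reached from a sequence that began at a corrupt bin, and it places into the last slot of a bin the ball had previously been in, which by the analogue of Lemma \ref{corruptreturn} means $x$'s second bin was corrupt). So the only risk is rule (3) firing at $B$ when some ball returns to $B$ from a corrupt $h_2$. This needs a ball whose first hash is $B$ and whose second hash is a corrupt bin at its eviction time. By Lemma \ref{adapted1} at most $.96^\ell m$ bins are corrupt, and second hashes are nearly uniform by the tape argument. With at most $O(\ell)$ balls first-hashed to $B$, the probability is $O(\ell\cdot .96^\ell)$, which is small w.h.p.\ in $\ell$. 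Before $\tau_1$, rule (4) evicts the minimal-priority pure-first ball by definition. After $\tau_1$, rule (5) does the same. The delicate case is rule (4) after $\tau_1$: it evicts the lowest-priority ball among those whose second hash goes to a non-full bin, which need not be the overall lowest pure-first ball. I would bound the probability that $B$ ever has a pure-first ball whose second hash lands in a non-full bin after $\tau_1$. There are at most $O(\ell)$ such balls, each pointing to one of at most $.95^\ell m$ non-full bins (Lemma \ref{fewemptyattau1}), so this probability is $O(\ell\cdot .95^\ell)$. This uses the fact that checks on full bins reveal nothing, together with the skip-entry tape argument from Lemma \ref{lem:numhashes}.

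I expect the main obstacle to be this last independence argument: ensuring that conditioning on $B=h_1(x)$ or $h_2(x)$ and the adaptive checking order do not bias the second hashes of balls in $B$ toward non-full or corrupt bins. My first attempt would reuse the multi-tape construction with rejection sampling from the proofs of Lemmas \ref{adapted1} and \ref{lem:numhashes}. There, knowledge that a hash was checked to be full multiplies each bin's probability by at most $1.01$, so a union bound over the $O(\ell)$ balls first-hashed to $B$ suffices.
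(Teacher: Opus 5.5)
Your proposal is correct and follows the paper's proof. Part (1) comes from the second-hash count given by Lemma \ref{lem:numhashes}; you make the subtraction of the $\ge \ell-\ell^{.52}$ first hashes explicit. Part (2) comes from a union bound over the $\le \ell+\ell^{.52}$ balls first-hashed to $B$, each of which has probability at most $.95^\ell$ of a second hash into a bin that is non-full at $\tau_1$.

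You also treat two cases the paper passes over with ``by the algorithm definition'': rule (3) firing at a non-corrupt $B$, which you bound by $O(\ell\cdot .96^\ell)$ via Lemma \ref{adapted1}, and preemptive corruption at $\tau_1$. This strengthens the argument rather than changing it. One small wording fix: rule (3) does not require $B$ itself to be corrupt, though your next sentence already handles exactly that case.
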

\begin{proof}
(1.) holds because by Lemma \ref{lem:numhashes}, at $\tau_2$, $B$ has still only received at most $.6\ell+2\ell^{.52}$ second hashes.

(2.) then holds for times up to $\tau_1$ by the algorithm definition. To show (2.) for times after $\tau_1$, it suffices to show that none of the elements $x$ with $h_1(y) = B$ have bins $h_2(y)$ that, when $\tau_1$ occurs, have free slots. We can also assume that these $h_2(y)$ are unrevealed by $\tau_1$, as otherwise $x$ would be in $h_2(y)$ at $\tau_1$ and could only return to $B$ after $h_2(y)$ is full.

For this, we do not think of $R_1$ and $R_2$, but now instead just take the balls with $h_1(y)=B$ and unrevealed second hashes at $\tau_1$, and look at the chances of any of those being a bin that is non-full at $\tau_1$. By points (2.) and (3.) of Lemma \ref{lem:numfirsthashes}, there are at most $\ell+\ell^{.52}$ first hashes that equal $B$, each of which (by Lemma \ref{fewemptyattau1}) have probability at most $.95^\ell$ of having their second hash go to a bin that is not full at $\tau_1$. A union bound then gives us that the probability that any of them have a second hash to a non-full bin is $(\ell+\ell^{.52})(.95)^\ell$, which shows that with high probability none of them have a second hash to a non-full bin.
\end{proof}
Now, we have two tasks: to show that the balls with sufficiently high priorities are not evicted, and to show that the balls with sufficiently low priorities are evicted. The next two lemmas prove these two sides of the coin. We will start by showing that balls with high priorities are not evicted ``before their time'':
\begin{lemma}
Let $x$ be a ball, and let $B$ be a bin, possibly with $B$ defined to be either $h_1(x)$ or $h_2(x)$. With high probability in $\ell$, it is true that at all times $T(r)$ up until $\tau_2$, $B$ has evicted only elements with priorities at most $r + \ell^{-.47}$.
\label{lem:bunchofcorollaries}
\end{lemma}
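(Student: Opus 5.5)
We need to prove: with high probability in $\ell$, at every time $T(r)$ before $\tau_2$, every ball evicted from $B$ so far has priority at most $r+\ell^{-.47}$. The plan is a counting argument comparing evictions from $B$ with the low-priority balls that $B$ holds.

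First we put ourselves in the good event of Lemmas \ref{lem:numfirsthashes}, \ref{lem:numhashes} and \ref{lem:lowestpriority}, each of which holds with high probability in $\ell$. On this event:
\begin{itemize}
\item $B$ never becomes corrupt;
\item every eviction from $B$ removes the lowest-priority pure-first ball currently in $B$;
\item at every time $T(r)$, $B$ has received $\ell+r\ell\pm 2\ell^{.52}$ hashes, of which $\ell\pm O(\ell^{.52})$ are first hashes.
\end{itemize}

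Because $B$ is never corrupt, a ball on its second hash in $B$ is never evicted again. The balls leaving $B$ are therefore pure-first balls that are never returned, so each departure is a fresh eviction. Since $B$ is full whenever it evicts, the number of evictions from $B$ by time $T(r)$ is at most (hashes received) $-\ell \le r\ell+2\ell^{.52}$. This is at most about $r\ell + O(\ell^{.52})$ of the roughly $\ell\pm O(\ell^{.52})$ first-hash balls of $B$.

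Next we use the priorities. The priorities $p(y)$ of the balls $y$ with $h_1(y)=B$ are i.i.d.\ uniform on $(0,1)$. We justify independence from the hash and eviction-count process with the tape argument used in the proof of Theorem \ref{prioritythm}: priorities only decide which ball within a bin is evicted, not the revealed second hashes. If $B=h_1(x)$, then $p(x)$ is the only conditioned value and perturbs the counts by at most one.

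Fix a threshold $s=r+\ell^{-.47}$. The number of first-hash balls of $B$ with priority at most $s$ is $\mathrm{Bin}(\ell\pm O(\ell^{.52}), s)$. A Chernoff bound shows that with probability $1-e^{-\Omega(\ell^{.06})}$ this count is at least $s\ell-\ell^{.52}\ge r\ell+\ell^{.53}-O(\ell^{.52})$. That exceeds the eviction bound $r\ell+2\ell^{.52}$. The same holds for the balls of $B$ that had arrived by $T(r)$, using the first-hash arrival bounds of Lemma \ref{lem:numfirsthashes}.

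Since evictions always take the lowest-priority pure-first ball, a ball with priority above $s$ can be evicted only after every pure-first ball of priority at most $s$ has left $B$. The count above shows this cannot happen by time $T(r)$. To make this monotone argument clean, we note that the set of evicted balls is always an initial segment in priority order among those present. We also handle balls arriving after $T(-\ell^{-.49})$, at most $\ell^{.52}$ of them, by absorbing them into the error term.

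To cover all times, we discretize $r$ into $O(\ell)$ breakpoints, as in the proof of Lemma \ref{lem:numhashes}, and take a union bound. Between breakpoints, both the eviction count and the threshold change by $O(1/\ell)$ in $r$, so the slack $\ell^{.53}$ versus $\ell^{.52}$ absorbs them.

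\textbf{Main obstacle.} The hardest step is the independence and initial-segment claim when $B=h_2(x)$ or when late-arriving first-hash balls join $B$ with arbitrary priorities. A newly arriving low-priority ball can be evicted before older ones, but it only strengthens the count of low-priority balls. So the plan is to argue directly: the $k$-th eviction has priority at most the $k$-th smallest priority among all first-hash balls ever assigned to $B$ by that time. That quantity is then controlled by the binomial/Chernoff estimate above.
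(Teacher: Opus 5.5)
Your proposal is correct and follows essentially the paper's argument: bound the evictions from $B$ by $r\ell+2\ell^{.52}$ using Lemmas \ref{lem:numhashes} and \ref{lem:lowestpriority}, use a Chernoff bound to show that the first-hash balls of $B$ with priority at most $r+\ell^{-.47}$ outnumber these evictions, and union bound over $O(\ell)$ values of $r$. Your initial-segment invariant (every evicted ball has priority no larger than any pure-first ball still in $B$) makes explicit a step the paper leaves implicit. There is one small quantitative slip: with $\ell-O(\ell^{.52})$ trials, the count is only guaranteed to be at least $s\ell-O(\ell^{.52})-\ell^{.53}/2$ (not $s\ell-\ell^{.52}$) with probability $1-e^{-\Omega(\ell^{.06})}$, but this still exceeds $r\ell+2\ell^{.52}$, so nothing breaks.
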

\begin{proof}
We first note (as in Lemma \ref{lem:numhashes}) that we only need to union bound over $O(\ell)$ values of $r$ -- here, the values of $r$ such that there is a ball $y$ with $h_1(y)=B$ and $p(y)$ between $r+\ell^{.53}$ and $r+1/(\ell m)+\ell^{.53}$; in other words, the final $r$ values before there is one additional ball below the desired priority. So again it suffices to prove a with high probability in $\ell$ statement for any particular $r>\ell^{.53}$.

At $T(r)$, $B$ has received $\ell + r\ell\pm 2\ell^{.52}$ total hashes (by Lemma \ref{lem:numhashes}), of which all after the first $\ell$ evicted the pure-first ball with the lowest priority (by Lemma \ref{lem:lowestpriority}). Thus, $B$ has performed at most $r\ell+2\ell^{.52}$ evictions.

Lemma \ref{lem:numfirsthashes} tells us that by $T(\ell^{-.49})$ (and thus by $T(r)$ for $r>\ell^{-.47}$), $B$ has received at least $\ell$ first hashes. So, for this lemma to fail, we would need that fewer than $r\ell+2\ell^{.52}$ of those $\ge\ell$ first hashes have priority at most $r+\ell^{-.47}$. The probability of this is \[\mb{P}(\mathrm{Bin}(\ell,r+\ell^{-.47})\le r\ell+2\ell^{.52})\le e^{-\frac{(\ell^{.53}-2\ell^{.52})^2}{2\mb{E}[\mathrm{Bin}(\ell,r+\ell^{-.47})]}}\le e^{-(\ell^{1.05})/(2\ell)}\]by a standard (additive) Chernoff bound, giving that (2.) holds with high probability in $\ell$.
\end{proof}

\begin{lemma}
Let $x$ be a ball, and let $B$ be a bin, possibly with $B$ defined to be either $h_1(x)$ or $h_2(x)$. Then with high probability in $\ell$, it is true that at all times $T(r)$, we have that $B$ does not contain any pure-first ball with priority less than $r-\ell^{-.45}$. \label{lem:evictedfirsthashes}
\end{lemma}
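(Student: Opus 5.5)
The plan mirrors Lemma \ref{lem:bunchofcorollaries}, but counts in the other direction: I will show that by time $T(r)$ the bin $B$ has performed \emph{enough} evictions. Because evictions always remove the lowest-priority pure-first ball, enough evictions force every low-priority pure-first ball out. As in Lemma \ref{lem:numhashes}, it suffices to handle $O(\ell)$ critical values of $r$, namely those at which the count of first-hash balls of $B$ below priority $r-\ell^{-.45}$ changes. So I fix a single $r$ and prove a with-high-probability-in-$\ell$ bound, then union bound. For $r\le\ell^{-.45}$ the statement is vacuous, so I assume $r>\ell^{-.45}$.

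The first step counts evictions. By Lemma \ref{lem:numhashes}, at $T(r)$ the bin $B$ has received at least $\ell+r\ell-2\ell^{.52}$ hashes. By Lemma \ref{lem:lowestpriority}, $B$ is never corrupt, so every hash arriving at $B$ while it is full triggers exactly one eviction. Moreover, by Lemma \ref{lem:lowestpriority}(2) that eviction removes the lowest-priority pure-first ball. The only other thing a hash can do is fill an empty slot, and $B$ has at most $\ell$ slots to fill. Hence $B$ has performed at least $r\ell-2\ell^{.52}$ evictions of lowest-priority pure-first balls.

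The second step counts the low-priority balls. By Lemma \ref{lem:numfirsthashes}, $B$ receives at most $\ell+\ell^{.52}$ first hashes over all time. The number of those with priority below $r-\ell^{-.45}$ is therefore dominated by $\mathrm{Bin}(\ell+\ell^{.52},r-\ell^{-.45})$. This has mean at most $r\ell-\ell^{.55}+\ell^{.52}$, so by an additive Chernoff bound it is below $r\ell-2\ell^{.52}$ except with probability $e^{-\Omega(\ell^{1.1}/\ell)}=e^{-\Omega(\ell^{.1})}$.

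The third step combines these counts. Suppose some pure-first ball $y$ in $B$ has $p(y)<r-\ell^{-.45}$ at $T(r)$. Consider any eviction by $B$ that took place while $y$ was already present. That eviction removed a pure-first ball of priority at most $p(y)$, and since priorities are distinct, every such eviction removed a distinct ball of priority below $r-\ell^{-.45}$, counting $y$ itself.

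The main obstacle is handling evictions that occurred \emph{before} $y$ arrived in $B$. These might have removed higher-priority balls, which would break the count. I would address this as follows:
\begin{itemize}
\item By Lemma \ref{lem:numfirsthashes}(2),(3), at most $\ell^{.52}$ first hashes arrive after $T(-\ell^{-.49})$.
\item By Lemma \ref{lem:numfirsthashes}(1),(2), $B$ performs essentially no evictions before $T(-\ell^{-.49})$, at most $O(\ell^{.52})$ of them.
\item Consequently, at most $O(\ell^{.52})$ of the $r\ell-2\ell^{.52}$ evictions precede $y$'s arrival.
\end{itemize}
If $y$ arrived late, I instead apply the priority argument to the moment $y$ arrived, using the fact that an arriving ball of low priority is itself evicted immediately whenever it is the minimum. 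This gives at least $r\ell-O(\ell^{.52})$ distinct evicted balls with priority below $r-\ell^{-.45}$, plus $y$ itself. That total exceeds the Chernoff bound on the number of such balls from the second step, a contradiction. Since the gap $\ell^{.55}$ dominates the error terms $O(\ell^{.52})$, the margins suffice.
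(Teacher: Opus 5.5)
Your approach works and is organized differently from the paper's, but the late-arrival case relies on an ingredient you only gesture at.

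\textbf{Comparison with the paper.} The paper's Case 1 is essentially your Steps 1--2: pure-first balls that arrived before $T(-\ell^{-.49})$ with priority below $r-\ell^{-.46}$ are gone by $T(r)$. Case 2 then handles everything else, including late arrivals, in three moves. It applies Case 1 at the earlier time $T(r-\ell^{-.46})$. It uses the $\Omega(\ell^{.54})$ hashes arriving between $T(r-\ell^{-.46})$ and $T(r)$, together with the scarcity of late first hashes, to find an early ball $z$ with $p(z)\ge r-2\ell^{-.46}$ evicted in that window. It then concludes from the fact that, once $B$ is full, the minimum priority among pure-first balls in $B$ never decreases. You instead run the counting argument directly on a hypothetical surviving ball $y$, which avoids the time shift and the second threshold.

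\textbf{The late-arrival step.} Your third bullet only covers $y$ that arrived before $T(-\ell^{-.49})$; in that case no eviction precedes $y$ at all. If $y$ arrives late, $\Theta(r\ell)$ evictions may precede it. ``Applying the priority argument to the moment $y$ arrived'' then yields only one displaced ball $z$ with $p(z)<p(y)$. It says nothing yet about the earlier evictions, so the claimed count of $r\ell-O(\ell^{.52})$ is not justified as written.

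What closes the gap is the same monotonicity the paper uses, and it follows from the property you cite. Let $\mu(t)$ be the minimum priority of pure-first balls in $B$ once $B$ is full. Each arrival removes the minimum of the pure-first set together with the arriving ball (the latter only if it is a first-hash arrival). Hence $\mu$ is nondecreasing. Moreover, every eviction at time $t$ removes a ball of priority at most $\mu(t^-)\le \mu(T(r))\le p(y)$. Equivalently, you can descend along the chain $y$, $z$, the ball $z$ displaced, and so on, until reaching a ball present before $B$ filled.

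Once this is stated, your early/late split and the bullets based on Lemma \ref{lem:numfirsthashes}(1),(3) become unnecessary. All $\ge r\ell-2\ell^{.52}$ evictions by $T(r)$ remove distinct balls with $h_1=B$ and priority below $p(y)$. Adding $y$ itself, Step 2 gives the contradiction directly, with no $O(\ell^{.52})$ loss.
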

One caveat: there may be lower-priority balls $y$ that enter the hash table later in our process and reveal $h_1(y)=B$, who are then immediately evicted from $B$ themselves, rather than evicting a different ball. As these balls were never assigned a slot in $B$, we do not consider $B$ to contain them.
\begin{proof}
Once again, note that it suffices to prove this with high probability in $\ell$ for any $r>0$. This is because there are $O(\ell)$ things to union bound over: for each ball $y$ with $h_1(y)=B$, the minimal value of $r$ such that $r-\ell^{-.45}>p(y)$.

\textbf{Case 1:} First, consider the balls $y$ that has $h_1(y)$ revealed to be $B$ before $T(-\ell^{-.49})$. We will show the lemma applies to them but with priorities less than $r-\ell^{-.46}$. By point (1.) of Lemma \ref{lem:numfirsthashes}, no ball has yet been evicted from $B$. By points (2.) and (3.) of Lemma \ref{lem:numfirsthashes}, $B$ receives at most $\ell+\ell^{.52}$ first hashes over the insertion process. Lemma \ref{lem:numhashes} tells us that by $T(r)$, there have been at least $\ell+r\ell-2\ell^{.52}$ hashes, of which at least $r\ell-3\ell^{.52}$ are second hashes. Lemma \ref{lem:lowestpriority} says that each of those $r\ell-3\ell^{.52}$ second hashes evicts the lowest-priority ball in $B$.

So for a pure-first ball who was in $B$ by $T(-\ell^{-.49})$ not to be evicted with priority less than $r-\ell^{-.46}$ means that there must be at least $r\ell-3\ell^{.52}$ of the $\le\ell+\ell^{.52}$ balls with priority less than $r-\ell^{-.46}$. The probability of this is \[\mb{P}(\mathrm{Bin}(\ell+\ell^{.52},r-\ell^{-.45})\ge r\ell-3\ell^{.52})\le e^{-\frac{(\ell^{.54} + (3-r)\ell^{.52}+\ell^{.07})^2}{2(r\ell + r\ell^{.52}-\ell^{.54}-\ell^{.07})}}\le e^{\ell^{-.07}}\]using an additive Chernoff bound. So, with high probability in $\ell$, all such balls are evicted.

\textbf{Case 2:} Note that once $B$ is full, the priority of the lowest-priority ball in $B$ can never decrease, as any lower-priority ball that arrives by Lemma \ref{lem:lowestpriority} will itself gets evicted immediately. Therefore, to finish the proof, it suffices to show that there exists a ball with priority at least $r-\ell^{-.45}$ is evicted from $B$ before $T(r)$.

By Lemma \ref{lem:numhashes}, there are at least $\ell^{.54}-3\ell^{.52}$ second hashes that arrive in $B$ between $T(r-\ell^{-.46})$ and $T(r)$. As there are only at most $\ell^{.52}$ first hashes that arrive after $T(-\ell^{-.49})$, some pure-first ball that arrived before $T(-\ell^{-.49})$ is evicted by a second hash that arrives between $T(r-\ell^{-.46})$ and $T(r)$. By Case 1 applied on $T(r-\ell^{-.46})$, that ``early arriving'' ball $z$ had priority at least $r-2\ell^{-.46}>r-\ell^{-.45}$ and was evicted.
\end{proof}

\prioritylem*
\begin{proof}
By Lemma \ref{lem:lowestpriority}, we have with high probability in $\ell$ that neither bin $h_1(x)$ nor $h_2(x)$ is corrupt. So, which bin $x$ is in will depend only on whether it was ever evicted from $h_1(x)$. By Lemma \ref{lem:bunchofcorollaries}, bin $h_1(x)$ has (with high probability in $\ell$) only evicted pure-first elements with priorities $\le r + \ell^{-.47}$. By Lemma \ref{lem:evictedfirsthashes}, it has evicted all pure-first elements with priorities $\ge r  - \ell^{-.45}$. So, the only case we must worry about is if $p(x) \in [r - \ell^{-.45}, r + \ell^{-.47}]$. 

\end{proof}

\section{Comparison to Random Walk}\label{sec:separation}
The random walk insertion procedure, when landing on a bin $B$, chooses which ball to evict from $B$ uniformly at random. In this section, we show that there is a sense in which random walk insertion is provably worse than our first insertion algorithm. In particular, we prove the following theorem:

\begin{restatable}{theorem}{rwbad}\label{rwbad}
Assume $\ell\le 1.5\ln(m)$. Fix $M$ such that $3\le M\le(\ln\ell)/20$ and assume that the random walk procedure has performed at most $M\ell m$ evictions over the insertion process. Then with high probability in $m$, we have that there exists an $C=C(M)>0$ such that the current load factor is at most $1-(2/e)^{\ell-C\ell}$
\end{restatable}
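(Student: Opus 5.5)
The plan is to show that with only $M\ell m$ evictions, random walk cannot use enough \emph{fresh} second hashes, and so many bins with few incoming hashes keep free slots that could otherwise have been filled. Only revealed hashes can ever fill a free slot: a slot in bin $B$ is occupied only by a ball one of whose revealed hashes is $B$. It therefore suffices to show that, with high probability in $m$, the number of distinct (ball, hash) pairs revealed after $M\ell m$ evictions is at most $(2-\eta)\ell m$ for some $\eta=\eta(M)>0$. Given that, we compare with the reveal process of Lemma \ref{epsstarfirst}. Reveal $2\ell m$ uniform hashes and fill greedily. The bins hit fewer than $\ell$ times by the first $(2-\eta)\ell m$ reveals number about $m\Pr[\mathrm{Poisson}((2-\eta)\ell)<\ell]=m(2/e)^{\ell-C\ell}$. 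Here $C=C(M)>0$ comes from the Poisson rate function at $2-\eta$ versus $2$, via the lower-tail analogue of Lemma \ref{poisson}. McDiarmid's inequality gives concentration, as in Lemma \ref{fewemptyattau1}. This yields load factor at most $1-(2/e)^{\ell-C\ell}$.

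The key step is bounding fresh reveals. Each eviction reveals at most one new second hash. So the number of revealed second hashes is the number of \emph{distinct} balls ever evicted, and we must show that $M\ell m$ evictions touch at most $(1-\eta)\ell m$ distinct balls. The intended mechanism is the recycled-randomness feedback described in the introduction. Random walk picks a uniformly random victim among the $\ell$ balls in the bin, so in a bin visited $v$ times the victims behave like $v$ draws with replacement from $\ell$ balls, roughly. The expected number of distinct victims is then about $\ell(1-e^{-v/\ell})$, and many evictions are spent on balls already evicted before. These repeat victims bounce back and forth, and those evictions are wasted as far as fresh hashes are concerned.

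To make this rigorous, I would lower-bound, for each eviction, the probability that the victim is a ball already moved before. Concretely, I plan to show the following. After the first $\Theta(\ell m)$ evictions, a constant fraction of balls have been moved. Each bin that is visited then contains a constant fraction $\alpha$ of balls sitting on a previously used hash, at least among bins visited at least $\ell/2$ times. Consequently each subsequent eviction is non-fresh with probability at least $\alpha$. With $M\ell m$ total evictions, the total number of distinct balls evicted is at most $\ell m-\Omega(\ell m/\mathrm{poly}(M))$; I use $M\le(\ln\ell)/20$ only to keep error terms like $e^{-M}$ and $M^2/\ell$ negligible. I would summarize this with a per-bin convexity inequality. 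The number of distinct victims in bin $B$ is at most $\ell(1-(1-1/\ell)^{v_B})$ in expectation. Combining $\sum_B v_B\le M\ell m$ with concavity gives at most $m\ell(1-e^{-M})$ distinct victims, so $\eta\ge e^{-M}$. Concentration then follows from McDiarmid's inequality over the bins.

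The main obstacle is justifying that victim selection looks like sampling with replacement from a fixed-ish population, since bin contents change as balls arrive and leave. Balls returned to a bin are recycled, which only helps; but newly arriving fresh balls replenish the pool. My approach is to charge: each fresh ball arriving in $B$ costs one eviction elsewhere. The count of fresh balls ever present in $B$ is at most its number of revealed hashes, which is at most $2\ell m$ overall, so I would bound victims by sampling from a population of size at most $h_B$ (the hashes to $B$). The distinct-victim bound then becomes $h_B(1-e^{-v_B/h_B})$. The concavity argument would be redone with $h_B\approx 2\ell$ typical, giving $\eta\approx 2e^{-M/2}$, which still yields a constant $C(M)>0$.
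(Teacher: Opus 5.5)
Your outer reduction matches the paper's final step and is fine: bound the revealed hashes by $(2-\eta)\ell m$, couple their values to a prefix of an i.i.d.\ tape, and count bins receiving fewer than $\ell$ hashes. The gap is in the key step, showing that at most $(1-\eta)\ell m$ distinct balls are ever evicted.

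\textbf{The per-bin bound is false.} You claim $B$ has at most $\ell(1-(1-1/\ell)^{v_B})$ distinct victims in expectation. But an evicted slot is immediately refilled by the incoming ball, which has often never been in $B$ before. If the same slot is chosen repeatedly, every eviction removes a different ball, so distinct victims can be close to $v_B$. The coupon-collector expression counts distinct \emph{slots}, so it only controls evictions of the slots' original occupants.

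\textbf{The repair via population $h_B$ fails twice.} First, there is no domination: a ball currently in $B$ is evicted with probability $1/\ell$ per visit, not $1/h_B$, and the dynamics can keep refilling $B$ with balls not yet evicted there. Second, even granting it, concavity gives $\sum_B h_B(1-e^{-v_B/h_B})\approx 2\ell m(1-e^{-M/2})$. Since $2e^{-3/2}<1$, this exceeds $\ell m$ for every $M\ge 3$, so it yields no deficit at all. The claimed $\eta\approx 2e^{-M/2}$ is a miscount: the $\approx 2\ell m$ ball--bin pairs count each ball twice, and evicting a ball from its second bin never reveals a new hash.

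\textbf{The ``non-fresh with probability $\alpha$'' mechanism cannot work either.} It still allows $(1-\alpha)M\ell m$ fresh evictions, which exceeds $\ell m$ once $M>1/(1-\alpha)$. The deficit must come from first-hash balls that are \emph{never} evicted, not from wasted evictions.

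\textbf{The concentration step is unjustified.} The $v_B$ depend on the slot choices themselves, so Jensen in expectation followed by McDiarmid ``over the bins'' does not apply. Changing one choice can reroute the whole remaining process, so bounded differences fail.

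\textbf{The missing idea} (this is how the paper argues) is to count never-chosen slots, using a bound that holds for every allocation of visits.
\begin{enumerate}
\item Give each bin its own slot tape $r_B$. By independence across bins, with high probability in $m$ at most $m/5$ bins have their first $2M\ell$ tape entries covering more than $\ell(1-.5e^{-2M})$ slots (Lemma~\ref{binrandomness}). This is where $M\le(\ln\ell)/20$ genuinely matters: it makes the per-bin failure probabilities $e^{-\ell e^{-2M}/8}$ and $e^{-\ell e^{-10M}}$ small.
\item Since $\sum_B v_B\le M\ell m$, at least $m/2$ bins have $v_B\le 2M\ell$, however the visits were allocated.
\item Hence at least $m/10$ bins keep at least $.5e^{-2M}\ell$ never-evicted slots, each still holding its original occupant.
\item After discounting occupants that are not first-hash balls (the paper uses that such a bin received at least $(1-e^{-4M})\ell$ first hashes), each such bin holds at least $e^{-4M}\ell$ first-hash balls that never revealed their second hash. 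This gives $\eta=e^{-5M}$ (Lemma~\ref{stillpurefirst}).
\end{enumerate}
Your Poisson comparison then finishes the proof.
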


Recalling that $\epsilon^* = (2/e)^{\ell \pm  o(\ell)}$, this immediately gives us the following corollary:
\begin{corollary}
Assume $\ell\le 1.5\ln(m)$. With high probability in $m$, to get to load factor $(1 - \epsilon)$ with $\epsilon \le e^{o(\ell)} \epsilon^* = {\epsilon^*}^{(1 - o(1))}$, the random-walk algorithm must perform $\omega(\ell m)$ evictions as $\ell\rightarrow\infty$. 
\end{corollary}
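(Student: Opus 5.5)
We need to prove the corollary from Theorem \ref{rwbad}. The plan is a direct contrapositive argument that feeds Theorem \ref{rwbad} an $M$ growing slowly with $\ell$. Suppose, for contradiction, that along some sequence of $\ell \to \infty$ the random-walk algorithm reaches load factor $1-\epsilon$ with $\epsilon \le e^{o(\ell)}\epsilon^*$, while performing at most $M_0 \ell m$ evictions for some constant $M_0$ with non-negligible probability. The goal is to show this contradicts Theorem \ref{rwbad}.

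The first step is to fix $M := \max(3, M_0)$. For $\ell$ large enough we have $M \le (\ln \ell)/20$, so Theorem \ref{rwbad} applies. It says that, with high probability in $m$, whenever at most $M\ell m$ evictions have been performed the load factor is at most $1-(2/e)^{\ell - C\ell}$, with $C = C(M) > 0$ a constant independent of $\ell$. Hence the number of free slots satisfies $\epsilon \ge (2/e)^{(1-C)\ell} = e^{C\ell\ln(e/2)}\,(2/e)^{\ell}$.

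The second step is to compare this with $\epsilon^*$. By Corollary \ref{boundingepsstar}, with probability $1-O(\ell^{-6}m^{-1})$ we have $\epsilon^* = (1\pm O(1/\ell))\cdot 2(2/e)^\ell/(\ell^{1.5}\sqrt{2\pi}) \le (2/e)^{\ell}$. Therefore
\[
\epsilon/\epsilon^* \ge e^{C\ln(e/2)\ell} = e^{\Omega(\ell)}.
\]
This is incompatible with $\epsilon \le e^{o(\ell)}\epsilon^*$ once $\ell$ is large. So for every fixed constant $M_0$, reaching such a load factor requires more than $M_0\ell m$ evictions with high probability. Since this holds for every $M_0$, the required number of evictions is $\omega(\ell m)$.

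To make the $\omega(\cdot)$ statement precise, I would let $M = M(\ell) \to \infty$ slowly, for instance capped at $(\ln\ell)/20$. The main obstacle is that $C(M)$ may shrink as $M$ grows, so the bound $e^{C(M)\ell}$ must still dominate the given $e^{o(\ell)}$ factor. I would handle this by a diagonal argument. For any function $g(\ell) = o(\ell)$ with $\epsilon \le e^{g(\ell)}\epsilon^*$, and for each fixed $M$, we have $C(M)\ln(e/2)\,\ell > g(\ell)$ for all sufficiently large $\ell$. Choose $M(\ell)$ to be the largest integer $M \le (\ln\ell)/20$ for which this inequality already holds. Then $M(\ell) \to \infty$, and applying Theorem \ref{rwbad} with $M(\ell)$ gives the $\omega(\ell m)$ bound. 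The high-probability guarantees in $m$ combine by a union bound: one failure event from Theorem \ref{rwbad} and one from Corollary \ref{boundingepsstar}.
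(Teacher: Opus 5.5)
Your proposal is correct and is essentially the paper's argument: the paper obtains the corollary directly from Theorem \ref{rwbad} (with $C(M)=e^{-6M}/4$) together with $\epsilon^*=(2/e)^{\ell\pm o(\ell)}$, and your diagonal choice of $M(\ell)\to\infty$, which is legitimate because Theorem \ref{rwbad} permits $M$ up to $(\ln \ell)/20$, makes explicit the step from ``every constant $M_0$'' to $\omega(\ell m)$ that the paper leaves implicit. One small correction: Corollary \ref{boundingepsstar} holds only with probability $1-O(\ell^{-6}m^{-1})$, not with high probability in $m$, so your union bound yields only that weaker guarantee, unless you instead compare against the deterministic quantity $\E[\epsilon^*]\le(1+o(1))\E[\tilde\epsilon]\le(2/e)^\ell$, in which case the only remaining failure event is the one from Theorem \ref{rwbad}.
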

Comparing this to Theorem \ref{thmfirst}, we see that our algorithm reaches load factor $1 - \epsilon$ with $\epsilon = \epsilon^* e^{o(\ell)} = {\epsilon^*}^{(1 - o(1))}$ in $O(\ell m)$ evictions (with probability $1-O(\ell^{-6}m^{-1})$), giving a separation between the two algorithms.

Our proof still holds for the variant of random walk insertion where we first check whether the bin we are evicting from has any empty slots, and if so insert into one of those.

\subsection{Proving Theorem \ref{rwbad}}

Essentially, the barrier will be a ``coupon collector problem'': that a bin with $\ell$ slots on average needs to be evicted from $\Theta(\ell\log\ell)$ times during the random walk insertion in order for every slot in that bin to be evicted from. If the random walk algorithm performs $M\ell m$ evictions for some constant $M$, then each bin performs on average $M\ell$ evictions. Since $M\ell\ll\ell\log\ell$, there will there will likely be some slots in each bin that were never evicted from. These slots are likely to contain balls that have never seen their other hash. We then finish the proof by showing that, with so many hashes unrevealed, it is highly unlikely to reach as high of a load factor as our algorithm did.

For our analysis, it is useful to imagine that each bin $B$ has its own random tape, that is, that $B$ has a random function $r_B:\mb{N}\rightarrow[\ell]$, and when visited for the $i$-th time, evicts the object from the $r_B(i)$-th slot. The following lemma formalizes the ``coupon collector'' issue, by showing that most bins have a decent number of slots, $.5e^{-2M}\ell$ many, that won't be visited in the first $2\ell M$ visits to that bin.
\begin{lemma}\label{binrandomness}
Fix $M$ such that $1\le M\le(\ln\ell)/4$. Let $T$ be the number of bins $B$ such that \[|\{r_B(i)\mid i\in[2\ell M]\}|>\ell (1-.5e^{-2M}).\] Then with high probability in $m$, we have $T\le m/5$.
\end{lemma}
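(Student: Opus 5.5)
The plan is to treat each bin's tape $r_B$ as an independent coupon-collector experiment. First I bound the probability that a single bin hits too many distinct slots. Then I get concentration over bins from the independence of the tapes across bins. No insertion-process dynamics are needed: the event concerns only the tapes $r_B$, which are i.i.d.\ uniform and independent across bins.

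\textbf{Step 1 (expected number of missed slots).} Fix a bin $B$ and let $U_B = \ell - |\{r_B(i)\mid i\in[2\ell M]\}|$ be the number of slots never chosen in the first $2\ell M$ draws. Each slot is missed with probability $(1-1/\ell)^{2\ell M}$. For $\ell$ large, $(1-1/\ell)^{2\ell M} \ge e^{-2M}(1-O(M/\ell))$, using $(1-1/\ell)^{\ell}\ge e^{-1}(1-1/\ell)$. Hence $\E[U_B]\ge \ell e^{-2M}(1-o(1))$. Since $M\le (\ln \ell)/4$, we have $\ell e^{-2M}\ge \ell^{1/2}$, so this mean tends to infinity.

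\textbf{Step 2 (per-bin tail bound).} The bad event for $B$ is $U_B < .5e^{-2M}\ell$, i.e.\ $U_B$ falls below roughly half its mean. The indicators of missed slots are negatively associated (balls-into-bins occupancy), so a Chernoff lower-tail bound applies. Alternatively, $U_B$ is a function of $2\ell M$ independent draws with bounded differences $1$, so McDiarmid applies, but that gives only $\exp(-\Omega(\ell e^{-4M}/M))$, which is weaker. I will use the Chernoff/negative-association route, which gives
\[
\Pr[U_B < \tfrac12 \E[U_B]] \le \exp(-\E[U_B]/8) \le \exp(-\Omega(\ell^{1/2})).
\]
For $\ell$ at least a sufficiently large constant this is at most $1/10$. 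One point needs care: the threshold $.5e^{-2M}\ell$ is slightly above $\frac12\E[U_B]$ because of the $(1-o(1))$ factor. I will absorb this by using deviation factor $0.6$ instead of $0.5$; the bound stays $\exp(-\Omega(\sqrt\ell))$.

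\textbf{Step 3 (concentration over bins).} $T$ is a sum of $m$ independent indicators, since the tapes are independent across bins, each with mean at most $1/10$. A Chernoff bound gives $\Pr[T > m/5]\le e^{-\Omega(m)}$, which is with high probability in $m$.

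I expect the main obstacle to be Step 2: obtaining a clean lower-tail bound for the number of empty slots in the coupon collector. If invoking negative association is considered too informal, the fallback is Poissonization. Draw $\mathrm{Poisson}(2\ell M(1-\ell^{-1/4}))$ samples instead of $2\ell M$, so that the per-slot miss events become independent. Then apply a standard Chernoff bound, and de-Poissonize using monotonicity of $U_B$ in the number of draws together with a Poisson tail bound. This only changes $e^{-2M}$ by a $1\pm o(1)$ factor, which is again absorbed in the slack between $0.5$ and $1$.
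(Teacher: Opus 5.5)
Your proposal is correct and follows essentially the same route as the paper: compute the expected number of missed slots per bin, apply a Chernoff lower-tail bound to that count, then apply a Chernoff bound over the $m$ independent bins. Your version is in fact more careful than the paper's in two places, and has one slip in a remark you don't use:
\begin{itemize}
\item The paper only records the upper bound $\mathbb{E}[c_B]\le \ell e^{-2M}$, which goes the wrong way for its comparison $\Pr(c_B\le .5\ell e^{-2M})\le \Pr(c_B\le .5\,\mathbb{E}[c_B])$. Your lower bound on $\mathbb{E}[U_B]$, together with the $0.6$ slack, fixes this.
\item The paper never justifies applying Chernoff to the dependent missed-slot indicators. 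Your negative-association argument covers that.
\item In your unused Poissonization fallback, the Poisson mean must be slightly above $2\ell M$ (e.g.\ a factor $1+\ell^{-1/4}$), not below. Only then does monotonicity transfer a lower-tail bound on $U_B$.
\end{itemize}
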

\begin{proof}
For any given bin $B$, let $c_B=\ell-|\{r_B(i)\mid i\in[2\ell M]\}|$, that is, the number of unvisited slots in the first $2\ell M$ visits to $B$. We have that \begin{align*}
\mb{E}[c_B]=\ell\left(1-\frac 1\ell\right)^{2\ell M}\le \ell e^{-2M}.
\end{align*}Then by standard Chernoff bounds, for a given bin $B$, we have\[
\mb{P}(c_B\le .5\ell e^{-2M})\le\mb{P}(c_B\le .5\mb{E}[c_B])\le e^{-\ell e^{-2M}/8}.
\]Then noting that $T$ is exactly the set of bins $B$ where $c_B\le .5\ell e^{-2M}$, we have $\\\mb{E}[T]\le me^{-\ell e^{-2M}/8}\le me^{-\sqrt{\ell}/8}\le m/10$, so again by Chernoff bounds\[
\mb{P}(T\ge m/5)\le\mb{P}(T\ge \mb{E}[T]+m/10)\le e^{-m/20}
\]as desired.
\end{proof}
We now use this coupon collector problem to show that we will have $\Theta(\ell m)$ unrevealed hashes after $M\ell m$ evictions:
\begin{lemma}\label{stillpurefirst}
Fix $M$ such that $3\le M\le(\ln\ell)/20$. and assume that the random walk procedure has performed at most $M\ell m$ evictions. With high probability in $m$, at most $(2-e^{-5M})\ell m$ hashes have been revealed.
\end{lemma}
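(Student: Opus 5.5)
Our plan is to count, bin by bin, how many balls still sit in their first bin without ever having been evicted, and to show this is at least $e^{-5M}\ell m$ with high probability in $m$. Such balls have not revealed their second hash. Every ball has at most two hashes, so the number of revealed hashes is at most $2\ell m$ minus the number of these ``never-evicted pure first'' balls. This gives the bound $(2-e^{-5M})\ell m$.

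\textbf{Step 1 (few heavily visited bins).} The total number of evictions is at most $M\ell m$, so by averaging at most $m/2$ bins are visited by an eviction-causing arrival more than $2\ell M$ times. Intersecting with Lemma \ref{binrandomness}, with high probability in $m$ at least $m - m/2 - m/5 \ge 3m/10$ bins $B$ are \emph{good}. A good bin is visited for eviction at most $2\ell M$ times, and its tape $r_B$ leaves at least $.5e^{-2M}\ell$ slots untouched among the first $2\ell M$ visits. In a good bin, therefore, at least $.5e^{-2M}\ell$ slots are never evicted from.

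\textbf{Step 2 (untouched slots hold never-evicted first-hash balls).} A slot that is never evicted from holds the first ball ever placed in it, and that ball stays forever. We must rule out that this ball arrived on its second hash. We restrict to slots filled before the bin first became full, since evictions only start after that. We need the occupant to be on its first hash. Here we would argue that, with the bounded number of evictions, only a small fraction of balls placed into initially empty slots of a bin arrive via second hashes. The number of second-hash arrivals to a bin is at most the number of evictions performed globally, at most $M\ell m$, spread over bins. The issue is that these can fill empty slots too.

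The cleaner route is to count per bin: the number of slots whose initial occupant was a first-hash arrival, minus evicted slots. Since $\ell\le$ the number of first hashes for most bins when the load is high, and second-hash arrivals to a bin number $\mathrm{Bin}(\le M\ell m,\approx 1/m)$-like quantities, we would use the independence of $r_B$ from which balls occupy which slots. Slot positions are filled in order, and the tape chooses slot indices independently of the ball identities. So the untouched slots are a uniformly random subset of size $\ge .5e^{-2M}\ell$, independent of which slots hold first-hash balls. Whenever a bin's initial fill has at least, say, a $1/2$ fraction of first-hash occupants, the expected number of untouched first-hash slots is $\ge .25e^{-2M}\ell$. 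By hypergeometric concentration this holds for most good bins.

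\textbf{Step 3 (sum and concentrate).} Summing over good bins gives at least about $\tfrac{3}{10}\cdot\tfrac14 e^{-2M}\ell m\ge e^{-5M}\ell m$ never-evicted pure-first balls, using $M\ge3$ for slack. To make this high probability in $m$, we apply McDiarmid or Chernoff as in Lemma \ref{binrandomness}, since bins' tapes are independent.

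The main obstacle is Step 2. The paper's random walk may place second-hash balls into initially empty slots, and the insertion order is adaptive, which complicates independence. We would first try to bound second-hash arrivals to a typical bin by $O(M\ell)$ using the eviction budget, so that in at least a constant fraction of good bins at least half the initial occupants are first-hash balls.
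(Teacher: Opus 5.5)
Your Step 1 matches the paper's, but Step 2, which you correctly identify as the crux, is left unproved, and the route you sketch for it cannot work. The never-evicted slots of a light bin $B$ hold its first $\ell$ arrivals, so you need most of those arrivals to have come via first hashes. The eviction budget gives no information about this. It bounds non-first-hash arrivals only by $M\ell m$ in total, so a light bin may still receive up to $(2M+1)\ell\ge 7\ell$ of them. Markov only says that at most $2Mm$ bins receive more than $\ell/2$ of them, which is vacuous for $M\ge 3$. Above all, the budget says nothing about the \emph{order} in which first- and non-first-hash arrivals reach $B$. Your assertion that most bins receive at least $\ell$ first hashes is also false: even at full load, about half the bins receive fewer.

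Your independence observation is sound. The types of $B$'s initial occupants are fixed before $r_B$ is ever read, so the untouched set is exchangeable and independent of them. This would let you settle for a constant fraction of first-hash initial occupants. But that fraction has to come from a timing argument: the random walk performs few evictions before the load reaches a constant (or $1-\ell^{-.49}$), so by then a typical bin's slots are mostly filled by first-hash balls. This would be a random-walk analogue of Lemmas \ref{fewsecondhashes} and \ref{Tminus51}.

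You also skip the low-load regime, where untouched slots may simply be empty. The paper handles this with a case split. If fewer than $(1-e^{-5M})\ell m$ first hashes have been revealed, the bound is immediate, since second hashes never outnumber first hashes. Otherwise, Lemma \ref{genpoisson} (lower tail, with $c_1=1-e^{-4M}<c_2=1-e^{-5M}$) plus concentration on the first-hash tape shows that all but $m/5$ bins receive at least $(1-e^{-4M})\ell$ first hashes. This is where $M\le(\ln\ell)/20$ enters, making the per-bin failure probability $e^{-\ell e^{-10M}}\le e^{-\sqrt\ell}$. Intersecting with the $\ge m/2$ light bins and with Lemma \ref{binrandomness} leaves $m/10$ bins. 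Each is credited with $.5e^{-2M}\ell-e^{-4M}\ell\ge e^{-4M}\ell$ never-evicted first-hash balls, for a total of $e^{-4M}\ell m/10\ge e^{-5M}\ell m$.

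Be aware that this subtraction tacitly treats those first-hash balls as $B$'s initial occupants, so it relies on the same ordering fact you isolated. Either way, the missing ingredient is a statement that non-first-hash arrivals rarely precede a bin's filling, and it cannot be extracted from the eviction budget.
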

\begin{proof}
We will here use ``first hashes'' to mean the first hash that an object reveals during its initial bin placement (which can be randomized in random walk insertion).

If fewer than $(1-e^{-5M})\ell m$ first hashes have been revealed, then as there are at least as many first hashes revealed as second, the lemma is true. Therefore, we can assume that at least $(1-e^{-5M})\ell m$ first hashes have been revealed. Then the number of first hashes to a given bin stochastically dominates by $\mathrm{Bin}((1-e^{-5M})\ell m,1/m)$, so by Lemma \ref{genpoisson} we have that

\[
\mb{P}(\mathrm{Bin}((1-e^{-5M})\ell m,1/m)\ge (1-e^{-4M})\ell)\le e^{-\ell((1-e^{-4M})-(1-e^{-5M}))^2/2}\le e^{-\ell e^{-10M}}
\]
So let $R$ be the number of bins that receive at least $(1-e^{-4M})\ell$ first hashes by the time $(1-.5e^{-5M})\ell m$ first hashes have been revealed. Then $\mb{E}[R]\le me^{-\ell e^{-10M}}\le me^{-\sqrt{\ell}}\le m/10$ and \[
\mb{P}(R\ge m/5)\ge\mb{P}(R\ge \mb{E}[R]+m/10)\le e^{-m/20}.
\]

If at most $M\ell m$ evictions have been performed, then there must be at least $m/2$ bins that have each received at most $2M\ell$ evictions. Assuming $T\le m/5$ (from Lemma \ref{binrandomness}) and $R\le m/5$ (from the paragraph above), we have that there must be at least $m/2-m/5-m/5=m/10$ bins $B$ with at most $2M\ell$ evictions, $|\{r_B(i)\mid i\in[2\ell M]\}|>\ell (1-.5e^{-2M})$, and at least $(1-e^{-4M})\ell$ first hashes. Then the following two hold:\begin{itemize}
    \item At least $(1-e^{-4M})\ell$ first hashes have come to $B$
    \item At least $.5e^{-2M}\ell$ slots in $B$ have not been evicted from
\end{itemize}
Therefore, we see that there are at least $.5e^{-2M}\ell-e^{-4M}\ell\ge e^{-4M}\ell$ slots in $B$ that contain a first hash ball that has never been evicted. Since there are $m/10$ such bins, there are then at least $e^{-4M}\ell m/10\ge e^{-5M}\ell m$ balls who have never revealed their second hash, and thus at most $(2-e^{-5M})\ell m$ total hashes revealed.

\end{proof}
Finally, we show that with this many unrevealed hashes, the load factor is unlikely to reach that of Theorem \ref{thmfirst}.
\rwbad*
In fact, we will prove this for $C=e^{-6M}/4$.
\begin{proof}
Assume Lemma \ref{stillpurefirst} holds. Then there have been at most $(2-e^{-5M})\ell m$ hashes revealed. We upper bound the expected current load factor by performing a similar calculation to Lemma \ref{boundingepstilde}. We count the number of ``leftover'' slots (if $B$ gets $j$ hashes for $j<\ell$, it contributes $\ell-j$ leftover slots). Let $\epsilon$ be the current load factor. Then \begin{align*}
\ell\mb{E}[\epsilon]&\ge (1-O(1/\ell))\sum_{j=0}^\ell(\ell-j)\mb{P}(\mathrm{Bin}((2-e^{-5M})\ell m,1/m)=j)\\&\ge(1-O(1/\ell))e^{-\ell(1-\ln(2)-e^{-6M})}\\&\ge(1-O(1/\ell))(2/e)^{\ell-\ell e^{-6M}/4}
\end{align*}
Finally, we can note that the proof of Lemma \ref{boundingepstilde} still goes through to show that \[
\epsilon\ge (1+O(1/\ell))(2/e)^{\ell-\ell e^{-6M}/4}+m^{-.49}
\] is true with high probability in $m$. Then, since $(2/e)^\ell$ is $\Omega(m^{-.47})$ and thus $(2/e)^{\ell-\ell e^{-6M}/4}\ge(2/e)^\ell=\Omega(m^{-.47})$ as well, we have
\[
\epsilon\ge (1+O(1/\ell))(2/e)^{\ell-\ell e^{-6M}/4}
\]
with high probability in $m$.
\end{proof}

\bibliographystyle{alpha}
{\fontsize{11}{11}\bibliography{cuckoo}}

\appendix
\section{Appendix}\label{app:foo}
\expepstilde*
\begin{proof}
Linearity of expectation tells us that $\ell\mb{E}[\tilde\epsilon]=\mb{E}[\max(0,\ell-(\text{\# hashes to }B))]$ for a given bin $B$. The number of hashes $B$ receives is $\mathrm{Bin}(2\ell m,1/m)$. Therefore,
\begin{align*}\ell\mb{E}[\tilde\epsilon]=\sum_{j=0}^\ell&(\ell-j)\mb{P}(\mathrm{Bin}(2\ell m,1/m)=j)\\&=\sum_{j=0}^{\ell-1}(\ell-j)\binom{2\ell m}{j}(1/m)^j(1-1/m)^{2\ell m-j}
\\&=\sum_{j=0}^{\ell-1}(\ell-j)\frac{(2\ell)^j}{j!}(1-1/m)^{2\ell m-j}\prod_{k=0}^{j-1}\left(1-\frac{k}{2\ell m}\right)
\\&=\sum_{j=0}^{\ell-1}(\ell-j)\frac{(2\ell)^j}{j!}(1-1/m)^{2\ell m-j}\left(1\pm O(\ell/m)\right)
\\&=\sum_{j=0}^{\ell-1}(\ell-j)\frac{(2\ell)^j}{j!}\left(e^{-2\ell}+O(\ell/m)\right)\left(1\pm O(\ell/m)\right)\refstepcounter{equation}\tag{\theequation}\label{poiseq}
\\&=\sum_{j=0}^{\ell-1}(\ell-j)\frac{(2\ell)^j}{j!}\left(e^{-2\ell}\pm O(1/\ell)\right)\\&\tag{Since $j\le\ell$ and $O(\ell/m) \le O(1/\ell) = o(1)$}
\\&=(1\pm O(1/\ell))\sum_{j=1}^{\ell}j\frac{(2\ell)^{\ell-j}}{e^{2\ell}(\ell-j)!}\\&\tag{by reversing the summation order}
\\&=(1\pm O(1/\ell))e^{-2\ell}\frac{(2\ell)^\ell}{\ell!}\sum_{j=1}^\ell\left(\frac{j}{(2\ell)^j}\prod_{k=0}^j(\ell-k)\right)
\\&=(1\pm O(1/\ell))e^{-2\ell}\frac{(2\ell)^\ell}{\ell!}\sum_{j=1}^{\ell}j2^{-j}\\&=(1\pm O(1/\ell))e^{-2\ell}\cdot \frac{2\cdot (2\ell)^\ell}{\ell!}\\&=(1 \pm O(1/\ell)) \cdot \frac{2\cdot (2/e)^\ell}{\sqrt{2\pi\ell}}\tag{by Stirling's formula},
\end{align*}
so \[\mb{E}[\tilde\epsilon]=(1 \pm O(1/\ell)) \cdot \frac{2\cdot(2/e)^\ell}{\ell^{1.5}\sqrt{2\pi}}.\]
We note that Equation \eqref{poiseq} when divided by $\ell$ is $(1 \pm O(1/m)) \cdot\frac 1\ell\cdot \E_{X\sim\operatorname{Poisson}(2\ell)}[\max(0, \ell - X)]$.
\end{proof}

\boundingepstilde*
\begin{proof}
As each hash can change $\tilde\epsilon$ by at most $1/(\ell m)$, McDiarmid's Inequality tells us that \begin{align*}
\mb{P}(|\tilde\epsilon-\mb{E}[\tilde\epsilon]|\ge m^{-.49})&\le 2e^{-\frac{(m^{-.49})^2}{2\ell m((1/(\ell m))^2)}}\le2e^{-m^{.02}\ell/2},
\end{align*}which tends to zero with faster than polynomial convergence for $\ell=O(\log m)$.
\end{proof}

\poisson*
\begin{proof}
For $Y\sim\mathrm{Ber}(1/m)$, we have for any $t\in\mb{R}$ that \[\mb{E}[e^{tY}]=1-\frac 1m+\frac{e^{t}}{m}.\] Then for $X\sim\mathrm{Bin}(c\ell m,1/m)$ for any $c\in\mb{R}_+$ and any $t\in\mb{R}$, we have that \[\mb{E}[e^{tX}]=\left(1-\frac 1m+\frac{e^{t}}{m}\right)^{c\ell m}\le e^{c\ell(e^{t}-1)}.\]Then for any $t>0$ and $c\in(0,1)$, we have\[
\mb{P}(X\ge \ell-1)=\mb{P}(e^{tX}\ge e^{t(\ell-1)})\le e^{-t(\ell-1)}\mb{E}[e^{tX}]\le e^{-t(\ell-1)}e^{c\ell(e^{t}-1)}=e^{\ell(-t+ce^{t}-c)+t}.
\]We optimize this by choosing $t=-\ln(c)>0$, giving for $c\in(0,1)$ that \[\mb{P}(\mathrm{Bin}(c\ell m,1/m)\ge\ell)\le e^{-\ell(c-1-\ln(c))}/c.\]
The final inequality comes by noting that for $c\in(0,1)$ we have $\ln(c)\le (c-1)-(c-1)^2/2$.
Similarly, for $c>1$, for any $t>0$ we have \[
\mb{P}(X\le\ell)=\mb{P}(e^{-tX}\ge e^{-t\ell})\le e^{t\ell}\mb{E}[e^{-tX}]\le e^{t\ell}e^{c\ell(e^{-t}-1)}=e^{\ell(t+ce^{-t}-c)}.
\]We optimize this by choosing $t=\ln(c)>0$, giving for $c>1$ that \[\mb{P}(\mathrm{Bin}(c\ell m,1/m)\le\ell)\le e^{-\ell(c-1-\ln(c))}.\]
The final inequality then comes by noting that for $c\in(1,1.7)$, we have that $\ln(c)\le(c-1)-(c-1)^2/3$.
\end{proof}

\genpoisson*
\begin{proof}
As in Lemma \ref{poisson}, for $X\sim\mathrm{Bin}(c_2\ell m,1/m)$ for any $c_2\in\mb{R}_+$ and any $t\in\mb{R}$, we have that \[\mb{E}[e^{tX}]=\left(1-\frac 1m+\frac{e^{t}}{m}\right)^{c_2\ell m}\le e^{c_2\ell(e^{t}-1)}.\]Then for any $t>0$ and $c_1\in(0,1)$, we have\[
\mb{P}(X\ge c_1\ell)=\mb{P}(e^{tX}\ge e^{c_1t\ell})\le e^{-c_1t\ell}\mb{E}[e^{tX}]\le e^{-c_1t\ell}e^{c_2\ell(e^{t}-1)}=e^{-\ell(c_1t-c_2e^{t}+c_2)}.
\]We optimize this by choosing $t=\ln(c_1/c_2)>0$, giving for $c_1\in(0,1)$ and $c_2\in(0,c_1)$ that \[\mb{P}(\mathrm{Bin}(c_2\ell m,1/m)\ge c_1\ell)\le e^{-\ell(c_1\ln(c_1/c_2)-c_1+c_2)}.\]
The final inequality comes by noting that Taylor's Theorem (viewing $(c_1\ln(c_1/c_2)-c_1+c_2)$ as a function of $c_1$) gives that there is an $\xi\in[c_1,c_2]$ such that\[
c_1\ln(c_1/c_2)-c_1+c_2=\frac{(c_1-c_2)^2}{2\xi}\ge\frac{(c_1-c_2)^2}{2}.
\]
Similarly, for $c_2\in(c_1,1)$, for any $t>0$ we have \[
\mb{P}(X\le c_1\ell)=\mb{P}(e^{-tX}\ge e^{-c_1t\ell})\le e^{c_1t\ell}\mb{E}[e^{-tX}]\le e^{c_1t\ell}e^{c_2\ell(e^{-t}-1)}=e^{-\ell(-c_1t-c_2e^{-t}+c_2)}.
\]We optimize this by choosing $t=\ln(c_2/c_1)>0$, giving for $c_2\in(c_1,1)$ that \[\mb{P}(\mathrm{Bin}(c_2\ell m,1/m)\le c_1\ell)\le e^{-\ell(c_1\ln(c_1/c_2)-c_1+c_2)}.\]
as above. The final line of the equation then again holds as above, noting that we now have $\xi\in[c_2,c_1]$ which is still at most 1.
\end{proof}

\end{document}